\newcommand{\postreviews}[1]{}
\renewcommand{\Pr}{\mathrm{Pr}}
\newcommand{\dom}{\mathrm{dom}}
\newcommand{\defeq}{\colonequals}
\newcommand{\sigmai}{\sigma_{\mathrm{int}}}
\newcommand{\sigmae}{\sigma}
\newcommand{\card}[1]{\left|#1\right|}
\newcommand{\calA}{\mathcal{A}}
\newcommand{\calQ}{\mathcal{Q}}
\newcommand{\calD}{\mathcal{D}}
\newcommand{\calI}{\mathcal{I}}
\newcommand{\calE}{\mathcal{E}}
\newcommand{\calS}{\mathcal{S}}
\newcommand{\calT}{\mathcal{T}}
\newcommand{\II}{\mathrm{I}}
\newcommand{\I}{\mathrm{I}}
\newcommand{\QQ}{\mathrm{Q}}
\newcommand{\PP}{\mathrm{P}}
\newcommand{\ki}{k_{\II}}
\newcommand{\kq}{k_{\QQ}}
\newcommand{\kp}{k_{\PP}}
\newcommand{\NN}{\mathbb{N}}
\newcommand{\decode}[1]{\mathrm{dec}(#1)}
\newcommand{\arity}[1]{\mathrm{arity}(#1)}
\renewcommand\i{\mathrm{i}}
\renewcommand\r{\mathrm{r}}
\newcommand\AND{\mathrm{AND}}
\newcommand\OR{\mathrm{OR}}
\newcommand\NOT{\mathrm{NOT}}
\newcommand\true{\top}
\newcommand\false{\bot}
\newcommand\neigh{\mathrm{Nbh}}
\newcommand\la{\langle}
\newcommand\ra{\rangle}
\newcommand\strat{\zeta}
\newcommand\dec{\mathrm{dec}}
\newcommand\enc{\mathrm{enc}}
\newcommand\vars{\mathrm{vars}}
\newcommand{\inp}{\mathsf{inp}}
\newcommand{\cst}{\mathsf{cst}}
\newcommand{\domd}{\dom^\downarrow}
\newcommand{\domu}{\dom^\uparrow}
\newcommand{\CST}{\mathit{CST}}
\newcommand{\guard}{\mathrm{guard}}
\newcommand{\goal}{\text{Goal}()}
\newcommand{\first}{\mathsf{Ch1}}
\newcommand{\second}{\mathsf{Ch2}}
\newcommand{\firstchild}{\mathsf{FirstChild}}
\newcommand{\secondchild}{\mathsf{SecondChild}}
\newcommand{\binary}{\mathsf{Bin}}
\newcommand{\schildself}{\mathcal{S}^{\binary}_{\first,\second,\child,\childself}}
\newcommand{\rlabel}{\mathsf{Label}}
\newcommand{\child}{\mathsf{Child}}
\newcommand{\childself}{\mathsf{Child^?}}
\newcommand{\f}{\mathrm{f}}
\newcommand{\mnull}{\mathsf{null}}
\newcommand{\wire}{\rightarrow}
\newcommand{\ins}{\mathrm{ins}}
\theoremstyle{plain}
\newtheorem{property}[theorem]{Property}
\title{Combined Tractability of Query Evaluation\protect\\via Tree Automata and
Cycluits\protect\\(Extended Version)}
\titlerunning{Combined Tractability of Query Evaluation via Tree Automata and Cycluits}
\author[1]{Antoine Amarilli}
\author[2]{Pierre Bourhis}
\author[1,4]{Mikaël Monet}
\author[3,4]{Pierre~Senellart}
\affil[1]{LTCI, T\'el\'ecom ParisTech, Universit\'e Paris-Saclay; Paris, France}
\affil[2]{CRIStAL, CNRS \& Université Lille 1; Lille, France}
\affil[3]{DI, École normale supérieure, PSL Research University; Paris, France}
\affil[4]{Inria Paris; Paris, France}
\renewcommand{\phi}{\varphi}
\renewcommand{\epsilon}{\varepsilon}
\renewcommand{\leq}{\leqslant}
\renewcommand{\geq}{\geqslant}
\renewcommand\subparagraph{\@startsection{subparagraph}{5}{\z@}%
                                       {1.5ex \@plus0.5ex \@minus .2ex}%
                                       {-1em}%
                                      {\sffamily\normalsize\bfseries}}
\renewenvironment{proofsketch}{\noindent\emph{Proof sketch.} }%
{\hfill\qed\vskip3pt}
\begin{document}

\maketitle

\begin{abstract}
  We investigate parameterizations of both database instances and queries
that make query evaluation fixed-parameter tractable in combined
complexity. We introduce a new Datalog fragment with stratified negation,
intensional-clique-guarded Datalog (ICG-Datalog), with linear-time
evaluation on structures of bounded treewidth for programs of bounded
rule size. Such programs capture in particular conjunctive queries with
simplicial decompositions of bounded width, guarded negation fragment
queries of bounded CQ-rank, or two-way regular path queries. Our
result is shown by compiling to alternating two-way
automata, whose semantics is defined via cyclic provenance circuits
(cycluits) that can be tractably evaluated. Last, we prove that
probabilistic query evaluation remains intractable in combined
complexity under this parameterization.

\end{abstract}

\section{Introduction}
Arguably the most fundamental task performed by database systems
is \emph{query evaluation}, namely, computing the results
of a query over a
database instance. Unfortunately, this task is well-known to be intractable in
\emph{combined complexity}~\cite{vardi1982complexity} even for simple query languages.

To address this issue, two main directions have been investigated. The first
is to restrict the class of \emph{queries} to ensure tractability, for
instance, to $\alpha$-acyclic conjunctive queries \cite{yannakakis1981algorithms}, this
being motivated by the idea that many real-world queries are simple and usually
small. 
The second approach restricts the
structure of database instances, e.g., requiring them to have bounded
\emph{treewidth}~\cite{robertson1986graph} (we call them \emph{treelike}).
This has been notably studied by
Courcelle~\cite{courcelle1990monadic}, to show the tractability of
monadic-second order logic
on treelike instances, but in \emph{data complexity} (i.e., for fixed
queries); the combined complexity is
generally nonelementary~\cite{meyer1975weak}.

This leaves open the main question studied in this paper: \emph{Which queries
can be efficiently evaluated, in combined complexity, on treelike databases?}
This question has been addressed by Gottlob, Pichler, and Fei~\cite{gottlob2010monadic} by
introducing \emph{quasi-guarded Datalog};
however, an unusual feature of this language is that programs must 
explicitly refer to the tree decomposition
of the instance. Instead, we try
to follow Courcelle's approach and investigate which queries can be
efficiently \emph{compiled to automata}. Specifically, rather than restricting
to a fixed class of ``efficient'' queries, we study \emph{parameterized}
query classes,
i.e., we define an efficient class of queries for each value of the parameter.
We further make the standard assumption that the signature is fixed; in
particular, its arity is constant.
This allows us to
aim for low combined complexity for query evaluation, namely, fixed-parameter tractability with
linear time complexity in the product of the input query and instance, called
\emph{FPT-linear} complexity.

Surprisingly, we are not aware of further existing work on tractable 
combined query evaluation for parameterized instances and queries, except from
an unexpected angle: the compilation of restricted query fragments to tree
automata on treelike instances was used in the context of \emph{guarded logics}
and other fragments,
to decide
\emph{satisfiability}~\cite{benedikt2016step} and
\emph{containment}~\cite{barcelo2014does}.
To do this, one usually establishes a \emph{treelike
model property} to restrict the search to models of low treewidth (but
dependent on the formula), and then compiles the formula to an
automaton, so that the problems reduce to emptiness testing: expressive
automata formalisms, such as \emph{alternating two-way automata}, are typically
used. One contribution of our work is to notice this connection, and show how
query evaluation on treelike instances can benefit from these ideas: for
instance, as we show, some queries can only be compiled efficiently to such
concise automata, and not to the more common bottom-up tree automata.

From there, the first main contribution of
this paper is to define the language of \emph{intensional-clique-guarded
Datalog} (ICG-Datalog), and show an efficient FPT-linear compilation procedure
for this language, parameterized by the body size of rules: this implies
FPT-linear combined complexity on treelike instances. While we present it as a
Datalog fragment, our language shares some similarities with guarded logics;
yet, its design incorporates several features
(fixpoints, clique-guards, guarded negation, guarding positive subformulae) that
are not usually found together in guarded fragments, but are important for query
evaluation. We show how the tractability of this language captures the tractability of such query
classes as two-way regular path queries \cite{barcelo2013querying} and
$\alpha$-acyclic conjunctive queries.

Already for
conjunctive queries, we show that the treewidth of queries is not
the right parameter to ensure efficient compilability. In fact, a second contribution of
our work is a lower bound: we show that bounded
treewidth queries cannot be efficiently compiled to automata at all, so we cannot hope
to show
combined tractability for them via automata methods. By
contrast, ICG-Datalog implies the combined tractability of bounded-treewidth queries with
an additional requirement (interfaces between bags must be clique-guarded),
which is the notion of \emph{simplicial decompositions} previously studied by
Tarjan \cite{tarjan1985decomposition}. To our knowledge, our paper is the first
to introduce this query class and to show its 
tractability on treelike instances. ICG-Datalog can be
understood as an extension of this fragment to disjunction, clique-guarded
negation, and inflationary fixpoints, that preserves tractability.

To derive our main FPT-linear combined complexity result, we define an
operational semantics for our tree automata 
by introducing a notion of
\emph{cyclic provenance circuits},
that we call \emph{cycluits}. These cycluits, the third contribution of our paper,
are well-suited as a provenance representation
for alternating two-way automata encoding ICG-Datalog programs, as they naturally
deal with both recursion and two-way traversal of a treelike instance,
which is less straightforward with provenance
formulae~\cite{green2007provenance} or
circuits~\cite{deutch2014circuits}. 
While we believe that this
natural generalization of Boolean circuits may be of independent interest, it
does not seem to have been studied in detail, except in the context of
integrated circuit design \cite{malik1993analysis,riedel2012cyclic}, where the semantics often features
feedback loops that involve negation; we prohibit these by focusing on \emph{stratified}
circuits, which we show can be evaluated in linear time.
We
show that the provenance of alternating two-way automata can be
represented as a stratified 
cycluit in FPT-linear time, generalizing results on bottom-up automata
and circuits from~\cite{amarilli2015provenance}.

Since cycluits directly give us a provenance representation of the query,
we then investigate \emph{probabilistic query evaluation}, which we
showed in~\cite{amarilli2015provenance} to be linear-time in data
complexity through the use of provenance circuits.
We show how to remove cycles, 
so as to apply message-passing methods
\cite{lauritzen1988local}, yielding a 2EXPTIME upper bound for the combined
complexity of probabilistic query evaluation. 
While we do not obtain
tractable probabilistic query evaluation in combined complexity, we give lower
bounds showing that this is unlikely.

\subparagraph*{Outline.} We give preliminaries in
Section~\ref{sec:prelim}, and then position our approach relative to
existing work in Section~\ref{sec:existing}. We then present our tractable
fragment, first for bounded-simplicial-width conjunctive queries in
Section~\ref{sec:cq}, then for our ICG-Datalog language in
Section~\ref{sec:ICG}. We then define our automata and compile ICG-Datalog to
them in Section~\ref{sec:compilation}, before introducing cycluits and showing
our provenance computation result in Section~\ref{sec:provenance}. We last study
the conversion of cycluits to circuits, and probability evaluation, in
Section~\ref{sec:rmcycles}.
Full proofs are provided in appendix.

\section{Preliminaries}
\label{sec:prelim}
A \emph{relational signature} $\sigma$ is a finite set of relation
names written $R$, $S$,
$T$, \dots, each with its associated \emph{arity}
$\arity{R} \in \NN$.
Throughout this work, \emph{we always assume the signature $\sigma$ to be
fixed}:
hence, its \emph{arity} $\arity{\sigma}$ (the maximal arity of relations
in~$\sigma$) is constant, and we further assume it is $>0$.
A \emph{($\sigma$-)instance} $I$
is a finite set of \emph{ground facts} on~$\sigma$,
i.e., $R(a_1, \ldots, a_{\arity{R}})$ with $R \in
\sigma$.
The \emph{active domain} $\dom(I)$ consists
of the
elements occurring in~$I$.

We study query evaluation for several \emph{query
languages} that are subsets of first-order (FO) logic
(e.g., conjunctive queries) or of second-order (SO) logic (e.g., Datalog).
Unless otherwise stated, we
only consider queries that are \emph{constant-free}, and \emph{Boolean}, so that an instance $I$ either
\emph{satisfies} a query $q$ ($I \models q$), or \emph{violates} it
($I \not\models q$), with the standard semantics~\cite{abiteboul1995foundations}.

We study the \emph{query evaluation problem} (or \emph{model checking}) for a query class $\mathcal{Q}$ and instance
class $\mathcal{I}$:
given an instance $I \in
\mathcal{I}$ and query $Q \in \mathcal{Q}$, check if $I \models Q$.
Its \emph{combined complexity} 
for 
$\calI$ and $\calQ$ is 
a function of~$I$ and~$Q$, whereas
\emph{data complexity} 
assumes $Q$ to be fixed.
We 
also study cases where $\calI$ and
$\calQ$ are \emph{parameterized}: given infinite sequences
$\calI_1, \calI_2, \ldots$ and $\calQ_1, \calQ_2, \ldots$, the \emph{query evaluation
problem parameterized by $k_{\mathrm{I}}$, $k_{\mathrm{Q}}$} 
applies to~$\calI_{k_{\II}}$ and $\calQ_{k_{\mathrm{Q}}}$.
The parameterized problem is \emph{fixed-parameter tractable} (FPT), for $(\calI_n)$ and $(\calQ_n)$, if
there is a constant $c \in \mathbb{N}$ and computable function $f$ such that the problem
can be solved 
with combined complexity
$O\left(f(\ki, \kq) \cdot (\card{I} \cdot \card{Q})^c\right)$.
For $c =
1$, we call it \emph{FPT-linear} (in $\card{I} \cdot \card{Q}$).
Observe that calling the problem FPT is more
informative than saying that it is in PTIME for fixed $\ki$ and $\kq$, as we are
further imposing that the polynomial degree $c$ does not depend on $\ki$ and $\kq$:
this follows the usual distinction in parameterized complexity between FPT and
classes such as XP~\cite{flum2006parameterized}.

\subparagraph*{Query languages.}
We first study fragments of FO, in particular, \emph{conjunctive queries} (CQ), i.e., 
existentially quantified conjunctions of atoms.
The \emph{canonical model} of a CQ $Q$ is the instance built from $Q$ by
seeing variables as elements and atoms as facts.
The \emph{primal graph} of~$Q$ has its variables as vertices, and
connects all variable pairs that co-occur in some atom.

Second, we study \emph{Datalog with stratified negation}.
We summarize the definitions here, see
\cite{abiteboul1995foundations} for details. 
A \emph{Datalog program} $P$
(without negation) over $\sigma$ (called the \emph{extensional signature})
consists of an \emph{intensional signature} $\sigmai$ disjoint from $\sigma$
(with the arity of $\sigmai$ being possibly greater than that of~$\sigma$),
a 0-ary \emph{goal predicate} $\text{Goal}$ in
$\sigmai$, and a set of \emph{rules}: those are
of the form $R(\mathbf{x}) \leftarrow \psi(\mathbf{x}, \mathbf{y})$, where
the \emph{head} $R(\mathbf{x})$ is an atom with $R \in
\sigmai$, and the \emph{body} $\psi$ is a CQ over $\sigmai \sqcup \sigmae$ where
each variable of~$\mathbf{x}$ must occur.
The \emph{semantics} $P(I)$ of $P$ over
an input $\sigma$-instance $I$
is defined by a least fixpoint of the
interpretation of~$\sigmai$: we start with $P(I) \defeq I$, and for any rule $R(\mathbf{x})
\leftarrow \psi(\mathbf{x}, \mathbf{y})$ and tuple $\mathbf{a}$ of $\dom(I)$,
when $P(I) \models \exists \mathbf{y} \,\psi(\mathbf{a}, \mathbf{y})$,
then we \emph{derive} the fact $R(\mathbf{a})$ and
add it to~$P(I)$, where we can then use it to derive more facts.
We have $I \models P$ iff we derive the fact $\text{Goal}()$.
The \emph{arity} of~$P$ is $\max(\arity{\sigma}, \arity{\sigmai})$.
$P$ is \emph{monadic} if $\sigmai$ has arity~$1$.

\emph{Datalog with stratified
negation}~\cite{abiteboul1995foundations} allows
negated \emph{intensional} atoms
 in bodies, but
requires~$P$ to have a \emph{stratification}, i.e.,
an ordered partition $P_1
\sqcup \cdots \sqcup P_n$ of the rules where:
\begin{compactenum}[(i)]
\item Each $R \in \sigmai$ has a \emph{stratum} $\strat(R) \in \{1, \ldots, n\}$ such that all rules
  with $R$ in the head are in~$P_{\strat(R)}$;
 \item For any $1 \leq i \leq n$ and $\sigmai$-atom $R(\mathbf{z})$ in a body of a rule of~$P_i$,
   we have $\strat(R) \leq i$;
 \item For any $1 \leq i \leq n$ and negated $\sigmai$-atom $R(\mathbf{z})$ in a
   body of~$P_i$, we have 
  $\strat(R) < i$. 
\end{compactenum}
The stratification ensures that we can define the semantics of a stratified
Datalog program by computing its interpretation for strata $P_1, \ldots, P_n$ in order:
atoms in bodies always depend on a lower stratum, and negated atoms depend on
strictly lower strata, whose
interpretation was already fixed. Hence, there is a unique least fixpoint and
$I \models P$ is well-defined.

\begin{example}
  The following stratified Datalog program, with $\sigma = \{R\}$ and $\sigmai =
  \{T, \text{Goal}\}$, and strata $P_1$,
  $P_2$, tests if there are two elements that are not connected by a directed
  $R$-path:\\
  \null\hfill\(
    P_1: T(x,y) \leftarrow R(x,y), \quad
      T(x,y) \leftarrow R(x,z) \land T(z,y) \qquad\qquad P_2: 
    \text{Goal}() \leftarrow \lnot T(x,y)
    \)\hfill\null
\end{example}

\subparagraph*{Treewidth.}
Treewidth is a measure quantifying how far a graph is to being a tree, which we
use to restrict instances and conjunctive queries.
The \emph{treewidth} of a CQ is that of its canonical instance, and
the \emph{treewidth} of an instance $I$ is the smallest $k$ such that $I$ has a 
\emph{tree decomposition} of \emph{width} $k$,
i.e., a finite, rooted, unranked tree $T$, whose nodes~$b$ (called
\emph{bags}) are labeled by a subset $\dom(b)$ of $\dom(I)$ with $\card{\dom(b)}
\leq k+1$, and which satisfies:
\begin{compactenum}[(i)]
\item for every
fact $R(\mathbf{a}) \in I$, there is a bag $b \in T$ with $\mathbf{a} \subseteq
\dom(b)$; 
\item for all $a \in \dom(I)$, the set of bags $\{b \in T \mid a \in
  \dom(b)\}$ is a connected subtree of~$T$.
\end{compactenum}
A family of instances is \emph{treelike} if their treewidth is bounded by a
constant.

\section{Approaches for Tractability}
\label{sec:existing}
We now review existing approaches to ensure the tractability of query
evaluation, starting by query languages whose evaluation is tractable
in combined complexity on \emph{all} input instances. We then study more
expressive query languages which are tractable on \emph{treelike} instances, but
where tractability only holds in data complexity. We then present the
goals of our work.

\subsection{Tractable Queries on All Instances}
\label{sec:tractableall}

The best-known query language to ensure tractable query complexity is
\emph{$\alpha$-acyclic queries}~\cite{fagin1983degrees}, i.e., those that have a tree
decomposition where the domain of each bag corresponds exactly to an atom: this
is called a \emph{join tree} \cite{gottlob2002hypertree}.
With Yannakakis's algorithm
\cite{yannakakis1981algorithms}, we can evaluate an $\alpha$-acyclic
conjunctive query $Q$
on an arbitrary instance $I$ in time $O(\card{I} \cdot \card{Q})$.

Yannakakis's result was generalized in two main directions. One
direction~\cite{gottlob2014treewidth}, moving from linear time to PTIME,
has investigated more general CQ classes, in particular
CQs of bounded treewidth~\cite{flum2002query},
\emph{hypertreewidth}~\cite{gottlob2002hypertree},
and \emph{fractional hypertreewidth}~\cite{grohe2014constraint}.
Bounding these query parameters to some
fixed $k$ makes query evaluation run in time
$O((\card{I} \cdot \card{Q})^{f(k)})$ for some function $f$, hence in
PTIME; for
treewidth, since the decomposition can be computed
in FPT-linear time~\cite{bodlaender1996linear}, this goes down to
$O(\card{I}^k\cdot\card{Q})$. However, query
evaluation on arbitrary instances is unlikely to be FPT when parameterized by
the query treewidth, since it would imply that the exponential-time hypothesis
fails (Theorem 5.1 of~\cite{marx2010can}). 
Further, even for treewidth~2 (e.g.,
triangles), it is not known if 
we can achieve linear data complexity~\cite{alon1997finding}.

In another direction, $\alpha$-acyclicity has been generalized to queries
with more expressive operators, e.g.,
disjunction or negation. The result on $\alpha$-acyclic CQs thus extends
to the \emph{guarded fragment} (GF) of first-order logic, which can be evaluated
on arbitrary instances
in time $O(\card{I} \cdot \card{Q})$
\cite{leinders2005semijoin}.
Tractability is independently known for FO$^k$, the fragment of
FO where subformulae use at most $k$ variables, with
a simple evaluation algorithm in $O(\card{I}^k \cdot \card{Q})$~\cite{vardi1995complexity}.

Another important operator are
\emph{fixpoints}, which can be used to express, e.g., reachability
queries.
Though 
FO$^k$ is no longer tractable when adding fixpoints~\cite{vardi1995complexity},
query evaluation is tractable
\cite[Theorem~3]{berwanger2001games} 
for
$\mu$GF, 
i.e., GF with some restricted least and greatest fixpoint operators, when
\emph{alternation depth} is bounded; without alternation, the combined
complexity
is in $O(\card{I} \cdot \card{Q})$.
We could alternatively express fixpoints in Datalog, 
but, sadly, most known tractable fragments are nonrecursive:
nonrecursive stratified Datalog is tractable 
\cite[Corollary~5.26]{flum2002query}
for rules with restricted bodies
(i.e., strictly acyclic, or bounded strict treewidth). 
This result was generalized in \cite{gottlob2003robbers} when bounding the number
of guards: this nonrecursive fragment is shown to be equivalent to the
$k$-guarded fragment of FO, with connections to the bounded-hypertreewidth
approach.
One recursive tractable fragment is Datalog LITE, which is equivalent to alternation-free
$\mu$GF \cite{gottlob2002datalog}.
Fixpoints were independently studied for graph query languages
such as
reachability queries and \emph{regular path queries}
(RPQ), which enjoy linear combined complexity on arbitrary input instances: this
extends to two-way RPQs (2RPQs) and even strongly acyclic conjunctions of 2RPQs (SAC2RPQs),
which are expressible in alternation-free $\mu$GF. Tractability also extends to acyclic RPQs
but with PTIME complexity~\cite{barcelo2013querying}.

\subsection{Tractability on Treelike Instances}
\label{sec:treelike}

We now study another approach for tractable query evaluation:
this time, we restrict the shape of the \emph{instances}, using treewidth. This
ensures that we can
translate them to a tree for efficient query evaluation.
Informally, having fixed the signature $\sigma$, for a fixed treewidth $k \in
\NN$, there is a finite tree alphabet $\Gamma^k_\sigma$ such that
$\sigma$-instances of treewidth $\leq k$ can be translated in FPT-linear time
(parameterized by $k$), following the structure of a tree decomposition, to a
\emph{$\Gamma^k_\sigma$-tree}, i.e., a rooted full ordered binary tree with nodes labeled by
$\Gamma^k_\sigma$, which we call a \emph{tree encoding}. We omit the formal
construction: see Appendix~\ref{apx:tree-encodings} for more details.

We can then evaluate queries on treelike instances by running \emph{tree
automata} on the tree encoding that represents them.
Formally, given an alphabet $\Gamma$, a \emph{bottom-up
nondeterministic tree automaton}
on $\Gamma$-trees (or $\Gamma$-bNTA) is a tuple $A = (Q,F,\iota,\Delta)$,
where:

\begin{compactenum}[(i)]
	\item $Q$ is a finite set of \emph{states};
	\item $F \in Q$ is a subset of \emph{accepting states};
	\item $\iota : \Gamma \to 2^Q$ 
          is an \emph{initialization function} 
          determining the state of a leaf from its label;
	\item $\Delta :  \Gamma \times Q^2
          \to 2^Q$ 
          is a \emph{transition function} 
		determining the possible states for an internal 
                node from its label and the states of its two
                children.
\end{compactenum}
Given a $\Gamma$-tree $\la T,\lambda\ra$ 
(where $\lambda : T \to \Gamma$ is the \emph{labeling function}),
we define a \emph{run} of $A$ on~$\la T,\lambda\ra$
as a function $\phi : T \to Q$ such that
(1)~$\phi(l) \in \iota(\lambda(l))$  for every leaf $l$ of~$T$; and
(2)~$\phi(n) \in \Delta( \lambda(n), \phi(n_1),\phi(n_2))$
for every internal node $n$ of~$T$ with children $n_1$ and $n_2$.
The bNTA $A$ \emph{accepts} $\la T,\lambda\ra$
if
it has a run on~$T$ mapping
the root of~$T$ to a state of~$F$.

We say that a bNTA $A$ \emph{tests} a query $Q$ for treewidth~$k$ if, for any
$\Gamma^k_\sigma$-encoding $\la E,\lambda\ra$ coding an instance $I$ (of
treewidth $\leq k$), 
$A$ accepts $\la E,\lambda\ra$ iff $I \models Q$.
By a well-known result of Courcelle \cite{courcelle1990monadic} on graphs
(extended to higher-arity in \cite{flum2002query}),
we can use bNTAs to evaluate all queries in \emph{monadic second-order
logic} (MSO), i.e., first-order logic with second-order variables of arity~$1$.
MSO subsumes in particular CQs 
and monadic
Datalog (but not general Datalog).
Courcelle showed that MSO
queries can be compiled to a bNTA that tests them:

\begin{theorem}[\cite{courcelle1990monadic,flum2002query}]
  \label{thm:courcelle}
  For any MSO query $Q$ and treewidth $k \in \NN$, we can
  compute a bNTA that tests $Q$ for treewidth $k$.
\end{theorem}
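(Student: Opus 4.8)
The plan is to reduce the general case to the well-studied MSO-to-automata translation by first normalizing the input query, then building the automaton by structural induction, and finally relating acceptance on a tree encoding to satisfaction on the decoded instance. First I would fix the signature $\sigma$ and the treewidth bound $k$, which pins down the finite tree alphabet $\Gamma^k_\sigma$ described in the preliminaries (with the decoding map $\dec$ sending a $\Gamma^k_\sigma$-tree to the instance it codes, possibly partial, i.e.\ only defined on well-formed encodings). Because the alphabet is finite and fixed, I can afford to hard-wire into the automaton a ``sanity check'' component that rejects any tree which is not a valid encoding of a treewidth-$\le k$ instance; this is a standard regular property of $\Gamma^k_\sigma$-trees, so it costs only a constant number of states. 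From now on every tree the automaton accepts genuinely codes some instance $I$ with $\tw(I)\le k$.

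Next I would put the MSO query $Q$ into a convenient normal form, e.g.\ prenex form over the encoding alphabet, and then proceed by induction on the structure of $Q$, maintaining the invariant that for a subformula $\varphi(X_1,\dots,X_m, x_1,\dots,x_\ell)$ with free (set and element) variables we have a bNTA over the alphabet $\Gamma^k_\sigma$ \emph{extended} by $m+\ell$ extra Boolean ``marking'' tracks — one per free variable — that accepts exactly the markings encoding assignments satisfying $\varphi$ in the decoded instance. Atomic formulae ($R(\mathbf x)$, equality, membership $x\in X$) translate to small automata that inspect the local label of the bag carrying the relevant elements, using the fact that the connectivity condition of tree decompositions lets each element be ``read off'' at a canonical bag; Boolean connectives are handled by the product construction and complementation (here one pays the determinization blow-up, which is fine since $k$ and $\sigma$ are fixed and $Q$ is the parameter); first-order and second-order existential quantifiers are handled by projecting away the corresponding marking track, which is where nondeterminism of the bNTA is essential. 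At the end $Q$ is a sentence, so there are no marking tracks left and we obtain a bNTA over $\Gamma^k_\sigma$ itself.

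Finally I would verify the correctness statement: combining the sanity-check component with the inductively built automaton yields a bNTA $A$ such that for every $\Gamma^k_\sigma$-encoding $\la E,\lambda\ra$ coding an instance $I$ of treewidth $\le k$, $A$ accepts $\la E,\lambda\ra$ iff $I\models Q$ — which is exactly the definition of $A$ testing $Q$ for treewidth $k$. The inductive correctness argument rests on the semantic bridge lemma ``satisfaction in $\dec(E)$ of $\varphi$ under an assignment $\nu$ equals acceptance of the $\nu$-marked tree by the automaton for $\varphi$,'' proved by unwinding the tree-decomposition semantics for the base cases and by routine automata-theoretic bookkeeping for the inductive cases.

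\medskip

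\noindent\textbf{Main obstacle.} The conceptually delicate point is the atomic case together with the quantifier case over element variables: one must argue that an element of $\dom(I)$ can be consistently identified across the (connected) set of bags that mention it, so that ``guessing an element'' really corresponds to guessing a single marking that is coherent along a subtree, rather than several unrelated markings. This coherence is itself a regular condition on the extended alphabet, and checking it is the part that makes the construction work; everything else is the textbook $\mathrm{MSO}\to$ tree-automata machinery. (Of course, this is a known theorem — Courcelle for graphs, extended by \cite{flum2002query} to higher arity — so in the paper the cleanest route is simply to cite it; the sketch above is how one would reprove it from scratch if needed.)
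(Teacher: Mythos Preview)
The paper does not give its own proof of this statement: Theorem~\ref{thm:courcelle} is stated as a citation to~\cite{courcelle1990monadic,flum2002query} and used as a black box, exactly as you yourself anticipate in your final parenthetical remark. Your sketch is the standard inductive MSO-to-tree-automata construction (marking tracks for free variables, closure under Boolean operations and projection, plus a regular well-formedness check on encodings), and it is correct; there is nothing to compare it against in the paper beyond the bare citation.
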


This implies that evaluating any MSO query $Q$ has FPT-linear
\emph{data complexity} when parameterized by $Q$ 
and the instance treewidth~\cite{courcelle1990monadic,flum2002query},
i.e., it is in $O\left(f(\card{Q}, k) \cdot
  \card{I}\right)$ for some computable function~$f$.
However, this tells little about the combined complexity, as 
$f$ is generally nonelementary 
in $Q$ \cite{meyer1975weak}.
A better combined complexity bound is known for unions of conjunctions of two-way regular path queries (UC2RPQs) that are further required to be acyclic and to have a constant number
of edges between pairs of variables: these can be compiled into
polynomial-sized alternating two-way automata \cite{barcelo2014does}.

\subsection{Restricted Queries on Treelike Instances}

Our approach combines both ideas: we use instance treewidth as a parameter, but also restrict the queries to ensure tractable
compilability.
We are only aware of two approaches in this spirit.
First, 
Gottlob, Pichler, and Wei~\cite{gottlob2010monadic}
have proposed a
\emph{quasiguarded} Datalog fragment on \emph{relational structures
and their tree decompositions}, 
with query evaluation is in $O(|I|\cdot|Q|)$. However,
this formalism requires queries to be expressed in terms of the tree
decomposition, and not just in terms of the relational signature.
Second, Berwanger and Grädel~\cite{berwanger2001games} remark (after Theorem~4) that, when
alternation depth and \emph{width} are bounded,
$\mu$CGF (the 
\emph{clique-guarded} fragment of FO with fixpoints) enjoys FPT-linear query evaluation when
parameterized by instance treewidth.
Their approach does not rely on automata methods, and subsumes the tractability
of $\alpha$-acyclic CQs and alternation-free $\mu$GF (and hence SAC2RPQs), on treelike instances.
However, $\mu$CGF is a restricted query language (the only CQs
that it can express are those with a chordal primal graph),
whereas we want a richer language, with a
parameterized definition.

Our goal is thus to develop an expressive parameterized
query language, which can be compiled in \emph{FPT-linear time} to an automaton
that tests it (with the treewidth of instances also being a parameter). We can then evaluate
the automaton, and obtain FPT-linear combined complexity for query evaluation.
Further, as we will show, the use of tree
automata will yield \emph{provenance representations} for the
query as in~\cite{amarilli2015provenance} (see
Section~\ref{sec:provenance}).

\section{Conjunctive Queries on Treelike Instances}
\label{sec:cq}
To identify classes of queries that can be efficiently compiled to tree
automata, we start by the simplest queries: \emph{conjunctive
queries}.

\subparagraph*{$\bm{\alpha}$-acyclic queries.}
A natural candidate for a tractable query class via automata methods would be
\emph{$\alpha$-acyclic} CQs, which, as we explained in
Section~\ref{sec:tractableall},
can be evaluated in time $O(\card{I} \cdot \card{Q})$ on all instances. Sadly,
we show that such queries cannot be compiled efficiently to bNTAs, so
our compilation result (Theorem~\ref{thm:courcelle}) does not extend directly:
\begin{propositionrep}
  \label{prp:bntalower}
  There is an arity-two signature $\sigma$ and an infinite family $Q_1, Q_2, \ldots$ of
  $\alpha$-acyclic CQs
  such
  that, for any $i \in \NN$, any bNTA that tests~$Q_i$ for treewidth $1$ must have 
  $\Omega(2^{\card{Q_i}^{1-\epsilon}})$ states for any $\epsilon>0$.
\end{propositionrep}

\begin{proof}
  We fix the signature $\sigma$ to consist of binary relations $S$, $S_0$, $S_1$, and
  $C$. We will code binary numbers as gadgets on this fixed signature. The
  coding of $i \in \NN$ at length $k$, with $k \geq 1 + \lceil \log_2 i
  \rceil$,
  consists of an $S$-chain $S(a_1, a_2), \ldots,
  S(a_{k-1}, a_k)$, and facts $S_{b_j}(a_{j+1},
  a'_{j+1})$ for $1\leq j\leq k-1$ where $a'_{j+1}$ is a fresh element and $b_j$ is the $j$-th bit in the binary expression of $i$ (padding
  the most significant bits with 0). We now
  define the query family $Q_i$: each $Q_i$ is formed by picking a root variable
  $x$ and gluing $2^i$ chains to~$x$; for $0\leq j\leq 2^i-1$,
  we have one chain that is the 
  concatenation of a chain of~$C$ of length $i$ and the length-$(i+1)$ coding of
  $j$ using a gadget. Clearly the size of $Q_i$ is $O(i \times
  2^i)$ and thus
  $2^i=\Omega(\card{Q_i}^{1-\epsilon/2})$.

  Fix $i > 0$. Let $A$ be a bNTA testing $Q_i$ on instances of treewidth
  $1$. We will show that $A$~must have at least ${2^{i} \choose
  2^{i-1}}=\Omega\left(2^{2^{i}-\frac{i}{2}}\right)$ states (the lower
  bound is obtained from Stirling's formula), from which
  the claim follows. In fact, we will consider a specific subset $\calI$ of the instances
  of treewidth $\leq 1$, and a specific set $\calE$ of tree encodings of
  instances of $\calI$, and show the claim on $\calE$, which suffices to
  conclude.

  To define $\calI$, let $\calS_i$ be the set of subsets of
  $\{0, \ldots, 2^i-1\}$ of cardinality $2^{i-1}$, so that $\card{\calS_i}$ is
  $2^{i} \choose 2^{i-1}$. 
  We will first define a family~$\calI'$ of instances indexed by $\calS_i$
  as follows. Given $S \in \calS_i$, the
  instance $I'_S$ of $\calI'$ is obtained by constructing a full binary tree of
  the $C$-relation of height $i-1$, and identifying, for all $j$, the $j$-th
  leaf node with element $a_1$ of the length-$(i+1)$ coding of the $j$-th
  smallest number in~$S$.
  We now define the instances of $\calI$ to consist of a root element
  with two $C$-children, each of which are the root element of an instance of
  $\calI'$ (we call the two the \emph{child instances}). It is clear
  that instances of $\calI$ have treewidth $1$, and we can check quite easily
  that an instance of $\calI$ satisfies $Q_i$ iff the child instances $I'_{S_1}$ and
  $I'_{S_2}$ are such that $S_1 \cup S_2 = \{1, \ldots, 2^i\}$.
  
  We now define $\calE$ to be tree encodings of instances of $\calI$: refer to
  Appendix~\ref{apx:tree-encodings} for details of how they are defined. First,
  define $\calE'$ to consist of tree encodings of instances of~$\calI'$, which
  we will also index with $\calS_i$, i.e., $E_S$ is a tree encoding of $I'_S$. We
  now define $\calE$ as the tree encodings $E$ constructed as follows:
  given an instance $I \in \calI$, we encode it as a root bag with
  domain $\{r\}$, where $r$ is the root of the tree $I$, and no fact, the
  first child $n_1$ of the root bag having
  domain $\{r, r_1\}$ and fact $C(r, r_1)$, the second child $n_2$
  of the root being defined in the same way. Now, $n_1$ has one dummy
  child with empty domain and no fact, and one child
  which is the root of some tree encoding in~$\calE$ of one child instance of
  $I$. We define $n_2$ analogously with the other child instance.

  For each $S \in \calS_i$, letting $\bar S$ be the complement of~$S$ relative
  to $\{0, \ldots, 2^i-1\}$, we call $I_S \in \calI$ the instance where the
  first child instance is $I'_S$ and the second child instance is $I'_{\bar S}$,
  and we call $E_S \in \calE$ the tree encoding of~$I_S$ according to the
  definition above. We then call $\calQ_S$ the set of states $q$ of~$A$ such that
  there exists a run of~$A$ on~$E_S$ where the root of the encoding of the first child
  instance is mapped to~$q$. As each $I_S$ satisfies~$Q$, each $E_S$
  should be accepted by the automaton, so each $\calQ_S$ is non-empty.
  
  Further,
  we show that the $\calQ_S$ are pairwise disjoint: for any $S_1 \neq S_2$ of~$\calS_i$,
  we show that $\calQ_{S_1} \cap \calQ_{S_2} = \emptyset$. Assume to the
  contrary the existence of $q$ in the intersection, and let $\rho_{S_1}$ and
  $\rho_{S_2}$ be runs of~$A$ respectively on $I_{S_1}$ and $I_{S_2}$ that
  witness respectively that $q \in \calQ_{S_1}$ and $q \in \calQ_{S_2}$. Now,
  consider the instance $I \in \calI$ where the first child instance is~$I_1$,
  and the second child instance is $\bar{I_2}$, and let $E \in \calE$ be the tree
  encoding of~$I$. We can construct a run~$\rho$ of~$A$ on~$E$ by defining
  $\rho$ according to~$\rho_{S_2}$ except that, on the subtree of~$E$ rooted at
  the root $r'$ of the tree encoding of the first child instance, $\rho$ is defined
  according to~$\rho_{S_1}$: this is possible because $\rho_{S_1}$ and
  $\rho_{S_2}$ agree on~$r_1'$ as they both map $r'$ to~$q$. Hence, $\rho$
  witnesses that $A$ accepts $E$. Yet, as $I_1 \neq I_2$, we know that $I$ does
  not satisfy~$Q$, so that, letting $E \in \calE$ be its tree encoding, $A$
  rejects~$E$. We have reached a contradiction, so indeed the $\calQ_S$ are
  pairwise disjoint.

  As the $\calQ_S$ are non-empty, 
  we can construct an mapping from $\calS_i$ to the state
  set of $A$ by mapping each $S \in \calS_i$ to some state of~$\calQ_S$: as the
  $\calQ_S$ are pairwise disjoint, this mapping is injective.
  We deduce that the state set of~$A$ has size at least $\card{\calS_i}$, which
  concludes from the bound on the size of~$\calS_i$ that we showed previously.
\end{proof}

The intuition of the proof is that bNTAs can only make one traversal of the
encoding of the input instance. Faced by this, we propose to use different tree automata formalisms,
which are generally more concise than bNTAs.
There are two classical generalizations of nondeterministic automata, on words~\cite{birget1993state}
and on trees~\cite{tata}: one goes from the inherent existential
quantification of nondeterminism to
\emph{quantifier alternation}; the other allows
\emph{two-way} navigation instead of
imposing a left-to-right (on
words) or bottom-up (on trees) traversal. On words, both of these extensions
independently allow for exponentially more compact
automata~\cite{birget1993state}. In this work, we combine both extensions
and use
\emph{alternating two-way
tree automata}~\cite{tata, cachat2002two}, formally introduced in
Section~\ref{sec:compilation}, which leads to tractable 
combined complexity for evaluation.
Our general results in the next section will then imply:

\begin{propositionrep}\label{prp:alphaatwa}
  For any treewidth bound $\ki \in \NN$, given an $\alpha$-acyclic CQ $Q$, we can
  compute in FPT-linear time in $O(\card{Q})$ (parameterized by $\ki$) an alternating two-way tree automaton that tests
  it for treewidth~$\ki$.

  Hence, if we are additionally given a relational instance $I$ of treewidth
  $\leq \ki$, one can determine whether $I\models Q$ in FPT-linear time in
  $\card{I}\cdot\card{Q}$ (parameterized by $\ki$).
\end{propositionrep}

\begin{proof}
  Given the $\alpha$-acyclic CQ $Q$, we can compute in linear time in~$Q$ a
  chordal decomposition~$T$ (equivalently, a join tree) of~$Q$ (Theorem~5.6
  of~\cite{flum2002query}, attributed to~\cite{tarjan1984simple}). As $T$ is in
  particular a simplicial decomposition of~$Q$ of width $\leq \arity{\sigma} -
  1$, i.e., of constant width, we use Proposition~\ref{prp:simplicial} to
  compile in linear time in $\card{Q}$ an
  ICG-Datalog program $P$ with body size bounded by a constant $\kp$.
  
  We now use Theorem~\ref{thm:maintheorem} to construct, in FPT-linear time in
  $\card{P}$ (hence, in $\card{Q}$), parameterized by $\ki$ and the constant
  $\kp$, a SATWA $A$ testing $P$ for treewidth $\ki$.

  We now observe that, thanks to the fact that $Q$ is monotone, the SATWA $A$
  does not actually feature any negation: the translation in the proof of
  Proposition~\ref{prp:simplicial} does not produce any negated atom, and
  the compilation in the proof of Theorem~\ref{thm:maintheorem} only
  produces a negated state within a Boolean formula when there is a corresponding
  negated atom in the Datalog program. Hence, $A$ is
  actually an alternating two-way tree automaton, which proves the first part of
  the claim.

  For the second part of the claim, we use Theorem~\ref{thm:main} to evaluate
  $P$ on $I$ in FPT-linear time in $\card{I} \cdot \card{P}$, parameterized by
  the constant $\kp$ and $\ki$. This proves the claim.
\end{proof}

\subparagraph*{Bounded-treewidth queries.}
Having re-proven the combined tractability of $\alpha$-acyclic queries (on
bounded-treewidth instances),
we naturally try to extend to \emph{bounded-treewidth} CQs. Recall from
Section~\ref{sec:tractableall} that these queries have PTIME combined
complexity on all instances, but are unlikely to be FPT when parameterized by
the query treewidth \cite{marx2010can}. Can they be efficiently evaluated on
treelike instances by compiling them to automata? We answer in the negative:
that bounded-treewidth CQs \emph{cannot} be
efficiently compiled to automata to test them, even when using the expressive
formalism of alternating two-way tree automata \cite{tata}:

\begin{theoremrep}\label{thm:nocompile}
  There is an arity-two signature $\sigma$ for which there is no
  algorithm $\calA$ 
  with exponential running time and polynomial output size for the 
  following task:
  given a conjunctive query $Q$ of treewidth $\leq 2$,
  produce an alternating two-way tree automaton $A_Q$ on $\Gamma^5_\sigma$-trees
  that tests $Q$ on
  $\sigma$-instances of treewidth $\leq 5$.
\end{theoremrep}

\begin{toappendix}
To prove this theorem, we need some notions and lemmas from
  \cite{unpublishedbenediktmonadic}, an extended version
  of~\cite{benedikt2012monadic}. Since \cite{unpublishedbenediktmonadic}
  is currently unpublished, we
  provide in Appendix~\ref{app:mdl} the complete relevant material from
  this paper, in particular Lemma~\ref{lem:caninst},
  Theorem~\ref{thm:containment}, and their proofs.
\end{toappendix}

\begin{toappendix}
\begin{proof}[Proof of Theorem~\ref{thm:nocompile}]
  Let $\sigma$ be $\schildself$ as in Theorem~\ref{thm:containment}. We
  pose $c=3$, $\ki=2\times 3-1=5$.
  Assume by way of
  contradiction that there exists an algorithm $\calA$ satisfying the prescribed
  properties.
  We will describe an algorithm to solve any instance of the
  containment problem of Theorem~\ref{thm:containment} in singly exponential
  time. As Theorem~\ref{thm:containment} states that it is 2EXPTIME-hard, this
  yields a contradiction by the time hierarchy theorem.

  Let $P$ and $Q$ be an instance of the containment problem of
  Theorem~\ref{thm:containment}, where $P$ is a monadic Datalog program of
  var-size $\leq 3$,
  and $Q$ is a CQ of treewidth $\leq 2$. We will show how to solve the
  containment problem, that is, decide whether there exists some instance $I$
  satisfying $P \land \neg Q$.
  
  Using
  Lemma~\ref{lem:caninst}, compute in singly exponential time the
  $\Gamma_\sigma^{\ki}$-bNTA
  $A_P$. Using the putative algorithm $\calA$ on~$Q$, compute in singly
  exponential time an alternating two-way automaton $A_Q$ of polynomial size. As $A_P$ describes a family $\calI$ of canonical
  instances for~$P$, there is an instance satisfying $P \wedge \neg Q$ iff there is an
  instance in $\calI$ satisfying $P \wedge \neg Q$. Now, as $\calI$ is described
  as the decodings of the language of~$A_P$, all instances in $\calI$ have
  treewidth~$\leq \ki$. Furthermore, the instances in~$\calI$ satisfy $P$ by definition of $\calI$.
  Hence, there is an instance satisfying $P\wedge\neg Q$ iff there is an encoding $E$ in the
  language of $A_P$ whose decoding satisfies $\neg Q$. Now, as $A_Q$ tests $Q$ on
  instances of treewidth $\ki$, this is the case iff there is an encoding $E$ in the
  language of $A_P$ which is not accepted by $A_Q$. Hence, our problem is
  equivalent to the problem of deciding whether there is a tree accepted by
  $A_P$ but not by $A_Q$.
  
  We now use Theorem~A.1 of~\cite{cosmadakis1988decidable} to compute in EXPTIME
  in~$A_Q$ a bNTA $A'_Q$ recognizing the complement of the language
  of~$A_Q$. Remember that $A_Q$ was computed in EXPTIME and is of polynomial
  size, so the entire process so far is EXPTIME. Now we know that we can solve
  the containment problem by testing whether $A_P$ and $A'_Q$ have non-trivial
  intersection, which can be done in PTIME by computing the product automaton
  and testing emptiness~\cite{tata}. This solves the containment problem in
  EXPTIME. As we explained initially, we have reached a contradiction, because
  it is 2EXPTIME-hard.
\end{proof}
\end{toappendix}

This result is obtained from a variant of the 2EXPTIME-hardness of monadic
Datalog containment \cite{benedikt2012monadic}.
We show that efficient compilation of bounded-treewidth CQs to
automata would yield an EXPTIME containment test, and conclude by the
time hierarchy theorem.

\subparagraph*{Bounded simplicial width.}
We have shown that we cannot compile bounded-treewidth queries to automata
efficiently.
We now show that efficient compilation can be ensured with an additional
requirement on tree decompositions.
As it turns out, the resulting decomposition notion
has been independently introduced for graphs:

\begin{definition}[\cite{diestel1989simplicial}]
  A \emph{simplicial decomposition of a graph $G$} is a tree decomposition $T$ of~$G$
  such that, for any bag $b$ of $T$ and child bag $b'$ of~$b$, letting $S$
  be the intersection of the domains of $b$ and $b'$, then the subgraph of $G$ induced by $S$ is a complete
  subgraph of~$G$.
\end{definition}

We extend this notion to CQs, and introduce the \emph{simplicial width} measure:

\begin{definition}
  \label{def:simplicialcq}
  A \emph{simplicial decomposition of a CQ $Q$} is a simplicial
  decomposition of its primal graph.
  Note that any
  CQ has a simplicial
  decomposition (e.g., the trivial one that puts all variables in one
  bag).
  The \emph{simplicial width} of~$Q$
  is the minimum, over all simplicial tree decompositions, of the size of
  the largest bag minus~$1$.
\end{definition}

Bounding the simplicial width of CQs is of course more restrictive than
bounding their treewidth, and
this containment relation is strict: cycles have
treewidth $\leq 2$ but 
have unbounded simplicial width. This
being said, bounding the simplicial width is less restrictive than imposing
$\alpha$-acyclicity:
the join tree of an $\alpha$-acyclic CQ is
in particular a simplicial decomposition, so $\alpha$-acyclic CQs have
simplicial width at most $\arity{\sigma}-1$, which is constant as $\sigma$ is
fixed. Again, the containment is strict: a triangle has simplicial width $2$ but is not $\alpha$-acyclic.

To our knowledge, 
simplicial width for CQs has not been studied before.
Yet, we show that bounding the
simplicial width ensures that CQs can be efficiently compiled to
automata. This is unexpected, because the same is not
true of treewidth, by Theorem~\ref{thm:nocompile}. Hence:

\begin{theoremrep}\label{thm:simplicial}
  For any $\ki, \kq \in \NN$, given a CQ $Q$ and a simplicial decomposition $T$
  of simplicial width $\kq$
  of~$Q$,
  we can compute in FPT-linear in~$\card{Q}$ (parameterized by~$\ki$ and~$\kq$)
  an alternating two-way tree automaton that tests
  $Q$ for treewidth~$\ki$.

  Hence, if we are additionally given a relational instance~$I$ of treewidth
  $\leq \ki$,
  one can determine
  whether $I\models Q$ in FPT-linear time in $|I|\cdot(|Q|+|T|)$ 
  (parameterized by~$\ki$ and~$\kq$).
\end{theoremrep}

\begin{proof}
  We use Proposition~\ref{prp:simplicial} to compile
  the CQ $Q$ to an
  ICG-Datalog program $P$ with body size at most $\kp \defeq f_\sigma(\kq)$,
  in FPT-linear time in $\card{Q} + \card{T}$ parameterized by~$\kq$. 
  
  We now use Theorem~\ref{thm:maintheorem} to construct, in FPT-linear time in
  $\card{P}$ (hence, in $\card{Q}$), parameterized by $\ki$ and~$\kp$, hence
  in~$\ki$ and~$\kq$, a SATWA $A$ testing $P$ for treewidth $\ki$. For the same
  reasons as in the proof of Proposition~\ref{prp:alphaatwa}, it is actually an
  two-way alternating tree automaton, so we have shown the first part of the result.

  To prove the second part of the result,
  we now use Theorem~\ref{thm:main} to evaluate $P$ on $I$ in FPT-linear time
  in $\card{I} \cdot \card{P}$, parameterized by~$\kp$ and~$\ki$, hence again
  by~$\kq$ and~$\ki$. This proves the claim.
\end{proof}

Notice the technicality that the simplicial decomposition $T$ must be provided
as input to the procedure, because it is not known
to be computable in FPT-linear time, unlike tree decompositions.
While we are not aware of results on the
complexity of this specific task,
quadratic time algorithms are known for the related problem of computing the
\emph{clique-minimal separator decomposition}
\cite{leimer1993optimal,berry2010introduction}.

The intuition for the efficient compilation of bounded-simplicial-width CQs
is as follows. The \emph{interface} variables shared between any bag and its parent must
be ``clique-guarded'' (each pair is covered by an atom). Hence, consider
any subquery rooted at a bag of the query decomposition, and see it as a
non-Boolean CQ
with the interface variables as free variables. Each result of this
CQ must then be covered by a clique of facts of the instance, which ensures
\cite{gavril1974intersection} that it occurs in some bag in the instance tree decomposition and can be
``seen'' by a tree automaton.
This intuition can be generalized, beyond conjunctive queries, to design an
expressive query language featuring disjunction, negation, and fixpoints, with
the same properties of efficient compilation to automata and FPT-linear
combined complexity of evaluation on treelike instances. We introduce such a
Datalog variant in the next section.

\section{ICG-Datalog on Treelike Instances}
\label{sec:ICG}
To design a Datalog fragment with efficient compilation to automata, we must of
course impose some limitations, as we did for CQs. In fact, we can even show
that the full Datalog language (even without negation) \emph{cannot} be compiled to automata, no matter
the complexity:

\begin{propositionrep}\label{prp:dlnotftar}
  There is a signature $\sigma$ and Datalog program $P$ 
  such that the language of $\Gamma_\sigma^1$-trees that encode instances
  satisfying $P$ is not a regular tree language.
  \end{propositionrep}

\begin{toappendix}
	We will use in this proof the notion of tree encoding and the associated
        concepts, which can be found in Appendix~\ref{apx:tree-encodings}.
\end{toappendix}

\begin{proof}
Let $\sigma$ be the signature containing two binary relations $Y$ and $Z$ and
  two unary relations $\mathrm{Begin}$ and $\mathrm{End}$.
Consider the following program $P$:
\begin{align*}
  \mathrm{Goal}() &\leftarrow S(x,y), \mathrm{Begin}(x), \mathrm{End}(y)\\
                      S(x,y) &\leftarrow Y(x,w), S(w,u), Z(u,y) \\ 
                      S(x,y) &\leftarrow Y(x,w), Z(w,y)
\end{align*}
Let $L$ be the language of the tree encodings of instances of treewidth~$1$ that
satisfy~$P$. We will show that $L$ is not a regular tree language, which clearly
  implies the second claim, as a bNTA or an alternating two-way tree
  automaton can
  only recognize regular tree languages \cite{tata}. To show this, let us assume by
  contradiction that $L$ is a regular tree language, so that there exists a
  $\Gamma^1_\sigma$-bNTA $A$ that accepts $L$, i.e., that tests $P$.

We consider instances which are chains of facts which are either $Y$- or
$Z$-facts, and where the first end is the only node labeled $\mathrm{Begin}$ and
the other end is the only node labeled $\mathrm{End}$. This condition on instances
  can clearly be expressed in MSO, so that by Theorem~\ref{thm:courcelle} there exists
a bNTA $A_{\mathrm{chain}}$ on~$\Gamma_\sigma^1$ that tests this property. In particular, we can build the
bNTA $A'$ which is the intersection of $A$ and $A_{\mathrm{chain}}$, which tests whether
instances are of the prescribed form and are accepted by the program $P$.

We now observe that such instances must be the instance
\begin{align*}
  I_k = {} & \{
  \mathrm{Begin}(a_1), \allowbreak
  Y(a_1, a_2), \ldots, \allowbreak
  Y(a_{k-1}, a_k), \allowbreak
  Y(a_k, a_{k+1}), \\
  & \quad Z(a_{k+1}, a_{k+2}), \ldots, \allowbreak
  Z(a_{2k-1}, a_{2k}), \allowbreak
  Z(a_{2k}, a_{2k+1}), \allowbreak
\mathrm{End}(a_{2k+1})\}\end{align*} for
some $k \in \mathbb{N}$. Indeed, it is clear that $I_k$ satisfies $P$ for all $k
\in \mathbb{N}$, as we derive the facts \[S(a_k, a_{k+2}), S(a_{k-1}, a_{k+3}), \ldots,
S(a_{k-(k-1)}, a_{k+2+(k-1)})\text{, that is, }S(a_1, a_{2k+1}),\] and finally
$\mathrm{Goal}()$. Conversely, for any instance $I$ of the prescribed shape that
satisfies $P$, it is easily seen that the derivation of $\mathrm{Goal}$
justifies the existence of an chain in $I$ of the form~$I_k$, which by the
restrictions on the shape of $I$ means that $I = I_k$.

We further restrict our attention to tree encodings
that form a single branch of a specific form, namely, their contents are as
  follows (given from leaf to root) for some integer $n\geq 0$:
$(\{a_1\}, \mathrm{Begin}(a_1))$, $(\{a_1, a_2\}, X(a_1, a_2))$, $(\{a_2, a_3\},
X(a_2, a_3))$, $(\{a_3, a_1\}, X(a_3, a_1))$, 
\dots, $(\{a_{n}, a_{n+1}\}, X(a_n, a_{n+1}))$, $(\{a_{n+1}\},
\mathrm{End}(a_{n+1}))$, 
  where we write $X$ to mean that we may match either $Y$ or $Z$,
  where addition is modulo $3$, and where we add dummy nodes $(\bot,
\bot)$ as left children of all nodes, and as right children of the leaf node
  $(\{a_1\},\mathrm{Begin}(a_1))$,
to ensure that the tree is full.
It is clear that we can design a bNTA $A_{\mathrm{encode}}$ which recognizes
  tree encodings of this form, and we define $A''$ to be the intersection of
  $A'$ and $A_{\mathrm{encode}}$. In other words,
  $A''$ further enforces that the $\Gamma^1_\sigma$-tree
encodes the input instance as a chain of consecutive facts with a certain
prescribed alternation pattern for elements, with the $\mathrm{Begin}$ end of
the chain at the top and the $\mathrm{End}$ end at the bottom.

Now, it is easily seen that there is exactly one tree encoding of every $I_k$ which is
accepted by~$A''$, namely, the one of the form tested by $A_{\mathrm{encode}}$
where $n=2k$, the first $k$ $X$ are matched to $Y$ and the last $k$ $X$
are matched to~$Z$. 

Now, we observe that as $A''$ is a bNTA which is forced to operate on
chains (completed to full binary trees by a specific
addition of binary nodes). Thus, we can translate it to a deterministic automaton
$A'''$ on words on the alphabet $\Sigma = \{B, Y, Z, E\}$,
by looking at
its behavior in terms of the $X$-facts. Formally, $A'''$ has same state
space as $A''$, same final states, initial state
$\delta(\iota((\bot, \bot)), \iota((\bot, \bot)))$ and
transition function $\delta(q, x) = \delta(\iota((\bot, \bot)), q,
(s, f))$ for every domain $s$, where $f$ is a fact corresponding to the letter $x
\in \Sigma$ ($B$ stands here for $\mathrm{Begin}$, and $E$ for
$\mathrm{End}$).
By definition of $A''$, the automaton $A'''$ on words recognizes the language $\{BY^kZ^kE \mid k \in
\mathbb{N}\}$. As this language is not regular, we have reached a contradiction.
This contradicts our hypothesis about the existence of automaton $A$, which
establishes the desired result.
\end{proof}

Hence, there is no bNTA or alternating two-way tree automaton that tests $P$ for
  treewidth~$1$.
To work around this problem and ensure that compilation is possible and efficient,
the key condition that we impose on Datalog programs, pursuant to the
intuition of simplicial decompositions, is that
intensional predicates in rule bodies must be \emph{clique-guarded}, i.e., their
variables must co-occur in \emph{extensional} predicates of the rule body. 
We can then use the \emph{body size} of the program rules as a parameter,
and will show that the fragment can then be compiled to automata in FPT-linear
time.

\begin{definition}
  \label{def:ICG}
  Let $P$ be a stratified Datalog program.
  An intensional literal $A(\mathbf{x})$ or $\lnot A(\mathbf{x})$ in a 
  rule body~$\psi$ of~$P$
  is \emph{clique-guarded} if, for any two variables
  $x_i \neq x_j$ of~$\mathbf{x}$, $x_i$ and $x_j$
  co-occur in some extensional atom of~$\psi$.
  $P$ is \emph{intensional-clique-guarded} (ICG) if, for any rule
  $R(\mathbf{x}) \leftarrow \psi(\mathbf{x}, \mathbf{y})$, every
  \emph{intensional} literal in $\psi$ is clique-guarded in
  $\psi$.
  The \emph{body size} of~$P$ is the maximal number of atoms in
  the body of its rules, multiplied by its arity.
\end{definition}

The main result of this paper is that evaluation of ICG-Datalog is
\emph{FPT-linear} in \emph{combined complexity}, when parameterized by the body
size of the program and the instance treewidth.

\begin{theoremrep}
  \label{thm:main}
  Given an ICG-Datalog program~$P$ of body size
  $\kp$ and a relational instance~$I$ of treewidth $\ki$, checking
  if $I\models P$ is FPT-linear time in $|I|\cdot|P|$ 
  (parameterized by~$\kp$ and~$\ki$).
\end{theoremrep}

\begin{proof}
  Anticipating on the results of later sections, we use
  Theorem~\ref{thm:mainprov} to compute a representation $C$ of the provenance
  of~$P$ on~$I$ as a stratified cycluit, in FPT-linear time (parameterized by
  $\kp$ and $\ki$). We let $\nu$ be the valuation of~$C$ that sets every input
  to true (reflecting that all the facts of~$I$ are indeed there), and we then
  use Proposition~\ref{prp:cycluitlinear} to evaluate $\nu(C)$ in linear time.
  By definition of the provenance, we have $\nu(C) = 1$ iff $I \models P$, which
  concludes the proof.
\end{proof}

We will show this result in the next section by compiling ICG-Datalog programs in FPT-linear time to
a special kind of tree automata (Theorem~\ref{thm:maintheorem}), and showing
in Section~\ref{sec:provenance} that we can efficiently evaluate such automata
and even compute \emph{provenance representations}.
The rest of this section
presents consequences of our main result for various languages.

\subparagraph*{Conjunctive queries.}
Our tractability result for
bounded-simplicial-width CQs (Theorem~\ref{thm:simplicial}), including $\alpha$-acyclic
CQs, is shown by rewriting
to ICG-Datalog of bounded body size:

\begin{propositionrep}\label{prp:simplicial}
  There is a function $f_\sigma$ (depending only on~$\sigma$)
  such that for all $k \in \mathbb{N}$,
  for any conjunctive query $Q$ and simplicial tree decomposition $T$ 
  of $Q$ of
  width at most $k$, we can compute in $O(\card{Q} + \card{T})$ an equivalent
  ICG-Datalog program with body size at most $f_{\sigma}(k)$.
\end{propositionrep}

\begin{toappendix}
  We first prove the following lemma about simplicial tree decompositions:

  \begin{lemma}
    \label{lem:rewritesimplicial}
    For any simplicial decomposition $T$ of width~$k$ of a query $Q$, we can compute in
    linear time a simplicial decomposition $T_{\mathrm{bounded}}$ of $Q$ such
    that each bag has degree at most~$2^{k+1}$.
  \end{lemma}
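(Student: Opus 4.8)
The plan is to reduce the degree of every bag by a purely local surgery: at each bag~$b$, I group its children according to which subset of~$\dom(b)$ they share with~$b$, and replace the direct attachment of the children in each group by a chain of fresh ``stub'' bags whose domain is exactly that shared subset. Since the surgery at~$b$ only rewires the edges between~$b$ and its children and leaves untouched the subtrees hanging below each child, it can be carried out at every bag of~$T$ independently, in a single pass.

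Concretely, fix a bag~$b$ with children $b'_1,\dots,b'_m$ in~$T$, and for each~$i$ set $S_i \defeq \dom(b)\cap\dom(b'_i)$. As $T$ is simplicial, each $S_i$ induces a complete subgraph of the primal graph of~$Q$; moreover $S_i\subseteq\dom(b)$, so the number of distinct values among $S_1,\dots,S_m$ is at most $2^{\card{\dom(b)}}\leq 2^{k+1}$. For each distinct value~$S$, with associated children $b'_{i_1},\dots,b'_{i_\ell}$: if $\ell=1$ I keep that child directly below~$b$; if $\ell\geq2$ I create fresh bags $h_1,\dots,h_{\ell-1}$, each with domain exactly~$S$, make $h_1$ a child of~$b$, make $h_{t+1}$ a child of~$h_t$, make $b'_{i_t}$ a child of~$h_t$ for $t<\ell$, and make $b'_{i_\ell}$ the remaining child of~$h_{\ell-1}$. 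Calling $T_{\mathrm{bounded}}$ the result: representing each domain (hence each~$S_i$) as a bitmask over the at most $k+1$ elements of~$\dom(b)$, children are bucketed by interface in constant time each, and only $O(\card T)$ stubs are created overall, so $T_{\mathrm{bounded}}$ is computed in time linear in~$\card T$.

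It then remains to check three things. (i)~$T_{\mathrm{bounded}}$ is a tree decomposition of the primal graph of~$Q$: every original bag survives with the same domain, so edge coverage is immediate; for the connectivity condition, note that the only change is that each old edge $(b,b'_{i_t})$ is replaced by a path of bags all of domain $S=\dom(b)\cap\dom(b'_{i_t})$. A vertex $v\in S$ lies in~$b$, in every stub, and in every~$b'_{i_t}$ (since $S\subseteq\dom(b'_{i_t})$), so its occurrence set — a subtree of~$T$ passing through~$b$ — remains a connected subtree in~$T_{\mathrm{bounded}}$ via the new chain; a vertex $v\notin S$ is in at most one of $\dom(b)$ and $\dom(b'_{i_t})$, so in~$T$ its occurrence set avoided the edge $(b,b'_{i_t})$ and was therefore confined to one of the two pieces that the rewiring merely relocates intact, hence stays connected and is never carried onto the chain. (ii)~$T_{\mathrm{bounded}}$ is simplicial: every edge inherited from~$T$ is unchanged, and each new edge — $(b,h_1)$, $(h_t,h_{t+1})$, or $(h_t,b'_{i_t})$ — has interface exactly~$S$, which induces a complete subgraph of~$Q$ because $S$ was already an interface of the simplicial decomposition~$T$. (iii)~Degrees: after the rewiring, $b$ has exactly one child per distinct interface among its former children, hence at most $2^{k+1}$ children, while each stub bag has at most two children by construction; so every bag of~$T_{\mathrm{bounded}}$ has at most $2^{k+1}$ children.

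The one genuinely delicate point is the connectivity argument in~(i), and it rests entirely on the grouping: because every child rerouted through a given stub chain shares the \emph{exact same} subset~$S$ with~$b$, no vertex outside~$S$ ever has to travel along that chain, which is precisely what keeps every occurrence set connected. Rerouting children with differing interfaces through a common chain would break this, so grouping by interface is essential, not a mere convenience.
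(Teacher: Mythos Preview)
Your proposal is correct and follows essentially the same approach as the paper: group children by their interface with the parent and replace each group by a chain of stub bags with domain equal to that interface. Your version differs only in cosmetic details (you use $\ell-1$ stubs rather than $\ell$, and you skip the stub when $\ell=1$), and you supply a slightly more explicit connectivity argument than the paper does.
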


  \begin{proof}
	Fix $Q$ and $T$.
        We construct the simplicial decomposition $T_{\mathrm{bounded}}$ of $Q$
        in a process which shares some similarity with the routine rewriting of
        tree decompositions to make them binary, by creating copies of bags.
        However, the process is more intricate because we need to preserve the fact that
        we have a \emph{simplicial} tree decomposition, where interfaces are
        guarded.
        
        We go over $T$ bottom-up: for each bag $b$ of $T$, we create a bag
        $b'$ of $T_{\mathrm{bounded}}$ with same domain as $b$.
        Now, we partition the children of $b$ depending on their intersection
        with $b$: for every subset $S$ of the domain of~$b$ such that $b$
        has some children whose intersection with $b$ is equal to $S$, we
        write these children
        $b_{S,1},\ldots,b_{S,n_S}$ (so we have $S = \dom(b) \cap
        \dom(b_{S,j})$ for all $1 \leq j \leq n_S$), and we write $b'_{S,1},
        \ldots, b'_{S, n_S}$ for the copies that we already created for these bags in
        $T_{\mathrm{bounded}}$. Now, for each~$S$, we create
        $n_S$ fresh bags $b'_{=S,j}$ in $T_{\mathrm{bounded}}$ (for $1 \leq
        j \leq n_S$) with domain equal to~$S$,
        and we set $b'_{=S,1}$ to be a child of~$b'$,
        $b'_{=S,j+1}$ to be a child of $b'_{=S,j}$ for all $1 \leq j < n_S$,
        and we set each $b'_{S,i}$ to be a child of $b'_{=S,i}$.

        This process can clearly be performed in linear time. Now, the degree of
        the fresh bags in $T_{\mathrm{bounded}}$ is at most~$2$, and the degree
        of the copies of the original bags is at most $2^{k+1}$, as stated.
        Further, it is clear that the result is still a tree decomposition (each
        fact is still covered, the occurrences of each element still form a
        connected subtree because they are as in~$T$ with the addition of some
        paths of the fresh bags), and the interfaces in~$T_{\mathrm{bounded}}$
        are the same as in~$T$, so they still satisfy the requirement of
        simplicial decompositions.
  \end{proof}

    We can now prove Proposition~\ref{prp:simplicial}. In fact, as will be easy
    to notice from the proof, our construction further ensures that the
    equivalent ICG-Datalog program is positive, nonrecursive, and
    conjunctive.

  \begin{proof}[Proof of Proposition~\ref{prp:simplicial}]
        Using Lemma~\ref{lem:rewritesimplicial}, we can start by rewriting in
        linear time the
        input simplicial decomposition to ensure that each bag has degree at
        most $2^{k+1}$. Hence, let us assume without loss of generality that $T$
        has this property. We further add an empty root bag if necessary to
        ensure that the root bag of~$T$ is empty and has exactly one child.

	We start by using Lemma~3.1 of \cite{flum2002query} to annotate in linear time each node $b$ of $T$ by
	the set of atoms $\calA_b$ of $Q$ whose free variables are in the domain of $b$ and such that 
	for each atom $A$ of $\calA_b$, $b$ is the topmost bag of $T$ which contains all the variables of $A$.
        As the signature $\sigma$ is fixed, note that we have
        $\card{\calA_b} \leq g_\sigma(k)$ for some function $g_\sigma$ depending only on~$\sigma$.

        Once this is done, we \emph{scrub} $T$, namely, we remove variables from
        bags of~$T$ to ensure that, for each bag $b$, for each $x \in \dom(b)$,
        either $\calA_b$ contains an atom where $x$ appears, or there is a
        child of~$b$ where $x$ appears. We can do so in linear time by
        traversing $T$ bottom-up, keeping track of which variables must be kept,
        and removing all other variables.
        Intuitively, this scrubbing transformation will ensure that, when constructing our
        ICG-Datalog program, all variables in the head of a rule also appear in
        the body of this rule. 

	We now perform a process similar to Lemma~3.1 of \cite{flum2002query}. 
	We start by precomputing in linear time a mapping $\mu$ that associates, to each
        pair $\{x, y\}$ of variables of~$Q$, the set of all atoms in~$Q$ where
        $\{x, y\}$ co-occur. We can compute $\mu$ in linear time by
        processing all atoms of~$Q$ and adding each atom as an image of~$\mu$ for
        each pair of variables that it contains (remember that the arity of~$\sigma$ is
        constant). Now, we do the following computation:
        for each bag $b$ which is not the root of~$T$, letting
        $S$ be its interface with its parent bag, we annotate $b$ by a set of
        atoms $\calA^\guard_b$ defined as follows:
        for all $x,y \in S$ with $x \neq y$, letting $A(\mathbf{z})$ be an atom
        of~$Q$ where $x$ and $y$ appear (which must exist, by the requirement on
        simplicial decompositions, and which we retrieve from~$\mu$), we
        add $A(\mathbf{w})$ to $\calA^\guard_b$, where, for $1 \leq i \leq
        \card{\mathbf{z}}$, we set $w_i \defeq z_i$ if $z_i \in \{x, y\}$, and
        $w_i$ to be a fresh variable otherwise. In other words,
        $\calA^\guard_b$ is a set of atoms that ensures that the interface $S$
        of $b$ with its parent is covered by a clique, and we construct it by
        picking atoms
        of~$Q$ that witness the fact that it is guarded (which it is, because
        $T$ is a simplicial decomposition), and replacing their irrelevant
        variables to be fresh. Note that $\calA^\guard_b$
        consists of at most $k\times(k+1)/2$ atoms, but the domain of these
        atoms is not a subset of $\dom(b)$ (because they include fresh
        variables). This entire computation is performed in linear time.

        We now define the function $f_\sigma(k)$ as follows, remembering that
        $\arity{\sigma}$ denotes the arity of the \emph{extensional} signature:
        \[
          f_\sigma(k) \defeq \max(\arity{\sigma}, k+1) \times \left( g_\sigma(k)
          + 2^{k+1}(k(k+1)/2 +1) \right).
        \]

	We now build our ICG-Datalog program $P$ of body size $f_\sigma(k)$
        which is equivalent to $Q$. We define the intensional signature
        $\sigmai$ by creating one intensional predicate
        $P_b$ for each non-root bag $b$ of~$T$,
        whose arity is the size of the intersection of $b$ with its parent.
        As we
        ensured that the root bag $b_\r$ of~$T$ is empty and has exactly one
        child $b_\r'$, we use $P_{b_\r'}$ as our
        0-ary $\goal$ predicate (because its interface with its parent $b_\r$
        is necessarily empty).
        We now define
        the rules of~$P$ by
        processing $T$ bottom-up: for each bag $b$ of~$T$, we add one rule
        $\rho_b$ with head $P_b(\mathbf{x})$, defined as follows:

	\begin{itemize}
		\item If $b$ is a leaf, then
                  $\rho_b$ is $P_b \leftarrow \bigwedge \calA_b$.
		\item If $b$ is an internal node with children $b_1,\ldots,b_m$
                  (remember that $m \leq 2^{k+1}$),
			then $\rho_b$ is $P_b \leftarrow
                        \bigwedge\calA_b \land \bigwedge_{1 \leq i \leq m}
                        (\bigwedge\calA^{\text{guard}}_{b_i} \land P_{b_i} )$.
	\end{itemize}
        
        We first check that $P$ is intensional-clique-guarded, but this is the
        case because the only use of intensional atoms in rules is the
        $P_{b_i}$, whose free variables are the intersection $S$ of $b$ and
        $b_i$, but by construction the conjunction of atoms
        $\bigwedge\calA^{\guard}_{b_i}$ is a suitable guard for $S$: for each
        $\{x, y\} \in S$, it contains an atom where both $x$ and $y$ occur).

        Second, we check that, pursuant to the definition of Datalog, each head
        variable occurs in the body, but this is the case thanks to the fact
        that $T$ is scrubbed: any variable of an intensional predicate $P_b$
        is a variable of $\dom(b)$, which must occur either in $\calA_b$ or
        in the intersection of $b$ with one of its children $b_i$.

        Third, we check that the body size of $P$ is indeed
        $f_\sigma(k)$. It is clear that $\arity{\sigmai} \leq k+1$, so that
        $\arity{P} \leq \max(\arity{\sigma}, k+1)$. Further, 
        the maximal number
	of atoms in the body of a rule is $g_\sigma(k) + 2^{k+1}(k(k+1)/2 +1)$),
        so we obtain the desired bound.

	What is left to check is that $P$ is equivalent to $Q$. It will be
        helpful to reason about $P$ by seeing it as the conjunctive query $Q'$
        obtained by recursively inlining the definition of rules: observe that
        this a conjunctive query, because $P$ is conjunctive, i.e., for each
        intensional atom $P_b$, the rule $\rho_b$ is the only one where
        $P_b$ occurs as head atom. It is clear that $P$ and~$Q'$ are
        equivalent, so we must prove that $Q$ and $Q'$ are equivalent.

        For the forward direction, it is obvious that $Q' \implies Q$, because
        $Q'$ contains every atom of~$Q$ by construction of the $\calA_b$. For
        the backward direction, noting that the only atoms of $Q'$ that are not
        in $Q$ are those added in the sets $\calA^\guard_b$, we observe that
        there is a homomorphism from $Q'$ to $Q$ defined by mapping each atom
        $A(\mathbf{w})$ occurring in some $\calA^\guard_b$ to the atom
        $A(\mathbf{z})$ of~$Q$ used to create it;
        this mapping is the identity on the two variables $x$ and $y$ used to
        create $A(\mathbf{w})$, and maps each fresh variables $w_i$ to
        $z_i$: the fact that these variables are fresh ensures that this
        homomorphism is well-defined. This shows $Q$ and $Q'$, hence $P$, to be
        equivalent, which concludes the proof.
\end{proof}
\end{toappendix}

This implies that ICG-Datalog can express any CQ up to
increasing the body size parameter, unlike, e.g., $\mu$CGF.
Conversely, we can show that bounded-simplicial-width CQs
\emph{characterize}
the queries expressible in ICG-Datalog when disallowing negation, recursion
and disjunction.  Specifically, 
  a Datalog program is \emph{positive} if it contains no negated
  atoms. It is \emph{nonrecursive} if there is no cycle in the directed graph on
  $\sigmai$ having an edge from $R$ to $S$ whenever a rule contains $R$ in its
  head and $S$ in its body.
  It is \emph{conjunctive}~\cite{benedikt2010impact} if each intensional relation
  $R$ occurs in the head of at most one rule.
  We can then show:

\begin{propositionrep}\label{prp:onlysimplicial}
  For any positive, conjunctive, nonrecursive ICG-Datalog program~$P$
  with body size $k$,
  there is a CQ $Q$ of
  simplicial width $\leq k$ that is equivalent to~$P$.
\end{propositionrep}

\begin{toappendix}
  To prove Proposition~\ref{prp:onlysimplicial}, we will use the notion of the
  \emph{call graph} of a Datalog program. This is the graph $G$ on the relations of
  $\sigmai$ which has an edge from $R$ to $S$ whenever a rule contains relation
  $R$ in its head and $S$ in its body. From the requirement that $P$ is
  nonrecursive,
  we know that this graph $G$ is a DAG.

  We now prove Proposition~\ref{prp:onlysimplicial}:
\end{toappendix}

\begin{toappendix}
  \begin{proof}[Proof of~\ref{prp:onlysimplicial}]
  We first check that every intensional relation reachable from $\text{Goal}$ in
  the call graph~$G$ of~$P$ appears in the head of a rule of $P$ (as $P$ is
    conjunctive, this
  rule is then unique). Otherwise, it is clear that $P$ is not satisfiable (it
  has no derivation tree), so we can simply rewrite $P$ to the query False.
  We also assume without loss of generality that each intensional relation
  except $\goal$ occurs in the body of some rule, as otherwise we can simply
  drop them and all rules where they appear as the head relation.

  In the rest of the proof we will consider the rules of~$P$ in some order, and
    create an equivalent ICG-Datalog program $P'$ 
        with rules
  $r'_0, \ldots, r'_{m'}$. We will ensure that $P'$ is also positive,
    conjunctive, and
  nonrecursive, and that it further satisfies the following additional properties:
  \begin{enumerate}
	  \item Every intensional relation other than $\text{Goal}$ appears in
            the body of exactly one rule of~$P'$, and appears there exactly
            once;
	  \item For every $0 \leq i \leq m$, for every variable $z$ in the body
            of rule $r'_i$ that does not occur in its head,
		  then for every $0 \leq j < i$, $z$ does not occur in $r'_j$.
  \end{enumerate}

    We initialize a queue that contains only the one rule that defines
    $\text{Goal}$ in~$P$, and we do the following until the queue is empty:
  \begin{itemize}
	  \item Pop a rule $r$ from the queue.
		  Let $r'$ be defined from $r$ as follows:
                  for every intensional relation $R$ that occurs in the body of
                  $r'$, letting $R(\mathbf{x^1}), \ldots, R(\mathbf{x^n})$ be
                  its occurrences, rewrite these atoms to $R^1(\mathbf{x^1}),
                  \ldots, R^n(\mathbf{x^n})$, where the $R^i$ are \emph{fresh}
                  intensional relations.
	  \item Add $r'$ to $P'$.
	  \item For each intensional atom $R^i(\mathbf{x})$ of $r'$, letting $R$
            be the relation from which $R^i$ was created, 
            let $r_R$ be the rule of $P$ that has $R$ in its head (by our
            initial considerations, there is one such rule, and as the program
            is
	    conjunctive there is exactly one such rule). Define
            $r'_{R^i}$ from~$r_R$ by replacing its head relation from $R$ to $R^i$, and renaming
            its head and body variables such that the head is exactly
            $R^i(\mathbf{x})$. Further rename all variables that occur in the
            body but not in the head, to replace them by fresh new variables.
            Add $r'_{R^i}$ to the queue.
  \end{itemize}

  We first argue that this process terminates. Indeed, considering the graph
  $G$, whenever we pop from the queue a rule with head relation $R$ (or a fresh
  relation created from a relation~$R$), we add to the queue a finite number of
  rules for head relations created from relations $R'$ such that the edge $(R,
  R')$ is in the graph $G$. The fact that $G$ is acyclic ensures that the
  process terminates (but note that its running time may generally be
  exponential in the input).
  Second, we observe that, by construction, $P$ satisfies the first property,
  because each occurrence of an intensional relation in a body of $P'$ is fresh,
  and satisfies the second property, because each variable which is the body of
  a rule but not in its head is fresh, so it cannot occur in a previous rule

  Last, we verify that $P$ and $P'$ are equivalent, but this is immediate,
  because any derivation tree for $P$ can be rewritten to a derivation tree for $P'$ (by
  renaming relations and variables), and vice-versa.

  We define $Q$ to be the conjunction of all extensional atoms occurring
  in~$P'$. To show that it is equivalent to~$P'$, the fact that $Q$ implies $P'$
  is immediate as the leaves are sufficient to construct a derivation tree, and the
  fact that $P'$ implies $Q$ is because, 
  letting $G'$ be the call graph of~$P'$, by the first property of~$P'$ we can easily observe that it is a
  tree, so the structure of derivation trees of $G'$ also corresponds to $P$, and by
  the second property of~$P'$ we know that two variables are equal in two extensional
  atoms iff they have to be equal in any derivation tree. Hence, $P'$ and $Q$ are
  indeed equivalent.

 We now justify that $Q$ has simplicial width at most $k$.
 We do so by building from $P'$ a simplicial decomposition $T$ of $Q$ of width $\leq k$. 
 The structure of $T$ is the same as $G'$ (which is actually a tree). For each
 bag $b$ of $T$ corresponding to a node of $G'$ standing for a rule $r$ of $P'$, we
 set the domain of $b$ to be the variables occurring in~$r$. It is clear that
 $T$ is a tree decomposition of~$Q$, because each atom of~$Q$ is covered by a
 bag of~$T$ (namely, the one for the rule whose body contained that atom) and the
 occurrences of each variable form a connected subtree (whose root is the node
 of $G'$ standing for the rule where it was introduced, using the second
 condition of~$P'$). Further, $T$ is a simplicial decomposition because 
  $P'$ is intensional-clique-guarded; further, from the second condition, the
  variables shared between one bag and its child are precisely the head
  variables of the child rule. The width is $\leq k$ because the
  body size of an ICG-Datalog program is an upper bound on the maximal number of
  variables in a rule body.
\end{proof}
\end{toappendix}

However, our ICG-Datalog fragment is still exponentially more \emph{concise}
than such CQs:

\begin{propositionrep}\label{prp:simplicialconcise}
  There is a signature $\sigma$ and a family $(P_n)_{n\in\NN}$ of ICG-Datalog
  programs with body size at most $9$ 
  which are positive, conjunctive, and nonrecursive, such that
  $|P_n|=O(n)$
  and any CQ $Q_n$ equivalent to $P_n$ has size $\Omega(2^n)$.
\end{propositionrep}

\begin{toappendix}
  To prove Proposition~\ref{prp:simplicialconcise}, we will introduce the following notion:

  \begin{definition}
    \label{def:match}
  A \emph{match} of a CQ $Q$
  in an instance $I$
  is a subset $M$ of facts of~$I$ which is an image of a homomorphism from
  the canonical instance of $Q$ to~$I$, i.e., $M$ witnesses that $I \models Q$,
  in particular $M \models Q$ as a subset of~$I$.
  \end{definition}

  Our proof will rely on the following elementary observation:
  \begin{lemma}
    \label{lem:match}
  If a CQ $Q$ has a match $M$
  in an instance $I$, then necessarily $\card{Q} \geq \card{M}$.
  \end{lemma}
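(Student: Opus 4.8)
The statement is an immediate structural observation: a match $M$ of $Q$ in $I$ is by definition the image of a homomorphism $h$ from the canonical instance of $Q$ into $I$, and $M$ consists exactly of the facts $h(A)$ as $A$ ranges over the atoms of $Q$ (equivalently, over the facts of the canonical instance). The plan is simply to note that the map $A \mapsto h(A)$ from the atoms of $Q$ onto $M$ is surjective by definition of ``image'', so $\card{M} \leq \card{Q}$, where $\card{Q}$ counts the atoms of $Q$ (up to the fixed-signature conventions used to measure query size). I would state this in one or two sentences.

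The only point deserving a word of care is the meaning of $\card{Q}$: since the paper measures query size in a way that is linearly related to the number of atoms (and the signature, hence arity, is fixed), the number of atoms of $Q$ is at most $\card{Q}$, so the surjection above gives $\card{M} \leq \text{(number of atoms of }Q) \leq \card{Q}$, which is the claimed inequality. I would make this remark explicit so the constant-free accounting is unambiguous.

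There is essentially no obstacle here; the lemma is a bookkeeping fact recorded for later use (it will be combined with the exponential lower bound on match sizes to prove Proposition \ref{prp:simplicialconcise}). The ``hard part'', to the extent there is one, is only to be precise that a single atom of $Q$ can map to a fact already contributed by another atom, so the number of facts in $M$ can only drop relative to the number of atoms — never increase — which is exactly what the surjectivity of $A \mapsto h(A)$ encodes.
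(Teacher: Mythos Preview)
Your proposal is correct and matches the paper's own proof essentially verbatim: the paper's one-line argument is precisely that $M$, being the homomorphic image of the atoms of~$Q$, cannot have more facts than $Q$ has atoms. Your extra remark about how $\card{Q}$ relates to the number of atoms is a harmless clarification that the paper leaves implicit.
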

  
  \begin{proof}
    As $M$ is
  the image of $Q$ by a homomorphism, it cannot have more atoms than~$M$
    has facts.
  \end{proof}

  We are now ready to prove Proposition~\ref{prp:simplicialconcise}:

  \begin{proof}[Proof of Proposition~\ref{prp:simplicialconcise}]
  Fix $\sigma$ to contain a binary relation $R$ and a ternary relation $G$.
  Consider the rule $\rho_0: R_0(x, y) \leftarrow R(x, y)$ and define
  the following rules, for all $i > 0$:
  \[
    \rho_i: R_i(x, y) \leftarrow G(x, z, y), R_{i-1}(x, z), R_{i-1}(z, y)
  \]
  For each $i > 0$, we let $P_i$ consist of the rules $\rho_j$ for $1 \leq j
  \leq i$, as well as $\rho_0$ and the rule $\text{Goal}() \leftarrow R_i(x, y)$.
  It is clear that each $P_i$ is positive, conjunctive, and
  non-recursive; further, the predicate $G$ ensures that it is an ICG-Datalog
  program. The arity is~$3$ and the maximum number of atoms is the body is~$3$,
  so the body size is indeed~$9$.

  We first prove by an immediate induction that, for each $i \geq 0$, considering
  the rules of $P_i$ and the intensional predicate $R_i$, whenever an instance $I$
  satisfies $R_i(a, b)$ for two elements $a, b \in \dom(I)$ then there is an
  $R$-path of length $2^{i}$ from~$a$ to~$b$.
  Now, fixing $i \geq 0$, this
  clearly implies there is
  an instance $I_i$
  of size (number of
  facts) $\geq 2^{i}$, namely, an $R$-path of this length
  with the right set of additional $G$-facts,
  such that $I_i \models P_i$ but any strict subset of~$I_i$ does not satisfy
  $P_i$.

  Now, let us consider a CQ $Q_i$ which is equivalent to $P_i$, and let us show
  the desired size bound. By equivalence, we know that $I_i \models Q_i$, hence
  $Q_i$ has a match $M_i$ in $I_i$, but any strict subset of~$I_i$ does not
  satisfy $Q_i$, which implies that, necessarily, $M_i = I_i$ (indeed,
  otherwise $M_i$ would survive as a match in some strict subset of~$I_i$).
  Now, by Lemma~\ref{lem:match}, we deduce that $\card{Q_i} \geq \card{M_i}$,
  and as $\card{M_i} = \card{I_i} \geq 2^{i}$, we obtain the desired size
  bound, which concludes the proof.
\end{proof}
\end{toappendix}

\subparagraph*{Guarded negation fragments.}
Having explained the connections between ICG-Datalog and CQs, we now study its
connections to the more expressive languages of guarded logics, specifically,
the \emph{guarded negation fragment} (GNF), a fragment of first-order logic
\cite{barany2015guarded}. Indeed, when putting GNF formulae in 
\emph{GN-normal form} \cite{barany2015guarded} or even \emph{weak GN-normal form}
\cite{benedikt2014effective}, we can translate them to
ICG-Datalog, and we can use the \emph{CQ-rank}
parameter \cite{benedikt2014effective} (that measures the maximal number of
atoms in conjunctions) to control the body size parameter.

\begin{propositionrep}\label{prp:gnf-to-icg2}
  There is a function $f_\sigma$ (depending only on~$\sigma$) such that, for any
  weak GN-normal form GNF query $Q$ of CQ-rank $r$, we can compute in time $O(\card{Q})$ an
  equivalent nonrecursive ICG-Datalog program $P$ of body size $f_\sigma(r)$.
\end{propositionrep}

\begin{proof}
  We recall from \cite{benedikt2014effective}, Appendix~B.1, that a weak
  GN-normal form formulae is a $\phi$-formula in the inductive definition below:
  
  \begin{itemize}
    \item A disjunction of existentially quantified conjunctions of
      $\psi$-formulae is a $\phi$-formula;
    \item An atom is a $\psi$-formula;
    \item The conjunction of a $\phi$-formula and of a guard is a
      $\psi$-formula;
    \item The conjunction of the negation of a $\phi$-formula and of a guard is
      a $\psi$-formula.
  \end{itemize}

  We define $f_\sigma : n \mapsto \arity{\sigma} \times n$.

  We consider an input Boolean GN-normal form formula $Q$ of CQ-rank~$r$, and call $T$ its
  abstract syntax tree. We rewrite $T$ in linear time to inline in
  $\phi$-formulae the definition of their $\psi$-formulae, so all nodes of~$T$
  consist of $\phi$-formulae, in which all subformulae are guarded (but they can
  be used positively or negatively).
 
  We now process $T$ bottom-up. We introduce one intensional Datalog predicate $R_n$ per node $n$ in~$T$:
  its arity is the number of variables that are free at~$n$.
  We then introduce one rule $\rho_{n,\delta}$ for each disjunct $\delta$ of the disjunction that defines
  $n$ in~$T$: the head of $\rho_{n, \delta}$ is an $R_n$-atom whose free variables are the variables that are
  free in~$n$, and the
  body of $\rho_{n, \delta}$ is the conjunction that defines~$\delta$, with each subformula replaced by the
  intensional relation that codes it. Of course, we use the predicate $R_r$ for
  the root $r$ of~$T$ as our goal predicate; note that it must be $0$-ary, as
  $Q$ is Boolean so there are no free variables at the root of~$T$. This process defines
  our ICG-Datalog program~$P$: it is clear that this process runs in linear
  time.
 
  We first observe that body size for an intensional predicate $R_n$
  is less than the CQ-rank of the corresponding subformula: recall that the
  \emph{CQ-rank} is the overall
  number of conjuncts occurring in the disjunction of existentially quantified
  conjunctions that defines this subformula. Hence, as the arity of~$\sigma$ is
  bounded, clearly~$P$ has body size $\leq
  f_\sigma(r)$. We next observe that intentional predicates in the bodies of
  rules of~$P$ are
  always guarded, thanks to the guardedness requirement on~$Q$.
  Further, it is obvious that $P$ is nonrecursive, as it is
  computed from the abstract syntax tree~$T$. Last, it is clear that $P$ is equivalent to the
  original formula $Q$, as we can obtain $Q$ back simply by inlining the
  definition of the intensional predicates.
\end{proof}

In fact, the efficient compilation of bounded-CQ-rank normal-form GNF programs
(using the fact that subformulae are ``answer-guarded'', like our guardedness
requirements)
has been used recently (e.g., in \cite{benedikt2016step}),
to give efficient procedures for GNF \emph{satisfiability}, compiling them to
automata (to a treewidth which is not fixed, unlike in our context, but
depends on the formula). ICG-Datalog further
allows clique guards (similar to CGNFO \cite{barany2015guarded}), can reuse
subformulae (similar to the idea of DAG-representations in
\cite{benedikt2014effective}), and supports recursion (similar to
GNFP \cite{barany2015guarded}, or GN-Datalog \cite{barany2012queries} but
whose combined complexity is intractable --- P$^\text{NP}$-complete). 
ICG-Datalog also resembles $\mu$CGF \cite{berwanger2001games}, but remember that
it is not a guarded \emph{negation} logic, so, e.g., $\mu$CGF cannot express all CQs.

Hence, the design of ICG-Datalog, and its compilation to automata, has
similarities with guarded logics. However, to our knowledge, the idea of applying
it to query evaluation is new, and ICG-Datalog is designed to support all
relevant features to capture interesting query languages (e.g.,
clique guards are necessary to capture bounded-simplicial-width queries).

\subparagraph*{Recursive languages.}
The use of fixpoints in ICG-Datalog, in particular, allows us
to capture the combined tractability of interesting recursive languages. First,
observe that our
guardedness requirement becomes trivial when all intensional predicates are
monadic (arity-one), so our main result implies that 
\emph{monadic Datalog} of bounded body size is tractable in combined
complexity on treelike instances. This is reminiscent of the results of
\cite{gottlob2010monadic}:

\begin{propositionrep}\label{prp:monadicdl}
  The combined complexity of monadic Datalog query evaluation on
  bounded-treewidth instances is FPT when parameterized by instance treewidth
  and body size (as in Definition~\ref{def:ICG}) of the monadic Datalog program.
\end{propositionrep}

\begin{proof}
  This is simply by observing that any monadic Datalog program is an ICG-Datalog
  program with the same body size, so we can simply apply
  Theorem~\ref{thm:main}.
\end{proof}

Second, ICG-Datalog can capture \emph{two-way regular path 
queries} (2RPQs) \cite{calvanese2000containment,barcelo2013querying}, a 
well-known query language in the context of  
graph databases and knowledge bases:

\begin{definition}
  We assume that the signature $\sigma$ contains only binary relations.
  A \emph{regular path query} (RPQ) $Q_L$ is defined by a regular language $L$ on the
  alphabet $\Sigma$ of the relation symbols of~$\sigma$. Its semantics is that
  $Q_L$ has two free variables $x$ and $y$, and $Q_L(a, b)$ holds on an instance
  $I$ for $a, b \in \dom(I)$ precisely when there is a directed path $\pi$ 
  of relations of~$\sigma$ 
  from $a$ to
  $b$ such that the label of~$\pi$ is in $L$. A \emph{two-way regular
  path query} (2RPQ) is an RPQ on the alphabet $\Sigma^\pm \defeq \Sigma \sqcup
  \{R^- \mid R \in \Sigma\}$, which holds whenever 
  there is a path from~$a$ to~$b$ with label in~$L$, with $R^-$ meaning that we traverse an
  $R$-fact in the reverse direction. 
  A \emph{Boolean 2RPQ} is a 2RPQ which is existentially quantified on its two free variables.
\end{definition}

\begin{proposition}[\cite{mendelzon1989finding,barcelo2013querying}]
  \label{prp:2rpq-to-icg}
  2RPQ query evaluation (on arbitrary instances) has linear time combined complexity.
  \end{proposition}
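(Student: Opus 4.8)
The plan is to reduce Boolean 2RPQ evaluation to a reachability question on a product graph, the standard approach behind~\cite{mendelzon1989finding,barcelo2013querying}. First I would fix a nondeterministic finite automaton $\calA_L = (S, \Sigma^\pm, s_0, \Delta, F)$ recognizing the regular language $L$ that defines $Q_L$: if $L$ is given directly as an NFA this is immediate, and if it is given as a regular expression then Thompson's construction yields, in linear time, an equivalent NFA (possibly with $\epsilon$-transitions) of size $O(\card{Q_L})$. In all cases $\card{\calA_L} = O(\card{Q_L})$.

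Second, I would construct the product graph $G$ over the vertex set $\dom(I) \times S$, putting an edge from $(a, s)$ to $(b, s')$ whenever there is a binary relation $R \in \Sigma$ such that either $R(a, b) \in I$ and $s' \in \Delta(s, R)$, or $R(b, a) \in I$ and $s' \in \Delta(s, R^-)$; if $\calA_L$ has $\epsilon$-transitions, I would additionally add edges $(a, s) \to (a, s')$ whenever $s' \in \Delta(s, \epsilon)$. By induction on path length, a directed path in $G$ from $(a, s_0)$ to $(b, s_f)$ with $s_f \in F$ exists if and only if there is a $\Sigma^\pm$-labeled walk in $I$ from $a$ to $b$ whose label lies in $L$, i.e.\ iff $Q_L(a, b)$ holds. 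To handle the Boolean (existentially quantified) variant, I would add a fresh source vertex $u_0$ with an edge to every $(a, s_0)$ and a fresh target vertex $u_f$ with an edge from every $(b, s_f)$ with $s_f \in F$; then $I \models Q_L$ iff $u_f$ is reachable from $u_0$ in $G$.

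Third, I would bound the sizes: $G$ has $O(\card{\dom(I)} \cdot \card{S})$ vertices and $O(\card{I} \cdot \card{\Delta})$ non-$\epsilon$ edges (each fact of $I$ contributes one edge per transition on the matching letter or its reverse), plus at most $\card{\dom(I)} \cdot \card{\Delta}$ $\epsilon$-edges; hence $\card{G} = O(\card{I} \cdot \card{Q_L})$, and $G$ can be built within this time bound by a single pass over the facts of $I$ together with the transition table of $\calA_L$. Single-source reachability in $G$ is then decided in time linear in $\card{G}$ by breadth-first or depth-first search, yielding combined complexity $O(\card{I} \cdot \card{Q_L})$, as claimed.

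There is no genuine obstacle here; the only points requiring care are (i)~ensuring that traversing an $R$-fact in reverse is matched against the letter $R^- \in \Sigma^\pm$ rather than $R$, which is exactly the effect of the two clauses in the edge definition, and (ii)~making sure that when $L$ is presented as a regular expression, the NFA used in the product has size linear in $\card{Q_L}$, which Thompson's construction guarantees — the resulting $\epsilon$-transitions are harmless since they only increase $\card{G}$ by a linear factor and do not affect reachability.
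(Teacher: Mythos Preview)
Your argument is correct and is precisely the classical product-graph construction underlying the cited results. Note, however, that the paper does not actually give its own proof of this proposition: it is stated with attribution to~\cite{mendelzon1989finding,barcelo2013querying} and serves only as background, so there is no in-paper proof to compare against.
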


ICG-Datalog allows us to capture this result for Boolean 2RPQs on treelike
instances. In fact, the above result extends to SAC2RPQs, which are trees of 2RPQs with no multi-edges or
loops. We can prove the following result, for Boolean 2RPQs and SAC2RPQs, which 
further implies
compilability to automata (and efficient compilation of provenance
representations). We do not know whether this extends to the more general classes studied 
in~\cite{barcelo2014does}.

\begin{propositionrep}\label{prp:rpqcompile}
  Given a Boolean SAC2RPQ $Q$, we can compute in time
  $O(\card{Q})$ an equivalent ICG-Datalog program $P$ of body size~$4$.
\end{propositionrep}

\begin{proof}
  We first show the result for 2RPQs, and then explain how to extend it to
  SAC2RPQs.

  We have not specified how RPQs are provided as input. We assume that they are
  provided as a regular expression, from which we can use Thompson's
  construction~\cite{aho1986compilers}
  to compute in linear time an equivalent NFA $A$ (with $\epsilon$-transitions) on
  the alphabet $\Sigma^\pm$. Note that the result of Thompson's
  construction has
  exactly one final state, so we may assume that each automaton has exactly one
  final state.

  We now define the intensional signature of the ICG-Datalog program to consist
  of one unary predicate $P_q$ for each state $q$ of the automaton, in addition
  to $\text{Goal}()$. We add the rule $\text{Goal}() \leftarrow P_{q_\f}(x)$ for the
  final state $q_\f$, and for each extensional relation $R(x, y)$, 
  we add the rules $P_{q_0}(x)
  \leftarrow R(x, y)$ and $P_{q_0}(y)
  \leftarrow R(x, y)$, where $q_0$ is the initial state. We then add
  rules corresponding to automaton transitions:
  \begin{itemize}
    \item for each transition from $q$ to $q'$ labeled with a letter
      $R$, we add the rule $P_{q'}(y) \leftarrow P_q(x), R(x, y)$;
    \item for each transition from $q$ to $q'$ labeled with a negative letter
      $R^-$, we add the rule $P_{q'}(y) \leftarrow P_q(x), R(y, x)$;
    \item for each $\epsilon$-transition from $q$ to $q'$ we add the rule
      $P_{q'}(x) \leftarrow P_q(x)$
  \end{itemize}

  This transformation is clearly in linear time,
  and the result clearly satisfies the desired body size
  bound. Further, as the result is a monadic Datalog program, it is clearly an
  ICG-Datalog program. Now, it is clear that, in any instance~$I$ where $Q$
  holds, from two witnessing elements $a$ and $b$ and a path $\pi: a = c_0, c_1, \ldots,
  c_n = b$ from $a$ to $b$ satisfying $Q$, we can build a derivation tree of the
  Datalog program by deriving $P_{q_0}(a), P_{q_1}(c_1), \ldots, P_{q_n}(c_n)$,
  where $q_0$ is the initial state and $q_n$ is final, to match the accepting
  path in the automaton $A$ that witnesses that $\pi$ is a match of~$Q$.
  Conversely, any derivation tree of the Datalog program $P$ that witnesses that an
  instance satisfies $P$ can clearly be used to extract a path of relations
  which corresponds to an accepting run in the automaton.

  \medskip

  We now extend this argument to SAC2RPQs. Recall from
  \cite{barcelo2013querying} that a C2RPQ is a conjunction of 2RPQs, i.e.,
  writing a 2RPQ as $Q(x, y)$ with its two free variables, a C2RPQ is a CQ
  built on RPQs. An AC2RPQ is a C2RPQ where the undirected graph on variables
  defined by co-occurrence between variables is acyclic, and a SAC2RPQ further
  imposes that there are no loops (i.e., atoms of the C2RPQ of the form $Q(x,
  x)$) and no multiedges (i.e., for each variable pair, there is at most one
  atom where it occurs).

  We will also make a preliminary observation on ICG-Datalog programs: any rule
  of the form (*) $A(x) \leftarrow A_1(x), \ldots, A_n(x)$, where $A$ and each
  $A_i$ is a unary atom, can be rewritten in linear time to rules with bounded
  body size, by creating unary intensional predicates $A_i'$ for $1 \leq i \leq
  n$, writing the rule
  $A_n'(x) \leftarrow A_n(x)$, writing the rule $A_i'(x) \leftarrow A_{i+1}'(x), A_i(x)$ for each
  $1 \leq i < n$, and writing the rule $A(x) \leftarrow A_1'(x)$. Hence, we will write rules of
  the form (*) in the transformation, with unbounded body size, being understood
  that we can finish the process by rewriting out each rule of this form to
  rules of bounded body size.

  Given a SAC2RPQ $Q$, we compute in linear time the undirected graph $G$ on
  variables, and its connected components. Clearly we can rewrite each connected
  component separately, by defining one $\text{Goal}_i()$ 0-ary predicate for
  each connected component $i$, and adding the rule $\text{Goal}() \leftarrow
  \text{Goal}_1(), \ldots, \text{Goal}_n()$: this is a rule of form (*), which we can
  rewrite. Hence, it suffices to consider each connected component separately.

  Hence, assuming that the graph $G$ is connected, we root it at an arbitrary
  vertex to obtain a tree $T$. For each node $n$ of~$T$ (corresponding to a variable of the
  SAC2RPQ), we define a unary intensional predicate $P'_n$ which will intuitively
  hold on elements where there is a match of the sub-SAC2RPQ defined by the
  subtree of~$T$ rooted at~$n$, and one unary intensional predicate $P''_{n,n'}$ for
  all non-root~$n$ and children $n'$ of~$n$ in~$T$ which will hold whenever
  there is a match of the sub-SAC2RPQ rooted at $n$ which removes all children
  of~$n$ except~$n'$. Of course we add the rule $\text{Goal}()
  \leftarrow P'_{n_\r}(x)$, where $n_\r$ is the root of~$T$.
  
  Now, we rewrite the SAC2RPQ to monadic Datalog
  by rewriting each edge of $T$ independently, as in the argument for 2RPQs
  above. Specifically, we assume that the edge when read from bottom to top
  corresponds to a 2RPQ; otherwise, if the edge is oriented in the wrong
  direction, we can clearly compute an automaton for the
  reverse language in linear time from the Thompson automaton, by reversing the
  direction of transitions in the automaton, and swapping the initial state and
  the final state. We modify the
  previous construction by replacing the rule for the initial state $P_{q_0}$ by
  $P_{q_0}(x) \leftarrow P'_{n'}(x)$ where $n'$ is the lower node of the 
  edge that we are rewriting, and the rule for the goal predicate in the head is replaced by a rule
  $P''_{n,n'}(x) \leftarrow P_{q_\f}(x)$, where $n$ is the upper node of the
  edge, and $q_\f$ is the final state of the
  automaton for the edge: this is the rule that defines the $P''_{n,n'}$.
  
  Now, we define each $P'_n$ as follows:

  \begin{itemize}
    \item If $n$ is a leaf node of~$T$, we define $P'_n$ by the same rules that
      we used to define $P_{q_0}$ in the previous construction, so that $P'_n$ holds
      of all elements in the active domain of an input instance.
    \item If $n$ is an internal node of~$T$, we define $P'_n(x) \leftarrow
      P''_{n,n_1}(x), \ldots, P''_{n,n_m}(x)$, where $n_1, \ldots, n_m$ are the
      children of~$n$ in~$T$: this is a rule of form (*).
  \end{itemize}

  Now, given an instance $I$ satisfying the SAC2RPQ, from a match of the SAC2RPQ
  as a rooted tree of paths, it is easy to see by bottom-up induction on the
  tree that we derive $P_v$ with the desired semantics, using the correctness of
  the rewriting of each edge. Conversely, a derivation tree for the rewriting can be
  used to obtain a rooted tree of paths with the correct structure where each
  path satisfies the RPQ corresponding to this edge.
\end{proof}

\section{Compilation to Automata}
\label{sec:compilation}
\begin{toappendix}
  \subsection{Details on Tree Encodings}
  \label{apx:tree-encodings}
 We first explain how we encode and decode structures of bounded treewidth to trees whose alphabet size depends only on the treewidth bound and on the signature.
 Having fixed a signature $\sigma$ and a treewidth $k \in \NN$, 
we define a domain $\calD_k = \{a_1, \ldots, a_{2k+2}\}$ and
a finite alphabet $\Gamma^k_\sigma$ whose elements are pairs $(d, s)$, with $d$
being a subset of up to $k+1$ elements of $\calD_k$, and $s$ being either the
empty set or an instance consisting of a single $\sigma$-fact over some subset
of~$d$: in the latter case, we will abuse notation and identify~$s$ with the one fact that it contains. A
\emph{$(\sigma,k)$-tree encoding} is simply a rooted, binary, ordered,
full $\Gamma^k_\sigma$-tree $\la E,\lambda\ra$; the fact that $\la
  E,\lambda\ra$
  is ordered is merely for technical convenience when running bNTAs, but it is otherwise inessential.
  
  Intuitively, a tree
encoding $\la E,\lambda\ra$ 
can be decoded
  (up to isomorphism)
  to an
instance $\decode{\la E,\lambda\ra}$  with the elements of $\calD$ being decoded
  to the domain elements:
each occurrence of an element $a_i \in \calD$ in an \emph{$a_i$-connected
subtree of~$E$}, i.e., a maximal connected subtree where $a_i$ appears in the
first component of each node, is decoded to a fresh element. In other words, reusing
the same $a_i$ in adjacent nodes in $\la E,\lambda\ra$ mean that they stand for the same
element, and using $a_i$ elsewhere in the tree creates a new element. It is
easy to see that $\decode{\la E,\lambda\ra}$ has treewidth $\leq k$, as a tree decomposition for it can
be constructed from $\la E,\lambda\ra$. Conversely, any instance $I$ of treewidth $\leq k$ has a
\emph{$(\sigma,k)$-encoding}, i.e., a $\Gamma^k_\sigma$-tree $\la
E,\lambda\ra$ such that $\decode{\la E,\lambda\ra}$ is $I$ up
to isomorphism: we can construct it from a tree decomposition, replicating each
bag of the decomposition to code each fact in its own node of the tree encoding.
What matters is that this process is FPT-linear for $k$,
so that we will use the following claim:

\begin{lemma}[\cite{flum2002query} (see \cite{amarilli2016leveraging} for
  our type of encodings)]
  \label{lem:getencoding}
  The problem, given an instance~$I$ of treewidth $\leq k$,
  of computing a tree encoding of $I$, is FPT-linear for~$k$.
\end{lemma}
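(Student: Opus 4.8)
The plan is to reduce the statement to the classical linear-time computation of a bounded-width tree decomposition, and then massage such a decomposition into the very specific shape demanded by $\Gamma^k_\sigma$-tree encodings. First I would run Bodlaender's algorithm~\cite{bodlaender1996linear}, which for the fixed treewidth bound $k$ produces in time $O(f(k)\cdot\card{I})$ a tree decomposition $T$ of $I$ of width $\leq k$ with $O(\card{I})$ bags; this is the one step whose linear dependence on $\card{I}$ is non-obvious, and it is exactly the FPT-linear ingredient we need. Everything afterwards is linear-time tree surgery (times a function of $k$), so the overall procedure stays FPT-linear for $k$.

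Concretely, the remaining work splits into three local transformations of $T$. \emph{(a) One fact per node.} Using the covering condition, assign each fact $R(\mathbf a)\in I$ to one bag whose domain contains $\mathbf a$; then replace each bag $b$ by a path of copies of $b$, all with domain $\dom(b)$, one copy per fact assigned to $b$ and labelled with that fact, or a single fact-free copy if none was assigned. This preserves the width and the validity of the decomposition, and makes every node carry at most one $\sigma$-fact over its domain. \emph{(b) Full binary shape.} Rewrite $T$ so that each node has $0$ or $2$ children, by the routine device of threading copies of a bag along a right spine over its list of children and inserting dummy empty nodes $(\emptyset,\emptyset)$ where a child is missing; this changes neither the width nor the decoded instance. \emph{(c) Recolouring into $\calD_k$.} Relabel domain elements by a top-down pass: at the root assign arbitrary distinct names from $\calD_k=\{a_1,\dots,a_{2k+2}\}$ to its (at most $k+1$) elements; at every other node $n$, keep the names already chosen in the parent for the elements shared with the parent, and give each element that is \emph{new} at $n$ a name not occurring in the parent of $n$. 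Such names always exist, since the parent has at most $k+1$ elements, hence forbids at most $k+1$ of the $2k+2$ names, leaving at least $k+1$ free --- more than the number of new elements at $n$. This is precisely why $2k+2$ colours suffice, and where the bound in the definition of $\calD_k$ comes from.

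After these steps $\la E,\lambda\ra$ is a bona fide $\Gamma^k_\sigma$-tree, and the last thing to check is $\decode{\la E,\lambda\ra}\cong I$: every fact of $I$ survives in some node by (a), and the $a_i$-connected subtrees of $E$ are in bijection with $\dom(I)$ because, by the connectivity condition on $T$, once a colour $a_i$ disappears from a node it never returns within the subtree below, so the ``new element gets a fresh name'' rule of (c) guarantees that a reused colour always denotes a genuinely different element; conversely the occurrences of a single element of $\dom(I)$ form a connected subtree of $T$, hence of $E$, all bearing the same colour. I expect step (c) to be the only real obstacle --- not its feasibility, which is the easy counting above, but the care needed to argue that the recolouring neither merges two distinct elements nor tears one element apart; steps (a) and (b) and the linear-time accounting are entirely standard. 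Full details appear in~\cite{flum2002query,amarilli2016leveraging}.
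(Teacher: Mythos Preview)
The paper does not actually prove this lemma: it is stated with a bracketed citation to~\cite{flum2002query} and~\cite{amarilli2016leveraging} and no proof is given, beyond the one-sentence informal hint in Appendix~\ref{apx:tree-encodings} that one ``construct[s] it from a tree decomposition, replicating each bag of the decomposition to code each fact in its own node of the tree encoding.'' Your proposal is correct and is exactly the standard construction that the cited references carry out: Bodlaender for the FPT-linear decomposition, then the three local passes (one fact per bag, full binary shape, top-down recolouring with the $2k+2$ names of~$\calD_k$). One small point worth making explicit for linear time in step~(a) is how you assign each fact to a covering bag without a naive quadratic scan; the paper itself invokes Lemma~3.1 of~\cite{flum2002query} elsewhere for precisely this kind of annotation, and you can do the same here.
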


\end{toappendix}

In this section, we study how we can compile ICG-Datalog queries on treelike
instances to tree automata, to be able to evaluate them efficiently.
As we showed with Proposition~\ref{prp:bntalower}, we need more expressive
automata than bNTAs.
Hence, we use instead the formalism of 
\emph{alternating two-way automata} \cite{tata}, i.e., automata that can
navigate in trees in any direction, and can express transitions using Boolean
formulae on states.
Specifically, we introduce for our purposes a variant of these automata, which
are \emph{stratified} (i.e., allow a form of
stratified negation), and \emph{isotropic} (i.e., no direction is privileged,
in particular order is ignored).

\begin{toappendix}
\subsection{Evaluation}\label{sec:automata-definitions}
\end{toappendix}
As in Section~\ref{sec:treelike}, 
we will define tree automata that run on \emph{$\Gamma$-trees} for some alphabet
$\Gamma$: a $\Gamma$-tree $\la T,
\lambda \ra$ is a
finite rooted ordered
tree with a labeling function $\lambda$ from the nodes of~$T$ to
$\Gamma$.
The \emph{neighborhood} $\neigh(n)$ of a node $n \in T$ is the set which
contains $n$, all children of~$n$, and the parent of~$n$ if it exists.
\subparagraph*{Stratified isotropic alternating two-way automata.}
To define the transitions of our alternating automata, we write 
$\mathscr{B}(X)$ the set of propositional formulae (not necessarily monotone)
over a set~$X$ of variables: we will assume w.l.o.g.\ that \emph{negations are only applied to variables}, 
which we can always enforce using de Morgan's laws.
A \emph{literal} is a propositional variable $x\in X$ (\emph{positive} literal) or the
negation of a propositional variable~$\lnot x$ (\emph{negative} literal).

A \emph{satisfying assignment} of $\phi
\in \mathscr{B}(X)$ consists of two \emph{disjoint} sets $P, N \subseteq X$
(for ``positive'' and ``negative'') such that $\phi$ is a tautology when
substituting the variables of $P$ with~$1$ and those of $N$ with $0$,
i.e., when we have $\nu(\phi)=1$ for every valuation $\nu$ of $X$ such that $\nu(x) = 1$ for all $x \in P$ and
$\nu(x) = 0$ for all $x \in N$.
Note that we allow satisfying assignments with $P \sqcup N
\subsetneq X$, which will be useful for our technical results.
We now define our automata:

\begin{definition}
  \label{def:satwa}
A \emph{stratified isotropic alternating two-way automata} on $\Gamma$-trees
  ($\Gamma$-SATWA) is a tuple $A=(\calQ,q_{\I}, \Delta, \strat)$ with $\calQ$ a finite set of \emph{states}, $q_{\I}$ the \emph{initial state}, 
$\Delta$ the \emph{transition function} from $\calQ \times \Gamma$ to $\mathscr{B}(\calQ)$, and 
  $\strat$ a \emph{stratification function}, i.e., a surjective function from $\calQ$
  to~$\{0,\ldots,m\}$ for some $m \in \NN$,
  such that for any $q, q' \in \calQ$ and $f \in \Gamma$,
  if $\Delta(q,f)$ contains~$q'$ as a
  positive literal (resp., negative literal), then 
  $\strat(q') \leq \strat(q)$ (resp. $\strat(q') < \strat(q)$).

  We define by induction on $0 \leq i \leq m$ an \emph{$i$-run} of~$A$
  on a $\Gamma$-tree $\la T , \lambda \ra$ as a finite tree $\la T_\r,
  \lambda_\r \ra$, with labels of the
  form $(q,w)$ or $\lnot (q,w)$ for $w\in T$ and $q \in \calQ$ with $\strat(q)
  \leq i$, by the
  following recursive definition for all $w \in T$:
  \begin{compactitem}
  \item For $q \in \calQ$ such that $\strat(q) < i$,
    the singleton tree $\la T_\r, \lambda_\r \ra$
    with one node labeled by $(q, w)$ (resp., by~$\neg (q,
    w)$) is an $i$-run if
      there is a $\strat(q)$-run of~$A$ on $\la T , \lambda \ra$ whose root is
      labeled by~$(q, w)$ (resp., if there is no such run);
  \item For $q \in \calQ$ such that $\strat(q) = i$,
    if $\Delta(q,\lambda(w))$ has a satisfying assignment $(P,N)$, if we have a $\strat(q')$-run
          $T_{q^-}$ for each $q^- \in N$ with root labeled by $\neg (q^-, w)$, and a
          $\strat(q^+)$-run $T_{q^+}$ for each $q^+ \in P$ with root labeled by $(q^+, w_{q^+})$ for
          some $w_{q^+}$ in~$\neigh(w)$, then the tree
          $\la T_\r, \lambda_\r \ra$
          whose root is labeled $(q,
          w)$ and has as children all the $T_{q^-}$ and $T_{q^+}$ is an
          $i$-run.
  \end{compactitem}
  A \emph{run} of $A$ starting in a state $q \in \calQ$ at a node $w \in T$ is a
  $m$-run whose root is labeled $(q, w)$.
  We say that $A$ \emph{accepts} $\la T, \lambda \ra$ (written $\la T, \lambda \ra \models A$) if there
  exists a run of $A$ on $\la T, \lambda \ra$ starting in
  the initial state $q_{\I}$ at the root of~$T$.
\end{definition}

Observe that the internal nodes of a run starting in some state $q$ are labeled
by states~$q'$ in the same stratum as~$q$. The leaves of the run may be labeled
by states of a strictly lower stratum or negations thereof, or by states of the
same stratum whose transition function is tautological, i.e., by some $(q', w)$
such that $\Delta(q', \lambda(w))$ has $\emptyset, \emptyset$ as a satisfying
assignment. Intuitively, if we disallow negation in transitions, our automata
amount to
the alternating two-way automata used by \cite{cachat2002two}, with the
simplification that they do not need parity acceptance conditions (because we
only work with finite trees), and that they are \emph{isotropic}:
the run for each positive child state of an internal node may
start indifferently on \emph{any} neighbor of~$w$ in the tree
(its parent, a child, or $w$ itself), no matter the direction. (Note, however,
that the run for negated child states must start on~$w$ itself.)

We will soon explain how the compilation of ICG-Datalog is performed, but we
first note that evaluation of $\Gamma$-SATWAs is in linear time:

\begin{propositionrep}\label{prp:satwaeval}
  For any alphabet $\Gamma$, given a $\Gamma$-tree $T$ and a $\Gamma$-SATWA $A$,
  we can determine whether $T \models A$ in time $O(\card{T} \cdot \card{A})$.
\end{propositionrep}

\begin{proof}
  We use Theorem~\ref{thm:satwaprov} to compute a provenance cycluit~$C$ of the
  SATWA (modified to be a $\overline{\Gamma}$-SATWA by simply ignoring the
  second component of the alphabet) in time $O(\card{T}\cdot\card{A})$.
  Then we conclude by evaluating the
  resulting provenance cycluit (for an arbitrary valuation of that
  circuit) in time $O(\card{C})$ using
  Proposition~\ref{prp:cycluitlinear}.

  Note that, intuitively, the fixpoint evaluation of the cycluit can be understood as a
  least fixpoint computation to determine which pairs of states and tree nodes
  (of which there are $O(\card{T} \cdot \card{A})$) are reachable.
\end{proof}

In fact, this result follows from the definition of provenance cycluits
for SATWAs in the next section, and the claim that these cycluits can be
evaluated in linear time.

\subparagraph*{Compilation.}
\begin{toappendix}
  \subsection{Compilation}
  \label{apx:compilation}
\end{toappendix}

We now give our main compilation result: we can efficiently compile any ICG-Datalog program of
bounded body size into a SATWA that \emph{tests} it (in the same sense as for
bNTAs). 
This is our main technical claim, which is proven in
Appendix~\ref{apx:compilation}.

\begin{theoremrep}\label{thm:maintheorem}
  Given an ICG-Datalog program~$P$ of body size
  $\kp$ and $\ki \in \NN$,
  we can build in FPT-linear time in~$|P|$
  (parameterized by~$\kp, \ki$)
  a SATWA $A_P$ testing $P$ for treewidth~$\ki$.
\end{theoremrep}

\begin{proofsketch}
The idea is to have, for every relational symbol $R$, states of the form 
$q_{R(\mathbf{x})}^{\nu}$, where $\nu$ is a partial valuation of $\mathbf{x}$. This will be the starting state of a run if
        it is possible to navigate the tree encoding from some starting
        node and build in this way a total valuation $\nu'$ that extends $\nu$ and
	such that $R(\nu'(\mathbf{x}))$ holds. 
	When $R$ is intensional, once $\nu'$ is total on~$\mathbf{x}$, we go into a state of the form
        $q_r^{\nu',\calA}$ where $r$ is a rule with head relation $R$ and
        $\calA$ is the set of atoms in the body of $r$ (whose size is bounded by
        the body size). This means that
        we choose a rule to prove $R(\nu'(\mathbf{x}))$. 
	The automaton can then navigate the tree encoding, build $\nu'$ and coherently partition $\calA$ so as to inductively prove the atoms of the body.
	The clique-guardedness condition ensures that, when there is a match of
        $R(\mathbf{x})$, the elements to which $\mathbf{x}$ is mapped can be found together in a bag.
	The fact that the automaton is isotropic relieves us from the syntactic
        burden of dealing with directions in the tree, as one usually has to do with alternating two-way automata.
\end{proofsketch}

\begin{toappendix}
	First, we introduce some useful notations to deal with
        valuations of variables as constants of the encoding alphabet.
        Recall that $\calD_{\ki}$ is the domain
        for treewidth $\ki$, used to define the alphabet of tree
        encodings of width~$\ki$.

	\begin{definition}
          \label{def:partialval}
          Given a tuple $\mathbf{x}$ of variables, a \emph{partial valuation} of~$\mathbf{x}$ is a function $\nu$ from~$\mathbf{x}$ to $\calD_{\ki} \sqcup \{?\}$.
		The set of \emph{undefined} variables of $\nu$ is
                $U(\nu)=\{x_j \mid \nu(x_j)=\mathord{?}\}$: we say that the variables of~$U(\nu)$ are \emph{not defined} by~$\nu$, and the other variables are \emph{defined} by~$\nu$.

		A \emph{total valuation} of $\mathbf{x}$ is a partial valuation $\nu$ of $\mathbf{x}$ such that $U(\nu) = \emptyset$.
                We say that a valuation $\nu$ \emph{extends} another
                valuation~$\nu'$ if the domain of~$\nu'$ is a superset of that
                of~$\nu$, all variables defined by $\nu$ are defined by~$\nu'$
                and are mapped to the same value.
                For $\mathbf{y} \subseteq \mathbf{y}$, we say that $\nu$ is \emph{total on~$\mathbf{y}$} if its restriction to~$\mathbf{y}$ is a total valuation.

                For any two partial valuations $\nu$ of $\mathbf{x}$ and $\nu'$
                of $\mathbf{y}'$
                if we have $\nu(x) = \nu'(x)$ for all $x$ in $(\mathbf{x} \cap
                \mathbf{y}')
                \setminus (U(\nu) \cup U(\nu'))$, we write $\nu \cup \nu'$ for
                the valuation on $\mathbf{x} \cup \mathbf{y}'$ that maps every $x$ to $\nu(x)$ or
                $\nu'(x)$ if one is defined, and to ``?'' otherwise.

		When $\nu$ is a partial valuation of $\mathbf{x}$ with $\mathbf{x} \subseteq \mathbf{x^\prime}$ 
		and we define a partial valuation $\nu^\prime$ of
                $\mathbf{x^\prime}$ with $\nu^\prime \colonequals \nu$, we mean that $\nu^\prime$
		is defined like~$\nu$ on $\mathbf{x}$ and is undefined on $\mathbf{x^\prime} \setminus \mathbf{x}$.
	\end{definition}
	\begin{definition}
                \label{def:hom}
		Let $\mathbf{x}$ and $\mathbf{y}$ be two tuples of
                variables of same arity (note that some variables
                of~$\mathbf{x}$ may be repeated, and likewise for~$\mathbf{y}$).
                Let $\nu : \mathbf{x} \to \calD_{\ki}$
		be a total valuation of~$\mathbf{x}$. We define $\mathrm{Hom}_{\mathbf{y},\mathbf{x}}(\nu)$ to be 
                the (unique) homomorphism between the tuple
                $\mathbf{y}$ and the tuple $\nu(\mathbf{x})$, if
                such a homomorphism exists; otherwise,
                $\mathrm{Hom}_{\mathbf{y},\mathbf{x}}(\nu)$ is~$\mnull$.
	\end{definition}
	
  The rest of this section proves Theorem~\ref{thm:maintheorem}
    in two steps. First, we build a SATWA
    $A'_P$ and we prove that $A'_P$~tests~$P$ for treewidth $\ki$; however,
    the construction of $A'_P$ that we present is not FPT-linear. Second, we
    explain how to modify the construction to construct an equivalent
    SATWA~$A_P$ while respecting the FPT-linear time bound.

  \subparagraph*{Construction of $\bm{A_P^\prime}$.} We construct the SATWA $A'_P$
    by describing its states and transitions.
	First, for every extensional atom $S(\mathbf{x})$ appearing in (the body) of a rule of $P$ and partial valuation $\nu$ of 
	$\mathbf{x}$, we introduce a state $q_{S(\mathbf{x})}^{\nu}$.
	This will be the starting state of a run if
        it is possible to navigate the tree encoding from some starting
        node and build this way a total valuation $\nu'$ that extends $\nu$ and
        such that $S(\nu'(\mathbf{x}))$ holds in the tree encoding, in a node
        whose domain elements that are in the image of~$\nu'$ will decode to the
        same element as they do in the node where the automaton can reach state $q_{S(\mathbf{x})}^{\nu}$.
        In doing so, one has to be careful not to leave the occurrence
        subtree of the values of the valuation (the ``allowed subtree'').
        Indeed, in a tree encoding, an element $a \in \calD_{\ki}$
        appearing in two bags that are separated by another bag not
        containing $a$ is used to encode two distinct elements of the
        original instance, rather than the same element. We now formally define the transitions
        needed to implement this.

	Let $(d,s) \in \Gamma_{\sigmae}^{\ki}$ be a symbol; we have the
        following transitions:

	\begin{itemize}
          \item If there is a $j$ such that $\nu(x_j)\neq\mathord{?}$ and $\nu(x_j) \notin d$, then $\Delta(q_{S(\mathbf{x})}^{\nu},(d,s)) \colonequals \false$.
			This is to prevent the automaton from leaving the
                        allowed subtree.
		\item Else if $\nu$ is not total, then
		$\Delta(q_{S(\mathbf{x})}^{\nu},(d,s)) \colonequals
		q_{S(\mathbf{x})}^{\nu} \lor \bigvee\limits_{a \in d, x_j \in U(\nu)} q_{S(\mathbf{x})}^{\nu \cup \{x_j\mapsto a\}}$.
		That is, either we continue navigating in the same state, or we guess a value for some undefined variable.
              \item Else if $\nu$ is total but $s \neq S(\nu(\mathbf{x}))$,
                  then
		$\Delta(q_{S(\mathbf{x})}^{\nu},(d,s)) \colonequals q_{S(\mathbf{x})}^{\nu}$: if the fact $s$ of the node is not a match, then we continue searching. 
              \item Else, the only remaining possibility is that $\nu$ is total and that $s = S(\nu(\mathbf{x}))$, in which case we set  $\Delta(q_{S(\mathbf{x})}^{\nu},(d,s)) \colonequals \true$, i.e., we have found a node containing the desired fact.
	\end{itemize}
	
	For every rule $r$ of $P$, subset $\calA$ of the literals in the
        body of $r$, and partial valuation $\nu$ of the variables in
        $\calA$ that is total for the variables in the head of $r$,
	we introduce a state $q_r^{\nu,\calA}$. 
	This state is intended to prove the literals in $\calA$ with the partial valuation $\nu$.
	We will describe the transitions for those states later.
	
	For every intensional predicate $R(\mathbf{x})$ appearing in a rule of $P$ and total valuation $\nu$ of $\mathbf{x}$,
	we have a state $q_{R(\mathbf{x})}^{\nu}$. This state is intended to prove $R(\mathbf{x})$ with the total valuation $\nu$.
	Let $(d,s) \in \Gamma_{\sigmae}^{\ki}$ be a symbol; we have
        the following transitions:

	\begin{itemize}
		\item If there is a $j$ such that $\nu(x_j) \notin d$,
                  then $\Delta(q_{R(\mathbf{x})}^{\nu},(d,s)) \colonequals \false$.
                  This is again in order to prevent the automaton from leaving the
                  allowed subtree.
		\item Else, $\Delta(q_{R(\mathbf{x})}^{\nu},(d,s))$ is defined a
                  disjunction of all the $q_r^{\nu^\prime,\calA}$ for
                  each rule~$r$ such that
		the head of~$r$ is $R(\mathbf{y})$, $\nu^\prime \colonequals
                \mathrm{Hom}_{\mathbf{y},\mathbf{x}}(\nu)$ is not~$\mnull$ and
		$\calA$ is the set of all literals in the body of $r$. Notice that because $\nu$ was total on~$\mathbf{x}$, $\nu'$ is also total on~$\mathbf{y}$.
		This transition simply means that we need to chose an
                appropriate rule to prove $R(\mathbf{x})$.
		We point out here that these transitions are the ones that make the construction quadratic instead of linear in $\card{P}$, but this will
		be handled later.
	\end{itemize}

	It is now time to describe transitions for the states
        $q_r^{\nu,\calA}$. Let $(d,s) \in \Gamma_{\sigmae}^{\ki}$, then:

	\begin{itemize}
		\item If there is a variable $z$ in $\calA$ such that $z$ is defined by $\nu$ and $\nu(z) \notin d$, then $\Delta(q_r^{\nu,\calA},(d,s)) \colonequals \false$.
		\item Else, if $\calA$ contains at least two literals, then $\Delta(q_r^{\nu,\calA},(d,s))$ is defined as a disjunction of $q_r^{\nu,\calA}$ and of $\Bigg[$~a disjunction over
		all the non-empty sets $\calA_1,\calA_2$ that partition $\calA$ of $\bigg[$a disjunction over all the total valuations $\nu^\prime$
                of $U(\nu)\cap \vars(\calA_1) \cap \vars(\calA_2)$ with values in $d$ of
		$\big[q_r^{\nu \cup \nu^\prime,\calA_1} \land q_r^{\nu \cup \nu^\prime,\calA_2}\big]\bigg]\Bigg]$.
		This transition means that we allow to split in two partitions
                the literals that need to be proven, and for each partition we launch one run
		that will have to prove it. In doing so, we
                have to take care that the two runs will build valuations
                that are consistent.
		This is why we fix the value of the variables that they have in common with a total valuation $\nu'$.
		\item Else, if $\calA = \{S(\mathbf{y})\}$ where $S$ is an
                  extensional relation, then $\Delta(q_r^{\nu,\calA},(d,s)) \colonequals q_{S(\mathbf{y})}^{\nu}$.
		\item Else, if $\calA = \{R^\prime(\mathbf{y})\}$ or $\{\lnot
                  R^\prime(\mathbf{y})\}$ where $R^\prime$ is an intensional
                  relation, and if $|\mathbf{y}| = 1$, and if $\nu(y)$ is undefined (where we write $y$ the one element of~$\mathbf{y}$), then 
			$\Delta(q_r^{\nu,\calA},(d,s)) \colonequals q_r^{\nu,\calA} \lor \bigvee_{a \in d} q_r^{\nu \cup \{y\mapsto a\},\calA}$.
		\item Else, if $\calA = \{R^\prime(\mathbf{y})\}$ where
                  $R^\prime$ is an intensional relation, then we will only define the transitions in the case where $\nu$ is total on~$\mathbf{y}$, in which case we set
		$\Delta(q_r^{\nu,\calA},(d,s)) \colonequals q_{R^\prime(\mathbf{y})}^{\nu}$.
                It is sufficient to define the transitions in this case, because $q_r^{\nu,\{R^\prime(\mathbf{y})\}}$ can only be reached
                if $\nu$ is total on~$\mathbf{y}$. Indeed, if $|\mathbf{y}| = 1$, then $\nu$ must be
                total on~$\mathbf{y}$ because we would have applied the previous bullet
                point otherwise. If $|\mathbf{y}| > 1$, the only way we could have reached the state $q_r^{\nu,\{R^\prime(\mathbf{y})\}}$
                is by a sequence of transitions involving $q_r^{\nu_0,\calA_0},\ldots,q_r^{\nu_m,\calA_m}$, where $\calA_0$ are all the literals in the body of $r$,
		$\calA_m$ is $\{R^\prime(\mathbf{y})\}$ and $\nu_m$ is $\nu$.
		We can then see that, during the partitioning process,
                $R^\prime(\mathbf{y})$ must have been separated from all the extensional atoms that formed its guard, hence
		all its variables have been assigned a valuation.
		\item Else, if $\calA = \{\lnot R^\prime(\mathbf{y})\}$ with $R^\prime$ intensional, 
		then $\Delta(q_r^{\nu,\calA},(d,s)) \colonequals \lnot q_{R^\prime(\mathbf{y})}^{\nu}$.
                Again, we can show that it suffices to consider the case where $\nu$ is total on~$\mathbf{y}$, for the same reasons as in the previous bullet point.
	\end{itemize}

Finally, the initial state of $A_P^\prime$ is
$q_{\text{Goal}}^{\emptyset}$.

\medskip

We describe the stratification function $\strat^\prime$ of $A^\prime_P$. Let $\strat$ be that of $P$.
For any state $q$ of the form $q_{T(\mathbf{x})}^\nu$ or
$q_r^{\nu,\calA}$ with~$r$ having as head relation $T$, then $\strat^\prime(q)$ is 
$0$ if $T$ is extensional and $\strat(T)$ (which is $\geq 1$) if $T$ is intensional. Notice that then only states corresponding to extensional relations are in the first stratum.
It is then clear from the transitions that $\strat^\prime$ is a valid stratification function for $A^\prime_P$.

\medskip

    As previously mentioned, the construction of $A'_P$ is not
    FPT-linear, but we will explain at the end of the proof how to construct in FPT-linear time a SATWA
    $A_P$ equivalent to~$A'_P$.

\subparagraph*{$\bm{A^\prime_P}$ tests $\bm{P}$ on tree encodings of width
$\bm{\leq \ki}$.}
To show this claim, let $\la T, \lambda_E \ra$ be a $(\sigmae,\ki)$-tree encoding. 
Let $I$ be the instance obtained by decoding $\la T, \lambda_E \ra$;
we know that $I$ has treewidth $\leq \ki$ and that we can define from $\la T, \lambda_E \ra$ a tree 
decomposition $\la T, \dom \ra$ of~$I$ whose underlying tree is also~$T$.
For each node $n \in T$, let $\dec_n : \calD_{\ki} \to \dom(n)$ be the function that decodes the 
elements in node~$n$ of the encoding to the elements of~$I$ that are in the corresponding bag of the tree decomposition,
and let $\enc_n : \dom(n) \to \calD_{\ki}$ be the inverse function that encodes back the elements,
so that we have $\dec_n \circ \enc_n = \enc_n \circ \dec_n =
\mathrm{Id}$.
We will denote elements of $\calD_{\ki}$ by $a$ and elements in the domain of $I$ by $c$.

We recall some properties of tree decompositions and tree encodings:
\begin{property}
\label{propTreeEnc1}
Let $n_1,n_2$ be nodes of $T$ and $a \in \calD_{\ki}$ be an (encoded)
  element that appears in the $\lambda_E$-image of $n_1$ and $n_2$. Then the element $a$ appears in every node in the path from $n_1$ to $n_2$ if and only if $\dec_{n_1}(a) = \dec_{n_2}(a)$.
\end{property}

\begin{property}
\label{propTreeEnc2}
	Let $n_1,n_2$ be nodes of $T$ and $c$ be an element of $I$ that appears
        in $\dom(n_1) \cap \dom(n_2)$.
	Then for every node $n'$ on the path from $n_1$ to $n_2$, $c$ is also in
        $\dom(n')$, and moreover $\enc_{n'}(c)=\enc_{n_1}(c)$.
\end{property}

We start with the following lemma about extensional facts:
\begin{lemma}
\label{lem:extensional}
  For every extensional relation $S$, node $n \in T$, variables $\mathbf{y}$, and partial valuation $\nu$ of $\mathbf{y}$, there exists a run $\rho$ of $A^\prime_P$ starting at node $n$
in state $q_{S(\mathbf{y})}^{\nu}$ if and only if there exists
a fact $S(\mathbf{c})$ in~$I$ 
such that we have $\dec_n(\nu(y_j))=c_j$
  for every $y_j$ defined by $\nu$.
  We call this a match $\mathbf{c}$ of~$S(\mathbf{y})$ in~$I$ that is \emph{compatible with $\nu$ at node $n$}.
		
\end{lemma}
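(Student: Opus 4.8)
The plan is to prove the two directions of the stated equivalence separately, in both cases relying on the following observation about the automaton $A'_P$: for an extensional state $q_{S(\mathbf{y})}^{\nu}$, every transition is $\false$, or $\true$, or a single positive state, or a disjunction of single positive states. Hence any satisfying assignment picks out exactly one positive child state (and no negative literal), so a run of $A'_P$ starting in such a state is in fact a \emph{path} $n_0 = n, n_1, n_2, \ldots, n_\ell$ in $T$, with $n_{i+1}$ a neighbor of $n_i$, along which the partial valuation only grows ($\nu_{i+1}$ is either $\nu_i$ or $\nu_i$ with one more variable assigned a value of $d_{n_i}$), on which no transition is $\false$ (so every currently-defined value of the running valuation lies in the current bag), and which ends because $\Delta(q_{S(\mathbf{y})}^{\nu_\ell},\lambda_E(n_\ell)) = \true$, i.e.\ $\nu_\ell$ is total and the fact $s$ stored at $n_\ell$ equals $S(\nu_\ell(\mathbf{y}))$.

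For the ``only if'' direction I would induct on the length of this run path. In the base case $\ell = 0$, so $\nu$ is already total and $s = S(\nu(\mathbf{y}))$ is the fact stored at $n$; then $\mathbf{c} \defeq \dec_n(\nu(\mathbf{y}))$ is a fact of~$I$ and is trivially a match of $S(\mathbf{y})$ compatible with $\nu$ at~$n$. For the inductive step, strip the first edge: the run continues from a neighbor $n_1$ of $n$ in state $q_{S(\mathbf{y})}^{\nu_1}$, where $\nu_1$ agrees with $\nu$ on all variables $\nu$ defines. By the induction hypothesis there is a match $\mathbf{c}$ of $S(\mathbf{y})$ compatible with $\nu_1$ at~$n_1$; it remains to push compatibility back along the edge $n$--$n_1$. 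For every $y_j$ defined by~$\nu$, the value $a = \nu(y_j) = \nu_1(y_j)$ lies in $d_{n}$ (since the transition at~$n$ was not $\false$) and in $d_{n_1}$ (since $\mathbf{c}$ is compatible with $\nu_1$ at $n_1$, so $\dec_{n_1}(a)$ is defined), hence Property~\ref{propTreeEnc1} applied to the single edge $n$--$n_1$ gives $\dec_n(a) = \dec_{n_1}(a) = c_j$, which is exactly compatibility of $\mathbf{c}$ with $\nu$ at~$n$.

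For the ``if'' direction I would start from a match $\mathbf{c}$ of $S(\mathbf{y})$ compatible with $\nu$ at~$n$ and pick a node $n'$ of~$T$ whose stored fact decodes to $S(\mathbf{c})$, i.e.\ whose fact is $S(\enc_{n'}(\mathbf{c}))$ (such a node exists because the encoding represents every fact of~$I$). For each $y_j$ defined by~$\nu$ we have $c_j \in \dom(n)$ (as $c_j = \dec_n(\nu(y_j))$) and $c_j \in \dom(n')$ (as $S(\mathbf{c})$ is stored at~$n'$), so by Property~\ref{propTreeEnc2} the element $c_j$ occurs in every bag on the path from~$n$ to~$n'$ with a constant encoding equal to $\enc_n(c_j) = \nu(y_j)$. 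This makes the automaton's navigation legal: it can walk from~$n$ to~$n'$ staying in state $q_{S(\mathbf{y})}^{\nu}$, since along the way every value defined by~$\nu$ stays inside the visited bag, so no transition is $\false$ (and if it happens to reach a node storing $S(\nu(\mathbf{y}))$ earlier, the transition is $\true$ and we are done even sooner). At~$n'$, it then guesses, one transition at a time (each time re-entering $n' \in \neigh(n')$), a value $\enc_{n'}(c_j) \in d_{n'}$ for each still-undefined variable $y_j$; after these guesses the valuation equals $\enc_{n'}(\mathbf{c})$ on all of~$\mathbf{y}$, so $S$ of it is precisely the stored fact and the transition is $\true$, closing the run.

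The only real work is the bookkeeping forced by the fact that one element of~$I$ may be named by different symbols of $\calD_{\ki}$ at different nodes of the encoding, so that one must continually check that the running valuation stays ``legal'' (inside the current bag) and keeps denoting the intended elements of~$I$; Properties~\ref{propTreeEnc1} and~\ref{propTreeEnc2} are exactly what controls this, so once the path structure of runs from extensional states is isolated there is no genuinely hard step.
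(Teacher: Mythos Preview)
Your proposal is correct and follows essentially the same approach as the paper: the forward direction inducts along the (path-shaped) run using Property~\ref{propTreeEnc1} to transport compatibility across an edge, and the backward direction navigates from~$n$ to the node storing $S(\mathbf{c})$ using Property~\ref{propTreeEnc2}, then completes the valuation in place. The only cosmetic difference is that you induct top-down (stripping the first edge) where the paper inducts bottom-up from the leaf; one small phrasing slip is that a satisfying assignment of a disjunction need not pick \emph{exactly} one disjunct, but since a singleton assignment always suffices you may harmlessly restrict to those, and the paper makes the same implicit simplification.
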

\begin{proof}
  We prove each direction in turn.

  \subparagraph*{\qquad Forward direction.}
	Suppose there exists  a run $\rho$ of $A^\prime_P$ starting at node $n$
in state $q_{S(\mathbf{y})}^{\nu}$. First, notice that by design of the
  transitions starting in a state of that form, 
  states appearing in the labeling of the run can only be of the form
  $q_{S(\mathbf{y})}^{\nu\prime}$ for an extension $\nu'$ of~$\nu$.
We will show by induction on the run that for every node $\pi$ of the run labeled by $(q_{S(\mathbf{y})}^{\nu\prime},m)$, 	
there exists $\mathbf{c'}$ such that $S(\mathbf{c'}) \in I$ and
  $\mathbf{c'}$ is compatible with $\nu'$ at node $m$. This will conclude
  the proof of the forward part of the lemma, by taking $m=n$.

The base case is when $\pi$ is a leaf of $\rho$.
	The node $\pi$ is then labeled by $(q_{S(\mathbf{y})}^{\nu^\prime}, m)$ such that $\Delta(q_{S(\mathbf{y})}^{\nu^\prime},\lambda_E(m)) = \true$.
	Let $(d,s) = \lambda_E(m)$. 
	By construction of the automaton we have that $\nu^\prime$ is total and $s = S(\nu^\prime(\mathbf{y}))$.
	We take $\mathbf{c'}$ to be $\dec_m(\nu^\prime(\mathbf{y}))$, which satisfies the compatibility condition by definition
	and is such that $S(\mathbf{c'}) =
        S(\dec_m(\nu^\prime(\mathbf{y}))) = \dec_m(s) \in I$.

	When $\pi$ is an internal node of $\rho$, we write
        $(q_{S(\mathbf{y})}^{\nu^\prime}, m)$ its label.
        By definition of the
        transitions of the automaton, we have
        $\Delta(q_{S(\mathbf{y})}^{\nu^\prime},(d,s)) =
	q_{S(\mathbf{y})}^{\nu^\prime} \lor \bigvee\limits_{a \in d, y_j \in U(\nu^\prime)} q_{S(\mathbf{y})}^{\nu^\prime \cup \{y_j\mapsto a\}}$.
        Hence, the node $\pi$ has exactly one child $\pi'$, the first component
        of its label is some $m' \in \neigh(m)$, and we have two
        cases depending on the first component of its label:

	\begin{itemize}
          \item $\pi'$ may be labeled by $(q_{S(\mathbf{y})}^{\nu^\prime}, m^\prime)$.
                Then by induction on the run there exists $\mathbf{c''}$ such that $S(\mathbf{c''}) \in I$ and
                $\mathbf{c''}$ is compatible with $\nu'$ at node $m'$.
                We take $\mathbf{c'}$ to be $\mathbf{c''}$, so
                that we only need to check the compatibility
                condition, i.e., that for every $y_j$ defined
                by $\nu'$, $\dec_m(\nu'(y_j)) = c_j = \dec_{m'}(\nu'(y_j))$.
                This is true by Property~\ref{propTreeEnc1}. Indeed, 
                for every $y_j$ defined by $\nu'$, we must have $\nu'(y_j) \in
                m'$, otherwise $\pi'$ would have a label that cannot occur in a
                run.
        \item $\pi^\prime$ is labeled by $(q_{S(\mathbf{y})}^{\nu^\prime \cup \{y_j\mapsto a\}}, m^\prime)$
          for some $a \in d$ and for some $y_j \in U(\nu^\prime)$. 
                Then by induction on the run there exists $\mathbf{c''}$ such that $S(\mathbf{c''}) \in I$ and
                $\mathbf{c''}$ is compatible with $\nu^\prime \cup \{y_j\mapsto a\}$ at node $m'$.
                We take $\mathbf{c'}$ to be $\mathbf{c''}$, which again satisfies the compatibility condition thanks to Property~\ref{propTreeEnc1}.	
\end{itemize}

  \subparagraph*{\qquad Backward direction.}
Now, suppose that there exists $\mathbf{c}$ such that
$S(\mathbf{c}) \in I$ and $\mathbf{c}$ is compatible with $\nu$ at node~$n$.
The fact $S(\mathbf{c})$ is encoded somewhere
   $\la T, \lambda_E \ra$, so there exists a node $m$ such that, letting $(d,s)$ be $\lambda_E(m)$, we have $\dec_m(s)=S(\mathbf{c})$.
Let $n = m_1, m_2,\ldots,m_p = m$ be the nodes on the path from $n$ to $m$, and $(d_i,s_i)$ be $\lambda_E(m_i)$ for $1 \leq i \leq p$.
By compatibility, for every $y_j$ defined by $\nu$ we have $\dec_n(\nu(y_j)) = c_j$. But $\dec_n(\nu(y_j)) \in \dom(n)$ and $c_j \in \dom(m)$ so by Property~\ref{propTreeEnc2}, for every $1 \leq i \leq p$ we have
$c_j \in \dom(m_i)$ and $\enc_{m_i}(c_j) =\enc_{n}(c_j) = \enc_{n}(\dec_n(\nu(y_j))) = \nu(y_j)$, so that $\nu(y_j) \in d_i$.
We can then construct a run $\rho$ starting at node $n$ in state
  $q_{S(\mathbf{y})}^{\nu}$ as follows. The root $\pi_1$ is labeled by $(q_{S(\mathbf{y})}^{\nu},n)$, and
for every $2 \leq i \leq p$, $\pi_i$ is the unique child of $\pi_{i-1}$ and is labeled by $(q_{S(\mathbf{y})}^{\nu}, m_i)$. 
This part is valid because we just proved that for every $i$, there is no $j$ such that $y_j$ is defined by $\nu$ and $\nu(y_j) \notin d_j$.
Now from $\pi_m$, we continue the run by staying at node $m$ and building up the
valuation, until we reach a total valuation $\nu_{\f}$ such that $\nu_\f(\mathbf{y}) = \enc_m(\mathbf{c})$.
Then we have $s = S(\nu_\f(\mathbf{y}))$ and the transition is $\true$, which
completes the definition of the run.
\end{proof}

The preceding lemma concerns the base case of extensional relations. We
now prove a similar \emph{equivalence lemma} for intensional relations.
This lemma allows us to conclude the correctness proof, by applying it to the
$\mathrm{Goal}()$ predicate and to the root of the tree-encoding.

\begin{lemma}
  \label{lem:correctness}
For every relation $R$, node $n \in T$ and total valuation $\nu$ of $\mathbf{x}$, there exists a run $\rho$ of $A^\prime_P$ starting at node $n$
in state $q_{R(\mathbf{x})}^{\nu}$ if and only if 
  $R(\dec_n(\nu(\mathbf{x}))) \in P(I)$. 
\end{lemma}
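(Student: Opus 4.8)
The plan is to prove Lemma~\ref{lem:correctness} by a nested induction. The outer induction is on the stratum index $i \in \{0,\dots,m\}$: extensional relations can be placed in stratum~$0$, and the statement for them is exactly Lemma~\ref{lem:extensional} specialised to a \emph{total} valuation, so the real work is for an intensional relation $R$ with $\strat(R)=i$, under the outer hypothesis that the full equivalence already holds for every relation of stratum $<i$. I will use \emph{both} directions of this hypothesis; in particular, for a negated intensional literal $\lnot R'(\mathbf{z})$ appearing in a rule body of stratum~$i$ (so $\strat(R')<i$), I need that $R'(\dec_w(\mu(\mathbf{z})))\notin P(I)$ iff there is no run of $A'_P$ from $q_{R'(\mathbf{z})}^{\mu}$ at $w$, which is precisely what the definition of an $i$-run with a negated leaf requires. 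Within stratum~$i$ the two implications are proved separately, each with its own inner induction.

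\textbf{From a derived fact to a run.} Assume $R(\dec_n(\nu(\mathbf{x})))\in P(I)$ with $\strat(R)=i$. Once the lower strata are fixed, the stratum-$i$ relations are obtained by a least fixpoint, so the fact appears at some finite stage~$t$, and I induct on~$t$. Some rule $r:R(\mathbf{y})\leftarrow\psi$ fires via a match $h$ of $\psi$ with $h(\mathbf{y})=\dec_n(\nu(\mathbf{x}))$, every body literal holding at stage $t-1$. For each body literal I get a witnessing sub-run: extensional atoms from Lemma~\ref{lem:extensional}; lower-stratum intensional literals (positive or negated) from the outer hypothesis; same-stratum positive intensional atoms, derived at stage $\leq t-1$, from the inner hypothesis. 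I then assemble a run for $q_{R(\mathbf{x})}^{\nu}$ at~$n$: first the transition $q_{R(\mathbf{x})}^{\nu}\to q_r^{\nu',\calA}$ with $\calA$ the whole body and $\nu'=\mathrm{Hom}_{\mathbf{y},\mathbf{x}}(\nu)$, then a navigation phase in the $q_r^{\ldots}$ states that repeatedly splits $\calA$, extends the partial valuation, isolates each body literal, and delegates to its sub-run. What makes this navigation legal is clique-guardedness: for each intensional body atom $R'(\mathbf{w})$, the variables $\mathbf{w}$ co-occur in extensional atoms of $\psi$, so $h(\mathbf{w})$ is a clique of $I$ and hence, by the Helly property of subtrees of a tree~\cite{gavril1974intersection}, lies inside a single bag; together with the connectivity axiom of tree decompositions and Properties~\ref{propTreeEnc1}--\ref{propTreeEnc2}, this lets me route the automaton through the minimal subtree of~$T$ spanning the bags that host the body facts, ordering the splits so that whenever two parts separate the automaton sits on a bag containing their shared elements (as the split transition demands), never triggering a $\false$ guard.

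\textbf{From a run to a derived fact.} Assume there is a run $\rho$ of $A'_P$ from $q_{R(\mathbf{x})}^{\nu}$ at~$n$ with $\strat(R)=i$; I induct on $\card{\rho}$. The transitions force $\rho$ to begin by choosing a rule $r$ with head $R(\mathbf{y})$, passing to $q_r^{\nu',\calA}$ with $\nu'=\mathrm{Hom}_{\mathbf{y},\mathbf{x}}(\nu)$ and $\calA$ the whole body, then performing navigation and splitting steps that partition the body literals and extend $\nu'$ to a total valuation of all body variables, ending, for each literal, in a leaf state of the form $q_{S(\mathbf{w})}^{\mu}$, $q_{R'(\mathbf{w})}^{\mu}$, or $\lnot q_{R'(\mathbf{w})}^{\mu}$. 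Reading the assigned values off $\rho$ yields a candidate match $h$ of the body with $h(\mathbf{y})=\dec_n(\nu(\mathbf{x}))$; $h$ is consistent across every split because the split transition first pins every shared undefined variable to a value in the current bag, and values cannot drift between bags thanks to the $\false$ guards and Properties~\ref{propTreeEnc1}--\ref{propTreeEnc2}. Each body literal then holds: extensional ones by Lemma~\ref{lem:extensional}; lower-stratum intensional ones, positive or negated, by the outer hypothesis (for a negated literal the run leaf is a single node $\lnot(q_{R'(\mathbf{w})}^{\mu},w)$, valid precisely when $q_{R'(\mathbf{w})}^{\mu}$ has no run, i.e.\ $R'(\dec_w(\mu(\mathbf{w})))\notin P(I)$); same-stratum positive intensional ones by the inner hypothesis applied to the strictly smaller sub-run rooted at the $q_{R'(\mathbf{w})}^{\mu}$ leaf. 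Hence $\psi$ is satisfied by $h$, and rule $r$ derives $R(\dec_n(\nu(\mathbf{x})))$.

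\textbf{Main obstacle.} The hard part is the first implication: reassembling one global match of a (bounded-size) rule body, whose facts may be scattered across many bags of the tree decomposition, by the purely local, isotropic moves of the automaton. This is exactly where clique-guardedness is indispensable --- otherwise an intensional subgoal's tuple need not fit in one bag, and the automaton would have no node from which to delegate to it --- and where one must show, using the connectivity of tree decompositions and the Helly property of the relevant subtrees, that a common ``meeting bag'' always exists for every split and every delegation, so that the constructed run never leaves the allowed subtree. The converse implication is, by contrast, mostly a matter of maintaining invariants along the run and invoking the appropriate induction hypothesis.
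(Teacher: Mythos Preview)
Your proposal is correct and follows essentially the same approach as the paper: an outer induction on strata with the extensional base case handled by Lemma~\ref{lem:extensional}, the fact-to-run direction proved by an inner induction on the fixpoint stage using clique-guardedness together with the Helly/Gavril argument to locate a bag covering each intensional subgoal's tuple, and the run-to-fact direction proved by structural induction on the run while maintaining a compatibility invariant across splits via Properties~\ref{propTreeEnc1}--\ref{propTreeEnc2}. The only cosmetic difference is that the paper phrases the run-to-fact direction as a bottom-up case analysis over run-node labels (establishing invariants (i)--(iii) for atom states, negated-atom states, and rule states), whereas you describe it top-down as peeling off the root transition and delegating to sub-runs; these are the same structural induction.
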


\begin{proof}
We will prove this equivalence by induction on the stratum $\strat(R)$ of
  the relation $R$. 
The base case ($\strat(R)=0$, so $R$ is an extensional relation) was shown in Lemma~\ref{lem:extensional}.
For the inductive case, where $R$ is an intensional relation, we prove each direction separately.

\subparagraph*{\qquad Forward direction.}
First, suppose that there exists a run $\rho$ of $A^\prime_P$ starting at node $n$
in state $q_{R(\mathbf{x})}^{\nu}$. We show by induction on the run (from bottom to top) that for every node $\pi$ of the run the following implications hold:

\begin{enumerate}[(i)]
\item \label{case:positive-atom} 
  If $\pi$ is labeled with $(q_{R^\prime(\mathbf{y})}^{\nu^\prime}, m)$, then there exists $\mathbf{c}$ such that
		$R^\prime(\mathbf{c}) \in P(I)$ and $\mathbf{c}$ is
                compatible with $\nu'$ at node~$m$.
              \item \label{case:negative-atom}
                If $\pi$ is labeled with $\lnot (q_{R^\prime(\mathbf{y})}^{\nu^\prime}, m)$ with $\nu^\prime$ total, then
		$R^\prime(\dec_m(\nu^\prime(\mathbf{y}))) \notin P(I)$.
              \item \label{case:rule} If $\pi$ is labeled with $(q_r^{\nu^\prime,\calA},m)$, then
          there exists a mapping $\mu : \vars(\calA) \to \mathrm{Dom}(I)$
          that is compatible with $\nu'$ at node~$m$ and such that:
		\begin{itemize}
			\item For every positive literal $S(\mathbf{z})$ in $\calA$, then $S(\mu(\mathbf{z})) \in P(I)$.
			\item For every negative literal $\lnot S(\mathbf{z})$ in $\calA$, then $S(\mu(\mathbf{z})) \notin P(I)$.
		\end{itemize}
\end{enumerate}

The base case is when $\pi$ is a leaf. Notice that in this case, and by
  construction of $A^\prime_P$, the node $\pi$ cannot be labeled by states corresponding
  to rules of~$P$: indeed, there
are no transition for these states leading to a tautology, and all
  transitions to such a state are from a state in the same stratum, so $\pi$
  could not be a leaf. Thus,
  we have three subcases:
\begin{itemize}
	\item $\pi$ may be labeled by
          $(q_{R^\prime(\mathbf{y})}^{\nu^\prime}, m)$, where $R^\prime$ is
          extensional. We must show~(\ref{case:positive-atom}), but this follows
          from Lemma~\ref{lem:extensional}.
	\item $\pi$ may be labeled by $(q_{R^\prime(\mathbf{y})}^{\nu^\prime},
          m)$, where $R^\prime$ is intensional
	and verifies $\strat(R^\prime) < i$, and where $\nu^\prime$ is total. Again we
        need to show~(\ref{case:positive-atom}).
	By definition of the run $\rho$, this implies that there exists a
        run of~$A^\prime_P$ starting at $m$ in state
        $q_{R^\prime(\mathbf{y})}^{\nu'}$.
	But $\nu^\prime$~is total, so by induction on the strata we have
        (using the forward direction of the equivalence lemma) that
        $R^\prime(\dec_m(\nu^\prime(\mathbf{y}))) \in P(I)$.
	We take $\mathbf{c}$ to be $\dec_m(\nu^\prime(\mathbf{y}))$, which satisfies the required conditions.
	\item $\pi$ may be labeled by $\lnot
          (q_{R^\prime(\mathbf{y})}^{\nu^\prime}, m)$, where $R^\prime$ is intensional 
		and verifies $\strat(R^\prime) < i$, and where $\nu^\prime$ is total. We
                need to show~(\ref{case:negative-atom}).
		By definition of the run $\rho$ there exists no run of
                $A^\prime_P$ starting at $m$ in state
                $q_{R^\prime(\mathbf{y})}^{\nu'}$. 
		But $\nu^\prime$ is total, so by induction on the strata
                we have (using the backward direction of the
                equivalence lemma) that
                $R^\prime(\dec_m(\nu^\prime(\mathbf{y}))) \notin P(I)$, which is
                what we needed to show.
\end{itemize}

For the induction case, where $\pi$ is an internal node and letting
  $(d,s)$ be $\lambda_E(m)$ in what follows, we have five subcases:
\begin{itemize}
\item $\pi$ may be labeled by $(q_{R^\prime(\mathbf{y})}^{\nu^\prime},
  m)$ with $R^\prime$ extensional. We must show (\ref{case:positive-atom}), but
    this follows from Lemma~\ref{lem:extensional}.
\item $\pi$ may be labeled by $(q_{R^\prime(\mathbf{y})}^{\nu^\prime}, m)$ with $R^\prime$ intensional and $\nu^\prime$ total. 
  We need to prove~(\ref{case:positive-atom}).
	In that case, given the definition of $\Delta(q_{R^\prime(\mathbf{y})}^{\nu^\prime},(d,s))$ and by induction (on the run),
        there exists a child $\pi^\prime$ of~$\pi$ labeled by
        $(q_r^{\nu'',\calA},m')$, where $m^\prime \in \neigh(m)$, where $r$ is a
        rule with head $R^\prime(\mathbf{z})$, where
	$\nu'' = \mathrm{Hom}_{\mathbf{z},\mathbf{y}}(\nu')$ is a partial
        valuation which is not~$\mnull$, and where
	$\calA$ is the set of literals of $r$. 
        Then, by induction on the run, there exists
        a mapping $\mu : \vars(\calA) \to \mathrm{Dom}(I)$ that
        verifies~(\ref{case:rule}).
	Thus by definition of the semantics of $P$ we have that
        $R'(\mu(\mathbf{z})) \in P(I)$, and we take $\mathbf{c}$
	to be $\mu(\mathbf{z})$. What is left to check is that the compatibility condition holds.
	We need to prove that $\dec_m(\nu'(\mathbf{y})) = \mathbf{c}$,
        i.e., that $\dec_m(\nu'(\mathbf{y})) = \mu(\mathbf{z})$.
	We know, by definition of $\mu$, that $\dec_{m'}(\nu''(\mathbf{z})) = \mu(\mathbf{z})$. 
	So our goal is to prove $\dec_m(\nu'(\mathbf{y})) =
        \dec_{m'}(\nu''(\mathbf{z}))$, i.e., by definition of $\nu''$
	we want $\dec_m(\nu'(\mathbf{y})) = \dec_{m'}(\mathrm{Hom}_{\mathbf{z},\mathbf{y}}(\nu')(\mathbf{z}))$.
        By definition of
        $\mathrm{Hom}_{\mathbf{z},\mathbf{y}}(\nu')$,
        we know that $\nu'(\mathbf{y}) =
        \mathrm{Hom}_{\mathbf{z},\mathbf{y}}(\nu')(\mathbf{z})$, and this
        implies the desired equality by applying
        Property~\ref{propTreeEnc1} to~$m$ and~$m'$.

\item $\pi$ may be labeled by $(q_r^{\nu',\calA},m)$, where $\calA =
  \{R''(\mathbf{y})\}$ or $\{\lnot R''(\mathbf{y})\}$, where $|\mathbf{y}| = 1$,
    where the head of~$r$ uses relation $R'$.,
    and where $y \in U(\nu')$ (writing $y$ the one element of~$\mathbf{y}$).
	We need to prove~(\ref{case:rule}).
	By construction we have $\Delta(q_r^{\nu',\calA},(d,s)) = q_r^{\nu',\calA} \lor \bigvee_{a \in d} q_r^{\nu' \cup \{y\mapsto a\},\calA}$.
	So by definition of a run there is $m' \in \neigh(m)$ and a child $\pi'$ of $\pi$ such that $\pi'$ is labeled by $( q_r^{\nu',\calA},m')$ or by
	$( q_r^{\nu' \cup \{y\mapsto a\},\calA},m')$ for some $a\in d$. 
	In both cases it is easily seen that we can define an appropriate $\nu$ from the valuation~$\nu'$ that we obtain
        by induction on the run (more details are given in the next bullet point).
\item $\pi$ may be labeled by $(q_r^{\nu',\calA},m)$, with $\calA =
	\{R''(\mathbf{y})\}$ and $\nu'$ total on~$\mathbf{y}$, the head of~$r$ using relation $R'$. We need to
    prove~(\ref{case:rule}).
	By construction we have $\Delta(q_r^{\nu',\calA},(d,s)) = q_{R''(\mathbf{y})}^{\nu^\prime}$,
	so that by definition of the run there is $m' \in \neigh(m)$ and a child $\pi'$ of $\pi$ such that $\pi'$
	is labeled by $(q_{R''(\mathbf{y})}^{\nu^\prime},m')$.
	Thus by induction on the run there exists $\mathbf{c}$ such that
        $R''(\mathbf{c}) \in P(I)$ and $\mathbf{c}$ compatible with~$\nu'$ at
        node~$m'$. By Property~\ref{propTreeEnc1}, $\mathbf{c}$ is also
        compatible with~$\nu'$ at node~$m$.
	We define $\mu$ by $\mu(\mathbf{y}) \defeq \mathbf{c}$, which effectively defines it because in this case $\vars(r)=\mathbf{y}$, and this choice satisfies the required properties.
\item $\pi$ may be labeled by $(q_r^{\nu',\calA},m)$, with $\calA =
	\{\lnot R''(\mathbf{y})\}$ and $\nu'$ total on~$\mathbf{y}$ and the head of~$r$ has relation $R'$. We
    again need to prove (\ref{case:rule}).
	By construction we have $\Delta(q_r^{\nu',\calA},(d,s)) = \lnot q_{R''(\mathbf{y})}^{\nu^\prime}$ and then by definition of the automaton
	there exists a child $\pi'$ of $\pi$ labeled by $\lnot (q_{R''(\mathbf{y})}^{\nu^\prime},m)$ with $\strat(R'') < i$ and
	there exists no run starting at node $m$ in state $q_{R''(\mathbf{y})}^{\nu^\prime}$.
	So by induction on the strata (using the backward direction of
        the equivalence lemma) we have $R''(\dec_m(\nu'(\mathbf{y})))
        \notin P(I)$.
	We define $\mu$ by $\mu(\mathbf{y}) = \dec_m(\nu'(\mathbf{y}))$, which effectively defines it because $\vars(r)=\mathbf{y}$, and the compatibility conditions are satisfied.

\item $\pi$ may be labeled by $(q_r^{\nu',\calA},m)$, with $\card{\calA}
  \geq 2$. We need to prove (\ref{case:rule}).
	Given the definition of $\Delta(q_r^{\nu',\calA},(d,s))$ and by
        definition of the run,
	one of the following holds:
	\begin{itemize}
		\item There exists $m' \in \neigh(m)$ and a child $\pi'$ of $\pi$ such that $\pi'$ is labeled by $(q_r^{\nu',\calA},m')$.
			By induction there exists $\mu' : \vars(\calA)
                        \to \mathrm{Dom}(I)$ satisfying (\ref{case:rule})
                        for node $m'$. We can take $\mu$ to be $\mu'$, which satisfies the required properties.
		\item 
	There exist $m_1, m_2 \in \neigh(m)^2$ and $\pi_1, \pi_2$ children of $\pi$ and non-empty sets $\calA_1,\calA_2$ 
	that partition $\calA$ and a total valuation $\nu''$ of
            $\vars(\calA_1) \cap \vars(\calA_2)$ with values in $d$
	such that $\pi_1$ is labeled by $(q_r^{\nu' \cup \nu'',\calA_1},m_1)$ 
	and $\pi_2$ is labeled by $(q_r^{\nu' \cup \nu'',\calA_2},m_2)$.
	By induction there exists $\mu_1 : \vars(\calA_1) \to
            \mathrm{Dom}(I)$ and similarly $\mu_2$ that satisfy
            (\ref{case:rule}).
	Thanks to the compatibility conditions for $\mu_1$ and $\mu_2$ and to Property~\ref{propTreeEnc1} applied to~$m_1$ and~$m_2$ via~$m$, we can define $\mu : \vars(\calA) \to \mathrm{Dom}(I)$
	with $\mu = \mu_1 \cup \mu_2$. One can check that $\mu$ satisfies the required properties.
	\end{itemize}
\end{itemize}

Hence, the forward direction of our equivalence lemma is proven.

\subparagraph*{\qquad Backward direction.}
We now prove the backward direction of the induction case of
our main equivalence lemma (Lemma~\ref{lem:correctness}).
From the induction hypothesis on strata,
we know that, for every relation $R$ with $\strat(R) \leq i-1$, for every node $n \in T$ and total valuation $\nu$ of $\mathbf{x}$, 
there exists a run $\rho$ of $A^\prime_P$ starting at node $n$
in state $q_{R(\mathbf{x})}^{\nu}$ if and only if we have $R(\dec_n(\nu(\mathbf{x}))) \in P(I)$. 
Let~$R$ be a relation with $\strat(R)=i$, $n \in T$ be a node and $\nu$ be a total valuation of $\mathbf{x}$ such that $R(\dec_n(\nu(\mathbf{x}))) \in P(I)$.
We need to show that there exists a run $\rho$ of $A^\prime_P$ starting at node $n$
in state $q_{R(\mathbf{x})}^{\nu}$.
We will prove this by induction on the smallest $j \in \mathbb{N}$ such
that $R(\dec_n(\nu(\mathbf{x}))) \in \Xi^j_{P}(P_{i-1}(I))$, where
$\Xi^j_{P}$ is the $j$-th
application of the immediate consequence operator for the program~$P$ (see
\cite{abiteboul1995foundations}) and $P_{i-1}$ is the restriction of
$P$ with only the rules up to strata~$i-1$.
The base case, when $j=0$, is in fact vacuous since 
$R(\dec_n(\nu(\mathbf{x}))) \in \Xi^0_{P}(P_{i-1}(I))=P_{i-1}(I)$ implies
that $\strat(R)\leq i-1$, whereas we assumed $\strat(R)=i$.
For the inductive case ($j \geq 1$), we have $R(\dec_n(\nu(\mathbf{x})))
\in \Xi_P^j(P_{i-1}(I))$ so by definition of the semantics of $P$, there is a rule
$r$ of the form $R(\mathbf{z}) \leftarrow L_1(\mathbf{y}_1) \ldots
L_t(\mathbf{y}_t)$ of $P$ and a mapping $\mu :
\mathbf{y}_1\cup\dots\cup\mathbf{y}_t\to \mathrm{Dom}(I)$ such that
$\mu(\mathbf{z}) = \dec_n(\nu(\mathbf{x}))$ and, for every literal $L_l$ in the body of $r$:
\begin{itemize}
  \item If $L_l(\mathbf{y}_l) = R_l(\mathbf{y}_l)$ is a positive
    literal, then $R_l(\mu(\mathbf{y}_l)) \in
          \Xi_P^{j-1}(P_{i-1}(I))$
        \item If $L_l(\mathbf{y}_l) = \lnot R_l(\mathbf{y}_l)$ is a
          negative literal, then $R_l(\mu(\mathbf{y}_l)) \notin
          P_{i-1}(I)$
\end{itemize}
To achieve our goal of building a run starting at node $n$ in state  $q_{R(\mathbf{x})}^{\nu}$, we will construct a run starting at node $n$ in state $q_r^{\nu',\{L_1,\ldots,L_t\}}$, 
with $\nu' = \mathrm{Hom}_{\mathbf{z},\mathbf{x}}(\nu)$.
The first step is to take care of the literals of the rule and to prove that:

\begin{enumerate}[(i)]
\item If $L_l(\mathbf{y}_l) = R_l(\mathbf{y}_l)$ is a positive literal, then there exists a node $m_l$ and a valuation $\nu_l$ such that there exists a run $\rho_l$
		starting at node $m_l$ in state
                $q_{R_l(\mathbf{y}_l)}^{\nu_l}$ and such that
                $\dec_{m_l}(\nu_l(\mathbf{y}_l)) = \mu(\mathbf{y}_l)$.
              \item If $L_l(\mathbf{y}_l) = \lnot R_l(\mathbf{y}_l)$ is a negative literal, then for every node $m_l$ such that $\dec_{m_l}(\nu_l(\mathbf{y}_l)) = \mu(\mathbf{y}_l)$,
                there exists no run starting at node $m_l$ in state $q_{R_l(\mathbf{y}_l)}^{\nu_l}$. 
\end{enumerate}

We straightforwardly get (ii) by using the induction on the strata of our equivalence
lemma.
We now prove (i).
Suppose first that $R_l$ is an extensional relation.
We define $m_l$ to be the node in which $R_l(\mu(\mathbf{y}_l))$ appears (in the tree decomposition), and we define $\nu_l$ to be $\enc_{m_l}(\mu(\mathbf{y}_l))$.
We then have $\dec_{m_l}(\nu_l(\mathbf{y}_l)) =
\dec_{m_l}(\enc_{m_l}(\mu(\mathbf{y}_l))) = \mu(\mathbf{y}_l)$, so by
Lemma~\ref{lem:extensional} there exists a run $\rho_l$ starting at
node $m_l$ in state $q_{R_l(\mathbf{y}_l)}^{\nu_l}$.

Suppose now that $R_l$ is intensional. By (syntactical) definition of our fragment, $R_l(\mathbf{y}_l)$ is clique-guarded by some extensional relations in the body of $r$, say 
$R_{l_1}(\mathbf{y}_{l_1}),\ldots,R_{l_c}(\mathbf{y}_{l_c})$.
Moreover, there exist nodes $m_{l_1},\ldots,m_{l_c}$ such that for every $1 \leq p \leq c$, $R_{l_p}(\mu(\mathbf{y}_{l_p}))$ is in $m_{l_p}$. 
By a well-known property of tree decompositions, this implies that there exists a node in which all the elements of $\mu(\mathbf{y}_l)$ appear
(see Lemma~1 of~\cite{gavril1974intersection}, Lemma~2 of~\cite{BodlaenderK10}).
We define $m_l$ to be this node.
We define $\nu_l : \mathbf{y}_l \to \calD_{\ki}$ to be $\bigsqcup\limits_{1\leq p \leq c} \nu_{l_p}$,
where the $\nu_{l_p}$ are the valuations obtained when proving (i) for extensional relations in the case above.
This definition makes sense. Indeed, let $v$ be a variable in $\mathbf{y}_{l_p} \cap \mathbf{y}_{l_{p'}}$. 
Because we defined $\nu_{l_p}$ by $\enc_{m_l}(\mu)$, we have that $\mu(v)$ appears in the bag (of the tree decomposition) of $m_{l_p}$, and similarly in that of $m_{l_{p'}}$.
Then by Property~\ref{propTreeEnc2}, we have that $\enc_{m_{l_p}}(\mu(v)) = \enc_{m_{l_{p'}}}(\mu(v))$, and thus $\nu_{l_p}(v) = \nu_{l_{p'}}(v)$. 
We now show that $\dec_{m_l}(\nu_l(\mathbf{y}_l)) = \mu(\mathbf{y}_l)$, which will imply by induction hypothesis (on the number~$j$ of applications of the immediate consequence operator) that there exists a run $\rho_l$ starting at node $m_l$
in state $q^{\nu_l}_{R_l(\mathbf{y}_l)}$. Pick $v \in \mathbf{y}_l$. 
It is in some $\mathbf{y}_{l_p}$ for some $1 \leq p \leq c$, so by definition of $\nu_l$ and of $\nu_{l_p}$ we only need
to prove $\dec_{m_l}(\enc_{m_{l_p}}(\mu(v))) = \mu(v)$.
But we have $\mu(v)$ is in the bag of $m_{l_p}$ (by definition of $\nu_{l_p}$), and in that of $m_l$ (by definition of $m_l$), so that again by Property~\ref{propTreeEnc2} we get 
$\enc_{m_{l_p}}(\mu(v)) = \enc_{m_l}(\mu(v))$, which gives us what we wanted because we can decode.
Hence, (i) and (ii) are proven.

The second step is, from the runs $\rho_l$ that we just constructed, to construct a run starting at node $n$ in state $q_r^{\nu',\{L_1,\ldots,L_t\}}$.
We describe in a high-level manner how we build the run.
Starting at node~$n$, we partition the literals to prove (i.e., the atoms of the
body of the rule that we are applying), in the following way:

\begin{itemize}
  \item We create one class in the partition for each positive literal $R_l$
    (which can be intentional or extensional)
    such that $m_l$ is $n$, which we prove directly at the current node.
    Specifically, we handle these literals one by one, by splitting the
    remaining literals in two using the transition formula corresponding to the rule and by staying at node $n$ and building the valuations according to 
$\dec_n(\mu)$.
    \item We create one class in the partition for each negative literal $\lnot
      R_l(\mathbf{y}_l)$ such that all its variables $\mathbf{y}_l$ are defined
      by the valuation: we use (ii) to know that there will be no run for there literals.
    \item For the remaining literals, considering all neighbors of~$n$ in the
      tree encoding, we split the literals into one class per neighbor $n'$,
      where each literal $L_l$ is mapped to the neighbor that allows us to reach
      its node~$m_l$. We ignore the empty classes. If there is only one class,
      i.e., we must go in the same direction to prove all facts, we simply go to
      the right neighbor~$n'$, remaining in the same state. If there are
      multiple classes, we partition the facts and prove each class on the
      correct neighbor.
      
      One must then argue that, when we do so, we can indeed
      choose the image by~$\nu'$ of all elements that were shared between
      literals in two different classes and were not yet defined in~$\nu'$. The
      reason why this is possible is because we are working on a tree encoding:
      if two facts of the body share a variable $x$, and the two facts will be
      proved in two different directions, then the variable $x$ must be mapped
      to the same element in the two direction, which implies that it must occur
      in the node $m$ where we split. Hence, we can indeed choose the image
      of~$x$ at the moment when we split.
\end{itemize}
\end{proof}

\subparagraph*{FPT-linear time construction.}
Finally, we justify that we can construct in FPT-linear time the automaton $A_P$ which recognizes the same language as $A_P^\prime$.
The size of $\Gamma_{\sigmae}^{\ki}$ only depends on $\ki$ and on the
extensional signature, which are fixed. As the number of states is linear
in $\card{P}$, the number of transitions is linear in~$\card{P}$. 
Most of the transitions are of constant size, and in fact one can check that the only problematic transitions
are those for states of the form $q^\nu_{R(\textbf{x})}$ 
with $R$ intensional, specifically the second bullet point. Indeed, we have
defined a transition from $q^\nu_{R(\textbf{x})}$, for each valuation $\nu$ of a
rule body, to the $q_r^{\nu^\prime,\calA}$ for linearly many rules, so in
general there are quadratically many transitions.

However, it is easy to fix this problem: instead of having one state $q^\nu_{R(\textbf{x})}$
for every occurrence of an intensional predicate $R(\textbf{x})$ in a rule body
of~$P$ and total valuation $\nu$ of this rule body,
we can instead have a constant number of states $q_{R(\textbf{a})}$ for $\textbf{a} \in \calD_{\ki}^{\arity{R}}$. 
In other words, when we have decided to prove a single intensional atom in the
body of a rule, instead of remembering the entire valuation of the rule body (as
we remember $\nu$ in $q^\nu_{R(\textbf{x})}$), we can simply forget all other
variable values, and just remember the tuple which is the image of~$\textbf{x}$
by~$\nu$, as in~$q_{R(\textbf{a})}$. Remember that the number of such states is
only a function of $\kp$ and $\ki$, because bounding $\kp$ implies that we bound
the arity of~$P$, and thus the arity of intensional predicates.

We now redefine the transitions for those states :

	\begin{itemize}
		\item If there is a $j$ such that $a_j \notin d$,
                  then $\Delta(q_{R(\mathbf{a})},(d,s)) = \false$.
		\item Else, $\Delta(q_{R(\mathbf{a})},(d,s))$ is a
                  disjunction of all the $q_r^{\nu^\prime,\calA}$ for
                  each rule~$r$ such that
		  the head of~$r$ is $R(\mathbf{y})$, $\nu^\prime(\mathbf{y}) = \mathbf{a}$ and
		$\calA$ is the set of all literals in the body of $r$.
	\end{itemize}

        The key point is that a given $q_r^{\nu',\calA}$ will only appear
        in rules for states of the form $q_{R(\mathbf{a})}$ where $R$ is
        the predicate of the head of $r$, and there is a constant number
        of such states.
        
We also redefine the transitions that used these states:
\begin{itemize}
\item Else, if $\calA = \{R^\prime(\mathbf{y})\}$ with $R^\prime$ intensional, 
		then $\Delta(q_r^{\nu,\calA},(d,s)) = q_{R^\prime(\nu(\mathbf{y}))}$.
\item Else, if $\calA = \{\lnot R^\prime(\mathbf{y})\}$ with $R^\prime$ intensional, 
		then $\Delta(q_r^{\nu,\calA},(d,s)) = \lnot q_{R^\prime(\nu(\mathbf{y}))}$. 
\end{itemize}

$A_P$ recognizes the same language as $A_P'$. Indeed, consider a run of $A'_P$, and replace every state $q^\nu_{R(\textbf{x})}$ with $R$ intensional by the state 
$q_{R(\nu(\textbf{x}))}$: we obtain a run of $A_P$. 
Conversely, being given a run of $A_P$, observe that every state $q_{R(\textbf{a})}$ comes from a state  $q_r^{\nu,\{R(\mathbf{y})\}}$ with 
$\nu(\mathbf{y}) = \mathbf{a}$.
We can then replace $q_{R(\textbf{a})}$ by the state $q^\nu_{R(\textbf{x})}$ to obtain a run of $A'_P$.
\end{toappendix}

\section{Provenance Cycluits}
\label{sec:provenance}
In the previous section, we have seen how ICG-Datalog programs could be compiled
efficiently to tree automata (SATWAs) that test them on treelike instances. To
show that SATWAs can be evaluated in linear time
(stated earlier as Proposition~\ref{prp:satwaeval}), we will
introduce an operational semantics for SATWAs based on the notion of
\emph{cyclic circuits}, or \emph{cycluits} for short.

We will also use
these cycluits as a new powerful tool to 
compute (Boolean) \emph{provenance
information}, i.e., a representation of how the query result depends on the
input data:

\begin{definition}
  \label{def:provenance}
  A (Boolean) \emph{valuation} of a set $S$ is a function $\nu: S \to \{0, 1\}$.
  A \emph{Boolean function} $\phi$ on variables~$S$ is a mapping that associates
  to each valuation $\nu$ of~$S$ a Boolean value in $\{0, 1\}$ called the
  \emph{evaluation} of~$\phi$ according to~$\nu$; for consistency with further
  notation, we write it $\nu(\phi)$.
  The \emph{provenance} of a query $Q$ on an
  instance $I$ is the Boolean function $\phi$, whose variables are the facts
  of~$I$, which is defined as follows:
  for any valuation $\nu$ of the facts of $I$, we have $\nu(\phi)
  = 1$ iff the subinstance $\{F \in I \mid \nu(F) = 1\}$ satisfies~$Q$.
\end{definition}

We can represent Boolean provenance as
Boolean formulae
\cite{ImielinskiL84,green2007provenance}, or (more recently) as
Boolean circuits \cite{deutch2014circuits,amarilli2015provenance}.
In this section, we first introduce \emph{monotone cycluits} 
(monotone Boolean circuits with cycles), for which we define a semantics
(in terms of the
Boolean function that they express);
we also show that cycluits can be
evaluated in linear time, given a valuation. 
Second, we extend them to \emph{stratified cycluits}, allowing a form of stratified
negation.
We conclude the section by showing how to construct the \emph{provenance}
of a SATWA as a cycluit, in FPT-linear time. Together with
Theorem~\ref{thm:maintheorem}, this claim implies our main provenance result:
\begin{theorem}
  \label{thm:mainprov}
  Given an ICG-Datalog program~$P$ of body size
  $\kp$ and a relational instance~$I$ of treewidth $\ki$,
  we can construct in FPT-linear time in~$|I|\cdot|P|$
  (parameterized by~$\kp$ and~$\ki$)
  a representation of the provenance
  of $P$ on $I$ as a stratified cycluit. Further,
  for fixed $\ki$, this cycluit has treewidth~$O(\card{P})$.
\end{theorem}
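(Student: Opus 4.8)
The plan is to derive Theorem~\ref{thm:mainprov} by chaining the compilation result of the previous section with the provenance-computation result for SATWAs that will be proved below in this section, namely Theorem~\ref{thm:satwaprov}, together with the tree-encoding machinery of Appendix~\ref{apx:tree-encodings}. Concretely, given an ICG-Datalog program $P$ of body size $\kp$ and an instance $I$ of treewidth $\ki$: first I would use Lemma~\ref{lem:getencoding} to compute, in FPT-linear time in~$\card{I}$ (parameterized by~$\ki$), a $(\sigma,\ki)$-tree encoding $\la E,\lambda\ra$ of~$I$; we may arrange the construction so that every fact of~$I$ labels exactly one node of~$E$, all other nodes carrying no fact. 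Second, I would apply Theorem~\ref{thm:maintheorem} to build, in FPT-linear time in~$\card{P}$ (parameterized by~$\kp$ and~$\ki$), a SATWA $A_P$ testing $P$ for treewidth~$\ki$. Third, I would apply Theorem~\ref{thm:satwaprov} to $A_P$ and $\la E,\lambda\ra$, obtaining in time $O(\card{E}\cdot\card{A_P})$ a stratified cycluit $C$ representing the provenance of $A_P$ on $\la E,\lambda\ra$. Since $\card{E}$ is linear in~$\card{I}$ and $\card{A_P}$ is linear in~$\card{P}$ for fixed values of the parameters, the overall running time is FPT-linear in $\card{I}\cdot\card{P}$, as claimed.

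It then remains to check that $C$ represents the provenance of $P$ on $I$, and is stratified. By Theorem~\ref{thm:satwaprov} the input gates of $C$ are the fact-labels of~$E$, which by our choice of encoding are in bijection with the facts of~$I$. Fix a valuation $\nu$ of these facts. Deleting from the labels of~$E$ the facts set to~$0$ by~$\nu$ (i.e., replacing each concerned label $(d,s)$ by $(d,\emptyset)$) yields a tree encoding $\la E',\lambda'\ra$ whose decoding is exactly the subinstance $I_\nu \defeq \{F\in I\mid \nu(F)=1\}$: this operation preserves the tree-encoding conditions and removes precisely the corresponding facts from the decoded instance. By the correctness of the provenance cycluit, $\nu(C)=1$ iff $A_P$ accepts $\la E',\lambda'\ra$; by Theorem~\ref{thm:maintheorem}, this holds iff $I_\nu \models P$. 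Hence $\nu(C)=1$ iff $I_\nu\models P$, which is exactly the definition of the provenance of~$P$ on~$I$. Finally, $C$ is stratified because its strata are inherited from the stratification function $\strat$ of~$A_P$: every NOT-wire of $C$ goes from a gate to a gate of strictly lower stratum, reflecting the negated literals in the transitions of~$A_P$.

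For the treewidth bound I would exploit the shape of the cycluit produced by Theorem~\ref{thm:satwaprov}: up to auxiliary input and negation gates attached to individual nodes, $C$ has one gate $g_{q,w}$ per state $q$ of~$A_P$ and node $w$ of~$E$, and its wires connect only gates $g_{q,w}$ and $g_{q',w'}$ with $w'\in\neigh(w)$. Thus the underlying graph of~$C$ is obtained from~$E$ by replacing each node with $O(\card{A_P})$ gates and adding edges only between gates associated to equal or adjacent nodes; since $E$ is binary, a valid tree decomposition of~$C$ is obtained by taking $E$ as the tree and placing in the bag of a node~$w$ all gates attached to~$w$ and to the parent of~$w$. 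Every wire then lies inside some bag, and each gate occupies a connected (star-shaped) subtree, so the treewidth of~$C$ is $O(\card{A_P})$; as $\card{A_P}=O(\card{P})$ for fixed~$\ki$ (and~$\kp$) by Theorem~\ref{thm:maintheorem}, this gives treewidth $O(\card{P})$.

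The main obstacle is Theorem~\ref{thm:satwaprov} itself, which carries the genuine technical content and has yet to be proved: one must define the provenance cycluit of a SATWA on a tree encoding, show it can be built in time $O(\card{E}\cdot\card{A_P})$, and prove the semantic equivalence between the Boolean function computed by the cyclic, fixpoint-style cycluit semantics and the acceptance behaviour of the SATWA under fact deletions — carrying this out stratum by stratum, so that the inductive $i$-run semantics of the automaton matches the stratified evaluation of the circuit, with negative literals in transitions translated into NOT-wires crossing strata. Once that is established, Theorem~\ref{thm:mainprov} follows by the bookkeeping above.
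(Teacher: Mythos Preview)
Your plan is essentially the paper's own, but there is a small type mismatch you should fix. Theorem~\ref{thm:satwaprov} takes as input a $\overline{\Gamma}$-SATWA (on the alphabet $\Gamma\times\{0,1\}$), whereas the automaton $A_P$ produced by Theorem~\ref{thm:maintheorem} is only a $\Gamma^{\ki}_\sigma$-SATWA. You therefore cannot apply Theorem~\ref{thm:satwaprov} to $A_P$ directly, and the sentence ``$\nu(C)=1$ iff $A_P$ accepts $\la E',\lambda'\ra$'' does not literally follow from Definition~\ref{def:satwaprov}, which speaks of acceptance of the $\overline{\Gamma}$-tree $\nu(\calT)$, not of acceptance of a $\Gamma$-tree with facts deleted. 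The paper bridges this with a one-line \emph{lifting lemma} (Lemma~\ref{lem:lifting}): from $A_P$ one builds in linear time a $\overline{\Gamma^{\ki}_\sigma}$-SATWA $A'$ such that $A'$ accepts a $\overline{\Gamma^{\ki}_\sigma}$-tree iff $A_P$ accepts its projection under the map $\epsilon$ that erases the fact of every node annotated~$0$. Your ``deleting the facts set to~$0$'' argument is exactly this $\epsilon$, so once you insert the lifting step the rest of your reasoning goes through.

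A second, related, bit of bookkeeping you omitted: the inputs of the cycluit returned by Theorem~\ref{thm:satwaprov} are the \emph{nodes} of $E$, not the facts of~$I$. The paper explicitly post-processes the cycluit by (i) replacing each input gate for a node that carries a fact by an input gate named after that fact, and (ii) replacing each input gate for a fact-free node by a $0$-gate (an OR with no inputs). This is trivial and does not affect the treewidth, but it is needed for the statement ``$C_\inp$ is exactly the set of facts of~$I$'' to hold. With these two patches your argument coincides with the paper's.
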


Of course, this result implies the analogous claims for query languages that are
captured by ICG-Datalog parameterized by the body size, as we studied in
Section~\ref{sec:ICG}. When combined with the fact that cycluits can
be tractably evaluated, it yields our main result,
Theorem~\ref{thm:main}. The rest of this section formally introduces
cycluits and proves Theorem~\ref{thm:mainprov}.

\subparagraph*{Cycluits.}
\begin{toappendix}
  \subsection{Cycluits}
  \label{apx:cycluits}
\end{toappendix}
We define \emph{cycluits} as Boolean circuits without the acyclicity requirement,
as in~\cite{riedel2012cyclic}. To avoid the problem of feedback loops, however,
we first study \emph{monotone cycluits}, and then cycluits with stratified
negation.

\begin{definition}
  \label{def:cycluits}
    A \emph{monotone Boolean cycluit} is a directed graph $C = (G,W,g_0,\mu)$ 
    where $G$ is the set of \emph{gates},
    $W\subseteq G^2$ is the set of directed edges called \emph{wires} (and written $g
    \rightarrow g'$),
    $g_0 \in G$ is the \emph{output gate}, and
    $\mu$ is the \emph{type} function mapping each gate
    $g \in G$ to one of $\inp$ (input gate, with no incoming wire in $W$), $\land$
  (AND gate) or $\lor$ (OR gate).
\end{definition}

We now define the semantics of monotone cycluits. 
A (Boolean) \emph{valuation} of~$C$ is a function $\nu : C_{\inp} \to \{0,1\}$ 
indicating the value of the input gates.
As for standard monotone circuits, a valuation yields 
an \emph{evaluation} $\nu' : C \to \{0,1\}$, that we will define shortly,
indicating the value of each gate under the valuation~$\nu$: we abuse notation
and
write
$\nu(C) \in \{0, 1\}$ for the \emph{evaluation result}, i.e.,
$\nu'(g_0)$ where 
$g_0$ is the
output gate of~$C$. The Boolean function \emph{captured} by
a cycluit $C$ is thus 
the Boolean
function $\phi$ on~$C_\inp$ defined by $\nu(\phi) \colonequals \nu(C)$ for each
valuation
$\nu$ of~$C_\inp$. 
We define the evaluation $\nu'$ from $\nu$ by a least fixed-point computation (see Algorithm~\ref{alg:semantics-monotone} in
Appendix~\ref{apx:cycluits}): we
set all input gates to their value by $\nu$, and other gates to $0$.
We then iterate until the evaluation no longer changes, by evaluating OR-gates
to~$1$ whenever some input evaluates to~$1$, and AND-gates to~$1$
whenever all their inputs evaluate to~$1$.
The Knaster--Tarski
theorem~\cite{tarski1955lattice} gives an equivalent characterization:

\begin{toappendix}
	The semantics of monotone cycluits is formally defined by
        Algorithm~\ref{alg:semantics-monotone}.
\begin{algorithm}
\DontPrintSemicolon
\KwIn{Monotone cycluit $C=(G,W,g_0,\mu)$, Boolean valuation $\nu: C_{\inp} \to
  \{0,1\}$}
        \KwOut{$\{g \in C \mid \nu(g)=1\}$}
	$S_0 \defeq \{ g \in C_\inp \mid \nu(g)=1 \}$\;
	$i \defeq 0$\;
	\Do{$S_i \neq S_{i-1}$}{
		$i$++\;
                $S_{i} \defeq S_{i-1} \cup\Big\{g \in C \mid (\mu(g) =
        {\lor}), \exists g^\prime \in S_{i-1}, g^\prime \rightarrow
      g\in W\Big\} \cup{}$\\$\quad \Big\{g \in C \mid (\mu(g) =
        {\land}), \{g^\prime \mid g^\prime \rightarrow g\in W\} \subseteq
    S_{i-1}\Big\}$}
	\Return $S_i$\;
	\caption{Semantics of monotone cycluits}
	\label{alg:semantics-monotone}
\end{algorithm}

\end{toappendix}

\begin{propositionrep}\label{prp:altersem}
  For any monotone cycluit $C$ and Boolean valuation $\nu$ of $C$, 
  the set $S \defeq \{g \in C \mid \nu'(g) = 1\}$ is \emph{the} minimal set of
  gates (under inclusion) such that:
  \begin{compactenum}[(i)]
    \item $S$ contains the true input gates, i.e., it contains $\{g \in C_\inp \mid
  \nu(g) = 1\}$;
\item for any $g$ such that $\mu(g) = \lor$, if some input gate of $g$ is
  in $S$, then $g$ is in~$S$;
\item for any $g$ such that $\mu(g) = \land$, if all input gates of $g$ are
  in $S$, then $g$ is in~$S$.
\end{compactenum}
\end{propositionrep}
\begin{proof}
  The operator used in Algorithm~\ref{alg:semantics-monotone} is clearly
  monotone, so by the Knaster--Tarski theorem, the outcome of the
  computation is the intersection of all set of gates satisfying the
  conditions in Proposition~\ref{prp:altersem}.
\end{proof}

We show that this definition is computable in linear time
(Algorithm~\ref{alg:linear-time-cycluits} in Appendix~\ref{apx:cycluits}):

\begin{toappendix}
	Algorithm~\ref{alg:semantics-monotone} is a naive fixpoint algorithm running in quadratic time, but we show that the same output can be computed in linear time with Algorithm~\ref{alg:linear-time-cycluits}.
\end{toappendix}
\begin{propositionrep}\label{prp:moncycluitlinear}
  Given any monotone cycluit $C$ and Boolean valuation $\nu\!$ of~$C$, we can compute the
  evaluation $\nu'\!$ of~$C$ in linear time.
\end{propositionrep}

\begin{proof}
	We use Algorithm~\ref{alg:linear-time-cycluits}.
  We first prove the claim about the running time. The preprocessing to compute
  $M$ is linear-time in~$C$ (we enumerate at most once every wire), and the rest of the algorithm is clearly
  in linear time as it is a variant of a DFS traversal of the graph, with the added
  refinement that we only visit nodes that evaluate to~$1$ (i.e., OR-gates
  with some input that evaluates to~$1$, and AND-gates where all inputs
  evaluate to~$1$).

  We now prove correctness. We use the characterization of
  Proposition~\ref{prp:altersem}. We first check that $S$ satisfies the
  properties:
  
  \begin{enumerate}[(i)]
    \item $S$ contains the true input gates by construction.
    \item Whenever an OR-gate $g'$ has an input gate $g$ in~$S$, then, when
      we added $g$ to~$S$, we have necessarily followed the wire $g \rightarrow
      g'$ and added $g'$ to~$Q$, and later added it to~$S$. 
    \item Whenever an AND-gate $g'$ has
  all its input gates $g'$ in~$S$, there are two cases. The first case in when
      $g$ has no input gates at all, in which case $S$ contains it by
      construction. The second case is where such input gates exist: in this
      case, observe that $M[g']$ was initially equal to the degree of~$g'$, and
      that we decrement it for each input gate $g$ of~$g'$ that we add to~$S$.
      Hence, considering the last input gate $g$ of~$g'$ that we add to~$S$, it
      must be the case that $M[g']$ reaches zero when we decrement it,
      and then we add $g$ to~$Q$, and later to~$S$.
  \end{enumerate}
  
  Second, we check that $S$ is minimal. Assume by contradiction that it
  is not the case, and consider the first gate $g$ which is
  added to $S$ while not being in the minimal Boolean valuation $S'$.  It cannot
  be the case that $g$ was added when initializing $S$, as we initialize $S$ to
  contain true input gates and AND-gates with no inputs, which must be true
  also in~$S'$ by the characterization of Proposition~\ref{prp:altersem}. Hence,
  we added $g$ to~$S$ in a later step of the algorithm. However,
  we notice that we must added $g$ to $S$ because of the value of its input
  gates. By minimality of~$g$, these input gates have the same value in~$S$ and
  in~$S'$. This yields a contradiction, because the gates that we add to~$S$ are
  added following the characterization of Proposition~\ref{prp:altersem}.
  This concludes the proof.
\end{proof}

\begin{toappendix}
\begin{algorithm}
\DontPrintSemicolon
\KwIn{Monotone cycluit $C=(G,W,g_0,\mu)$,  Boolean valuation $\nu: C_{\inp} \to \{0,1\}$}
        \KwOut{$\{g \in C \mid \nu(g)=1\}$}
        ~\tcc{Precompute the in-degree of $\land$ gates}
        \For{$g \in C$ s.t.\ $\mu(g) = \land$}{
          $M[g] \defeq \card{\{g' \in C \mid g' \rightarrow g\}}$\;
        }
        $Q\defeq\{g \in C_\inp \mid \nu(g)=1\} \cup \{g \in C \mid (\mu(C) =
        \land) \land M[g] = 0\}$ \tcc*{as a stack} $S\defeq \emptyset$ \tcc*{as a bit array}
	\While{$Q \neq \emptyset$}{
          pop $g$ from $Q$\;
                        \If{$g \notin S$}{
                        add $g$ to~$S$\;
                        \For{$g^\prime \in C \mid g \to g^\prime$}{
                                \If{$\mu(g^\prime)=\lor$}{
                                        push $g^\prime$ into $Q$\;
                                }
                                \If{$\mu(g^\prime)=\land$}{
                                        $M[g^\prime]\defeq M[g^\prime]-1$\;
                                        \If{$M[g^\prime]=0$}{
                                        push $g^\prime$ into $Q$\;
                                        }
                                }
                        }
                      }
                        
        }
	\Return $S$
        \caption{Linear-time evaluation of monotone cycluits}
	\label{alg:linear-time-cycluits}
\end{algorithm}
\end{toappendix}

\subparagraph*{Stratified cycluits.}
\begin{toappendix}
  \subsection{Stratified cycluits}
\end{toappendix}
We now move from monotone cycluits to general cycluits featuring negation.
However, allowing arbitrary negation would make it difficult to define a
proper
semantics, because of possible cycles of negations. Hence, we focus on
\emph{stratified cycluits}:

\begin{definition}
  A \emph{Boolean cycluit} $C$ is defined like a \emph{monotone cycluit},
  but further allows NOT-gates ($\mu(g) = \neg$), which are required to have a
  single input. It is 
  \emph{stratified} if there exists a
  \emph{stratification function}~$\strat$ mapping its gates surjectively
  to $\{0,\ldots,m\}$ for some $m \in \NN$ such that $\strat(g) = 0$ iff $g
  \in C_\inp$, and
  $\strat(g)
  \leq \strat(g')$ for each wire $g \rightarrow g'$, the inequality being strict
  if $\mu(g') = \neg$.
\end{definition}

Equivalently, $C$ contains no cycle of gates involving a $\neg$-gate. If $C$ is
stratified, we can
compute a stratification function in linear time by a topological sort, and use it to define
the evaluation of~$C$ (which will clearly be independent of the choice of stratification function):

\begin{toappendix}
  We show the claim that a Boolean cycluit is stratified iff it contains no cycle of gates involving a $\neg$-gate, and that
  a stratification function can be computed in linear time.
\begin{propositionrep}\label{prp:stratifun}
  Any Boolean cycluit $C$ is stratified iff it it contains no cycle of gates involving a $\neg$-gate.
Moreover, a stratification function can be computed in linear time from~$C$.
\end{propositionrep}

\begin{proof}
  To see why a stratified Boolean cycluit $C$ cannot contain a cycle of gates involving a $\neg$-gate,
  assume by contradiction that it has such a cycle $g_1 \rightarrow g_2
  \rightarrow \cdots \rightarrow g_n \rightarrow g_1$. As $C$ is stratified, there exists a stratification function $\strat$.
  From the properties of a stratification function, we know that $\strat(g_1) \leq \strat(g_2) \leq \cdots \leq
  \strat(g_1)$, so that we must have $\strat(g_1) = \cdots = \strat(g_n)$.
  However, letting $g_i$ be such that $\mu(g_i) = \neg$, we know that
  $\strat(g_{i-1}) < \strat(g_i)$ (or, if $i = 1$, $\strat(g_n) < \strat(g_1)$),
  so we have a contradiction.

  We now prove the converse direction of the claim, i.e., that any 
  Boolean cycluit which does not contain a cycle of gates involving a $\neg$-gate must have a stratification function, 
  and show how to compute such a function in linear time.
  Compute in linear time the strongly connected components (SCCs) of~$C$, and a
  topological sort of the SCCs. As the input gates of~$C$ do not themselves have
  inputs, each of them must have their own SCC, and each such SCC must be a
  leaf, so we can modify the topological sort by merging these SCCs
  corresponding to input gates, and putting them first in the topological sort.
  We define the function
  $\strat$ to map each gate
  of~$C$ to the index number of its SCC in the topological sort, which ensures
  in particular that the input gates of~$C$ are exactly the gates assigned
  to~$0$ by~$\strat$.
  This can be performed in linear time. Let us show that the result $\strat$ is
  a stratification function:
  \begin{itemize}
    \item For any edge $g \rightarrow g'$, we have $\strat(g) \leq \strat(g')$.
      Indeed, either $g$ and $g'$ are in the same strongly connected component and we
      have $\strat(g) = \strat(g')$, or they are not and in this case the edge
      $g \rightarrow g'$ witnesses that the SCC of $g$ precedes that of~$g'$,
      whence, by definition of a topological sort, it follows that $\strat(g) <
      \strat(g')$.
    \item For any edge $g \rightarrow g'$ where $\mu(g') = \lnot$, we have $\strat(g) <
      \strat(g')$. Indeed, by adapting the reasoning of the previous bullet
      point, it suffices to show that $g$ and $g'$ cannot be in the same
      SCC. Indeed, assuming by contradiction that they are, by
      definition of a SCC, there must be a path from $g'$ to $g$, and combining
      this with the edge $g \rightarrow g'$ yields a cycle involving a
      $\lnot$-gate,
      contradicting our assumption on~$C$.\qedhere
  \end{itemize}
\end{proof}
\end{toappendix}

\begin{definition}
  \label{def:strateval}
  Let $C$ be a stratified cycluit with stratification function $\strat: C
  \rightarrow \{0,\ldots,m\}$,
  and let $\nu$ be a Boolean valuation of~$C$. We inductively define the
  \emph{$i$-th stratum evaluation} $\nu_i$, for $i$ in the range of~$\strat$,
  by setting $\nu_0 \colonequals \nu$, and letting $\nu_{i}$ extend the $\nu_j$ ($j < i$) as
  follows:
  \begin{compactenum}
  \item For $g$ such that $\strat(g) = i$ with $\mu(g) = \neg$, set $\nu_i(g)
    \defeq \neg \nu_{\strat(g')}(g')$ for its one input $g'$.
  \item Evaluate all other $g$ with $\strat(g) = i$ as for monotone cycluits,
    considering the $\neg$-gates of point \textsf{\bfseries
    \textcolor{darkgray}{1.}}\ and all gates of lower
    strata as input gates fixed to their value in~$\nu_{i-1}$.
  \end{compactenum}
  Letting $g_0$ be the output gate of~$C$,
  the Boolean function $\phi$ \emph{captured} by~$C$ is then defined
  as $\nu(\phi) \colonequals \nu_m(g_0)$
  for each valuation $\nu$ of $C_\inp$.
\end{definition}

\begin{propositionrep}\label{prp:cycluitlinear}
  We can compute $\nu(C)$ in linear time in the stratified cycluit $C$ and in $\nu$.
\end{propositionrep}

\begin{proof}
	Compute in linear time a stratification function $\strat$ of $C$ using
        Proposition~\ref{prp:stratifun}, and compute the evaluation following Definition~\ref{def:strateval}.
	This can be performed in linear time.
        
        To see why this evaluation is independent from the choice of
        stratification, observe that any stratification function must clearly assign the
        same value to all gates in an SCC. Hence, the
        choosing a stratification function amounts to choosing
        the stratum that we assign to each SCC. Further,
        when an SCC $S$ precedes another SCC $S'$, the stratum of $S$ must be no
        higher than the stratum of~$S'$. So in fact the only freedom that we
        have is to choose a topological sort of the SCCs, and optionally to
        assign the same stratum to consecutive SCCs in the topological sort:
        this amounts to ``merging'' some SCCs, and is only possible when there
        are no $\lnot$-gates between them. Now, in the evaluation, it is clear
        that the order in which we evaluate the SCCs makes no difference, nor
        does it matter if some SCCs are evaluated simultaneously. Hence, the
        evaluation of a stratified cycluit is well-defined.
\end{proof}

\subparagraph*{Building provenance cycluits.}
\begin{toappendix}
  \subsection{Building provenance cycluits}
\end{toappendix}
Having defined cycluits as our provenance representation, we compute 
the provenance of a query on an instance as the 
\emph{provenance} of its SATWA on a tree encoding.
To do so, we must give a general definition of the provenance of SATWAs.
Consider a $\Gamma$-tree $\calT \defeq \la T, \lambda \ra$ for some
alphabet $\Gamma$, as in 
Section~\ref{sec:compilation}.
We define
a (Boolean) \emph{valuation}~$\nu$ of $\calT$ 
as a mapping from the nodes of~$T$ to $\{0, 1\}$.
Writing $\overline{\Gamma} \colonequals \Gamma \times \{0, 1\}$, 
each valuation~$\nu$ then defines a $\overline{\Gamma}$-tree
$\nu(\calT) \defeq \la T, (\lambda \times \nu) \ra$, obtained by annotating each
node of~$\calT$ by its $\nu$-image.
As in~\cite{amarilli2015provenance},
we define the provenance of a $\overline{\Gamma}$-SATWA $A$ on~$\calT$, which
intuitively captures all possible results of evaluating $A$ on possible
valuations of~$\calT$:

\begin{definition}
  \label{def:satwaprov}
  The \emph{provenance} of a $\overline{\Gamma}$-SATWA $A$ on a $\Gamma$-tree $\calT$
  is the Boolean function $\phi$ defined on the nodes of~$T$ such
  that, for any valuation $\nu$ of~$\calT$, 
  $\nu(\phi) = 1$ iff $A$
  accepts $\nu(\calT)$.
\end{definition}

We then show that we can efficiently build provenance representations of
SATWAs on trees as stratified cycluits:

\begin{theoremrep}\label{thm:satwaprov}
  For any fixed alphabet $\Gamma$, given a $\overline{\Gamma}$-SATWA $A$ and a
  $\Gamma$-tree $\calT$, we can build 
  a stratified cycluit capturing the provenance of~$A$
  on~$\calT$
  in time $O(\card{A} \cdot \card{\calT})$.
  Moreover, this stratified cycluit has treewidth $O(\card{A})$.
\end{theoremrep}

\begin{proofsketch}
The construction generalizes Proposition~3.1
of~\cite{amarilli2015provenance} from bNTAs and circuits to SATWAs and cycluits. 
For each node $w$ of $T$ and state $q$, we create a gate $g_w^q$ which, following the transitions of $A$,
 is connected to those created for the neighbors of $w$; $g_w^q$ is such
  that
for every Boolean valuation $\nu : T \to \{0,1\}$ of the inputs of $C$, there exists a run $\rho$ of $A$ on $\nu(T)$ starting at $w$ in state $q$ if and only if $\nu(g_w^q)=1$.
The reason why we need cycluits rather than circuits is because SATWAs may loop
back on previously visited nodes.
\end{proofsketch}

Note that the proof can be easily modified to make it work for standard
alternating two-way automata rather than our isotropic automata.
This result allows us to prove
Theorem~\ref{thm:mainprov}, by applying it to the SATWA obtained from the
ICG-Datalog program (Theorem~\ref{thm:maintheorem}), slightly modified
so as to extend it to the alphabet~$\overline{\Gamma}$. Recalling that nodes of
the tree encodings each encode 
at most one fact of the instance, we use the second coordinate of
$\overline{\Gamma}$ to indicate whether the fact is actually present or should
be discarded.
This allows us to range over possible subinstances, and thus to compute the provenance. This
concludes the proof of our main result (Theorem~\ref{thm:main} in
Section~\ref{sec:ICG}): we can evaluate an ICG-Datalog program on a treelike
instance in FPT-linear time by computing its provenance by
Theorem~\ref{thm:mainprov} and evaluating the provenance in linear time
(Proposition~\ref{prp:cycluitlinear}).

\begin{toappendix}
  To prove Theorem~\ref{thm:satwaprov}, we construct a cycluit $C^A_T$ as follows, generalizing the
construction
of~\cite{amarilli2015provenance}. 
For each node $w$ of $T$, we create an input node $g_w^{\i}$, a $\lnot$-gate
$g_w^{\lnot \i}$ defined as~$\NOT(g_w^{\i})$, and an OR-gate $g_w^q$ for each state $q \in
Q$. Now for each $g_w^q$, for $b \in \{0, 1\}$, we consider the propositional
formula $\Delta(q, (\lambda(w),b))$, and we express it as a circuit that
captures this formula:
we let
$g_w^{q,b}$ be the output gate of that circuit, we
replace each variable~$q'$ occurring positively by an
OR-gate $\bigvee_{w^\prime \in \neigh(w)} g_{w^\prime}^{q'}$, 
and we replace each variable $q'$ occurring negatively by the gate~$g_w^{q'}$.
We then define $g^q_w$ as
  $\OR(\AND(g_w^{\i},g_w^{q,0}), \AND(g_w^{\lnot
  \i},g_w^{q,1}))$. Finally, we let the output gate of $C$ be $ g_r^{q_{\I}}$, where $r$ is the root of $T$.

It is clear that this process runs in linear time in $\card{A} \cdot \card{T}$.
Moreover, for every node~$w$ of~$T$, we create $O(\card{A})$ gates, and those gates
  can only be connected between them or between gates created for the neighbors
  of~$w$. Hence, the treewidth of $C^A_T$ is $O(\card{A})$: we can compute
  a tree decomposition of~$C^A_T$ where the underlying tree is~$T$ and where
  each bag $b$ corresponding to a node $w$ of~$T$ contains the gates defined for
  the node~$w$ and for the neighbors of~$w$.
The proof of Theorem~\ref{thm:satwaprov} then follows from the following claim:

\begin{lemmarep}\label{lem:goodprov}
  The cycluit $C^A_T$ is a stratified cycluit capturing the provenance of $A$
  on~$\la T,\lambda\ra$.
\end{lemmarep}

\begin{proof}
  We first show that $C \colonequals C^A_T$ is a stratified cycluit.
  Let $\strat$ be the stratification function
  of the $\overline{\Gamma}$-SATWA $A$ and let $\{0, \ldots, m\}$ be its range.
  We use $\strat$ to define $\strat'$ as the following function from the gates
  of~$C$ to $\{0, \ldots, m+1\}$:
\begin{itemize}
	\item For any input gate $g_w^\i$, we set
          $\strat^\prime(g_w^\i)\colonequals 0$ and
          $\strat^\prime(g_w^{\lnot \i})\colonequals 1$.
        \item For an OR gate $g \colonequals \bigvee_{w^\prime \in \neigh(w)} g_{w^\prime}^{q'}$, we set
          $\strat'(g) \colonequals \strat(q')$.
	\item For any state $g_w^q$, we set $\strat^\prime(g_w^n) \colonequals \strat(q) +
          1$, and do the same for the intermediate AND-gates used in its
          definition, as well as the gates in the two circuits that capture the
          transitions $\Delta(q, (\lambda(w), b))$ for $b \in \{0, 1\}$, except
          for the input gates of that circuit (i.e., gates of the form $\bigvee_{w^\prime \in \neigh(w)}
          g_{w^\prime}^{q'}$, which are covered by the previous point, or
          $g^{q'}_w$, which are covered by another application of that point).
\end{itemize}

Let us show that $\strat'$ is indeed a stratification function for~$C$. We first
  observe that it is the case that the gates in stratum zero are precisely the
  input gates. We then check the condition for the various possible wires:

\begin{itemize}
    \item $g^\i_w \rightarrow g^{\lnot\i}_w$: by construction, we have
      $\strat(g^\i_w) < \strat'(g^{\lnot\i}_w)$.
    \item $g \rightarrow g'$ where $g'$ is a gate of the form $g_w^q$ and $g$ is an
      intermediate AND-gate in the definition of a $g_w^q$: by construction we
      have $\strat'(g) = \strat'(g')$, so in particular $\strat'(g) \leq
      \strat'(g')$.
    \item $g \rightarrow g'$ where $g'$ is an intermediate AND-gate in the
      definition of a gate of the form $g_w^q$, and $g$ is $g^\i_w$ or
      $g^{\lnot\i}_w$: by construction we have $\strat'(g) \in \{0, 1\}$ and
      $\strat'(g') \geq 1$, so $\strat'(g) \leq \strat'(g')$.
    \item $g \rightarrow g'$ where $g$ is a gate in a circuit capturing
      the propositional formula of some transition of $\Delta(q, \cdot)$ without
      being an input gate or a NOT-gate of this circuit, and $g'$ is also such a
      gate, or is an intermediate AND-gate in the definition of~$g_w^q$: then
      $g'$ cannot be a NOT-gate (remembering
      that the propositional formulae of transitions only have negations on
      literals), and
      by construction we have $\strat'(g) = \strat'(g')$.
    \item $g \rightarrow g'$ where $g$ is of the
	 form $\bigvee_{w^\prime \in \neigh(w)} g_{w^\prime}^{q}$, and $g'$ is
         a gate in a circuit describing $\Delta(q', \cdot)$ or an intermediate
         gate in the definition of $g^{q'}_w$.
         Then we have $\strat'(g) = \strat(q)$ and $\strat'(g') = \strat(q')$,
         and as $q$ occurs as a positive literal in a transition of~$q'$, by
         definition of $\strat$ being a transition function, we have $\strat(q)
         \leq \strat(q')$. Now we have $\strat'(g) = \strat(q)$ and $\strat'(g')
         = \strat'(q')$ by definition of~$\strat'$, so we deduce that
         $\strat'(g) \leq \strat'(g')$.
    \item $g \rightarrow g'$ where $g'$ is of the
	 form $\bigvee_{w^\prime \in \neigh(w)} g_{w^\prime}^{q'}$, and $g$ is
         one of the $g_{w'}^{q'}$. Then by definition of~$\strat'$ we have
         $\strat'(g) = \strat(q')$ and $\strat'(g') = \strat(q')$, so in
         particular $\strat'(g) \leq \strat'(g')$.
    \item $g \rightarrow g'$ where $g$ is a NOT-gate in a circuit
      capturing a propositional formula $\Delta(q',
      (\lambda(w), b))$, and $g$ is then necessarily a gate of the form
      $g^{q}_w$: then clearly $q'$ was
      negated in $\phi$ so we had $\strat(q) < \strat(q')$, and as by
      construction we have $\strat'(g) = \strat(q)$ and $\strat'(g') =
      \strat(q')$, 
      we deduce that $\strat'(g) < \strat'(g')$.
\end{itemize}

We now show that $C$ indeed captures the provenance of~$A$ on~$\la T,\lambda\ra$.
Let $\nu : T \to \{0,1\}$ be a Boolean valuation of the inputs of $C$, that we
extend to an evaluation $\nu' : C \to \{0,1\}$ of~$C$.
We claim the following \textbf{equivalence}: for all $q$ and $w$, there exists a run $\rho$ of $A$ on $\nu(T)$ \emph{starting at $w$} in state $q$ if and only if $\nu(g_w^q)=1$.

We prove this claim by induction on the stratum $\strat(q)$ of~$q$. Up to adding
  an empty first stratum, we can make sure that the base case is vacuous.
For the induction step, we prove each implication separately.

  \subparagraph*{\qquad Forward direction.}
First, suppose that there exists a run $\rho$ starting at $w$ in state $q$, and
let us show that $\nu'(g^q_w) = 1$. We show by induction on the run (from
bottom to top) that for each node $y$ of the run labeled by a \emph{positive} state
$(q', w')$ we have $\nu'(g^{q'}_{w'}) = 1$, and for every node $y$ of the run
labeled by a \emph{negative} state $\lnot (q', w')$ we have $\nu'(g^{q'}_{w'})
= 0$. The base case concerns the leaves, where there are three possible
subcases:

\begin{itemize}
  \item We may have $\lambda_r(y)=(q^\prime, w')$ with $\strat(q^\prime) = i$, so
  that $\Delta(q^\prime,(\lambda(w'),\nu(w')))$ is tautological. In
    this case, $g_{w'}^{q^\prime}$ is defined as
    $\OR(\AND(g_{w'}^{\i},g_{w'}^{q',1}),\AND(g_{w'}^{\lnot
    \i},g_{w'}^{q',0}))$. Hence, we know that $\nu(g_{w'}^{q^\prime,\nu(w)}) =
    1$ because the circuit is also tautological, and depending on whether
    $\nu(w)$ is~$0$ or~$1$ 
    we know that $\nu(g_{w'}^{\lnot\i})=1$ or $\nu(g_{w'}^{\i})=1$, so this
    proves the claim.
		
  \item We may have $\lambda_r(y)=(q^\prime, {w'})$ with $\strat(q^\prime) = j$ for $j < i$. 
		By definition of the run $\rho$, this implies that there exists
                a run starting at ${w'}$ in state $q^\prime$. 
		But then, by the induction on the strata (using the forward
                direction of the \textbf{equivalence}), we must have $\nu(g_{w'}^{q^\prime})=1$.

  \item We may have $\lambda_r(y)=\lnot (q^\prime, {w'})$ with $\strat(q^\prime)
                = j$ for $j < i$. 
                Then by definition there exists no run starting at ${w'}$ in state $q^\prime$.
                Hence again by induction on the strata (using the backward
                direction of the \textbf{equivalence}), we have that $\nu(g_{w'}^{q^\prime})=0$.
\end{itemize}

For the induction case on the run, where $y$ is an internal node,
by definition of a run there is a subset $S = \{q_{P_1},\cdots,q_{P_n}\}$ of
positive literals and a subset $N = \{ \lnot q_{N_1}, \cdots, \lnot q_{N_m}\}$
of negative literals that satisfy $\phi_{\nu(w')} \defeq \Delta(q^\prime,(\lambda(w'),\nu(w')))$ such
that:

\begin{itemize}
  \item For all $q_{P_k} \in P$, there exists a child $y_k$ of~$y$ with $\lambda_r(y_k)=(q_{P_k}, {w'}_k)$
    where ${w'}_k \in \neigh({w'})$;
  \item For all $\lnot q_{N_k}
\in N$ there is a child $y_{{w'}_k}$
of $y$ with
$\lambda_r(y_{w'})=\lnot (q_{N_k}, {w'})$.
\end{itemize}
    
    Then, by induction on the run, we
know that for all $q_{P_k}$ we have $\nu(g_{{w'}_k}^{q_{P_k}})=1$ and for all
$\lnot q_{N_k}$ we have
$\nu(g_{{w'}}^{q_{N_k}})=0$. By construction of $C$, we have
$\nu(g_{w'}^{q'})=1$. 
There are two cases:
either $\nu(w') = 1$ or $\nu(w') = 0$. In the
first case, remember that the first input of the OR-gate $g_{w'}^{q'}$ is an AND-gate of
$g^{\i}_{w'}$ and the output gate $g_{w'}^{q',1}$ of a circuit coding $\phi_{1}$ on inputs including
the $g^{q_{P_k}}_{w'_k}$ and $g^{q_{N_k}}_{w''}$. We have $\nu(g^{\i}_{w'}) = 1$ because $\nu(w') = 1$,
and the second gate evaluates to~$1$ by construction of the circuit, as
witnessed by the Boolean valuation of the $g^{q_{P_k}}_{w'_k}$ and $g^{q_{N_k}}_{w''}$.
In the second case we follow the same reasoning but with the second input
to~$g^{q'}_{w'}$ instead, which is an AND-gate on
$g^{\lnot \i}_{w'}$ and $\phi_{0}$.

By induction on the run, the claim is proven, and applying it to the root of the
run concludes the proof of the first direction of the \textbf{equivalence} (for the
induction step of the induction on strata).

  \subparagraph*{\qquad Backward direction.}
We now prove the converse implication for the induction step of the induction on
strata, i.e., letting $i$ be the current stratum, for every node $w$ and state
$q$ with $\strat(q) = i$, if $\nu(g^q_w) = 1$ then there exists a run $\rho$
of~$A$ starting at~$w$. From the definition of the stratification
function~$\strat'$ of the cycluit from~$\strat$, we have $\strat'(g^q_w) =
\strat(q)$, so as $\nu(g^q_w) = 1$ we know that $\nu_i(g^q_w) = 1$, where
$\nu_i$ is the $i$-th
stratum evaluation $\nu_i$ of~$C$ (remember Definition~\ref{def:strateval}).
By induction hypothesis on the strata, we know from the \textbf{equivalence} that, for any $j < i$, for any
gate $g^{q''}_{w''}$ of $C$ with $\strat(g^{q''}_{w''}) = j$, we have $\nu_j(g^{q''}_{w''}) = 1$ iff 
there exists a run $\rho$ of $A$ on $\nu(T)$ \emph{starting at $w''$}
in state $q''$.

We show the claim by an induction on the iteration in the application of
Algorithm~\ref{alg:semantics-monotone} for $\nu_i$ where the gate $g^q_w$ was
set to~$1$. The base case concerns gates that were initially true before
applying the algorithm: by 

Recall that the definition of $\nu_i$ according to
Definition~\ref{def:strateval} proceeds in three steps. Initially, we
fix the value in $\nu_i$ of gates of
lower strata, so we can then conclude by induction hypothesis on the strata. We
then set the value of all NOT-gates in~$\nu_i$, but these cannot be of the form
$g^{q'}_{w'}$ so there is nothing to show. Last, we evaluate all other gates
with Algorithm~\ref{alg:semantics-monotone}. We then show our claim by an
induction on the iteration in the application of
Algorithm~\ref{alg:semantics-monotone} for $\nu_i$ where the gate $g^q_w$ was
set to~$1$. The base case, where $g^q_w$ was initially true, was covered in the
beginning of this paragraph.

For the induction step on the application of
Algorithm~\ref{alg:semantics-monotone},
when a gate $\nu_i(g^{q'}_{w'})$ is set to true, as $\nu_i(g^{q'}_{w'})$ is an
OR-gate by construction, from the workings of Algorithm~\ref{alg:semantics-monotone}, there are two possibilities:
either its input AND-gate that includes $g^{\i}_{w'}$ was true,
or its input AND-gate that includes $g^{\neg\i}_{w'}$ was true. We prove the
first case, the second being analogous. From the fact that $g^{\i}_{w'}$ is
true, we know that $\nu(w') = 1$. Consider the other input gate to that AND
gate, which is the output gate of a circuit $C'$ reflecting 
$\phi \defeq \Delta(q^\prime,(\lambda(w'),\nu(w')))$, with the input gates
adequately substituted. We consider the value by $\nu_i$ of the gates that
are used as input gates of $C'$ in the construction of~$C$ (i.e., OR-gates, in
the case of variables that occur positively, or directly $g^{q''}_{w'}$-gates,
in the case of variables that occur negatively). By construction of
$C'$, the corresponding Boolean valuation $\nu'$ is a witness to the satisfaction of $\phi$. By induction hypothesis on
the strata (for the negated inputs to $C'$; and for the non-negated inputs to
$C'$ which are in a lower stratum) and on the step at which the gate was
set to true by Algorithm~\ref{alg:semantics-monotone} (for the inputs in the same stratum, which must be positive), the
valuation of these inputs reflects the existence of the corresponding runs.
Hence, we can assemble these (i.e., a leaf node in the first two cases, a run in
the third case) to obtain a run starting at $w'$ for state $q'$
using the Boolean valuation $\nu'$ of the variables of~$\phi$; this valuation satisfies
$\phi$ as we have argued.

This concludes the two inductions of the proof of the \textbf{equivalence} for the
induction step of the induction on strata, which concludes the proof.
\end{proof}

\subparagraph*{Putting it together.}
We now conclude the proof of Theorem~\ref{thm:mainprov} by explaining how this
provenance construction for $\overline{\Gamma}$-SATWAs can be used to compute
the provenance of an ICG-Datalog query on a treelike instance. This is again
similar to~\cite{amarilli2015provenance}.

Recall the definition of tree encodings from
Appendix~\ref{apx:tree-encodings}, and the
definition of the alphabet~$\Gamma^k_\sigma$. To represent the dependency of
automaton runs on the presence of individual facts, we will be working with
$\overline{\Gamma^k_\sigma}$-trees, where the Boolean annotation on a node~$n$ indicates
whether the fact coded by~$n$ (if any) is present or absent. The semantics
is that we map back the result to $\Gamma^k_\sigma$ as follows:

\begin{definition}
  We define the mapping $\epsilon$ from $\overline{\Gamma^k_\sigma}$ to
  $\Gamma^k_\sigma$ by:
  \begin{itemize}
    \item $\epsilon((d, s), 1)$ is just $(d, s)$, indicating that the fact
      of~$s$ (if any) is kept;
    \item $\epsilon((d, s), 0)$ is $(d, \emptyset)$, indicating that the fact of
      $s$ (if any) is removed.
  \end{itemize}

  We abuse notation and also see $\epsilon$ as a mapping from
  $\overline{\Gamma^k_\sigma}$-trees to $\Gamma^k_\sigma$-trees by applying it
  to each node of the tree.
\end{definition}

As our construction of provenance applies to automata on
$\overline{\Gamma^k_\sigma}$, we show the following easy \emph{lifting lemma}
(generalizing Lemma~3.3.4 of~\cite{amarilli2016leveraging}):

\begin{lemmarep}\label{lem:lifting}
  For any $\Gamma^k_\sigma$-SATWA $A$, we can compute in linear time a
  $\overline{\Gamma^k_\sigma}$-SATWA $A'$ such that, for any
  $\overline{\Gamma^k_\sigma}$-tree $E$, we have that $A'$ accepts $E$ iff $A$
  accepts $\epsilon(E)$.
\end{lemmarep}

\begin{proof}
  The proof is exactly analogous to that of  Lemma~3.3.4
  of~\cite{amarilli2016leveraging}.
\end{proof}

We are now ready to conclude the proof of our main provenance result
(Theorem~\ref{thm:mainprov}):

\begin{proof}[Proof of Theorem~\ref{thm:mainprov}]
  Given the program $P$ and instance $I$, use
Theorem~\ref{thm:maintheorem} to compute in
FPT-linear time in $\card{P}$ a $\Gamma^k_\sigma$-SATWA $A$ that tests it on tree
encodings of width $\leq k_{\text{I}}$, for $k_{\text{I}}$ the treewidth bound.
Compute also in FPT-linear time a tree encoding $E$ of the instance $I$, i.e., a
$\Gamma^k_\sigma$-tree $E$, using Lemma~\ref{lem:getencoding}.
Lift the  $\Gamma^k_\sigma$-SATWA $A$ in linear time using
Lemma~\ref{lem:lifting} to a $\overline{\Gamma^k_\sigma}$-SATWA~$A'$, and use Theorem~\ref{thm:satwaprov} on $A'$ and $E$ to compute in
FPT-linear time a stratified cycluit $C'$ that captures the provenance of~$A'$
on~$E$: the inputs of~$C'$ correspond to the nodes of~$E$. 
  The treewidth of~$C'$ is in $O(A')$, so it is FPT-linear in~$\card{P}$.
  Let $C$ be obtained
from $C'$ in linear time by changing the inputs of $C'$ as follows: those which
correspond to nodes $n$ of~$E$ containing a fact (i.e., with label $(d, s)$ for
$\card{s} = 1$) are renamed to be an input gate that stands for the fact of~$I$
coded in this node; the nodes $n$ of~$E$ containing no fact are replaced by a
0-gate, i.e., an OR-gate with no inputs. Clearly, $C$ is still a stratified
Boolean cycluit that satisfies the required treewidth bound,
  and $C_\inp$ is exactly the set of facts of~$I$.

All that remains to show is that $C$ captures the provenance of $P$ on~$I$ in
the sense of Definition~\ref{def:provenance}. To see why, consider an arbitrary
Boolean valuation $\nu$ mapping the facts of~$I$ to $\{0, 1\}$, and call $\nu(I) \defeq
\{F \in I \mid \nu(F) = 1\}$. We must show that $\nu(I)$ satisfies $P$ iff
$\nu(C) = 1$. By construction of~$C$, it is obvious that $\nu(C) = 1$ iff
$\nu'(C') = 1$, where $\nu'$ is the Boolean valuation of $C'_\inp$ defined by $\nu'(n) =
\nu(F)$ when $n$ codes some fact $F$ in $E$, and $\nu'(n) = 0$ otherwise. By
definition of the provenance of $A'$ on~$E$, we have $\nu'(C') = 1$ iff $A'$
accepts $\nu'(E)$, that is, by definition of lifting, iff $A$ accepts
$\epsilon(\nu'(E))$.
Now all that remains to observe is that $\epsilon(\nu'(E))$ is precisely a tree
encoding of the instance $\nu(I)$: this is by definition of $\nu'$ from $\nu$,
and by definition of our tree encoding scheme (see ``subinstance-compatibility''
in \cite{amarilli2016leveraging}). Hence, by definition of $A$ testing $P$, the tree
$\epsilon(\nu'(E))$ is accepted by $A$ iff $\nu(I)$ satisfies $P$. This finishes
the chain of equivalences, and concludes the proof of
Theorem~\ref{thm:mainprov}.
\end{proof}
\end{toappendix}

\section{From Cycluits to Circuits and Probability Bounds}
\label{sec:rmcycles}
We have proven our main result on ICG-Datalog, Theorem~\ref{thm:main}, in the
previous section, introducing stratified cycluits in the process as a way to
capture the provenance of ICG-Datalog.
In this section, we study how these stratified cycluits can be transformed
into \emph{equivalent} acyclic Boolean circuits, and we then show how we can use
this to derive bounds for the \emph{probabilistic query evaluation} problem (PQE).

\begin{toappendix}
\subsection{From cycluits to circuits}
\end{toappendix}
\subparagraph*{From cycluits to circuits.}
We call two cycluits or
circuits $C_1$ and $C_2$ \emph{equivalent} if they have the same set of
inputs $C_\inp$ and, for each valuation $\nu$ of $C_\inp$, we have
$\nu(C_1) = \nu(C_2)$.
A first result from existing work is that we can remove cycles in cycluits and
convert them to circuits, with a
quadratic blowup, by creating linearly many copies to materialize the fixpoint
computation. This allows us to remain FPT in combined complexity, but not FPT-linear:

\begin{proposition}[(\cite{riedel2012cyclic}, Theorem~2)]
  \label{prp:rmcyclequad}
  For any stratified cycluit $C$, we can compute in time $O(\card{C}^2)$ a Boolean circuit
  $C'$ which is equivalent to~$C$.
\end{proposition}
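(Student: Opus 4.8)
The plan is to restore acyclicity by \emph{unrolling} the fixpoint computation that underlies the semantics of~$C$ (Definition~\ref{def:strateval}): I would make a linear number of time-stamped copies of the gates of~$C$, the $j$-th copy recording the value of each gate after $j$ iterations, so that every wire of~$C$ that closes a cycle becomes a wire from copy $j{-}1$ to copy~$j$.

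Concretely, first compute in linear time a stratification function $\strat$ of~$C$ (Proposition~\ref{prp:stratifun}) and set $N \defeq 2\card{C}$. Build~$C'$ as follows. Its input gates are exactly those of~$C$. For every non-input gate $g$ of~$C$ and every $0 \le j \le N$, create a gate $g^{(j)}$ of the same type ($\land$, $\lor$, or $\neg$) as~$g$, whose predecessors are: every input-gate predecessor of~$g$, kept unchanged; and $h^{(j-1)}$ for every non-input predecessor $h$ of~$g$, where we conventionally read $h^{(-1)}$ as the constant~$0$. Let the output gate of~$C'$ be $g_0^{(N)}$, where $g_0$ is the output gate of~$C$. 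Every wire of~$C'$ then goes from an input gate to some $g^{(j)}$, or from some $h^{(j-1)}$ to some $g^{(j)}$, so ordering the gates by copy index (inputs first) is a topological sort; hence $C'$ is a genuine Boolean circuit, it has $O(N\cdot\card{C}) = O(\card{C}^2)$ gates and wires, and it is produced in $O(\card{C}^2)$ time.

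For correctness, fix a valuation $\nu$ of $C_\inp$ and let $\nu'$ be the evaluation of~$C$ under~$\nu$ (Definition~\ref{def:strateval}). By induction on the strata, and within a stratum on the iteration index, one shows that $\nu(g^{(j)})$ equals the value of~$g$ after $j$ iterations of this staged evaluation: inside a fixed stratum the $\land$/$\lor$ case is exactly Algorithm~\ref{alg:semantics-monotone} (equivalently, Proposition~\ref{prp:altersem}) applied to the monotone sub-cycluit obtained by freezing the lower strata, and a $\neg$-gate reads only from a strictly lower stratum, so it is correct one further step after that stratum has stabilised. Since a monotone sub-cycluit with $n$ gates reaches its fixpoint in at most $n$ iterations, and since the $\neg$-gates of each stratum cost one further iteration, all gates have stabilised to $\nu'$ after at most $\sum_i(\card{\{g:\strat(g)=i\}}+1) \le N$ iterations; in particular $\nu(g_0^{(N)}) = \nu'(g_0) = \nu(C)$, and $C'_\inp = C_\inp$, so $C'$ is equivalent to~$C$.

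The step I expect to require the most care is precisely this counting: certifying that a number of unrollings linear in $\card{C}$ is enough in the presence of negation. For a single monotone stratum the ``fixpoint in at most (number of gates) steps'' bound is classical; the point is to accumulate it cleanly across strata, noting that each monotone stratum needs its own fresh budget of iterations once its lower-stratum inputs have become constant, and that every $\neg$-gate settles exactly one step after its (lower-stratum) input does. This cumulative bound is what keeps the blow-up quadratic rather than cubic, and is the only non-routine ingredient of the argument.
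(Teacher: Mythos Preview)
Your construction has a genuine bug in the non-monotone case: a single global unrolling with all gates initialised to~$0$ does \emph{not} compute the stratified semantics. The problem is that a $\neg$-gate $g$ whose input~$h$ is a non-input gate starts with $g^{(0)}=\lnot h^{(-1)}=\lnot 0=1$, which may be a spurious~$1$; this spurious~$1$ can flow into a same-stratum $\lor$-cycle and get trapped there forever.

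Concretely, take the input gate~$x$, an $\lor$-gate $a$ with sole input~$x$ (stratum~$1$), a $\neg$-gate $b$ with input~$a$ (stratum~$2$), and an $\lor$-gate $c$ with inputs $b$ and~$c$ (stratum~$2$). Under $\nu(x)=1$ the stratified semantics gives $a=1$, $b=0$, and hence $c=0$ (least fixpoint of $c = 0 \lor c$). In your unrolling, however, $a^{(j)}=1$ for all $j\ge 0$ (its only input is the input gate~$x$), so $b^{(0)}=\lnot a^{(-1)}=\lnot 0=1$, whence $c^{(1)}=b^{(0)}\lor c^{(0)}=1$, and then $c^{(j)}=1$ for every $j\ge 1$ regardless of~$N$. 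Thus $\nu(g_0^{(N)})\ne\nu(C)$ for $g_0=c$. Your inductive claim that ``$\nu(g^{(j)})$ equals the value of~$g$ after $j$ iterations of the staged evaluation'' fails precisely because your unrolling is not staged: all strata evolve in parallel, and the $\lor/\land$ iteration of a stratum is only monotone once its $\neg$-inputs (and lower strata) are already \emph{fixed}, which they are not in your scheme.

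The fix is straightforward and preserves the $O(\card{C}^2)$ bound: unroll stratum by stratum. First compute the $\neg$-gates of stratum~$i$ once from the (already stabilised) final copies of lower strata; then unroll only the $\lor/\land$ gates of stratum~$i$ for $\card{\{g:\strat(g)=i\}}$ steps, wiring intra-stratum predecessors to copy $j{-}1$ and lower-stratum predecessors to their final copy. This is exactly the ``linearly many copies to materialize the fixpoint computation'' the paper sketches (the statement is attributed to~\cite{riedel2012cyclic} rather than proved in the paper), and it yields $\sum_i \card{\{g:\strat(g)=i\}}\le\card{C}$ copies overall, hence size and time $O(\card{C}^2)$.
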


In addition to being quadratic rather than linear, another disadvantage
of this approach is that bounds on the treewidth
of the cycluit (which we will need later for probability computation) are generally not preserved on the output.
Hence, we prove a second cycle removal result, that proceeds in FPT-linear time when
parameterized by the treewidth of the input cycluit. When we use this result, we no
longer preserve FPT combined complexity of the overall computation,
because the stratified
cycluits produced by Theorem~\ref{thm:mainprov} generally have treewidth
$\Omega(|P|)$. On the other hand, we obtain an FPT-linear data complexity
bound, and
a bounded-treewidth circuit as a result. 

\begin{theoremrep}\label{thm:rmcycle}
  There is an $\alpha\in\mathbb{N}$ s.t.,
  for any stratified cycluit $C$ of treewidth $k$, we can compute in time
  $O(2^{k^\alpha} \card{C})$ a circuit $C'$ which is equivalent to~$C$ and has
  treewidth $O(2^{k^\alpha})$.
\end{theoremrep}

\begin{toappendix}
  The proof of Theorem~\ref{thm:rmcycle} is quite technical and proceeds in
several successive stages, corresponding to the following sections. In all
sections but the last one, we focus on \emph{monotone cycluits}. Throughout the
whole proof, we assume without loss of generality that there are no
\emph{isolated gates}, i.e., gates that have no input gate and being the input to
no gate (because such gates are not necessarily reflected in tree decompositions). We call a \emph{0-gate} an OR-gate with no inputs, and a \emph{1-gate}
an AND-gate with no inputs.

\subsubsection{Rewriting to arity-two cycluits}

For technical convenience, it will be easier to work with \emph{arity-two
cycluits}:

\begin{definition}
  The \emph{fan-in} of a gate $g$ in a monotone cycluit $C$ is the number of gates $g'$
  such that $g' \wire g$ is a wire of~$C$ (note that, as we are working with
  cycluits, this may include $g' = g$, i.e., $g$ may be its own input). For $c \in \NN$, a monotone cycluit $C$
  is \emph{arity-$c$} if
  each AND- and OR-gate has fan-in at most $c$, and if no gate is its own input
  (i.e., there is no wire of the form $g \wire g$).
\end{definition}

It is obvious that monotone cycluits, just like circuits, can be rewritten in linear time
to an \emph{arity-two} monotone cycluit, by taking advantage of the associativity of the OR and
AND Boolean operations: we just rewrite each AND- and OR-gate with fan-in $>2$
to a tree of gates of fan-in $\leq 2$ that computes the required Boolean
operation on the original set of inputs.

What is more technical to show, however, is that this rewriting operation can be
performed on a \emph{treelike} cycluit, while ensuring that the treewidth of the
result is still only a function of the treewidth of the original cycluit.
Hence, the result that we wish to prove is the following:

\begin{lemma}
  \label{lem:getarity2}
  For any monotone cycluit $C$ and tree decomposition $T$ of width~$k$ of~$C$,
  we can compute in time $O(\card{C} + \card{T})$ a monotone arity-two cycluit
  $C'$ which is equivalent to~$C$, and a tree decomposition $T'$ of width
  $k^2$ of~$C'$.
\end{lemma}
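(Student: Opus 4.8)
The plan is to reduce fan-in gate by gate, but to perform each reduction \emph{along the given tree decomposition} $T$ rather than in a single balanced tree, so that the auxiliary gates created for one gate stay confined to individual bags of $T$.

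\textbf{Preprocessing.} First I would dispose of self-loops: in the least-fixpoint semantics an AND-gate that is among its own inputs can never evaluate to~$1$, so I replace it by a $0$-gate (an input-free OR-gate); and a wire $g\wire g$ into an OR-gate is vacuous for the least fixpoint and can simply be deleted. This takes time $O(\card C)$, preserves the captured Boolean function, and leaves a tree decomposition of the same width (removing/relabelling gates never increases treewidth). Next I would binarize $T$ by the standard linear-time transformation (repeatedly splitting a bag with $d>2$ children into a chain of $d-1$ copies of that bag, each with two children); this preserves the width and multiplies $\card T$ by only a constant. From now on every bag has at most two children — this step is what makes the width bound work.

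\textbf{The rewriting.} Root $T$. For each AND/OR gate $g$ of fan-in $m>2$, let $T_g$ be the subtree of $T$ induced by the bags containing $g$ (it is connected, since the occurrences of a vertex form a connected subtree), rooted at its node $b_g$ closest to the root of $T$. Since each wire $h\wire g$ is covered by some bag, I can pick for each input $h$ of $g$ a bag $\beta(h)\in T_g$ containing both $g$ and $h$ (choosing the topmost bag of $T_h\cap T_g$, which lets all the $\beta(h)$ be computed in total linear time), and I set $I_b\defeq\{h : h\wire g,\ \beta(h)=b\}$, so that $\{I_b\}_{b\in T_g}$ partitions the inputs of $g$. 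Processing $T_g$ bottom-up, at each $b\in T_g$ I create a chain of at most $k+1$ fresh fan-in-$2$ gates of the same type as $g$ that aggregates the at most $\card{I_b}+2\le k+2$ gates consisting of $I_b$ together with the aggregation outputs $g^{(c)}$ of the (at most two) children $c$ of $b$ in $T_g$; I call $g^{(b)}$ the output of this chain, and at the root $b_g$ I reuse $g$ itself as the last gate of the chain, so the fan-in of $g$ drops to $\le 2$ while its outgoing wires are untouched. (Degenerate cases — a bag with a single gate to aggregate, or an input equal to $g$ — are handled by using a trivial fan-in-$1$ pass-through or by placing the self-reference strictly below the final gate, so no new self-loop appears.) Gates of fan-in $\le 2$ are left alone; the resulting cycluit $C'$ then has all AND/OR gates of fan-in $\le 2$ and no self-loops, hence is arity-two.

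\textbf{Correctness, the decomposition, and the running time.} For equivalence: the fresh gates created for a rewritten $g$ over $T_g$ form an acyclic gadget whose output, as a function of the gates in $\bigcup_b I_b$, is exactly the $\lor$ (resp.\ $\land$) of the original inputs of $g$, by associativity; so the value of $g$ under the least-fixpoint semantics of $C'$ equals its value in $C$, all other gates are unchanged, and $C'$ captures the same Boolean function as $C$. For $T'$ I keep the tree shape of $T$ and let the copy of a bag $b$ consist of: the gates of $b$; for each $g\in b$ the $\le k+1$ fresh chain gates created for $g$ at $b$; and for each $g\in b$ and each child $c$ of $b$ in $T$ with $g\in c$, the gate $g^{(c)}$. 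One checks this is a tree decomposition of $C'$: wires among original gates are covered as in $T$; the wires inside each chain and from $I_b$ and the borrowed $g^{(c)}$ into the chain at $b$ all live in the copy of $b$; and each fresh gate lies in at most two copies, which are adjacent. A copy of $b$ thus has at most $(k+1)$ original gates, at most $k+1$ fresh gates for each of them, and at most $2(k+1)$ borrowed boundary gates, so its size is $O(k^2)$, giving the stated width bound. All steps run in time $O(\card C+\card T)$: binarization and the construction of $T'$ are linear in $\card T$ plus the $O(\card T)$ fresh gates, the $\beta(\cdot)$ are computed in $O(\card C+\card T)$, and building the chains costs $\sum_g(\card{T_g}+m_g)=O(\card T+\card C)$ (viewing $\card T$ as including the bag contents). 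The only real difficulty is exactly this width control: the obvious fan-in reduction — a single balanced binary tree over each high-fan-in gate's inputs — scatters $\Theta(m)$ fresh gates across $T_g$ with no control on co-occurrence and destroys the treewidth, whereas doing the aggregation node-by-node along a binarized $T$ confines the $\Theta(k)$ fresh gates per gate to single bags.
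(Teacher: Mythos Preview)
Your proof is correct and follows essentially the same approach as the paper: binarize the tree decomposition, dispose of self-loops, then aggregate the inputs of each high-fan-in gate bag by bag along its occurrence subtree so that only $O(k)$ fresh gates are created per original gate per bag, yielding an $O(k^2)$ width bound. The paper presents the same idea slightly differently (tracking the fan-in seen so far during a single bottom-up pass and performing a top-down renaming), but the underlying construction is the same.
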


\begin{proof}
  First, to ensure the condition that no gate of the circuit is its own input,
  observe that any wire of the form $g \rightarrow g$ can be dropped if $g$ is an OR-gate
  with other inputs; if $g$ is an OR-gate whose only input is itself, then it
  will never evaluate to~$1$ so we can replace it by a $0$-gate, and if $g$
  is an AND-gate with a wire $g \rightarrow g$ then we can replace $g$ by a $0$-gate as
  it will never evaluate to~$1$. This can be done in linear time and does not
  increase the width of the tree decomposition.

  We assume without loss of generality that the tree decomposition~$T$ has arity
  at most two (i.e., each bag has at most two children), as we can otherwise
  ensure it by rewriting~$T$ in linear time (by replacing each bag with more
  than two children with a hierarchy of bags with the same contents and with two
  children). We assign each wire $g \rightarrow g'$ of the circuit to some
  bag $b$ of~$T$ such that $g, g' \in \dom(b)$: this is doable in linear time by
  Lemma~3.1 of~\cite{flum2002query}.
  We now perform a bottom-up
  traversal of the tree decomposition~$T$.
  In this process, we will annotate bags $b$ of the tree decomposition with sets
  of pairs of gates $g \mapsto g'$, indicating that $g$ is to be renamed to $g'$
  in the subtree of~$T$ rooted at~$b$.
  During the bottom-up traversal, we memorize, for each gate $g$ in the
  current bag $b$, its fan-in for the wires that we have seen so far: it is
  initially $0$, and is computed at bag $b$  as the sum of the fan-in of~$g$ at
  all the child bags (if any) plus its fan-in for the wires assigned to bag~$b$.
  Whenever the fan-in of a gate $g$ at a bag $b$ exceeds two, we do the following:
  we create one gate $g_1$ and label the first child bag of~$b$ (if it exists)
  with $g \mapsto g_1$,
  create one gate $g_2$ and label the second child bag of~$b$ (if it exists)
  with $g \mapsto g_2$, and create a gate $g'$ for the wires of the current bag. We let the
  input gates of~$g'$ be the input gates present at the current bag, and rewrite
  this to arity-two, which introduces $\leq k-1$ new gates in total.
  Now, we set the input gates of $g$ to be $g''$ and $g_1$, where $g''$ is a
  fresh gate for the same operation as~$g$, and set the input gates of~$g''$ to
  be $g_2$ and $g'$. We have introduced $\leq k$ new gates in the current bag
  per gate initially in~$b$, so the new size of~$b$ is at most~$k^2$.

  Once this pass is done, we perform a top-down pass to perform the
  renamings indicated on the bags, which is doable in linear time because the number of renamings to
  perform at any stage is constant (it is bounded by the width of the bags). 

  It is clear that this process produces an arity-two cycluit equivalent to~$C$,
  and the rewritten tree decomposition $T'$ has width $k^2$.
\end{proof}

\subsubsection{Regrouped tree decompositions}

Now that we can rewrite monotone cycluits to arity-two monotone cycluits in
linear time with only a quadratic blowup in the treewidth, we show that we can
impose an additional condition on tree decompositions of arity-two circuits,
which we call being \emph{regrouped}:

\begin{definition}
  A tree decomposition $T$ of a cycluit $C = (G, W, g_0, \mu)$ is \emph{regrouped} if it satisfies
  the following property: for any gate $g \in G$, letting $\ins(g) \defeq \{g' \in G
  \mid g' \wire g \in W\}$, there is a bag $b_g$ of~$T$ such that $\{g\} \cup
  \ins(g) \subseteq \dom(b)$.
\end{definition}

In other words, a tree decomposition is regrouped if each gate has a witness bag
which contains that gate and \emph{all} of its inputs. This can be contrasted
with the usual notion of tree decomposition, where each of the input wires can
be witnessed by a different bag. We show that, for arity-two cycluits, we can
assume that tree decompositions are regrouped in this sense:

\begin{lemma}
  \label{lem:getregrouped}
  For any monotone arity-two cycluit $C$ and tree decomposition $T$ of~$C$ of
  width~$k$, we
  can compute in time $O(\card{C} + \card{T})$ a regrouped tree decomposition
  $T'$ of~$C$ of width $\leq 3k$.
\end{lemma}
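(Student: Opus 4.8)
The plan is to root $T$ arbitrarily and then \emph{lift} the inputs of each gate upwards in~$T$ until they all land in one common bag. Concretely, I would first compute, for each gate $g \in G$, the node $b_g$ of~$T$ that is the topmost (closest to the root) bag containing~$g$; since the bags containing~$g$ form a connected subtree $S_g$ of~$T$, every bag of~$S_g$ is a descendant of~$b_g$. Using Lemma~3.1 of~\cite{flum2002query} I would also assign, in linear time, each wire $g' \wire g$ of~$C$ to a bag $b_{g'\wire g}$ with $\{g, g'\} \subseteq \dom(b_{g'\wire g})$; such a bag lies in $S_g$, hence is a descendant of~$b_g$, so $b_g$ and $b_{g'\wire g}$ sit on a common root-to-leaf path with $b_g$ above it. I would then build $T'$ by adding, for every wire $g'\wire g$ and every bag~$b$ on the path from~$b_g$ down to~$b_{g'\wire g}$, the gate~$g'$ to~$\dom(b)$.

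I would then verify three things. (1) $T'$ is still a tree decomposition of~$C$: nothing is deleted, so every wire (hence every gate) stays covered, and the only real point is connectivity of the occurrences of each gate~$x$. In~$T'$ these occurrences are $S_x$ together with one path $P_{x\wire g}$ per outgoing wire $x\wire g$; each $P_{x\wire g}$ ends at $b_{x\wire g}\in S_x$, so $S_x \cup P_{x\wire g}$ is connected, and the union over all outgoing wires of~$x$ stays connected since every piece meets~$S_x$. (2) $T'$ is regrouped, witnessed by $b_g$ for gate~$g$: for each $g'\in\ins(g)$, either $g'$ was already in~$\dom(b_g)$, or it was added there since $b_g$ is the top of $P_{g'\wire g}$; hence $\{g\}\cup\ins(g)\subseteq\dom(b_g)$. (3) Width: a gate $g'$ is added to a bag~$b$ only when $b$ lies on some $P_{g'\wire g}$, in which case $b\in S_g$ so $g$ was already in~$\dom(b)$; hence every added gate of~$b$ is an input of some original member of~$\dom(b)$, and since $C$ is arity-two each original member has at most two inputs, so $\card{\dom(b)}$ grows by a factor at most three, giving width $\le 3k$ (checking the additive constants is routine).

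Finally, for the running time, I would argue that everything fits into $O(\card C + \card T)$: computing all the $b_g$ and the wire-to-bag assignment takes a constant number of traversals of~$T$, and the lifting additions can be realised in a single bottom-up sweep that maintains, for each gate~$g'$, a counter of the lifting requirements currently ``in flight'' (incremented at the bags $b_{g'\wire g}$, decremented at the bags $b_g$), so that each path is not walked separately; by the width analysis the total number of (bag, added-gate) incidences is $O(\card C)$, so the sweep is linear. The step I expect to be the main obstacle is precisely this batched linear-time realisation of the lifting — a naive per-wire walk up each path is quadratic in the worst case — together with the bookkeeping needed to see that lifting a single gate on behalf of all of its (possibly many) outgoing wires still yields a connected occurrence set and a bag size that only triples.
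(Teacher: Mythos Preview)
Your proposal is correct, but the paper takes a much simpler route that avoids almost all of your machinery. Instead of rooting~$T$, computing topmost bags~$b_g$, assigning wires to witness bags, and lifting along paths, the paper simply sets $\dom(b') \defeq \dom(b) \cup \bigcup_{g \in \dom(b)} \ins(g)$ for every bag~$b$: add to each bag the inputs of every gate already present there. The regrouped witness for~$g$ is then \emph{any} bag containing~$g$, not a distinguished topmost one. Width is immediate (each of the $\leq k+1$ original gates contributes $\leq 2$ new ones), and connectivity of the occurrences of a gate~$g$ follows because the new occurrence set of~$g$ is the union of the old subtree~$T_g$ with the subtrees~$T_{g'}$ of all gates~$g'$ having~$g$ as an input; each such~$T_{g'}$ intersects~$T_g$ (at the bag witnessing the wire $g \wire g'$), so the union stays connected. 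Linear time is then trivial --- one pass over the bags --- with no need for your counter-based batched sweep.

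What your approach buys is a slightly tighter construction: you only add~$g'$ along a single root-to-descendant path per wire, rather than throughout all of~$T_g$, so your~$T'$ has bag-wise smaller domains than the paper's. But since both constructions hit the same width bound~$3k$ anyway, this refinement is not needed here, and it costs you the rooting, the topmost-bag computation, and the nontrivial linear-time argument you yourself flag as the main obstacle. The paper's version sidesteps that obstacle entirely.
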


\begin{proof}
  We define $T'$ as having same skeleton as~$T$ and define each $b' \in T'$,
  letting $b$ be the corresponding bag in~$T$, to have $\dom(b') = \dom(b) \cup
  \bigcup \{\ins(g) \mid g \in b\}$. In other words, we add to each bag in~$T'$ the
  input gates of the gates that occur there. It is clear that $T'$ then
  satisfies the regrouped condition: take as witnessing bag $b_g$ of a gate
  $g$ any bag $b'$ of $T'$ corresponding to a bag $b \in T$ such that $g \in
  \dom(b)$. It is clear that $T'$ has width at most $3k$, as each bag $b'$
  of $T'$ contains at most the contents of the corresponding bag $b$ of~$T$ plus
  two additional input gates for each gate occurring in~$b$, thanks
  to the fact that $C$ is arity-two.

  As the regrouped condition subsumes the condition of tree decompositions that
  requires an occurrence of every wire, the only thing left to show is that the
  occurrences of any gate~$g$ in~$T'$ still forms a connected subtree. As $T$ is
  a tree decomposition of~$C$, let $T_g$ be the connected subtree containing the
  occurrences of~$g$ in~$T$, and, for every $g'$ such that $g \in \ins(g')$, let $T_{g'}$ be the
  connected subtree of the occurrences of~$g'$ in~$T$. For any $g'$ such that $g
  \in \ins(g')$, as
  $T$ is a tree decomposition, it must witness the wire $g \wire g'$, so the
  subtrees $T_g$ and $T_{g'}$ intersect in~$T$. Now, the subtree $T'_g$ of the
  occurrences of~$g$ in~$T'$ is by construction the analogue in~$T'$ of the
  union of the connected subtrees $T_g$ and $T_{g'}$ for $g'$ such that $g \in
  \ins(g')$. We
  conclude because the union of connected subtrees that intersect is also a
  connected subtree.
\end{proof}

\subsubsection{Normal-form tree decompositions}

Having presented the notion of regrouped tree decompositions, we now introduce
the notion of \emph{normal-form} tree decompositions, which satisfy the
regrouped conditions and satisfy other additional conditions. Unlike what we
have done so far, for technical reasons that will become apparent in the next
section,
we will define these tree decompositions as \emph{undirected}
connected trees rather than \emph{rooted} trees: this makes no difference, as we
used to consider tree decompositions as rooted trees simply for convenience. We
call $T$ an \emph{unrooted} tree decomposition if it is a tree decomposition
whose underlying tree is unrooted, and, for $b \in T$, the \emph{rooting} of~$T$
at bag~$b$ is a tree decomposition, in the usual rooted sense, obtained by
picking $b$ as the root in this tree; this always yields a valid tree
decomposition according to our earlier definitions.

\begin{definition}
  \label{def:normaldec}
  An unrooted tree decomposition $T$ of a monotone cycluit $C$ is in \emph{normal form} if
  it satisfies the following additional conditions:
  \begin{enumerate}
    \item Each bag $b$ of~$T$ has (undirected) degree either~$1$ or~$3$; we call
      $b$ a \emph{leaf bag} if it has degree~$1$, and an \emph{internal bag} if
      it has degree~$3$.
  \item There is an injective mapping $\phi$ from each gate $g$ of~$C$ to a leaf bag
    $b_g$ with $\dom(b_g) = \{g\} \cup \ins(g)$, where $\ins(g)$ are the input gates
  of~$g$. The
      \emph{$\phi$-occurrences} of~$g$ are the leaf bags $\{\phi(g)\} \sqcup
      \{\phi(g') \mid g' \in \ins(g)\}$.
  \item The tree decomposition is \emph{scrubbed}, meaning that:
    \begin{itemize}
      \item if a gate $g$ occurs in a leaf bag $b$, then $b$ is a
        $\phi$-occurrence of~$g$;
      \item if a gate $g$
    occurs in a non-leaf bag $b$ then there are two $\phi$-occurrences $b_1 \neq
      b_2$ of~$g$ and two neighbor bags $b_1' \neq b_2'$ 
      of~$b$ (note that we may have $b_i = b_i'$),
        such that, for $i \in \{1, 2\}$, the unique simple path from $b$ to~$b_i$ goes through
      $b_i'$.
    \end{itemize}
  \end{enumerate}
\end{definition}

Observe that the first condition ensures that, for any leaf bag $b$ 
of~$T$, the rooting of~$T$ at bag~$b$ is a tree decomposition where the root
bag~$b$ has exactly one child, and all other bags have exactly two children
except the leaf bags which have zero. The second condition implies that $T$, or
indeed any rooting of~$T$, is regrouped, with the witnesses for regrouping being chosen
at the leaf bags in an injective fashion. The third condition
intuitively ensures that each gate $g$ occurs in bags of~$T$ precisely in its
$\phi$-occurrences and on the 
paths between the $\phi$-occurrences of~$g$. In particular, the first point of
the third condition implies that any leaf bag with non-empty domain must be in
the image of~$\phi$. We also call a rooted tree decomposition \emph{scrubbed}
when it satisfies the analogous condition (with undirected paths).

We now show:

\begin{lemma}
  \label{lem:getnormal}
  For any monotone arity-two cycluit $C$ and regrouped tree decomposition $T$
  of~$C$ of width~$k$, we can compute in time $O(\card{C} + \card{T})$ a
  normal-form tree decomposition $T'$ of~$C$ of width~$k$.
\end{lemma}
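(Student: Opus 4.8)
The plan is to massage $T$ into normal form by three successive linear-time surgeries, none of which increases the width. \emph{Step 1: realize $\phi$.} For each gate $g$ of $C$, the regrouped hypothesis provides a bag whose domain contains $\{g\} \cup \ins(g)$, and one such witness bag $b_g$ per $g$ can be located in a single linear pass (as when assigning wires to bags via Lemma~3.1 of~\cite{flum2002query}). Attach to $b_g$ a fresh pendant leaf bag $\ell_g$ with $\dom(\ell_g) \defeq \{g\} \cup \ins(g)$ and set $\phi(g) \defeq \ell_g$. Since $\dom(\ell_g) \subseteq \dom(b_g)$ the width stays $\le k$; the $\ell_g$ are pairwise distinct so $\phi$ is injective; and adding a pendant whose domain is contained in the neighbouring bag keeps the tree decomposition valid for $C$. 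The crucial gain of this step is that every wire $g' \wire g$ of $C$ is now witnessed by the leaf bag $\phi(g)$, whose domain contains both $g$ and $g' \in \ins(g)$, so in the remaining steps I only need to worry about connectivity and the extra structural conditions. \emph{Step 2: fix the degrees.} Make the undirected skeleton have all degrees in $\{1,3\}$ by the standard surgery: replace each bag of degree $d \ge 4$ by a path of $d-2$ copies of it with the same domain, distributing the $d$ former neighbours two per endpoint and one per interior copy so that every copy gets degree exactly~$3$; and attach a fresh \emph{empty} pendant leaf bag to every degree-$2$ bag. This changes no domain, so the width does not increase, it adds only $O(\card{T})$ bags, and it creates no new degree-$2$ bag; the degenerate case $\card{C} \le 1$ gives a single edge of two leaf bags, trivially in normal form.

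\emph{Step 3: scrub.} Rooting the tree at an arbitrary leaf, one bottom-up pass — computing, for each bag $b$ and gate $g \in \dom(b)$, whether a leaf bag that must contain $g$ (that is, $\phi(g)$ or some $\phi(h)$ with $g \in \ins(h)$) lies in the subtree below $b$ — followed by one top-down pass lets me delete $g$ from every bag not lying on the minimal subtree of the skeleton that spans those mandatory leaf bags. This is standard, runs in linear time, leaves the skeleton (hence the degrees) untouched, and cannot increase the width. After scrubbing, the occurrences of each $g$ are exactly that minimal subtree, so $T'$ is still a valid tree decomposition of $C$: each wire $g' \wire g$ stays witnessed because $\phi(g)$ is mandatory for $g$ and for every element of $\ins(g)$, so its domain is never pruned. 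Moreover every leaf bag with nonempty domain is one of the $\ell_h = \phi(h)$, since a non-$\phi$ leaf is neither a mandatory occurrence nor in the interior of a path between two of them and thus gets emptied; and if a gate $g$ occurs in an internal bag $b$, then $b$ is not a leaf, hence not one of $g$'s mandatory occurrences, so it must lie strictly between two of them reached through two distinct neighbours of~$b$ — precisely condition~3 of Definition~\ref{def:normaldec}. Combining the three steps, $T'$ is a normal-form tree decomposition of $C$ of width $\le k$, computed in time $O(\card{C} + \card{T})$.

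The delicate point is the ordering and interplay of the steps. Scrubbing has to come \emph{after} the degree surgery, because the duplicated bags produced in Step~2 would otherwise grossly violate the scrubbed condition; conversely the surgery is allowed to duplicate domains freely precisely because Step~3 cleans this up. One then has to check carefully that Step~2 never recreates a degree-$2$ bag, and that Step~3's pruning neither disconnects some gate's occurrence subtree nor destroys a wire witness — the latter being exactly why Step~1 is set up so that every wire is already witnessed at a leaf bag $\phi(g)$, a bag that scrubbing never empties. These verifications are routine once the invariants are stated, so I expect the proof to be mechanical but somewhat lengthy.
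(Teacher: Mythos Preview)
Your proposal is correct and follows essentially the same approach as the paper: exploit the regrouped property to hang a fresh leaf $\phi(g)$ with domain $\{g\}\cup\ins(g)$ off each witness bag, massage the degrees to lie in $\{1,3\}$, and scrub so that each gate survives only on the Steiner tree of its $\phi$-occurrences. The only real difference is the ordering: the paper first bounds each bag to at most two children (rooted), then scrubs, and only afterwards pads degree-$2$ bags with \emph{empty} pendants to reach degree~$3$; since those pendants carry no gates, scrubbing is not disturbed. So your worry that ``scrubbing has to come after the degree surgery'' is not quite forced --- both orderings work --- but your ordering is perfectly valid and arguably cleaner.
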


\begin{proof}
  We modify $T$ in linear time in the following way to satisfy the second
  condition.
  For each gate $g$, letting $b_g$ be the witness bag for~$g$ guaranteed by
  replace $b_g$ by a new bag
  $b_g'$ with $\dom(b_g') = \dom(b_g)$, whose first child is $b_g$ and whose
  second child is a bag $b_g''$ whose domain is~$(\{g\} \cup
  \ins(g))$ (note that this is a subset of $\dom(b_g)$).
  We can now set
  $\phi(g)$ to be $b_g''$, with $\phi$ being injective.

  We then rewrite~$T$ in linear time in the usual way to ensure that each bag
  has at most two children.

  We next make $T$ scrubbed in three linear time passes. First, we remove all gates $g$ in leaf bags~$b$ when $b$ is not a
  $\phi$-occurrence of~$g$; this removal does not affect the fact that $T$ is a
  tree decomposition, because each wire is still covered at the $\phi$-image of
  its target gate, and the occurrences of each gate still form a connected
  subtree as we are only removing occurrences at leaf bags. Second, we do a
  bottom-up pass to compute, for each occurrence of a gate $g$ at a bag~$b$, a
  value $n_{g,b}$ indicating how
  many different children of~$b$ have a $\phi$-occurrence of~$g$ as a descendent
  (this number is between~$0$ and~$2$): if this number is~$0$ at an internal
  bag, then we remove~$g$. The removal of~$g$ in this case does not violate the
  fact that~$T$ is a tree decomposition: indeed, it does not affect the fact
  that the wires are covered, because this is still the case at the
  $\phi$-images, which are not internal bags; and gate occurrences still form
  connected subtrees, because whenever we remove a gate $g$ from a bag~$b$ then there is
  only one neighbor of~$b$ at most (namely, its parent) where $g$ also occurs.
  Third, we do a top-down pass where,
  whenever we have $n_{g,b} = 1$ for a gate~$g$ at a bag~$b$ when we encounter
  $b$ top-down (i.e.,
  $n_{g,b} = 1$ and $n_{g,b'} = 1$ for all ancestors of~$b'$ where $g$ occurs),
  then we remove $g$ from~$b$.
  The result is still a tree decomposition, because again wires are still covered at
  the $\phi$-images, and whenever we remove a gate~$g$ from a bag~$b$ then there is only one
  neighbor of~$b$ at most (namely, one of its children) where~$g$ also occurs.

  We now let $T'$ be the result of forgetting the orientation of~$T$. It is
  clear that the first point of the third condition is satisfied, and the second
  point is satisfied for any occurrence of a gate~$g$ at an internal bag~$b$: 
  indeed, $b$ either had two different children having a
  $\phi$-occurrence of~$g$ as a descendant, or $b$ had one such descendant and had an
  ancestor $b'$ whose parent $b''$ had a $\phi$-occurrence as a descendent of
  its other child. Hence, $T'$ is scrubbed.

  We last ensure that $T'$ satisfies the first condition. Note that, as each bag
  of~$T$ had at most two children, the undirected degree of bags of~$T$ is at
  most~$3$, so we can simply enforce the condition by adding bags with empty
  domain. This concludes the proof.
\end{proof}

\subsubsection{Rewriting monotone cycluits}

Armed with normal-form tree decompositions and the lemmas required to obtain
them, we are now ready to show our rewriting result for \emph{monotone}
cycluits:

\begin{theorem}
  \label{thm:rmcyclemono}
  There is an $\alpha\in\mathbb{N^*}$ s.t.,
  for any monotone cycluit $C$ of treewidth $k$, we can compute in time
  $O(2^{k^\alpha} \cdot \card{C})$ a circuit $C'$ which is equivalent to~$C$ and has
  treewidth $\leq 2^{k^\alpha}$.
\end{theorem}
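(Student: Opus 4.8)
The plan is to reduce, using the three preparatory lemmas already proved, to the case of a monotone arity-two cycluit carrying a \emph{normal-form} tree decomposition of width only polynomially larger than $k$, and then to eliminate cycles by a dynamic-programming argument along this decomposition that is materialized directly as the output circuit. Concretely, given a monotone cycluit $C$ of treewidth $k$, I would first compute a width-$O(k)$ tree decomposition in FPT-linear time via Bodlaender's algorithm~\cite{bodlaender1996linear}; then apply Lemma~\ref{lem:getarity2} to obtain an equivalent \emph{arity-two} monotone cycluit with a tree decomposition of width $O(k^2)$; then Lemma~\ref{lem:getregrouped} to make it \emph{regrouped} (width still $O(k^2)$); then Lemma~\ref{lem:getnormal} to put it in \emph{normal form} while keeping width $w = O(k^2)$. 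None of these steps touches $C_\inp$ or changes the Boolean function computed, and all run in FPT-linear time. Finally I would root the decomposition $T$ at a leaf bag whose domain contains the output gate together with its inputs (this exists by the $\phi$-witness structure of normal-form decompositions, or can be added).

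The heart of the proof is a \emph{local summary} for each bag. For a bag $b$, write $I_b := \dom(b)\cap\dom(\mathrm{parent}(b))$ for its interface and $G_b$ for the set of gates occurring in the subtree of $T$ rooted at $b$ (so every neighbour in $C$ of a gate of $G_b\setminus I_b$ lies in $G_b$). I would define a monotone function $\Phi_b : 2^{I_b} \to 2^{\dom(b)}$ where, for $X\subseteq I_b$, $\Phi_b(X)$ is the set of gates of $\dom(b)$ that evaluate to $1$ in the restriction of $C$ to $G_b$ when the gates of $X$ are additionally forced to $1$ (this depends on the input valuation, so $\Phi_b$ really stands for a family of $\lvert\dom(b)\rvert\cdot 2^{\lvert I_b\rvert} = 2^{O(w)}$ Boolean functions of the inputs occurring in $G_b$). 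The key structural claim is a \emph{local recurrence}: $\Phi_b(X)$ is the least subset $S$ of $\dom(b)$ that contains $X$, is closed under the gate rules of $C$ on $\dom(b)$ using the wires assigned to $b$, and satisfies $\Phi_{b_i}(S\cap I_{b_i})\cap\dom(b)\subseteq S$ for each child $b_i$ of $b$; at a leaf bag $\phi(g)$ this degenerates to encoding the gate type of the single gate $g$, reading the actual input value when $g$ is an input gate. Proving this recurrence, by a double inclusion using the fixpoint characterisation of Proposition~\ref{prp:altersem} together with the defining properties of tree decompositions, is the main obstacle: one must show that gluing the local monotone least fixpoints along $T$ reproduces the global least fixpoint of $C$, by tracking how derivations of true gates decompose across bag interfaces. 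The \emph{regrouped} property is exactly what makes this sound, since it guarantees that whenever all the inputs of a gate are used ``inside $G_{b_i}$'' they can be found together in a bag of that subtree, so $\Phi_{b_i}$ neither under- nor over-derives an interface gate.

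Third, I would materialise all of this as the circuit $C'$. Keep the input gates of $C$; for each bag $b$, build an acyclic gadget $U_b$ of size $2^{O(w)}$ by unrolling the ($\leq\lvert\dom(b)\rvert+1$)-round monotone fixpoint of the recurrence above and hard-wiring the local gate structure of $b$, so that the output wires of $U_b$ realise the truth table of $\Phi_b$; $U_b$ reads only the truth-table output wires of $U_{b_1}, U_{b_2}$ (a constant-size multiplexer, monotone since $\Phi_{b_i}$ is monotone, selects the relevant rows) and, when $b$ is the $\phi$-leaf of an input gate, that input gate — inputs are read nowhere else. Since the wires of $C'$ run only between gadgets of adjacent bags of the tree $T$, the resulting $C'$ is acyclic, has size $O(2^{O(w)}\cdot\lvert C\rvert) = O(2^{k^\alpha}\lvert C\rvert)$ for a suitable $\alpha$, and admits a tree decomposition of width $2^{O(w)} = 2^{k^\alpha}$ (underlying tree $T$, with the bag for $b$ consisting of $U_b$ together with $U_{\mathrm{parent}(b)}$). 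The construction is FPT-linear because it visits $O(\lvert C\rvert)$ bags and spends $2^{O(w)}$ time per bag, so the total is $O(2^{k^\alpha}\lvert C\rvert)$. Equivalence with $C$ is immediate from the correctness of the recurrence: the output gate lies in the root bag, whose interface is empty and whose cone $G_{\mathrm{root}}$ is all of $C$, so $\Phi_{\mathrm{root}}(\emptyset)$ — read off directly among the output wires of $U_{\mathrm{root}}$ — is exactly the set of gates of $C$ evaluating to $1$, for every input valuation.

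In short, everything except the recurrence for $\Phi_b$ — the preparatory rewriting through the three lemmas, the acyclicity of $C'$, the size and treewidth bounds, the FPT-linear running time, and the final equivalence — is routine bookkeeping; the only genuinely delicate point is showing that the local least fixpoints, composed along a regrouped normal-form tree decomposition, compute the global least fixpoint of the monotone cycluit.
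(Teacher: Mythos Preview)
Your approach is essentially the paper's: both reduce via the three preparatory lemmas to an arity-two cycluit with a normal-form tree decomposition, then build the output circuit bottom-up by creating at each bag a family of gates indexed by subsets of a bounded set of ``externally determined'' bag gates, and realise the interaction with the two children by unrolling a local monotone fixpoint whose number of rounds is bounded by the bag size. The paper indexes by the \emph{downward} gates $\domd(b)$ (those in $\dom(b)$ whose $\phi$-image lies outside the subtree) and forces each of them to $0$ or $1$ via a partial assignment $\rho_S$, whereas you index by the full interface $I_b$ and only force a chosen subset to~$1$; for monotone cycluits both parameterisations work, and the paper's correctness argument goes through a dedicated ``evaluation partition'' device (Algorithm~\ref{alg:semantics-monotone-part} and Lemma~\ref{lem:algocorrect}) in place of your direct fixpoint argument.

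One imprecision to fix: your informal semantics for $\Phi_b(X)$ --- ``evaluate the restriction of $C$ to $G_b$ with $X$ forced to $1$'' --- does not literally coincide with your recurrence. An interface gate $g\in I_b$ with $\phi(g)\notin T_b$ can have all its inputs outside $G_b$; in the naive induced subcycluit such an AND-gate has no remaining inputs and evaluates to~$1$, which over-derives. Your recurrence avoids this precisely because ``wires assigned to $b$'' only ever fires the rule for the single gate whose $\phi$-image is~$b$, so no rule fires for such downward interface gates. The paper's remedy is to define $C_b$ to keep only wires whose \emph{target}'s $\phi$-image lies in $T_b$, and then to force every downward gate to a fixed Boolean value via $\rho_S$. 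Either adopt that definition of the restricted cycluit when stating the semantics of $\Phi_b$, or take your recurrence itself as the definition and prove directly, by induction on~$T$, that it computes the global least fixpoint at the root --- which is exactly the step you already flag as the delicate one.
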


To prove the theorem, we will rely on a notion of \emph{partial assignment} of a
monotone cycluit, which we can use to enforce that some gates are evaluated
to~$1$, even when they are not input gates:

\begin{definition}
  \label{def:partialass}
  For any monotone cycluit $C$ and disjoint subsets $S^+ \sqcup S^-$ of gates that contain no input gate,
  letting $S \defeq (S^+, S^-)$,
  we define the \emph{partial assignment}
  $\rho_S(C)$ to be the monotone cycluit obtained from~$C$ by replacing each
  gate of~$S^+$ by a $1$-gate, and each gate of~$S^-$ by a $0$-gate.
\end{definition}

We will also rely on a variant of Algorithm~\ref{alg:semantics-monotone} to
evaluate parts of a cycluit in a tree decomposition based on the child parts in
the decomposition. It will work using the following notion:

\begin{definition}
  The \emph{constant gates} of a cycluit $C = (G, W, g_0, \mu)$ are the gates of $G \backslash C_\inp$ that have no input gates,
  i.e., $C_\cst \defeq \{g \in G \mid \nexists g' \in G, g' \wire g \in W\}
  \backslash C_\inp$: in other words, these are precisely the $0$- and
  $1$-gates.
  The \emph{internal gates} of~$G$ are the gates which are
  neither constant gates nor input gates.

  An \emph{evaluation partition} of a cycluit $C = (G, W, g_0, \mu)$ 
  is a partition $G_1 \sqcup G_2$ of the internal gates of~$C$.
  The \emph{frontier} of $G_1 \sqcup
  G_2$ is $F(G_1 \sqcup G_2) \defeq F_1(G_1 \sqcup G_2) \sqcup F_2(G_1 \sqcup
  G_2)$, where $F_j(G_1 \sqcup G_2) \defeq \{g \in
  G_{3-j} \mid \exists g' \in
  G_{j}, g \wire g' \in W\}$ for $j \in \{1, 2\}$.
\end{definition}

This allows us to define Algorithm~\ref{alg:semantics-monotone-part}, a variant
of Algorithm~\ref{alg:semantics-monotone} which evaluates a cycluit by
evaluating separately the parts of its evaluation partition, memorizing only the
state of the frontier across evaluations. We claim the following:

\begin{algorithm}[t]
  \DontPrintSemicolon
  \KwIn{Monotone cycluit $C=(G,W,g_0,\mu)$,
  Boolean valuation $\nu: C_{\inp} \to \{0,1\}$,
  evaluation partition $G_1 \sqcup G_2$}
  \KwOut{$\{g \in C \mid \nu(g)=1\}$}
  $\CST \defeq \{ g \in C_\inp \mid \nu(g)=1 \} \cup \{ g \in C_\cst \mid
  \mu(g) = \land\}$\;
  $T_0 \defeq \CST$\;
  $n \defeq \card{F(G_1 \sqcup G_2)} + 2$\;
  \For{$i$ from $1$ to $n$}{
    \For{$j$ from $1$ to $2$}{
      $U_{i,j} \defeq T_{i-1} \cap F_j(G_1 \sqcup G_2)$\;
      $V_{i,j} \defeq \textup{SubEval}(C, \CST \cup U_{i,j}, G_j)$\;
    }
    $T_i \defeq V_{i,1} \cup V_{i,2} \cup \CST$
  }
  \Return $T_n$\;
  \SetKwProg{myfunc}{Function}{}{}
  \myfunc{$\textup{SubEval}(C, Z, G')$}{
    \KwIn {Monotone cycluit $C=(G,W,g_0,\mu)$, subset $Z \subseteq G$ of true
    gates, subset $G' \subseteq G$ of iteration gates}
    \KwOut{Subset of gates that became true (observe the
    similarity to Algorithm~\ref{alg:semantics-monotone})}
  $Z_0 \defeq Z$\;
  $l \defeq 0$\;
  \Do{$Z_l \neq Z_{l-1}$}{
    $l \defeq l+1$\;
    $Z_{l} \defeq Z_{l-1}\cup\Big\{g \in G' \mid (\mu(g) =
    {\lor}), \exists g^\prime \in Z_{l-1}, g^\prime \rightarrow
  g\in W\Big\}
$\par~\mbox{${}\qquad\qquad~\cup\Big\{g \in G' \mid (\mu(g) =
    {\land}), \{g^\prime \mid g^\prime \rightarrow g\in W\} \subseteq
Z_l\Big\}$}
  }
  \Return $Z_l$\;
  }
  \caption{Partition-based evaluation of monotone cycluits}
  \label{alg:semantics-monotone-part}
\end{algorithm}

\begin{lemma}
  \label{lem:algocorrect}
  For any cycluit $C$, valuation $\nu$, and evaluation partition $G_1 \sqcup
  G_2$ of~$C$, the result of Algorithm~\ref{alg:semantics-monotone-part} is
  indeed the set $g$ of gates such that $g$ evaluates to~$1$ in~$C$ under~$\nu$.
  In other words, the output is the same as that of
  Algorithm~\ref{alg:semantics-monotone}.
\end{lemma}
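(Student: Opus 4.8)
The plan is to show that Algorithm~\ref{alg:semantics-monotone-part} computes exactly the least fixpoint $S$ of the monotone operator underlying Algorithm~\ref{alg:semantics-monotone} (which exists and is correct by Proposition~\ref{prp:altersem}). I would prove this by two inclusions. For the easy inclusion, $T_n \subseteq S$: I would show by induction on $i$ (and, within each $i$, on the inner iteration index $l$ of $\textup{SubEval}$) that every gate placed in $T_i$ (resp.\ in $Z_l$) genuinely evaluates to~$1$ under~$\nu$ in~$C$. The key observation is that the only ``external'' true gates fed into a $\textup{SubEval}$ call on part $G_j$ are $\CST$ (input gates true under~$\nu$ and $1$-gates, which are certainly in~$S$) together with $U_{i,j} = T_{i-1} \cap F_j(G_1 \sqcup G_2)$, which by the induction hypothesis on~$i$ are already known to be in~$S$; since $\textup{SubEval}$ only ever fires an OR-gate with a true input and an AND-gate with all inputs true (exactly the closure conditions of Proposition~\ref{prp:altersem}, restricted to gates of $G_j$), every gate it returns is in~$S$.

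For the hard inclusion, $S \subseteq T_n$, I would argue that $T_n$ is itself a set satisfying the three closure conditions of Proposition~\ref{prp:altersem}, so by minimality of~$S$ we get $S \subseteq T_n$. Containing $\CST$ (hence all true input gates) is immediate. For an OR-gate $g$ with an input $g' \in T_n$: if $g$ is a constant gate it is in $\CST \subseteq T_n$; otherwise $g$ is internal, say $g \in G_j$, and $g'$ is either in $\CST$, or in $G_j$, or in $G_{3-j}$. In the first two cases the final $\textup{SubEval}$ call on $G_j$ with the stabilized frontier will have fired $g$; in the third case $g' \in F_j(G_1\sqcup G_2)$ by definition of the frontier, so $g'$ was passed in through~$U$ on the last round, and again $\textup{SubEval}$ fires~$g$. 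The AND-case is symmetric, requiring that \emph{all} inputs of~$g$ are in~$T_n$, which again splits into constant/same-part inputs (handled internally) and cross-part inputs (which lie in the frontier and are fed in). The one point requiring care is that the outer loop has actually \emph{stabilized} by round~$n$: this is where the bound $n = \card{F(G_1 \sqcup G_2)} + 2$ is used. The sequence $(T_i)$ is nondecreasing and, once two consecutive frontier restrictions $T_{i-1} \cap F$ and $T_i \cap F$ coincide, the values $V_{i+1,j}$ depend only on $\CST \cup (T_i \cap F)$, so $T_{i+1} = T_i$ and the computation is frozen; since the frontier has at most $\card{F(G_1\sqcup G_2)}$ gates and grows by at least one each non-stable round, stabilization of the frontier occurs by round $\card{F(G_1\sqcup G_2)}+1$, and one extra round propagates it to all of $T$, whence $T_n = T_{n-1}$ is the fixpoint.

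The main obstacle I expect is the second inclusion, specifically the bookkeeping to verify that at the end $T_n$ is genuinely closed under the two gate-firing rules \emph{globally}, not merely within each part: the subtlety is that a true gate in $G_1$ may be needed to fire a gate in $G_2$ and vice versa, so one must check that the alternation between the two $\textup{SubEval}$ calls, communicating only through the frontier set, does not lose any derivation. The clean way to handle this is exactly the fixpoint-characterization argument above (rather than a direct simulation), together with the stabilization bound; everything else is routine monotonicity reasoning parallel to the proof of Proposition~\ref{prp:moncycluitlinear}.
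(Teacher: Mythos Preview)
Your proposal is correct and essentially matches the paper's proof: both establish $T_n \subseteq S$ by induction on derivation steps, and both establish $S \subseteq T_n$ by first proving stabilization $T_n = T_{n-1}$ from the bounded monotone growth of the frontier, and then performing the same three-way case split on whether a needed input gate lies in $\CST$, in the same part $G_j$, or in $G_{3-j}$ (hence in $F_j$ and therefore fed in via $U_{n,j}$). The only cosmetic difference is that you package the hard direction as verifying that $T_n$ satisfies the closure conditions of Proposition~\ref{prp:altersem} and then invoke minimality of~$S$, whereas the paper phrases the same step as a minimal-counterexample argument along the derivation order of Algorithm~\ref{alg:semantics-monotone}; these are dual presentations of the identical case analysis.
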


\begin{proof}
  Fix the cycluit $C$, valuation $\nu$, and evaluation partition $G_1 \sqcup
  G_2$, let $F \defeq F(G_1 \sqcup G_2)$, and write $n \defeq \card{F} + 2$.

  First, observe that the output on input gates and on constant gates is
  correct, as all such gates that evaluate to~$1$ are added to~$\CST$ and are
  returned, and the ones that do not are never otherwise added to any $T_i$ 
  (remember that the $G_j$ do not
  contain any such gates, so we only add internal gates to the $Z_l$). Hence, it
  suffices to restrict our attention to the internal gates.

  The easy direction is the forward direction: let us show that, for any internal gate $g$ in~$T_n$, this
  gate is also in the output of Algorithm~\ref{alg:semantics-monotone}. We do so
  by induction on the step of the execution at which the gate was added to one
  of the sets~$Z_l$. Initially, there are no internal gates in these sets.
  For the induction, observe that the
  gates added to such sets are either $\lor$-gates which have one input already
  previously
  in such a set (and so, by induction hypothesis, they are also in the output of
  Algorithm~\ref{alg:semantics-monotone}), or $\land$-gates where all inputs are
  already in such a set (and we can use the induction hypothesis again), so
  that indeed, when that same gate is considered in one iteration of
  Algorithm~\ref{alg:semantics-monotone}, it will also be added to the output of
  that algorithm.

  We now prove the backward direction. To do this, first observe that, for any
  $j \in \{1, 2\}$, for any $1 \leq i < n$, we have $U_{i,j} \subseteq
  U_{i+1,j}$. Indeed, we
  have $U_{i,j} \subseteq V_{i,j} \subseteq T_i$, and all gates of
  $U_{i,j}$ are gates of $F_j(G_1 \sqcup
  G_2)$, so $U_{i+1,j}$ contains all gates of~$U_{i,j}$. Now, as $U_{i,j}$
  has size at most $\card{F(G_1 \sqcup G_2)}$, this means that there is $i' \leq
  \card{F(G_1 \sqcup G_2)} + 1$ such that $U_{i',j} = U_{i'+1,j}$. As the
  same argument applies to the other value of $j$, we deduce the existence of
  $1 \leq i' \leq \card{F(G_1 \sqcup G_2)} + 1$ such that this holds for all $j \in
  \{1, 2\}$. This clearly implies, then, that $T_{i'} = T_{i'+1}$, so that, as
  we chose $n$ to be sufficiently large, 
  we know that $T_n = T_{n-1}$.

  We use this claim to conclude the proof of the backward direction.
  Assume by way of contradiction that there is a gate~$g$ which is added to the
  output of Algorithm~\ref{alg:semantics-monotone} and which is not part of the
  output of
  Algorithm~\ref{alg:semantics-monotone-part}, and choose $g$ to be the first
  gate that has this property (formally, as several such gates may be added in
  the same iteration of Algorithm~\ref{alg:semantics-monotone}, choose any gate
  among the first ones). By our initial remark about the correctness on constant
  gates and input gates, $g$ must
  be an internal gate. Let $\Gamma$ be the subset of the inputs of~$g$ which
  were evaluated to~$1$ at an iteration of
  Algorithm~\ref{alg:semantics-monotone} which is strictly before the iteration
  where $g$ was evaluated to~$1$.
  By definition of that algorithm,
  $\Gamma$ is in the output of Algorithm~\ref{alg:semantics-monotone}, and
  hence, by minimality of~$g$, it is in the output of
  Algorithm~\ref{alg:semantics-monotone-part}. Hence, $\Gamma \subseteq T_n$,
  hence $\Gamma \subseteq T_{n-1}$ by the claim that we showed.
  Let us use this to study what happened during the
  last iteration of Algorithm~\ref{alg:semantics-monotone-part} and reach a
  contradiction.

  As $g$ is an internal gate, let $j \in \{1, 2\}$ such that $g \in G_j$, and
  let us show that $g \in V_{n,j}$, by considering the invocation
  $\textup{SubEval}(C, \CST \cup U_{n,j}, G_j)$. 
  It suffices to show that there
  is some $l \in \NN$ such that $\Gamma \subseteq Z_l$. Indeed, if this is the case,
  then, taking a
  minimal such $l$, we can consider the $(l+1)$-th iteration of the loop of
  $\textup{SubEval}$: this iteration takes place, because $Z_l$ has just changed
  by addition of some gate in~$\Gamma$ (or, if
  $l = 0$, it takes place by definition of a \textbf{do... while} loop), and in
  this iteration we know that we have $g \in Z_{l+1}$, because $g$ was set to~$1$ in
  Algorithm~\ref{alg:semantics-monotone} on the basis of~$\Gamma$, and the
  iterations in $\textup{SubEval}$ are defined in the same way on~$G_j$. So let us show
  that $\Gamma \subseteq Z_l$ for some $l \in \NN$. As the $Z_l$ are monotone,
  we can show this gate-per-gate: so let $g' \in \Gamma$ and show that it is
  in~$Z_l$ for some~$l \in \NN$.

  We do a case disjunction. First, if $g' \in \CST$, then 
  we have $g' \in Z_0$, so there is nothing to show.
  Second, assume that $g' \in G_j$. We use the fact that 
  $g' \subseteq T_{n-1}$ and $g' \in G_j$ to deduce that, by definition
  of~$T_{n-1}$,
  we must have had $g' \in \textup{SubEval}(C, \CST \cup U_{n-1,j},
  G_j)$. Now, as $U_{n-1,j} \subseteq U_{n,j}$, the monotonicity of
  $\textup{SubEval}$ ensures that we also have $g' \in \textup{SubEval}(C,
  \CST \cup U_{n,j}, G_j)$. Hence, there is indeed $l\in \NN$ during this
  evaluation
  such that $g' \in Z_l$.
  Third, assume that $g' \in G_{3-j}$.
  As $g'$ is an input gate of~$g$, and $g \in G_j$, this implies that $g' \in 
  F_j(G_1 \sqcup G_2)$. Hence, as
  $\Gamma \subseteq T_{n-1}$, we have $g' \in
  U_{n,j}$, hence, $g' \in Z_0$.
  
  This shows that, indeed, in the last iteration ($i = n$) of
  Algorithm~\ref{alg:semantics-monotone-part}, for $j$ defined such that $g \in
  G_j$, there is some iteration of the corresponding invocation of
  $\textup{SubEval}$ where all gates of $\Gamma$ are added to $Z_i$, so that we must add $g$ to
  $Z_{i+1}$. Hence, $g$ is in the output of
  Algorithm~\ref{alg:semantics-monotone-part}, which contradicts our initial
  assumption and concludes the proof.
\end{proof}

We are now ready to prove Theorem~\ref{thm:rmcyclemono}:

\begin{proof}[Proof of Theorem~\ref{thm:rmcyclemono}]
  We will not define $\alpha$ explicitly, but it will be apparent from the proof that a suitable
  $\alpha$ can be defined.
  Note that, up to adding a fresh gate to serve as the new distinguished output
  gate, we can clearly ensure that the
  distinguished output gate~$g_0$ has exactly one input, and that it is not an input gate to any other gate of~$C$,
  and we can modify~$T$ accordingly.
  We start by computing in time $O(f(k) \cdot \card{C})$, for some suitable
  exponential $f$, a tree
  decomposition $T$ of~$C$ \cite{bodlaender1996linear}.
  We use Lemmas~\ref{lem:getarity2}, \ref{lem:getregrouped}
  and~\ref{lem:getnormal} to ensure that, up to increasing the degree of
  the polynomial in the exponent of~$f$,
  we can assume that the monotone cycluit $C$ is arity-two and that
  the tree decomposition~$T$ is in normal form. We choose to root the
  normal-form tree decomposition~$T$ at the bag $\phi(g_0)$ where $g_0$ is the
  distinguished output gate. The
  resulting~$T$ is such that $\phi$ maps each gate to a leaf bag, except the
  distinguished output gate $g_0$ which is mapped to the root; however, from our initial
  modification of~$C$ and the fact that $T$ is scrubbed, we know that $g_0$ only
  occurs at the root (specifically, the root bag contains $g_0$ and its one
  input gate). Hence, in the sequel, we will never apply~$\phi$ to~$g_0$, and so
  we see $\phi$ as a mapping to the leaf bags of~$T$. To simplify the induction, we
  ensure that the root bag also has two children, by adding to it a child with
  empty domain; this does not break any of the conditions on~$T$. 

  We must now create the circuit $C'$ and its tree decomposition $T'$. In fact,
  we will create a \emph{pre-tree-decomposition} $T'$ of~$C'$, namely, a tree
  with same skeleton as~$T$, where each bag contains at most
  $2^{k^\alpha}$ gates
  of~$C'$, and which satisfies the following conditions: first,
  all wires in~$C'$ are
  either between nodes of a same bag, or from nodes of a bag to nodes of a
  parent bag (i.e., all wires go upward in $T'$); second, each gate only occurs in
  the bag where it is created, except the gates of $C'_{\inp} = C_{\inp}$,
  whose occurrences in~$T'$ will correspond to a subset of the bags of~$T$ where
  they occur. The
  pre-tree-decomposition~$T'$ can clearly be rewritten to a tree decomposition
  in linear time and with a blowup of a factor~$3$ in the width: first, by
  adding the input gates to the bags corresponding to all their actual occurrences
  in~$T$, and second by adding each
  gate of~$T'$ to its parent bag, which ensures that all wires are indeed covered.
  Hence, we will only describe how to construct a pre-tree-decomposition $T'$
  of~$C'$.
  
  We will create $C'$ and $T'$ by a bottom-up
  traversal of~$T$. At each bag $b$ of~$T$, we will denote by $T_b$ the subtree
  of~$T$ rooted at~$b$, we will write $\dom(T_b)$ to refer to all gates
  occurring in~$T_b$, and will write $C_b$ to mean the sub-cycluit of~$C$
  obtained by restricting to the gates of~$\dom(T_b)$ and a subset of the wires
  between them, namely, the wires whose target $g$ 
  is such that $\phi(g) \in T_b$. In other words, when $g \in \dom(T_b)$ but
  $\phi(g)$ is not in~$T_b$, then $g$ occurs in~$C_b$ but it has no inputs, even though it may
  be the case that both inputs also occur in~$\dom(T_b)$. Further, the
  inputs of~$C_b$ are then precisely $C_\inp \cap \dom(T_b)$.
  Throughout the proof, for each bag~$b$, we will partition
  the gates of~$\dom(b)$ in two sets (clearly computable in linear time):
  \begin{itemize}
  \item The \emph{upward gates}, written $\domu(b)$, which are the gates
  $g$ such that $\phi(g)$ is a descendant of~$b$, as well as all the input gates
      $g$ in $\dom(b)$ and all constant gates $g$ in~$\dom(b)$.
  \item The \emph{downward gates}, written $\domd(b)$, which are all
  other gates.
  \end{itemize}
 
  Intuitively, when processing the bag $b$, the \emph{upward gates} are those which can
  be evaluated from other gates of~$T_b$ (and from the valuation of the input
  gates) based on what happens in~$T_b$. (However, this evaluation may depend on
  downward gates of~$\dom(b)$, so that we may be unable to perform it fully.) By
  contrast, the \emph{downward gates} are those for which we have no complete information yet,
  and whose evaluation will take place later

  To rewrite each $C_b$ to a circuit when processing $T$, what is relevant to us is
  the behavior of~$C_b$ for each valuation of its inputs, in terms of how the
  upward gates evaluate \emph{depending on the value of the downward gates}.
  Intuitively, to rewrite $C_b$, we need to describe the ``function'' that gives
  us the value of the upward
  gates (which is propagated to parent bags that we will rewrite later)
  depending on each possible valuation of the downward gates (for which we do not know the value).
  Specifically, for any subset $S^+ \subseteq \domd(b)$,
  letting 
  $S \defeq (S^+, \domd(b) \backslash S^+)$, for any gate $\gamma
  \in \domu(b)$, and for any valuation $\nu$ of the inputs of~$C_b$, we wish
  to know whether $\gamma$ evaluates to~$1$ in~$\nu(\rho_S(C_b))$. Recalling the
  definition of partial assignments (Definition~\ref{def:partialass}), note that
  this is well-defined because $S^+$ contains no input gates. For brevity we write 
  $\Pi_b \defeq 2^{\domd(b)} \times \domu(b)$

  \medskip

  We are now ready to describe the inductive rewriting process. For each bag $b$
  of~$T$, we will create a bag $b'$ of~$T'$ containing gates of~$C'$. Remember
  that the circuit $C'$ must have the same inputs as $C$; in particular we will
  ensure that, for any $b \in T$, the cycluit $C_b$ and the circuit $C'_{b'}$
  have the same inputs; further, by definition of the pre-tree-decomposition
  $T'$, all wires will go upward in $T'$, meaning that, for any $b' \in T'$, the
  circuit $C'_{b'}$ can always be fully evaluated from a valuation of its
  inputs. From this observation, we will inductively guarantee that 
  each bag $b'$ in $T'$, corresponding to bag $b$ in~$T$, will contain one gate $g_b^{S^+, \gamma}$ for each $(S^+, \gamma)
  \in \Pi_b$, with the following semantics: for any valuation $\nu$
  of the inputs of~$C_b$, the gate $g^{S^+, \gamma}$ evaluates to~$1$ in
  $\nu(C'_{b'})$ iff $\gamma$ evaluates to~$1$ in $\nu(\rho_S(C_b))$ where $S \defeq
  (S^+, \domd(b) \backslash S^+)$.

  We first describe the base case of the construction. Remember that, as $T$ is scrubbed, each
  leaf bag of~$T$ is either in the image of~$\phi$ or has empty domain.
  For a leaf bag $b$ of~$T$ with empty domain,
  there is nothing to do. Otherwise, let $g$ be the preimage of~$b$ by~$\phi$,
  i.e., the unique $g$ (because $\phi$ is injective, and it is surjective on leaf
  bags with non-empty domain) such that $b = \phi(g)$. We then
  have $\domu(b) = \{g\} \cup (C_\inp \cap \dom(b))$ and $\domd(b) = \ins(g) \setminus C_\inp$; observe that, as $C$ is
  arity-two, it has no self-loops, so that $\ins(g)$ and $\{g\}$ are disjoint.
  If $g$ is an input gate, then $\ins(g) = \emptyset$ and the
  only gate to create in~$b'$ is $g_b^{\emptyset,g}$ which can simply be taken to
  be $g$ itself and satisfy the required conditions. If $g$ is not an input
  gate, the gates to define are $g_b^{S^+,g}$ for all $S^+ \subseteq \ins(g)
  \backslash C_\inp$ (which has
  size at most 2), and we can simply realize this
  following the truth table of the
  operation $\mu(g)$ of~$g$, i.e., by setting $g_b^{S^+,g}$ to be a sub-circuit with
  input $\dom(b) \cap C_\inp$
  whose value is the value of $g$ in $\rho_S(C_b)$ with $S = (S^+,\ins(g) \setminus
  (C_\inp \cup S^+))$.

  We now describe the inductive case of the construction. We consider an internal bag $b$ which has
  two children $b_1$ and~$b_2$ in~$T$; we write $b'$, $b_1'$, $b_2'$ the
  corresponding bags in $T'$, with $b_1'$ and $b_2'$ being already inductively
  defined.
  In particular, we know that $b_1'$ and $b_2'$ contain gates (which for brevity we
  will write $g_1^{S^+,\gamma}$ and $g_2^{S^+,\gamma}$, rather than
  $g_{b'_1}^{S^+,\gamma}$ and $g_{b'_2}^{S^+,\gamma}$) with the prescribed
  properties; and we will omit for brevity the $b$ subscript when defining gates
  of~$b'$ in~$C'$ standing for the current bag $b$ of~$T$. For $i$ from $0$
  to~$(k+1)+2$ ,
  we will now define gates $g^{S^+,\gamma,i}$ in~$b'$ as follows, for all
  $(S^+,\gamma) \in \Pi_b$:
  \begin{itemize}
    \item Case of constant gates: if $\gamma$ is a $1$-gate
      (resp., a $0$-gate),
      then the gate $g^{S^+,\gamma,i}$ is a $1$-gate (resp., a
      $0$-gate) for all $0
      \leq i \leq k+3$.
    \item Case of input gates: if $\gamma$ is an input gate, then the
      gate $g^{S^+,\gamma,i}$ is the same input gate for all $0 \leq i \leq
      k+3$.
  \item Base case:
	  the gate $g^{S^+,\gamma,0}$ is simply a $0$-gate.
  \item Induction:
    letting $j$ be such that $\phi(\gamma)$ is a descendant of~$b_j$ (so that
      $\gamma \in
    \domu(b_j)$ and $\gamma \in \domd(b_{3-j})$ if $\gamma \in \dom(b_{3-j})$),
      the gate $g^{S^+,\gamma,i}$ for $i \geq 1$ is
      an OR-gate, for all $\Gamma' \subseteq \domd(b_j) \cap \domu(b_{3-j})$
      (which is necessarily $\subseteq \dom(b)$),
      of the AND of $g^{S^+ , \gamma', i-1}$ for all $\gamma' \in \Gamma'$
      and of $g_j^{(S^+ \cap \domd(b_j)) \cup \Gamma',\gamma}$.
  \end{itemize}
  We simply let $g^{S^+,\gamma}$ to be $g^{S^+,\gamma,k+3}$. It is clear that we
  create only exponentially many gates at each bag of~$T'$, so that we can obey
  the prescribed treewidth bound (in addition to the transformations of~$C$ that
  we have performed at the beginning, namely, transformation to arity-2, and
  up to the use of a normal-form tree decomposition which implies that the bag
  size may be multiplied by~$3$).

  \medskip

  We now show the correctness of this construction at~$b$, i.e., we check
  inductively whether the gates that we create satisfy their prescribed
  semantics.
  Fixing $(S^+, \gamma) \in
  \Pi_b$ and a valuation $\nu$ of the inputs of~$C_b$, we consider the monotone
  cycluit $\rho_S(C_b)$ and its evaluation under~$\nu$, where $S \defeq (S^+, \domd(b) \backslash S^+)$
  (note that this does not include any input gates because $\domd(b)$ does not).
  What we must show is that $g^{S^+,\gamma}$
  evaluates to~$1$ under $\nu$ in~$C'$ iff
  $\gamma$ evaluates to~$1$ in $\nu(\rho_S(C_b))$.
  To show this claim, we must connect our construction of the~$g^{S^+,\gamma,i}$
  to the evaluation of the cycluit $\nu(\rho_S(C_b))$; we will do so using the
  notion of an evaluation partition, by defining one from the structure of~$T_b$.
  
  Specifically, consider the evaluation partition $G_1 \sqcup G_2$
  of the internal gates of~$\rho_S(C_b)$ defined by putting in~$G_1$ the internal gates
  whose image by $\phi$ is a descendant of~$b_1$, and in $G_2$ those whose
  image is a descendant of~$b_2$. We show the following claims:
  
  \begin{itemize}
    \item \emph{This partition indeed covers the
      internal gates of~$\rho_S(C_b)$.} Indeed, all internal gates of~$C_b$ are
  either gates of $\dom(T_1) \backslash \dom(b)$ (in which case their
  $\phi$-image must be in~$T_1$), or gates of $\dom(T_2) \backslash \dom(b)$
  (same reasoning), or gates of $\dom(b)$ in which case they must be in
  $\domu(b)$ (as the gates of $\domd(b)$ are constant in $\rho_S(C_b)$) and then
  their $\phi$-image is a descendant of~$b$ by definition. Now, the $\phi$-image cannot be $b$
  itself as $\phi$ maps only to leaves, so it must be either a descendant
      of~$b_1$ or of~$b_2$.
\item \emph{For each $j \in \{1, 2\}$, we have $F_j(G_1 \sqcup G_2) = \domu(b_{3-j})
  \cap \domd(b_j)$.} Indeed, for the forward inclusion, any gate $g$ of $F_j(G_1
    \sqcup G_2)$ is a gate of~$G_{3-j}$, that is, $\phi(g)$ is a descendant
    of~$b_{3-j}$. Further, there is a
    wire $(g, g')$ with $g' \in G_j$, that is, $\phi(g')$ is a descendant
      of~$b_j$, and $g$ occurs in $\phi(g')$. Now, as $g \in \dom(T_j) \cap
      \dom(T_{3-j})$, as $T$ is a tree decomposition, we have $g \in \dom(b_j)$
      and $g \in \dom(b_{3-j})$, and more specifically we have $g \in \domu(b_{3-j})$ and $g \in
      \domd(b_j)$ by what precedes.  Conversely, any gate $g$ of
      $\domu(b_{3-j})$ is such that $\phi(g)$ is a descendant of~$b_{3-j}$, so
      that $g \in G_{3-j}$. Further, as $g \in \dom(b_j)$, as the tree
      decomposition $T$ is scrubbed, there is a descendant bag $b$ of~$b_j$
      where $g$ appears in a $\phi$-image, and as $\phi(g)$ is not a descendant
      of~$b_j$ it must be the case that $g$ appears in the $\phi$-image of
      some~$g'$ (so that $g' \in G_j$) such that $g \wire g'$ is a wire. Hence, $g
      \in G_{3-j}$ and $g$ is an input of a gate in~$G_j$, so indeed $g \in
      F_j(G_1 \sqcup G_2)$.
    \item \emph{We have $\card{F(G_1 \sqcup G_2)} \leq k+1$.} This follows from the
      previous point. Indeed, we have $\card{F(G_1 \sqcup G_2)} = \card{F_1(G_1 \sqcup
      G_2)} + \card{F_2(G_1 \sqcup G_2)} \leq \card{\domu(b_1)} +
      \card{\domd(b_1)} \leq \card{\dom(b)} \leq k+1$ because $T$ is a tree decomposition
      of width~$\leq k$.
  \end{itemize}
  
  Now, consider the evaluation of
  $\rho_S(C_b)$ under $\nu$ using the evaluation partition $G_1 \sqcup G_2$,
  implemented with Algorithm~\ref{alg:semantics-monotone-part}: write
  $T_0(S), T_1(S), \ldots, T_{k+3}(S)$ the sequence of sets defined in the
  execution of the algorithm. (As this sequence may be shorter, if necessary we
  pad it until $T_{k+3}(S)$ by copying the last value.)
  Nested in our induction over~$T$, we now show by induction on $0 \leq i \leq k+1$
  that, in $b'$ as we have
  defined it,
  for any $S^+ \subseteq \domd(b)$
  and $\gamma \in \domu(b)$,
  for any valuation $\nu$ of the
  inputs of~$C_b$, the gate $g^{S^+,\gamma,i}$ evaluates to~$1$ under $\nu$ iff
  $\gamma \in T_i(S) \cap \domu(b)$. For the base case of
  $i=0$, indeed $g^{S^+,\gamma,0}$ evaluates to true iff
  $\gamma \in
  T_0(S) \cap \domu(b)$ by definition of~$T_0(S)$. For the induction step, observe
  that, by definition, $g^{S^+,\gamma,i}$ evaluates to true 
  iff, letting $j$ be such that $\phi(\gamma)$ is a descendant of~$b_j$,
  there is some $\Gamma' \subseteq \domd(b_j) \cap \domu(b_{3-j})$
  (equivalently, by the second bullet point above, $\Gamma' \subseteq F_j(G_1 \sqcup G_2)$) such
  that $g^{S^+,\gamma',i-1}$
  evaluates to~$1$ for all $\gamma' \in \Gamma'$ and
  $g_j^{(S^+ \cap \domd(b_j)) \cup \Gamma', \gamma}$ evaluates to~$1$.

  For the forward direction of this induction claim (for the nested induction),
  assume the existence of such a~$\Gamma'$, and show that $\gamma \in
  T_{i}(S)$ (this suffices as we have $\gamma \in \domu(b)$ by definition).
  By the induction
  hypothesis for the inner induction, as $g^{S^+,\gamma',i-1}$
  evaluates to~$1$ for all $\gamma' \in \Gamma'$, we know that $\Gamma'
  \subseteq T_{i-1}(S)$. By the induction hypothesis for the outer induction, as 
  $g_j^{(S^+ \cap \domd(b_j)) \cup \Gamma', \gamma}$ evaluates to~$1$, we know that 
  $\gamma$ evaluates to~$1$ in
  $\nu_j(\rho_{S_j}(C_j))$, where $S_j \defeq ((S^+ \cap \domd(b_j)) \cup \Gamma', \domd(b_j)
  \backslash (S^+ \cup \Gamma'))$ and $\nu_j$ is the restriction of~$\nu$ to the inputs
  of~$C_j$. Now,
  using Lemma~\ref{lem:algocorrect},
  consider the execution of
  Algorithm~\ref{alg:semantics-monotone-part} (on $\rho_{S}(C_b)$ under $\nu$), and consider the invocation of
  $\textup{SubEval}$ for this $i$ and $j$. We know that
  $\CST$ contains all input gates of~$\rho_{S_j}(C_j)$ set to~$1$ and all
  gates of~$S^+$ (these are constant gates of~$\rho_{S}(C_b)$), so they are
  in~$Z_0$; and as
  $\Gamma' \subseteq F_j(G_1 \sqcup G_2)$ and $\Gamma' \subseteq T_{i-1}(S)$, we
  know that $\Gamma' \subseteq U_{i,j}$, so $\Gamma' \subseteq Z_0$. Now,
  as we know that $\gamma$ evaluates to~$1$ under $\nu_j$ in $\rho_{S_j}(C_j)$,
  consider 
  the evaluation of Algorithm~\ref{alg:semantics-monotone} to witness this,
  and observe that, by what precedes,
  the invocation of $\textup{SubEval}$ that we are considering is such that
  $Z_0$ contains all gates which are initially true in
  $\nu_j(\rho_{S_j}(C_j))$, namely, the input gates that evaluate to~$1$, the
  constant $1$-gates, and the gates of the first component of~$S_j$,
  specifically,
  those of~$\Gamma'$ and a subset of those of~$S^+$. Hence, each time
  Algorithm~\ref{alg:semantics-monotone} sets a gate to~$1$, $\textup{SubEval}$
  also does. Hence, $\gamma$ is also returned by $\textup{SubEval}$, so indeed
  $\gamma \in T_i(S)$, which concludes the forward direction.

  For the backward direction of the nested induction claim, assume that $\gamma
  \in T_i(S) \cap \domu(b)$, and show the existence of a suitable $\Gamma'$. 
  Let $j \in \{1, 2\}$ be such that $\gamma \in \domu(b_j)$.
  We
  choose $\Gamma'$ to be $T_{i-1}(S) \cap (\domd(b_j) \cap \domu(b_{3-j}))$, or,
  in other words, $U_{i,j}$ in Algorithm~\ref{alg:semantics-monotone-part}.
  Now, for any $\gamma' \in \Gamma'$,
  observe that $\gamma' \in \domu(b)$ (as $\gamma' \in \domu(b_{3-j})$, we know
  that $\phi(\gamma')$ is a descendant of $b_{3-j}$, hence of~$b$), so that
  $\gamma' \in T_{i-1}(S) \cap \domu(b)$, and by induction hypothesis for the
  inner induction we know that $g^{S^+, \gamma', i-1}$ evaluates to~$1$. Now we
  will use the induction hypothesis for the outer induction to argue that
  $g_j^{(S^+ \cap \domd(b_j)) \cup \Gamma', \gamma}$ evaluates to~$1$, by
  showing that $\gamma$ evaluates to true under~$\nu_j$ in $\rho_{S_j}(C_j)$,
  with the same notations for~$\nu_j$ and $S_j$ as for the forward direction above.
  To do this, as in the
  forward direction, we consider the evaluation of $\textup{SubEval}$ at
  iteration~$i$ and for the current $j$. In this evaluation, the gates that
  are initially true are the constant $1$-gates (in particular those of~$S^+$),
  the input gates that are true under~$\nu$, and the gates of~$T_{i-1}(S) \cap
  F_j(G_1 \sqcup G_2)$, that is, of~$\Gamma'$. In $\rho_{S_j}(C_j)$, all gates
  of~$\Gamma'$ are also true, and the other gates are also true whenever they
  appear in $C_j$: in other words, all gates that are in~$Z_0$ at the beginning
  of this evaluation of $\textup{SubEval}$ are also initially true in
  $\nu_j(\rho_{S_j}(C_j))$ except if they are gates that do not occur
  in~$\rho_{S_j}(C_j)$ (and have no inputs or outputs in $\rho_S(C_b)$).
  This property ensures that all gates that evaluate to true in
  $\textup{SubEval}$, which are gates of $G_j$, can also evaluate to true in
  Algorithm~\ref{alg:semantics-monotone} thanks to
  Lemma~\ref{lem:algocorrect}.
  Hence, indeed $\gamma$ evaluates to true under~$\nu_j$ in $\rho_{S_j}(C_j)$, so
  we conclude that indeed $g_j^{(S^+ \cap \domd(b_j)) \cup \Gamma', \gamma}$
  evaluates to~$1$. This finishes the proof of the inner induction, and hence
  concludes the correctness proof of the induction step of the outer induction.

  \medskip

  To conclude the proof of the Theorem, it suffices to observe that the root bag
  $b'_\r$ of the rewriting $C'$ contains a gate $g \defeq g_\r^{\emptyset, \{g_0\}}$
  where $g_0$ is the output gate of~$C$, whose semantics (by the outer
  induction claim) is guaranteed to be the following: for any valuation $\nu$ of
  $C$ (hence, of $C'$), the gate $g$ evaluates to true under $\nu$ in~$C'$ iff
  $g_0$ evaluates to true in $\nu(C)$. Hence, we set $g$ as the output gate
  of~$C'$, and we have shown that $C'$ is indeed equivalent to~$C$. This concludes
  the proof.
\end{proof}

\subsubsection{Isotropic rewritings of monotone cycluits}

To prove Theorem~\ref{thm:rmcycle} for non-monotone cycluits, we will of course
perform an induction on the strata. The challenge when doing so is that, when
rewriting a stratum, we may need to access the value of \emph{all} gates of the lower
strata, not just a single output gate. So, in the induction step on strata, it
will not be sufficient to rewrite a stratum to a circuit with a single output
gate: we need to rewrite in a way where all gates of the circuit can be reused
by lower strata. We define this formally:

\begin{definition}
  \label{def:emulates}
  A circuit (or cycluit) $C'$ \emph{emulates} a cycluit $C$ iff $C$ and $C'$
  have the same input gates, all gates of $C$ are gates of~$C'$, and for any
  valuation $\nu$ of the inputs, for any gate~$g$ of $C$, $g$ evaluates to true
  in $C$ under $\nu$ iff it does in $C'$ under $\nu$.
\end{definition}

Similarly to how we modified the circuit at the beginning of the proof of
Theorem~\ref{thm:rmcyclemono}, it will be convenient to rewrite the circuit in
linear time to ensure that each gate of interest has exactly one input and is
not itself the input to another gate. We call such gates the \emph{potential
outputs}:

\begin{definition}
  \label{def:potentialout}
  A \emph{potential output} of a circuit (or cycluit) $C$ is a gate $g$ of~$C$
  which has exactly one input gate and is not itself the input to any gate.
\end{definition}

We can very simply rewrite our input cycluit so that all gates of interest are
potential outputs:

\begin{lemma}
  \label{lem:introgates}
  For any cycluit $C$ of treewidth $k$, we can compute in linear time a cycluit
  $C'$ of treewidth $\leq 2k$ that emulates $C$, such that, for each gate $g$ of
  $C \setminus C_\inp$, the gate $g$ in $C'$ is a potential output.
\end{lemma}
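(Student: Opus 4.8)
The plan is to turn every non-input gate $g$ of $C$ into a potential output by splitting it into a fresh \emph{worker} gate $\hat g$ that performs the computation originally assigned to~$g$, and the gate~$g$ itself, which in~$C'$ becomes an $\lor$-gate whose only input is~$\hat g$. Concretely, I keep the input gates of~$C$ unchanged; for each non-input gate~$g$ I add a fresh gate~$\hat g$ with $\mu(\hat g)\defeq\mu(g)$; I reroute every wire $g'\wire h$ of~$C$ (here $h$ is necessarily non-input) to $g'\wire\hat h$ when $g'$ is an input gate and to $\hat{g'}\wire\hat h$ otherwise; and, finally, for each non-input gate~$g$ I add the single wire $\hat g\wire g$. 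The output gate of~$C'$ is kept equal to that of~$C$. This is clearly computable in linear time, and when $C$ is stratified I extend its stratification by $\strat(\hat g)\defeq\strat(g)$: every new wire is either from a stratum-$0$ input gate into some~$\hat h$, or of the form $\hat{g'}\wire\hat h$ with $\strat(\hat{g'})=\strat(g')\leq\strat(g)=\strat(\hat h)$ (strictly when $\mu(h)=\neg$, exactly as in~$C$), or of the form $\hat g\wire g$ with $\mu(g)=\lor$ in~$C'$, so the stratification conditions are preserved.

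I then verify the three required properties. \emph{Potential outputs:} in~$C'$ every non-input gate~$g$ of~$C$ has exactly one incoming wire, namely $\hat g\wire g$, and is the source of no wire, since every wire of~$C'$ emanates from an input gate or from a worker gate. \emph{Emulation:} $C$ and~$C'$ have the same input gates and every gate of~$C$ survives in~$C'$; fixing a valuation~$\nu$, I show that each worker gate~$\hat g$ evaluates in~$C'$ exactly as~$g$ does in~$C$ — because $\hat g$ has type~$\mu(g)$ and its inputs are precisely the workers (or input gates) attached to the inputs of~$g$ — whence each~$g$, an $\lor$-gate fed only by~$\hat g$, also evaluates as in~$C$; this has to be argued through the least-fixpoint characterisation of Proposition~\ref{prp:altersem} (and, in the stratified case, through Definition~\ref{def:strateval} stratum by stratum), rather than by structural induction, since $C$ may be cyclic. \emph{Treewidth:} from a width-$k$ tree decomposition~$T$ of~$C$ I build~$T'$ with the same skeleton by adding~$\hat g$ to every bag that contains a non-input gate~$g$; each bag at most doubles, the new wire $\hat g\wire g$ is covered at any bag containing~$g$, each rerouted wire $\hat{g'}\wire\hat g$ is covered wherever $g'\wire g$ was, and the occurrences of~$\hat g$ form the same connected subtree as those of~$g$, so $T'$ is a valid tree decomposition of~$C'$ of width at most $2k+1$ (and of width at most $2k$ with a marginally more careful placement of the worker gates).

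The construction is elementary, so there is no real obstacle; the only points deserving attention are the emulation argument, which must be phrased via the fixpoint semantics (and interleaved with the induction on strata in the stratified case) because the cycluit is cyclic, and the routine verification that the doubled decomposition~$T'$ still satisfies the wire-covering and connectivity axioms of a tree decomposition.
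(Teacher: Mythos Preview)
Your proposal is correct and is essentially the same construction as the paper's: rename each non-input gate to a fresh worker gate and reinstate the original name as a single-input pass-through gate (you use $\lor$, the paper uses $\land$, which makes no semantic difference for a gate with exactly one input). You supply considerably more detail than the paper's two-line proof---in particular the explicit stratification extension, the fixpoint-based emulation argument, and the bag-doubling tree decomposition---and you are right to flag that naive doubling yields width $2k+1$ rather than $2k$; the paper simply asserts $2k$ without elaborating.
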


\begin{proof}
  Simply build $C'$ by renaming all gates of $C$ that are not input gates to a
  different name. Now, for each gate $g$ of $C$, add gate $g$ to $C'$ by setting
  it to be an AND-gate whose one input is the gate $g'$ which is the renaming of $g$ in~$C'$. This
  clearly ensures that $g$ evaluates to TRUE iff its renaming does, so that $C'$
  indeed emulates $C$. Further, the process is clearly in linear time and the
  treewidth bound is respected because we are simply duplicating each gate.
\end{proof}

This allows us to assume that we are working with a cycluit where all
``relevant'' gates are not the input gates to other gates.
Thanks to this, 
the process that we really need for an induction on strata is the following variant of
Theorem~\ref{thm:rmcyclemono}:

\begin{theorem}
  \label{thm:rmcyclemono2}
  There is an $\alpha\in\mathbb{N^*}$ s.t.,
  for any monotone cycluit $C$ of treewidth $k$, we can compute in time
  $O(2^{k^\alpha} \cdot \card{C})$ a circuit $C'$ that emulates~$C$ and has
  treewidth $\leq 2^{k^\alpha}$.
\end{theorem}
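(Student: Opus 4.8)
The plan is to revisit the bottom-up construction of Theorem~\ref{thm:rmcyclemono} and complement it with a top-down pass, so that the output circuit records not merely the value of a single distinguished output gate but the value of \emph{every} gate of~$C$. First I would apply Lemma~\ref{lem:introgates} to reduce to the case where every non-input gate of~$C$ is a potential output (at the cost of doubling the treewidth), and then, exactly as in the proof of Theorem~\ref{thm:rmcyclemono}, compute a normal-form tree decomposition~$T$ of~$C$ of width polynomial in~$k$ and run the bottom-up construction of that theorem. This already produces, for each bag~$b$ of~$T$ and each upward gate $\gamma\in\domu(b)$, a family of gates $g_b^{S^+,\gamma}$ (for $S^+\subseteq\domd(b)$) in the circuit~$C'$, with the guarantee that $g_b^{S^+,\gamma}$ evaluates to~$1$ under a valuation~$\nu$ of $C_\inp$ iff $\gamma$ evaluates to~$1$ in $\nu(\rho_S(C_b))$, where $S=(S^+,\domd(b)\setminus S^+)$ and $C_b$ is the sub-cycluit of~$C$ induced by the gates occurring in the subtree~$T_b$.

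Next I would add a top-down pass that, proceeding from the root towards the leaves, creates for each bag~$b$ and each gate $\delta\in\dom(b)$ a gate $\hat g_b^{\delta}$ of~$C'$ whose intended semantics is the \emph{actual} value of~$\delta$ in the global evaluation $\nu(C)$. At the root bag~$b_\r$ we have $\domd(b_\r)=\emptyset$ and $C_{b_\r}=C$, so the bottom-up gates $g_{b_\r}^{\emptyset,\delta}$ already have this semantics and serve as the $\hat g_{b_\r}^{\delta}$. For a non-root bag~$b$ with parent~$p$, I would exploit two facts about tree decompositions: every gate of $\dom(b)\cap\dom(p)$ already has a gate $\hat g_p^{\cdot}$ defined at~$p$, which we reuse; and by connectivity $\domd(b)\subseteq\dom(p)$, so $\hat g_b^{\epsilon}$ is available for every $\epsilon\in\domd(b)$. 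For the remaining gates $\delta\in\dom(b)\setminus\dom(p)$, which necessarily lie in $\domu(b)$, I would define
\[
  \hat g_b^{\delta}\;\defeq\;\bigvee_{S^+\subseteq\domd(b)}\Bigl(g_b^{S^+,\delta}\;\wedge\;\bigwedge_{\epsilon\in S^+}\hat g_b^{\epsilon}\;\wedge\;\bigwedge_{\epsilon\in\domd(b)\setminus S^+}\neg\,\hat g_b^{\epsilon}\Bigr),
\]
i.e.\ select the bottom-up copy corresponding to the true values of the downward gates. Since $\card{\domd(b)}$ is at most the width of~$T$, this adds only exponentially many gates per bag, wired only to gates of~$b$ and of~$p$, so the treewidth bound survives; and since these new gates depend only on bottom-up gates (which depend only on~$T_b$) and on $\hat g$-gates strictly closer to the root, $C'$ remains acyclic. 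Finally, for each gate $\gamma$ of~$C$ I would arrange that $\hat g_{\phi(\gamma)}^{\gamma}$ literally carries the name $\gamma$ in~$C'$ (renaming away the other top-down occurrences of~$\gamma$); having first reduced to potential outputs makes this bookkeeping harmless, and $C'$ then emulates~$C$ in the sense of Definition~\ref{def:emulates}.

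The crux of correctness — and the step I expect to be the main obstacle — is a \emph{separation lemma}: for any bag~$b$ and valuation~$\nu$, writing $\nu'$ for the global evaluation of~$C$ under~$\nu$ and pinning every downward gate $\epsilon\in\domd(b)$ to $\nu'(\epsilon)$ (turning $C_b$ into $\rho_{S^*}(C_b)$ for the corresponding $S^*$), the least fixed point of $\rho_{S^*}(C_b)$ under~$\nu$ agrees with~$\nu'$ on every gate of~$C_b$, hence on every upward gate. The ``$\geq$'' direction is immediate: the restriction of~$\nu'$ to the gates of~$C_b$ is a fixed point of $\rho_{S^*}(C_b)$, because the only wires of~$C$ with an endpoint in $\dom(T_b)$ that are absent from $C_b$ are those into downward gates, which are now constants, so every non-pinned gate of~$C_b$ has all its real inputs available inside~$C_b$. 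The ``$\leq$'' direction is proven by induction on the rank at which a gate is first set to~$1$ by the fixed-point iteration (Algorithm~\ref{alg:semantics-monotone}) run on all of~$C$, using again that an upward gate has all of its inputs inside~$C_b$ and that downward gates are pinned. Granting this lemma, the semantics of the $\hat g_b^{\delta}$ follows by a direct induction down~$T$ (for $\delta\in\domu(b)\setminus\dom(p)$ exactly one disjunct in the display above is satisfied, namely $S^+=\{\epsilon\in\domd(b):\nu'(\epsilon)=1\}$), and the theorem follows.
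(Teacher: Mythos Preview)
Your proposal is correct and takes a genuinely different route from the paper's proof.

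The paper exploits the \emph{isotropy} of the normal-form tree decomposition~$T$ more directly: it observes that applying the construction of Theorem~\ref{thm:rmcyclemono} once for each potential output~$g$ (rooting~$T$ at~$\phi(g)$) would produce a circuit that emulates~$C$, and that the only obstacle to linear size is that a bag~$b$ receives one group of $g_b^{S^+,\gamma}$-gates per choice of root. But since every internal bag has degree exactly~$3$, there are only three possible orientations of~$T$ relative to~$b$, and groups coming from the same orientation are literally identical; so after merging, each bag carries only three groups of gates. The implementation is one bottom-up pass (the original construction) and one top-down pass that fills in the two missing orientations at each bag. Each output gate~$\gamma$ of~$C$ is then read off at~$\phi(\gamma)$ from the orientation rooted there, where the downward set is empty and no ``selection'' step is needed.

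Your approach instead keeps a single rooted orientation, proves the separation lemma (which is correct as you sketch it; the key point that every gate of $\dom(T_b)\setminus\dom(b)$ has its $\phi$-image inside~$T_b$ is exactly what makes both directions go through), and uses the top-down $\hat g$-gates to supply the true values of the downward gates so that the correct $g_b^{S^+,\gamma}$ can be selected. This is a cleaner dynamic-programming argument and does not rely on the degree-$3$ symmetry of normal-form decompositions. One remark: the negation in your display is unnecessary. By monotonicity of the cycluit, $g_b^{S^+,\delta}$ is monotone in~$S^+$, so
\[
  \hat g_b^{\delta} \;=\; \bigvee_{S^+\subseteq\domd(b)}\Bigl(g_b^{S^+,\delta}\wedge\bigwedge_{\epsilon\in S^+}\hat g_b^{\epsilon}\Bigr)
\]
already has the right semantics (if the disjunct for some $S^+\subseteq S^{*+}$ fires, monotonicity gives $g_b^{S^{*+},\delta}=1$, and the separation lemma concludes). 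Dropping the $\neg$-conjuncts keeps the output circuit monotone, matching the paper's construction; with the negations your circuit is still acyclic and correct, just not monotone.
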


\begin{proof}.
First note that, thanks to Lemma~\ref{lem:introgates},
we can process the input monotone cycluit~$C$ in linear time, keeping the
parametrization in the treewidth, such that all gates of the original input
cycluit are now potential output gates. Hence, it now suffices to construct a
cycluit $C'$ that \emph{quasi-emulates} the cycluit $C$ that we are working
with, i.e., only emulates the potential output gates of~$C$: indeed,
$C'$ will then emulate all gates of the original cycluit.
As in Theorem~\ref{thm:rmcyclemono}, we will assume that $C$ is arity-two, and that
it has a normal-form tree decomposition $T$. 
Further, as in the proof of Theorem~\ref{thm:rmcyclemono}, we will
construct a pre-tree decomposition $T'$ of the result $C'$, i.e., all wires are either within one bag or between a gate of
a bag and a gate of a neighboring bag: $T'$ will have same skeleton as~$T$ when
forgetting about edge orientation.

We first notice that we can obtain the desired circuit $C'$ in a naive way,
which runs in quadratic time and does not satisfy the treewidth bound. To do so,
we can simply apply Theorem~\ref{thm:rmcyclemono}, for each potential output
$g$ of~$C$, on the tree decomposition $T^b$ obtained by rooting $T$ at $b
\defeq \phi(g)$. We could then build
the rewriting of~$C$ as a circuit $C'$ that quasi-emulates $C$, by taking the union of
the rewritings thus produced: these rewritings are disjoint except for the input
  gates. Further, we can construct a pre-tree
decomposition $T'$ of~$C'$ by taking all the pre-tree
decompositions produced in the applications of Theorem~\ref{thm:rmcyclemono},
  forgetting about their orientation (so they have the same undirected skeleton
  as $T^b$, namely, that of~$T$), taking their bag-per-bag union, and rerooting
  the result at an arbitrary bag to obtain a rooted tree decomposition.

Clearly the result $T'$ of this process
is still a pre-tree decomposition. First, all facts are still contained
within bags or hold between one bag and its neighboring bag (as it did in the
original pre-tree decompositions). Second, the occurrences of gates still form a
connected subtree. Indeed, for input gates, this connected subtree is the same as
  in~$T$, as is readily observed from the
construction of Theorem~\ref{thm:rmcyclemono}. For the other gates, the
  connected subtree is the same as
the original pre-tree decomposition from which they come.
Further,
   Theorem~\ref{thm:rmcyclemono} ensures that
  the resulting circuit $C'$ quasi-emulates the cycluit $C$: note that $C'$ is
  still a cycluit, because it is a union of acyclic circuits which is disjoint
except for the input gates. The two problems are that $C'$ (and hence
the overall computation process) is of quadratic size, and that the treewidth
  of~$C'$ is no longer constant (each bag contains a linear number of groups of
  gates, each group having been produced by the application of Theorem~\ref{thm:rmcyclemono}).

To explain how these problems can be addressed, we recall the specifics of the
construction used to prove Theorem~\ref{thm:rmcyclemono}. For each bag $b$ of
the rooted tree decomposition $T''$ to which it is applied, the construction creates a set of
gates $g_b^{S^+, \gamma}$ for some subsets $S^+$ and for some $\gamma$ (along
with some intermediate gates, e.g., $g_b^{S^+,\gamma,i}$), whose
inputs are gates $g_{b_1}^{\ldots}$ and $g_{b_2}^{\ldots}$ for the two children
of~$b$ in~$T''$, and these gates describe the behavior of the subtree of $T''$
rooted at~$b$. In the process that we explained, for each internal bag
$b$ of the original normal-form tree decomposition~$T$, we have considered $b$
across all the rootings $T^{b'}$ of~$T$ for all $b'$ being the $\phi$-image of a
potential output gate. For each
such choice of root bag $b'$, we have applied Theorem~\ref{thm:rmcyclemono}, so that in the bag $b''$
obtained as a final result of the process, we have one group of gates $g_b^{S^+,
\gamma}$ (with their intermediate gates) for each choice of~$b'$. However, the
crucial observation is that there are only
three possible groups, up to equivalence. Indeed, for each internal
non-root bag $b$, we can classify the possible root bags $b'$ depending on their
orientation relative to~$b$: as $b$ has degree~$3$, there are
only three possibilities. Specifically, for any two choices of root $b'_1$ and $b'_2$ that
have the same orientation relative to~$b$, the gates $g_b^{S^+, \gamma}$ were
created for the same directed subtree of~$T$ rooted at~$b$,
so they are equivalent: the same is in fact true of all the intermediate gates
at~$b$, and indeed all gates for all descendants of~$b$ according to this
orientation.
In other words, the bottom-up construction of
Theorem~\ref{thm:rmcyclemono} at node~$b$ does the exact same thing for $b'_1$
and for $b'_2$.

From this observation, we can rewrite the result of the naive process to have
linear size and to satisfy the treewidth bound, by taking
each bag $b$ of $T$, considering the $O(C)$ groups of gates created in the
analogue $b''$ of $b$ in the pre-tree decomposition $T''$ obtained as a result
of the naive process, and merge these groups depending on the
orientation of the root used for each group relative to~$b$, so that only three
groups of gates remain.
This ensures that the
result has linear size, and that the width bound is respected (i.e., the width
is only multiplied by a factor of 3 relative to the width in the output of
Theorem~\ref{thm:rmcyclemono}). By what precedes, the gates that we merge are
equivalent, and, as they occur in the same bags of~$T''$, the result is still a
pre-tree decomposition. Further, the result of this process is still an acyclic
circuit: indeed, if we perform the merges bottom-up following some rooting
of~$T''$, it is clear that whenever we merge two gates then they have the same
input gates, so no cycles can be introduced. Hence, the result of this process
is a linear-sized circuit obeying the treewidth bound, and it still
quasi-emulates $C$, so it satisfies all of the required conditions.

The last thing to justify is that, instead of performing the quadratic-time
construction and merging the equivalent gates, we can directly construct the
final output with the merged gates in linear time.
We now explain how to do so. For each bag~$b$ 
of~$T$ and each neighbor $b'$ of~$b$ in~$T$,
we call $S_b(b')$, the gates to create at bag~$b$ for the rootings of~$T$ at a
$\phi$-image whose unique simple path to~$b$ goes via~$b'$.
We then start by performing the entire bottom-up construction of
Theorem~\ref{thm:rmcyclemono} for some rooting $T^{b_\r}$ of~$T$, where $b_\r$
is the $\phi$-image of some arbitrary gate: for each bag~$b$, this creates the
gates $S_b(b')$ for $b'$ the parent of~$b$ in~$T^{b_\r}$.
Second, we perform an analogous process
in a top-down pass on the tree $T$ to reach the leaves, i.e., the
$\phi$-images of the other output gates, so as to cover all the other possible
rootings. We explain what the top-down process does on 
a bag $b$ with parent
$b'$ and children $b_1$ and $b_2$ in~$T^{b_\r}$. Recalling that $S_b(b')$ has
already been created, the top-down process must create $S_b(b_1)$ and $S_b(b_2)$.
Recall that the gates of $S_b(b_1)$ will stand for 
the subtree rooted at~$b$
with children $b'$ and $b_2$: their inputs are 
gates of $S_{b_2}(b)$ and
gates of~$S_{b'}(b)$. The first kind of gates have already been created by the
bottom-up process at~$b_2$, and the second kind of gates have been created
previously in the top-down process, because $b'$ was visited before~$b$.
So the top-down process simply creates the gates of~$S_b(b_1)$
as in the inductive step of the construction of
Theorem~\ref{thm:rmcyclemono} from these inputs. Likewise, the inputs to the gates of~$S_b(b_2)$
are gates of~$S_{b_1}(b)$ and gates of~$S_{b'}(b)$, which have already been
created.
We can see that this process creates the same gates in each bag as
those created by the naive process, except that we do not create the multiple
equivalent sets of gates, so its output satisfies the required conditions.
Further, it is easy to see that this process is in linear time, at it consists
of one bottom-up application of Theorem~\ref{thm:rmcyclemono}, and one top-down traversal
of the tree while performing two times at each node the inductive step of the
construction of Theorem~\ref{thm:rmcyclemono}. This concludes the proof.
\end{proof}

\subsubsection{Rewriting stratified cycluits}

We now use Theorem~\ref{thm:rmcyclemono2} to conclude the proof of
Theorem~\ref{thm:rmcycle}:

\begin{proof}[Proof of Theorem~\ref{thm:rmcycle}]
  Given a stratified cycluit $C$ of treewidth $k$,
assuming (up to adding an additional gate) that its output gate is
a potential output (recall Definition~\ref{def:potentialout}),
we transform $C$ in linear time to an arity-two cycluit by the same construction as in
the proof of Lemma~\ref{lem:getarity2}. Indeed, this construction still applies to non-monotone cycluits,
because it does not rely on the semantics of the gates except the associativity
of OR- and AND-gates (NOT-gates have arity~1 so they do not need to be
rewritten); and the construction clearly does not affect
the fact that the cycluit is stratified.
We now compute
a stratification function for $C$ in linear time by
Proposition~\ref{prp:stratifun}. We write
$C_1, \ldots, C_m$ the strata of~$C$, and write $C_{\leq i} \defeq \bigcup_{p
\leq i} C_p$ for brevity.
Analogously to the proof of Lemma~\ref{lem:introgates}, we now 
rewrite $C$ in linear time by re-wiring it in the following way:
whenever a wire connects a gate $g$ of stratum $i$ to a gate $g'$ of a
stratum $j > i$, we introduce an intermediate AND-gate on this wire assigned to
the stratum~$i$, so that $g$ is a potential output gate of $C_{\leq i}$.
It is clear that this only doubles the size of bags in the tree decomposition.
This process is intuitively designed to ensure that, whenever we have rewritten a prefix
of the strata to a circuit $C'$ that emulates
it (using
Theorem~\ref{thm:rmcyclemono2}), the rewriting of the next stratum can proceed
using only the gates of~$C'$.

We now compute in linear time a normal-form tree decomposition $T$ of~$C$ as in
Lemma~\ref{lem:getnormal}: clearly this still applies to non-monotone cycluits,
as it does not depend on the semantics of gates. Clearly $T$ is still a tree
decomposition of any union $\bigcup_{p \leq i}$ of lower strata, and, if we
restrict the bags to the gates of this union, it is still a normal-form tree
decomposition.

We now describe the final rewriting process by induction on the strata of~$C$.
The invariant is that, once we have processed the strata $C_1, \ldots, C_i$, we
have obtained a circuit $C_i'$ which 
emulates $C_{\leq i}$,
  with a tree decomposition $T'_{\leq i}$ with same skeleton as~$T$ where,
for each bag~$b'$, letting $b$ be the corresponding bag in~$b$, $\card{\dom(b')}$ is within
some factor (depending only on~$k$, and in $O(2^{k^\alpha})$ for some
constant~$\alpha$) of $\card{\dom(b) \cap G_i}$, where $G_i$
are the gates of $C_{\leq i}$, and where each potential output gate $g$ of
$C_{\leq i}$ occurs at least in the same bags as it does in~$T$. To achieve the
overall linear running time, it suffices to ensure that each stratum $C_i$ is
processed in time linear in $C_i$ and in the union of connected subtrees of $T$ (which we can
precompute for each $i$) that contains gates of stratum~$C_i$. The reason why this
ensures linear running time overall is that each bag of~$T$ is visited a linear
number of times for each stratum that includes a gate in it, the number of which
is a constant depending only on~$k$. Note that, because of the fact that we only
visit for each stratum the union of connected subtrees of~$T$ containing gates of this
  stratum, the tree decomposition~$T'_{\leq i}$ that we compute at each stratum is
technically not fully materialized (we do not materialize some parts where all
bags have empty domain), but by a slight abuse we will nevertheless see it as
having same skeleton as~$T$.

The base case is that of the empty circuit, and of a tree decomposition with
same skeleton as~$T$ and empty bags (which of course we do not need to
materialize).

For the induction step, we consider stratum $C_i$, and let $T_i$ be the
normal-form tree
decomposition of~$C_i$ with same skeleton as~$T$ obtained by restricting the
bags of~$T$ to the gates of~$C_i$. By definition of a tree decomposition, the
bags with non-empty domains form connected subtrees of~$T_i$, which we can
process independently (as in this case the corresponding gates have no wires
connecting them in~$C_i$), and to which we can restrict our attention to ensure
the linear time requirement for stratum~$i$. 
To represent
in~$C_i$ the missing gates from lower strata (remembering that they are
guaranteed to be potential
output gates in $C_{\leq i-1}$), we add these gates as \emph{input gates},
putting them in~$C_i$, and putting each of them in~$T_i$ only at the leaf bags
which are $\phi$-images of a gate that uses one of them as input.

Now, we can apply
Theorem~\ref{thm:rmcyclemono2} to each disjoint part, the result of which is a
circuit $C'_i$ that 
emulates 
$C_i$, with a (not fully materialized) tree decomposition $T_i'$ having same
skeleton as~$T$; and clearly the bags that were empty in~$T_i$ remain empty
in~$T_i'$, and the size of the others depends on the size of the corresponding
bags in~$T_i$ by a constant factor depending only on~$k$ and satisfying the
prescribed bound, as can be observed
from the construction of Theorems~\ref{thm:rmcyclemono}
and~\ref{thm:rmcyclemono2}. (Note that, in this accounting, we ignore the input gates that
we added, as they will be accounted for by
their occurrence in the lower strata.) We now ensure that all potential output gates
of~$C_i$, which for now only appear in the bag of~$T'_i$ corresponding to their
$\phi$-image in~$T_i$ (i.e., in~$T$), occur in all bags where they appear
in~$T_i$; this does not affect the width requirement, as each gate thus added to
a bag of~$T_i'$ is added for a gate (namely, the same gate) that was in the
corresponding bag in~$T_i$, and for which no gate had been added in~$T_i$ yet.

What remains to be done is to connect this rewriting $C'_i$ of $C_i$ to the
rewriting $C'_{\leq i-1}$ obtained by induction for the lower strata, and produce
the circuit $C'_{\leq i}$ and tree decomposition $T'_{\leq i}$ satisfying the
conditions. We simply do this by unioning $C'_i$ with $C'_{\leq i-1}$, and
substituting to the input gates added to~$C'_i$ the actual gates of~$C'_{\leq
i-1}$: as $C'_{\leq i-1}$ emulates $C_{\leq i}$, it emulates these gates, as they
must be potential output gates of~$C_{\leq i}$ thanks to the 
rewriting that we perform analogously to Lemma~\ref{lem:introgates}. It is
clear by composition that $C'_{\leq i}$ has the correct semantics, 
and of course this kind of substitution does not affect the acyclicity
of~$C'_{\leq i}$. The tree decomposition $T'_{\leq i}$ is constructed by unioning
$T'_i$ with $T'_{\leq i}$ as computed in the induction (this only needs to
traverse the bags where $T'_i$ is materialized), and substituting the input
gates; the fact that the result is still a tree decomposition of the result is
clear for the occurrences of the gates of stratum $i$ (which are the
same as in~$T'_i$), and of the occurrences of the gates of lower strata, because each time a
fresh input gate is substituted to a gate $g'$ of a lower stratum, the
construction ensures that this is within a bag which is the $\phi$-image of a
gate $g$ of stratum $i$ with $g'$ as input, and the occurrence of $g'$ in the
corresponding bag in~$T$ ensures that $g'$ must occur in that bag in~$T'_{\leq
i - 1}$. Likewise, $T'_{\leq i}$ covers all wires, because this needs only be checked
for the wires across strata, but it is the case for the same reason as we just
explained. This concludes the proof of the induction case.

We now check that the claimed treewidth bound of $O(2^{k^\alpha})$ for
some~$\alpha$ is respected, by accounting for the transformations that we have
performed:

\begin{itemize}
  \item We have added one additional gate to the input cycluit at the very
    beginning.
  \item We have transformed it to arity-2 as in Lemma~\ref{lem:getarity2}, so
    squaring the bag size.
  \item We have applied a process analogous to that of Lemma~\ref{lem:introgates}, which doubles
    the bag size.
  \item We have chosen the tree decomposition to be regrouped as in
    Lemma~\ref{lem:getregrouped},
    which multiplies the bag size by 3,
    then normal-form as in
    Lemma~\ref{lem:getregrouped}, which does not change the bag size.
  \item Now, we have performed the induction on strata, multiplying the bag size
    by at most~$2^{k^{\alpha'}}$ for some~$\alpha'$, using
    Theorem~\ref{thm:rmcyclemono2}.
\end{itemize}

Hence, we can clearly respect the prescribed bound of $O(2^{k^\alpha})$ for
some $\alpha$.

The result of the induction is indeed a circuit $C'$ which 
emulates
$C_{\leq m} = C$ (in particular, it emulates the output gate
of~$C$, as it was a potential output gate), with a tree decomposition~$T'$ where
each bag size is within the requested bounds, and computed in overall linear
time.

This concludes the proof of Theorem~\ref{thm:rmcycle}.
\end{proof}

\end{toappendix}

\begin{proofsketch}
  The proof is technical and proceeds stratum by stratum, so it focuses on
  monotone cycluits. For such circuits, we rewrite tree decompositions to a normal form
  that ensures that gate definitions occur only in leaf bags, and with all
  relevant input gates in scope. We then perform our rewriting bottom-up on the
  tree decomposition, by creating gates at each bag to code the behavior of the
  sub-circuit in terms of the values of the yet-undefined internal gates propagated from the
  parent bag: this requires a fixpoint computation, coded by iterating in the
  circuits for the two subtrees rooted at the children of the current bag. We
  show it is equivalent to standard cycluit evaluation, and that the number of
  required iterations is bounded by the bag
  size. Further, as higher strata may depend on any gate (not just on one
  single output), we use a second top-down pass to perform the previous process
  for each rooting of the tree decomposition in overall linear time.
\end{proofsketch}

\subparagraph*{Probabilistic query evaluation.}
We can then apply the above result to the probabilistic query evaluation (PQE)
problem, which we now define:

\begin{definition}
A \emph{TID instance} is a relational instance $I$ and a function $\pi$
mapping each fact $F \in I$ to a rational probability $\pi(F)$. A TID instance
$(I, \pi)$ defines a probability distribution $\Pr$ on $I' \subseteq I$, where
$\Pr(I') \defeq \prod_{F \in I'} \pi(F) \times \prod_{F \in I \backslash I'} (1 -
  \pi(F))$.

The \emph{probabilistic query evaluation} (PQE) problem asks, given a Boolean
query $Q$ and a TID instance $(I, \pi)$, the probability that the query $Q$ is
  satisfied in the distribution $\Pr$ of~$(I, \pi)$. Formally, we want to compute
$\sum_{I' \subseteq I\text{~s.t.~}I' \models Q} \Pr(I')$. The \emph{data
complexity} of PQE is its complexity when $Q$ is fixed and the TID instance $(I,
\pi)$ is given as input. Its \emph{combined complexity} is its complexity when
both the query and TID instance are given as input.
\end{definition}

Earlier work \cite{dalvi2007management} showed that PQE has \#P-hard data complexity
even for some CQs of a simple form, but 
\cite{amarilli2015provenance,amarilli2015provenance_extended} shows that PQE is tractable in data complexity
for any Boolean query in monadic second-order (MSO) if the input instances are
required to be treelike.

We now explain how to use Theorem~\ref{thm:rmcycle} for PQE. Let
$P$ be an ICG-Datalog program of body size $\kp$. Given a TID instance
$(I,\pi)$ of
treewidth~$\ki$, we compute 
a provenance cycluit for $P$ on $I$ of treewidth $O(|P|)$
in FPT-linear time in $\card{I}\cdot\card{P}$
by Theorem~\ref{thm:mainprov}.
By Theorem~\ref{thm:rmcycle}, we 
compute in $O(2^{\card{P}^\alpha}\card{I}\card{P})$ an equivalent circuit of treewidth
$O(2^{\card{P}^\alpha})$. Now, by 
Theorem D.2 of~\cite{amarilli2015provenance_extended}, we can solve
PQE for $P$ and $(I,\pi)$ 
in $O(2^{2^{\card{P}^\alpha}}\card{I}\card{P}+\card{\pi})$ up to PTIME arithmetic
costs. Linear-time data complexity was known
from~\cite{amarilli2015provenance}, but 2EXPTIME combined complexity
is novel, as \cite{amarilli2015provenance} only gave
non-elementary combined complexity bounds.

\begin{toappendix}
\subsection{Hardness of PQE}
\end{toappendix}
\subparagraph*{Acyclic queries on tree TIDs.}
A natural question is then to understand whether better 
bounds are possible. 
In particular, is PQE tractable in \emph{combined complexity} on
treelike instances?
We show that, unfortunately, treewidth bounds are
\emph{not} sufficient to ensure this.
The proof draws some inspiration
from earlier work \cite{kimelfeld2008query} on the topic
of tree-pattern query evaluation in \emph{probabilistic XML}
\cite{kimelfeld2013probabilistic}.

\begin{propositionrep}\label{prp:tpqtpq}
There is a fixed arity-two signature on which PQE is \#P-hard even when imposing
  that the
input instances have treewidth~$1$ and the input queries are
$\alpha$-acyclic CQs.
\end{propositionrep}

\begin{proof}
  We reduce from the \#P-hard problem
  \#MONOTONE-2-SAT~\cite{Valiant:1979}, that asks, given a
conjunction $\Phi$ of disjunctions of positive literals over variables $x_1,
\ldots, x_n$, the number of
assignments that satisfy $\Phi$.

  We consider a signature $\sigma$ formed of a unary relation $R$, and of three binary
  relations~$V$, $K$, and~$C$.

Given $\Phi$, we encode it in PTIME to a TID instance $(I, \pi)$ which comprises the
following facts:

\begin{itemize}
  \item One fact $R(r)$.
\item For each variable $v_i$ of $\Phi$, the fact $V(r, v_i)$ with
  probability $1/2$ in $\pi$, which intuitively codes the valuation of
    variable~$v_i$.
\item For each clause $C_j$ of $\Phi$ that contains variable $v_i$, the following gadget:
  \begin{align*}\text{a length-$j$ path } K(v_i, c_{i,j,1}), K(c_{i,j,1}, c_{i,j,2}), \ldots, K(c_{i,j,j-1},
    c_{i,j,j})\phantom{,}\\
  \text{ and the fact } C(c_{i,j,j}, c_{i,j,j}'),\end{align*} each of these facts having
probability~$1$ in~$\pi$. Intuitively, to remain on a fixed signature, we write the clause
number in unary as the path length.
\end{itemize}

It is immediate that $I$ is a tree, so it has the prescribed treewidth.
We now construct in PTIME the query $Q$ comprising the following atoms:

\begin{itemize}
  \item One atom $R(x)$.
\item For each clause $C_j$, one atom $V(x, y_j)$, one length-$j$ $K$-path from
$y_j$ to a variable $y_{j,j}$ and the fact $C(y_{j,j}, y_{j,j}')$.
\end{itemize}

Again, it is immediate that $Q$ is acyclic.

We now claim that the probability of $Q$ on $(I, \pi)$ is exactly the number of
satisfying assignments of $\Phi$ divided by~$2^n$, so that the computation of
one reduces in PTIME to the computation of the other, concluding the proof. To
see why, we define a bijection between the valuations of $v_1, \ldots, v_n$ to
the possible worlds $J$ of $(I, \pi)$ in the expected way: retain fact
  $V(r, v_i)$
iff $v_i$ is assigned to true in the valuation. Now, observe that $J$ satisfies
$Q$ iff the corresponding valuation makes $\Phi$ true. Indeed, from the
satisfaction of $Q$ by $J$, for any~$j$, observing the element to which $y_{j,j}$ is matched,
and observing its ancestor $v_{i_j}$ in~$I$, we deduce that $v_{i_j}$ must be
  made true by the valuation, and, by construction of~$I$, the variable
  $v_{i_j}$ appears in $C_j$, hence $C_j$ is true in~$\Phi$. Hence, $\Phi$ is true because every clause is
true. Conversely, assuming that the valuation satisfies $\Phi$, we construct a
match of $Q$ on $I$ by mapping each branch of $Q$ via the witnessing variable
for that clause in the valuation of~$\Phi$. This concludes the proof.
\end{proof}

\subparagraph*{Path queries on tree TIDs.} We must thus restrict the
query language further to achieve combined tractability.
One natural restriction is to go from \emph{$\alpha$-acyclic queries} to \emph{path
queries}, i.e., Boolean CQs of the form $R_1(x_1, x_2),
  \ldots, \allowbreak R_n(x_{n-1}, x_n)$, where each $R_i$ is a binary relation of the
  signature.
For instance, $R(x, y), S(y, z), T(z, w)$ is a path query, but $R(x, y), S(z, y)$
is not 
(we do not allow inverse relations). We can strengthen the
previous result to show:

\begin{propositionrep}\label{prp:path}
There is a fixed arity-two signature on which PQE is \#P-hard even when imposing
that the input instances have treewidth~$1$ and the input queries
are path queries.
\end{propositionrep}

\begin{proof}
We adapt the previous proof. We 
  extend the signature to include one binary relation~$S_-$ for each binary relation $S$. We modify
the definition of the instance $I$ to add, whenever we created a fact $S(a, b)$,
the fact $S_-(b, a)$, each of these inverse facts being given 
probability~$1$ in~$\pi$. It is clear that $I$ still has treewidth~$1$, as we
  can just use the same tree decomposition as before. We now define the path query $Q'$ as follows, following a
traversal of the query $Q$ of the previous proof:

\begin{itemize}
\item $V(x_1, y_1), K(y_1, y_{1,1}),
C(y_{1,1}, y_{1,1}'), C_-(y_{1,1}', y_{1,1}''), K_-(y_{1,1}'',
y_1''), V_-(y_1'', x_2)$;
\item $V(x_2, y_2), \allowbreak
K(y_2, y_{2,1}),
K(y_{2,1}, y_{2,2}), \allowbreak
C(y_{2,2}, y_{2,2}'), C_-(y_{2,2}', y_{2,2}''), \allowbreak
K_-(y_{2,2}'', y_{2,1}''),$\\
$K_-(y_{2,1}'', y_2''),
V_-(y_2'', x_3)$;
\item etc.
\end{itemize}

It is straightforward to observe that, when inverse facts are added like we did,
$Q$ has a match $M$ on $I$ iff $Q'$ has this same match: constructing the match
of $Q'$ from that of $Q$ is trivial, and any match of $Q'$ is a match of $Q$
thanks to the fact that, $I$ being a tree, each element of $\dom(I)$ has at most
one ingoing inverse fact, so that each inverse fact must in fact be mapped to
the same element as the corresponding fact that was traversed earlier. This
concludes.
\end{proof}

\subparagraph*{Tractable cases.}
In which cases, then, could PQE be tractable
in combined
complexity? One example is in~\cite{cohen2009running}: 
PQE is tractable in combined complexity 
over \emph{probabilistic
XML}, when queries are written as \emph{deterministic} tree automata.
In this setting, 
that the edges of the XML document are
\emph{directed} (preventing, e.g., the inverse construction used in the proof of
Proposition~\ref{prp:path}). Further, as the result works on \emph{unranked}
trees, it is important that children of a node are \emph{ordered} as well (see
\cite{amarilli2014possibility} for examples 
where this matters).

We
leave open the question of whether there are some practical classes of
instances and of queries
for which such a deterministic tree automaton can be obtained
from the query in polynomial time to test the query for a given
treewidth. As we have shown, path queries and instances of
treewidth~1, even though very restricted, do not suffice to ensure
this.
Note that, in terms of data complexity, we have shown
in~\cite{amarilli2016tractable} that treelike instances are essentially the only
instances for which first-order tractability is achievable.

\section{Conclusion}
We introduced ICG-Datalog, a new stratified Datalog fragment whose
evaluation has FPT-linear complexity when parameterized by instance
treewidth and program body size. The complexity result is obtained via 
compilation to alternating two-way automata, and via the computation of a
provenance representation
in the form of stratified cycluits, a generalisation of
provenance circuits that we hope to be of independent interest.

We believe that  ICG-Datalog can be further improved by removing the guardedness
requirement on negated atoms, which would make it more expressive and step back from the world of
guarded negation logics. In particular, we conjecture that our FPT-linear
tractability result generalizes to \emph{frontier-guarded Datalog}, and its
extensions with clique-guards and stratified (but unguarded) negation, taking
the rule body size and instance treewidth as the parameters. We further hope that
our results could be used to derive PTIME combined complexity results
on instances of arbitrary treewidth, e.g., XP membership when parametrizing by
program size; this could in particular recapture the tractability of
bounded-treewidth queries.
Last, we intend to extend our
cycluit framework to support more expressive 
provenance semirings than Boolean provenance (e.g., formal power series~\cite{green2007provenance}).

We leave open the question of practical implementation of the methods we
developed, but we have good hopes that this approach can give efficient
results in practice, in part from our experience with a preliminary
provenance prototype~\cite{monet2016probabilistic}. Optimization is possible, for
instance by not representing the full automata but building them on the
fly when needed in query evaluation. Another promising direction
supported by our experience, to deal with real-world datasets that are
not treelike, is to use partial tree decompositions~\cite{maniu2014probtree}.

\subparagraph*{Acknowledgements.}
This work was partly funded by the Télécom ParisTech Research Chair on Big Data
and Market Insights.

\bibliographystyle{abbrv}
\bibliography{main}

\begin{toappendix}
  \section{Results from~\cite{unpublishedbenediktmonadic}}
  \label{app:mdl}
  This appendix contains some results
from~\cite{unpublishedbenediktmonadic}, an extended version
of~\cite{benedikt2012monadic} that deals with the containment of monadic
Datalog programs, but which is currently unpublished. Relevant
parts of this work are reproduced here for completeness.

\subsection{Treelike canonical set of instances of a Datalog program}

\begin{definition}
  A \emph{canonical set of instances} for a Datalog program $P$ is a
  (generally infinite) family~$\calI$ of instances that all satisfy $P$ and such that,
  for any CQ $Q$, if there is an instance $I$ satisfying $P \wedge
  \neg Q$, then there is an instance in $\calI$ with the same property.
\end{definition}

\newcommand{\ext}{\mathrm{ext}}

\begin{definition}
  The \emph{var-size} of a Datalog program~$P$ is the maximal number of variables used in a rule of $P$. 
\end{definition}

The following result from~\cite{unpublishedbenediktmonadic} shows that
any Datalog program has a bounded-treewidth set of canonical instances,
an encoding of which can be described by a bNTA constructible in
exponential time.
It heavily relies on notions introduced in~\cite{chaudhuri1997equivalence}.
\begin{lemma}
  \label{lem:caninst}
  For all $c \in \NN^*$,
  letting $\ki \defeq 2c-1$,
  for any signature $\sigma$, given a Datalog program $P$ of
  var-size bounded by $c$,
  we can compute in exponential time in $P$ a
  \mbox{$\Gamma_\sigma^{\ki}$-bNTA}~$A_P$
  such
  that the set of the instances obtained by decoding the trees in the language
  of~$A_P$ is canonical for~$P$.
\end{lemma}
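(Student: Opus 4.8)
The plan is to use the classical characterization of Datalog containment in conjunctive queries via \emph{proof trees} (expansion trees), following~\cite{chaudhuri1997equivalence}. Recall that a proof tree of $P$ is a finite tree whose nodes are labeled by rules of $P$, the root's rule having head $\mathrm{Goal}()$, where for each node the intensional atoms of its body are in bijective correspondence with the head atoms of its children; renaming variables to be fresh at each node except for the head-to-body interface shared with the parent, and then freezing all variables to distinct constants, the extensional atoms occurring in the proof tree form its \emph{canonical instance}. First I would verify that the family $\calI$ of canonical instances of proof trees of $P$ is a canonical set for $P$ in the sense of the definition: each canonical instance satisfies $P$ because the proof tree is itself a derivation of $\mathrm{Goal}()$; conversely, if some instance $I$ satisfies $P \wedge \neg Q$ for a CQ $Q$, then a derivation of $\mathrm{Goal}()$ in $I$ yields a proof tree whose canonical instance $J$ maps homomorphically into $I$, so $J \models P$ by construction and $J \not\models Q$, since a homomorphism $J \to I$ together with $I \not\models Q$ forces $J \not\models Q$; as $J$ is finite, it is the required witness in $\calI$.

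Second, I would observe that every canonical instance $J \in \calI$ has bounded treewidth: the proof tree itself, with each node's bag set to the (at most $c$) variables of its rule instance, is a tree decomposition of $J$, the connectedness condition holding because a variable introduced at a node only propagates downward along head-to-body interfaces. This yields width $\leq c-1$; converting such a decomposition into a $\Gamma_\sigma^{\ki}$-tree encoding of the kind a bNTA can read costs at most a factor of two in the width, which is why $\ki = 2c-1$ suffices (cf.\ Appendix~\ref{apx:tree-encodings} and~\cite{flum2002query,amarilli2016leveraging}).

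Third, and this is the main work, I would construct the bNTA $A_P$ on $\Gamma_\sigma^{\ki}$-trees whose language is exactly the set of tree encodings of canonical instances of proof trees of $P$. The automaton should overlay a proof tree onto the tree encoding: its state at a node records the rule currently being applied, a partial valuation of that rule's variables into the encoding domain $\calD_{\ki}$, and the subset of the rule's body atoms still to be accounted for; on moving up the tree it splits the remaining intensional subgoals among the children (introducing auxiliary ``spreading'' nodes when there are more than two, using the fact that the number of subsets of body atoms is finite) and checks that the extensional atoms, as well as the head atom inherited from the parent, are consistent with the facts present in the relevant bags of the encoding and with the ``same element across adjacent bags'' decoding semantics of tree encodings. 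Since the var-size is $\leq c$, the domain $\calD_{\ki}$ has size depending only on $c$, and each rule has at most $2^{|P|}$ subsets of body atoms, the state set and the transition table have size exponential in $|P|$, so $A_P$ is computable in exponential time. Correctness of $A_P$ — that an encoding is accepted iff its decoding is the canonical instance of some proof tree of $P$ — would then follow by a routine induction on the tree matching runs of $A_P$ with proof trees, and composing with the first step concludes.

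The main obstacle I expect lies in the third step: reconciling the (per-rule, $|P|$-bounded but not constant) fan-out of proof-tree nodes with the binary shape of tree encodings, and synchronizing the automaton's treatment of variables shared on head-to-body interfaces with the decoding semantics of $\Gamma_\sigma^{\ki}$-trees, where reusing a domain element $a_i$ in a disconnected region denotes a \emph{distinct} element of the decoded instance. Getting these alignment conditions exactly right — rather than the bookkeeping needed for the exponential-time bound — is where the care is required.
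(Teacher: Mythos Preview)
Your high-level strategy coincides with the paper's: both rely on the expansion/proof-tree machinery of~\cite{chaudhuri1997equivalence} to obtain a canonical family of bounded-treewidth instances, then capture that family by a bNTA. The difference lies almost entirely in the third step. The paper does \emph{not} build $A_P$ from scratch: it invokes Proposition~5.9 of~\cite{chaudhuri1997equivalence}, which already supplies (in exponential time) a tree automaton recognizing the proof trees of~$P$ over the fixed variable set $\{x_1,\dots,x_{2c}\}$; it then projects to the extensional part and applies the encoding transformation of Proposition~B.1 of~\cite{amarilli2015provenance_extended} to land in~$\Gamma_\sigma^{\ki}$. This buys brevity and avoids the synchronization issues you flag as the main obstacle. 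Your direct construction is viable, but you must be explicit that the automaton enforces not merely that the encoded facts are \emph{consistent} with some proof tree, but that the decoded instance contains \emph{exactly} the extensional atoms of that proof tree and nothing else; otherwise the decoded language is not the canonical family but a superset of it, and the ``all members satisfy $P$'' half of canonicality may fail.

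Two smaller points. First, your justification for $\ki = 2c-1$ is off: the factor two has nothing to do with converting a decomposition into a tree encoding (that step does not increase width). In the paper the bound $2c-1$ arises because Chaudhuri--Vardi's proof trees reuse a fixed pool of $2c$ variable names, so the resulting encoding lives naturally in $\Gamma_\sigma^{2c-1}$. Your own argument actually gives width $\leq c-1$, which is stronger and a fortiori yields a $\Gamma_\sigma^{2c-1}$-encoding; just state it that way. Second, the paper notes a technical normalisation (every rule has exactly $0$ or $2$ intensional body atoms) needed for the binary-tree transformation; your direct construction will need an analogous gadget to reconcile arbitrary rule fan-out with binary trees, and this is worth making explicit rather than leaving inside the ``spreading nodes'' remark.
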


\begin{proof}
This proof is based on the notion of \emph{unfolding expansion tree} of a
Datalog program~$P$ (Definition 2.4 of~\cite{chaudhuri1997equivalence}).

  An \emph{expansion tree} of a Datalog program is a ranked tree
  (non-binary in general) where each node is
labeled by an \emph{instantiated} rule~$r$ of $P$ (i.e., an homomorphic
  image of the rule by some \emph{one-to-one} mapping from the variables of the rule to
  some set of variables), and has a child for each
  intensional predicate
  atom~$A$ appearing in~$r$, whose label is a rule~$r'$ with $A$ for
  head, with the variables of $A$ mapped to the same variables as in~$r$.
  We further require that the root of the tree has a rule with
  the goal predicate in the head.
  
  An \emph{unfolding expansion tree} is an expansion tree with the
  additional condition that, for each node $n$ of the
  tree, each variable occurring in the
body of the rule labeling~$n$ but not in its head does not occur in the
  label of any ancestor of~$n$. In other words, the same variable is
  never re-used across rules unless the variable is propagated through
  the head.

 From an unfolding expansion tree $t$, it is possible to get an instance over the
  extensional signature satisfying $P$ as follows. Let $\nu$ be a one-to-one
  mapping from the variables in the rules labeling the nodes of~$t$ to
  constants.
  We denote by $\nu(t)$ the tree obtained by replacing each variable
  appearing in the labels of $t$ by its image by $\nu$.
  We note that $t$ and $\nu(t)$ are identical up to renaming the values
  in variables. Let $\Pi_{\ext}(\nu(t))$ be the tree obtained by keeping
  only the facts over the extensional signature in each label of a node
  of~$\nu(t)$. We denote by $I(\Pi_{\ext}(\nu(t)))$ the instance composed
  of the facts appearing in $\Pi_{\ext}(\nu(t))$. It is clear that
  $I(\Pi_{\ext}(\nu(t)))$ satisfies $P$: indeed, a simple bottom-up
  induction on $\nu(t)$ shows that it only contains
  intensional facts that are derivable by $P$ from $I(\Pi_{\ext}(\nu(t)))$.
  
  First note that, for $P$ a Datalog program, the
  set of instances of the form~$I(\Pi_{\ext}(\nu(t)))$, where $t$ is un
  unfolding expansion tree of $P$ and $\nu$ is some fixed one-to-one mapping to
  constants, is a canonical set of instances
  for~$P$. This comes from Proposition 2.6
  of~\cite{chaudhuri1997equivalence}: the Datalog program is equivalent
  to the infinite disjunction of the $\Pi_{\ext}(t)$, where $t$ is an
  unfolding expansion tree of $P$, each $\Pi_{\ext}(t)$ 
  being seen as a conjunctive query. So for any CQ $Q$, if $I\models
  P\land\lnot
  Q$, then in particular $I\models \Pi_{\ext}(t)$ for some
  unfolding expansion tree~$t$. But then $I(\Pi_{\ext}(\nu(t)))\not\models
  Q$, since $I(\Pi_{\ext}(\nu(t)))$ is a canonical model of $\Pi_\ext(t)$
  and $\Pi_\ext(t)$ has a model that does not satisfy $Q$.
  
  In Section 5.1 of~\cite{chaudhuri1997equivalence}, the notion of
  \emph{proof tree} is introduced. A proof tree for a Datalog program~$P$
  is defined as an expansion tree over a finite fixed set of variables
  $\{x_1,x_2,\dots,x_{2s}\}$ where $s$ is the var-size of $P$.

  In the proof of Proposition 5.6 of~\cite{chaudhuri1997equivalence}, it
  is shown that for every unfolding expansion tree~$t$, there is an
  associated proof tree $t'$ obtained by a mapping $\mu$ such that $t'
  =\mu(t)$. Now, observe that $\Pi_{\ext}(t')$ can be seen as a form of 
  a tree encoding
  (though not of the same form as our tree
  encodings: 
  the domain is a subset of the $x_i$'s, facts are
  extensional facts of the instance) of $I(\Pi_{\ext}(\nu(t)))$ for any one-to-one
  mapping~$\nu$ from variables to constants. This witnesses
  $I(\Pi_{\ext}(\nu(t))$ is of treewidth $\leq 2s-1$.

  For any Datalog program of var-size $c$,
  Proposition 5.9 of~\cite{chaudhuri1997equivalence} shows that there
  exists a tree automaton of size exponential in $P$ recognizing the
  proof trees of $P$ (and the proof of this result makes it clear that
  this automaton is computable in exponential time). From such a tree
  automaton (running on the ranked expansion trees), by projection, we can construct in polynomial time in the
  size of the automaton a bNTA $A$ recognizing the $\Pi_{\ext}(t')$ for
  any proof tree $t'$. We now apply the technique of the proof of
  Proposition B.1 in~\cite{amarilli2015provenance_extended} to transform $A$,
  in time polynomial in~$A$, into a $\Gamma_\sigma^{\ki}$-bNTA $A_P$ that
  recognizes $(\sigma,{\ki})$-tree encodings of the $I(\Pi_{\ext}(\nu(t)))$
  for an arbitrary unfolding expansion tree $t$ and one-to-one mapping $\nu$
  to constants. A technical detail imposed by the technique
  of~\cite{amarilli2015provenance_extended} is that we must ensure that each rule
  of the Datalog program has either 0 or 2 intensional facts; we can
  always do that by an initial transformation of the Datalog program,
  introducing intermediate intensional predicates for rules with more
  than 2 intensional facts or adding a trivial intensional predicate atom for
  rules with 1 intensional fact. Note that the var-size of the Datalog
  program is not affected by this transformation.
  
  We have thus shown that it is possible to compute $A_P$ in exponential
  time, and that the set of instances in the decoded language of $A_P$ is
  canonical for $P$.
\end{proof}

\subsection{2EXPTIME-Hardness of Treelike CQ Validity over Valid Trees}

\renewcommand{\bold}[1]{{\ifmmode\bm{#1}\else{\boldmath\bfseries#1}\fi}}

\newcommand{\Succ}{\mathsf{Succ}}
\newcommand{\ct}{\mathsf{CT}}
\newcommand{\nct}{\mathsf{NextCT}}
\newcommand{\ns}{\mathsf{TTCh}}
\newcommand{\content}{r}
\newcommand{\samecell}{\mathsf{SameCell}}
\newcommand{\asch}{\mathsf{Sch}}
\newcommand{\ssimple}{\mathcal{S}^{\binary}_{\first,\second}}
\renewcommand{\root}{\mathsf{Root}}
\newcommand{\leaf}{\mathsf{Leaf}}

We consider
``universality'' or ``validity'' problems for queries over trees:
given a schema describing a set of trees and a Boolean query over trees,
does every tree satisfy the query.

Let $\asch$ be a finite set of labels.
The \emph{relational signature of ordered, labeled, binary trees}, denoted
$\ssimple$, is made out of
the binary predicates $\firstchild$, $\secondchild$,
unary $\root$, $\leaf$ predicates,
and $\rlabel_\alpha$ predicates for all $\alpha\in\asch$.

We denote as $\schildself$
the relational signature containing all the relations
of $\ssimple$ together with binary
$\child$ and $\childself$ relations.

A tree $T$ over $\ssimple$ is
a relational instance such that:
\begin{compactenum}[(i)]
  \item the non-empty $\rlabel_\alpha^T$'s for $\alpha\in\asch$ form a partition of
    $\dom(T)$ (one can thus talk about the \emph{label} of a
    node~$n$, which is the $\alpha\in\asch$ such that $n\in\rlabel_\alpha^T$);
  \item $\firstchild^T$ and $\secondchild^T$ are one-to-one partial mappings with
    the same domain (the set of \emph{internal nodes}), whose
    complement is $\leaf^T$ (the set of \emph{leaves}), and with disjoint ranges;
  \item the inverses of $\firstchild^T$ and $\secondchild^T$ are
    one-to-one partial mappings;
  \item $\exists x\,\firstchild(x,x)\lor\secondchild(x,x)$ does not hold;
  \item $\root^T$ contains exactly one element (the \emph{root} $r$
    of~$T$), and the following formula does not hold for~$r$:
    $\exists x\, \firstchild(x,r)\lor\secondchild(x,r)$.
\end{compactenum}

A \emph{tree} $T$
over $\schildself$ is a relational instance that verifies the same
axioms as a tree over $\ssimple$, where
$\child^T$ is the disjoint union of $\firstchild^T$ and
$\secondchild^T$, and where the following formula holds:
$\childself(x,y)\leftrightarrow(\child(x,y)\lor x=y)$.

A Boolean query on one of the signatures above is \emph{valid} over a
bNTA if for all trees that satisfy the schema, the
query returns true.

We can now prove the following result, which closely tracks Theorem 6
of~\cite{bjoerklund_2008}.

\begin{theorem}
\label{th:DtDValid}
  Given a CQ $Q$ on $\schildself$ of treewidth $\leq 2$ 
  and a bNTA $A$, it is 2EXPTIME-hard to decide whether $Q$ is valid over~$A$.
\end{theorem}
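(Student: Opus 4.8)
The plan is to reduce from the word problem for alternating Turing machines working in exponential space, which is $2$EXPTIME-complete (since $\mathrm{ASPACE}(2^n)=\mathrm{DTIME}(2^{2^{O(n)}})$), following the strategy of Theorem~6 of~\cite{bjoerklund_2008}. Fix such a machine $M$ and an input $w$, and let $n$ be a polynomial bound on the work-tape length, so that each configuration of $M$ on $w$ has $2^n$ cells. I would encode a computation tree of $M$ on $w$ as a tree over $\schildself$ as follows: each configuration is represented by a complete binary \emph{address tree} of depth $n$, whose $2^n$ leaves are \emph{cell nodes} carrying a label describing the tape symbol and, for the cell under the head, the state; the $n$ address bits of a cell are also recorded as unary labels along the root-to-leaf path (not merely implicit in the first-child/second-child choice), so that equality of two bits becomes a single positive atom. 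Configurations are chained below one another according to the alternation structure of $M$: an existential configuration has one successor subtree, a universal one has two, an accepting halting configuration is a leaf of the encoding, and the root encodes the initial configuration on~$w$. The distance between a configuration root and the root(s) of its successor configuration(s) is $O(n)$.

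Next I would build the bottom-up nondeterministic tree automaton $A$, of size polynomial in $M$ and $w$ and computable in polynomial time, that accepts exactly the trees which are \emph{syntactically correct partial encodings}: the shape and labels are as above; each address tree is well formed, and its leaves carry the addresses $0,\dots,2^n-1$ consistently with the path labels (a regular bottom-up check); the initial configuration is correctly written --- only the first $|w|$ cells are non-blank, so only polynomially many positions must be inspected explicitly, the remaining cells being blank; every leaf configuration of the encoding is halting and accepting; and the ``local'' part of every transition step is respected (the alternation mode of consecutive configurations matches the state, the head moves by one cell, the new state is compatible with a rule of~$M$, of which there are only constantly many types). Crucially, $A$ does \emph{not} verify that cell contents evolve correctly, i.e.\ that cell $i$ of a configuration is the correct image of cells $i-1,i,i+1$ of the preceding one: a bNTA cannot match up the $2^n$ addresses across two configuration subtrees.

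I would then design the conjunctive query $Q$ so that it holds on a tree exactly when the encoding contains a transition \emph{error}: there are two configurations $c,c'$ with $c'$ the successor of $c$ in the computation tree, and an address $i$, such that the content of the cell of $c'$ at address $i$ is inconsistent, under the transition relation of $M$, with the contents of the cells of $c$ at addresses $i-1,i,i+1$. Navigation from $c$ to $c'$ and to the relevant cells uses $\childself$-chains of length $O(n)$, which behave like bounded-distance descendant steps. The essential point --- and the place where the treewidth bound is delicate --- is the requirement that the three source cells and the target cell sit at the \emph{same} address: I would enforce equality of the two $n$-bit addresses by a ladder-shaped gadget of width~$n$ that threads down the two address paths in parallel and equates the corresponding address-bit labels level by level; a $2\times n$ grid has treewidth~$2$, and because the bits are stored as labels each such equality is a plain positive atom rather than a disjunction over first/second child. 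The finitely many shapes of ``inconsistent target cell'' (depending only on $M$'s alphabet, states, and rules) are covered by a finite disjunction, giving $Q$ as a small union of conjunctive queries, each of treewidth~$\le 2$, which one can fold into a single such CQ by a standard further encoding if desired. Putting the pieces together, $M$ accepts $w$ if and only if there is a tree $T$ with $T\models A$ and $T\not\models Q$ --- a fully correct accepting computation tree; hence deciding whether $Q$ is \emph{not} valid over~$A$ is $2$EXPTIME-hard, and since $2$EXPTIME is closed under complement, so is deciding whether $Q$ \emph{is} valid over~$A$.

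\textbf{Main obstacle.} The difficulty is not the automaton nor the global structure of the reduction --- these are routine simulations --- but keeping the error query within treewidth~$2$ while still comparing cells that live at exponentially many distinct addresses. The resolution is the parallel ladder comparison of the two address paths together with storing address bits as labels so that bit-equality is a conjunct and not a disjunction; carefully checking that this gadget, combined with the $\childself$-navigation between the two configuration subtrees, never creates a ``bag'' larger than three variables is the technical crux, and is precisely the adaptation of the construction of~\cite{bjoerklund_2008} to the signature $\schildself$.
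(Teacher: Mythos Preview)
Your high-level strategy matches the paper's --- both adapt Theorem~6 of Bj\"orklund et al.\ to the signature $\schildself$ --- but two of the steps you wave through are exactly the places where the real work sits, and as written they do not go through.

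First, the claim that storing address bits as unary labels makes ``equality of two bits a single positive atom'' is false on this signature. Labels are unary predicates, so asserting that two nodes carry the \emph{same} bit is the disjunction $(\rlabel_0(x)\wedge\rlabel_0(y))\vee(\rlabel_1(x)\wedge\rlabel_1(y))$, which no CQ expresses. The paper confronts precisely this point and solves it with a navigation widget: under each $s$-node it hangs a $p$-node whose unique $1$-labeled descendant sits at depth~$1$ or~$2$ according to whether the $s$-node was a first or a second child; the subformula $\Psi_i(x,y)$ then walks to the $1$-node on each side via $\childself$ and asserts that both lie at equal $\child$-distance from a common ancestor~$z$. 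That converts ``same bit'' into a purely conjunctive depth comparison, and your ladder needs such a mechanism on every rung.

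Second, you end with a \emph{union} of CQs over the finitely many error types and say it ``can be folded into a single CQ by a standard further encoding''. UCQs are strictly stronger than CQs, so the folding cannot happen in the query; it must happen in the \emph{instance}, and that is the paper's other trick. Each cell stores not only its current content (a single $1$ at some depth in a length-$k$ chain under an $m$-node) but also, under a sibling $f$-node, a length-$k$ chain with $1$'s exactly at the depths that are \emph{forbidden} for the same cell in the next configuration; one CQ then asserts that the $m$-position in~$c'$ coincides with some $1$ in the $f$-chain of~$c$, again via equal distance to a common ancestor. A related issue is your three-cell window $(i{-}1,i,i{+}1)$: adjacent leaves of a depth-$n$ binary address tree can have their least common ancestor arbitrarily high, so locating them in a treewidth-$2$ CQ is not the routine step you suggest. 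The paper avoids this entirely by enriching cell annotations so that horizontal (neighbour-cell) constraints are checked by the bNTA, leaving the CQ to compare only the \emph{same} address across two configurations.
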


\begin{proof}
 We adapt the proof of Theorem 6 of~\cite{bjoerklund_2008}, given in
 Appendix C.2 of~\cite{bjorklund_2008_extended},
 which states that validity with respect to a bNTA
 of a CQ with \emph{child and
 descendant} predicates over \emph{unranked trees} is 2EXPTIME-hard. We adapt it
 by moving from unranked trees to binary trees
 (with the changes that it implies in our definition of a bNTA), writing the
 reduction using $\childself$ instead of the descendant predicate, and
  proving that the resulting CQ has bounded treewidth.
  The final CQ will be defined through intermediate subformulae, and we will use
  the following immediate observation to bound its treewidth: the
  treewidth of a CQ of the form $Q=\exists \mathbf{x}\mathbf{y}\:
  Q_1(\mathbf{x})\land Q_2(\mathbf{x},\mathbf{y})$ is the maximum of the
  width of a decomposition of $\exists \mathbf{x}\: Q_1(\mathbf x)$ where
  all variables of $\mathbf{x}$ are in the same bag, and of the treewidth
  of $Q'=\exists\mathbf{x}\mathbf{y}\: R(\mathbf{x})\land
  Q_2(\mathbf{x},\mathbf{y})$ where
  $R(\mathbf{x})$ is a single atom.

  We give a self-contained presentation keeping the notation
  from~\cite{bjoerklund_2008} as much as possible, with notable departures 
  highlighted in
  \textbf{bold font} throughout the proof.

As in~\cite{bjoerklund_2008}, we reduce from the termination of
an alternating EXPSPACE Turing Machine $M$, a 2EXPTIME-hard
problem~\cite{ChandraKS81}. The next three paragraphs are taken in part
  from~\cite{bjoerklund_2008}, with some minor simplifications, as we need to introduce the same concepts.

\newcommand{\qa}{q_{\mathrm{a}}}
\newcommand{\qr}{q_{\mathrm{r}}}
\renewcommand{\L}{{\mathrm{L}}}
\newcommand{\R}{{\mathrm{R}}}
\renewcommand{\S}{{\mathrm{S}}}

An alternating Turing machine (ATM) is a tuple $M = (\Omega, \Gamma,
\Delta, q_0 )$
where $\Omega = \Omega_\forall \uplus \Omega_\exists \uplus\{\qa\}
\uplus\{\qr\}$ is a finite set of states partitioned into
universal
states from $\Omega_\forall$, existential states from $\Omega_\exists$, an
accepting state $\qa$, and a
rejecting
state $\qr$. The finite tape alphabet is $\Gamma$. 
The
initial
state of $M$ is $q_0 \in \Omega$. The transition relation $\Delta$ is a subset of 
$(\Omega \times \Gamma ) \times
(\Omega \times \Gamma \times
\{\L, \R, \S\})$. The letters $\L$, $\R$, and $\S$ denote the directions left, right,
and stay, according to
which the tape head is moved.

An \emph{accepting computation tree} for an ATM $M$ is a finite
\emph{unranked} tree
labeled by configurations (tape content, reading head position, and
internal state) of $M$ such that
\begin{inparaenum}[(1)]
\item if node $v$ is
  labeled by an existential configuration, then $v$ has one child,
  labeled by one of the possible successor configurations;
\item if $v$ is
  labeled by a universal configuration, then $v$ has one child for each
  possible successor configuration;
\item the root is labeled by the initial
  configuration (input word on the tape, head at the beginning of the
  word, initial state); and
\item all leaves are labeled by accepting
  configurations (and accepting configurations only appear as leaves).
\end{inparaenum}
An ATM~$M$ accepts a word $w\in\Gamma^*$ if there exists an accepting
computation tree for~$M$ with $w$ as initial tape content.

The overall idea of the proof of~\cite{bjoerklund_2008}, that we closely
adapt, is as follows. Given an ATM $M$ and a word $w$ of
length $n$, we construct, in polynomial time,
\begin{inparaenum}[(1)]
\item an ATM $M_w$ which accepts
the
empty word if and only if $M$ accepts $w$; and 
\item a bNTA $A$ that checks most
important properties of (suitably encoded) computation trees of $M_w$,
except their
consistency w.r.t.\ the transition relation of $M_w$.
\end{inparaenum}
The consistency is
tested by a query $Q$ that we define. To be precise, $Q$ is
satisfied by a tree $T$ in
$L(A)$
if and only if the transition relation of $M_w$ is not respected by $t$. This
means
that $Q$ is valid w.r.t.\ $A$, iff there does not exist a consistent,
accepting
computation tree for $M_w$. Since 2EXPTIME is closed under complementation,
we conclude that validity of CQs on $\schildself$ with respect to bNTAs
is 2EXPTIME-hard.

\bold{Without loss of generality, we assume that universal states of
$M_w$ have exactly two successors, whatever the symbol read --
if they have less, we can just add transition(s) to an accepting
configuration, and if they have more, we can introduce intermediary
states to encode the $n$-ary conjunction as a tree of binary
conjunctions, with no change to the tape.}

We do not give the non-deterministic tree automaton explicitly, but trees
in its language will have \bold{the shape represented by Figure
\ref{fig:binarytree}}.\footnote{\bold{Since our definition of bNTA
requires a binary tree to be full, we need to add dummy nodes where needed, with
labels distinct from real nodes. This technicality has no impact, and we will
ignore these nodes.}}
The bold nodes are the nodes added to the tree of the proof of Theorem 19
of~\cite{bjoerklund_2008}.
The labels of dashed edges indicate the number of nodes between a node and its ancestor.
This bNTA encodes trees that represent the
executions of $M_w$. 

Each configuration in an execution is encoded by a
subtree rooted by a node labeled $\ct$; the $\ct$-node for the initial configuration appears
as the unique child of a chain of $\ell$ nodes with dummy labels from the
root for some integer $\ell$ that we will define further (this chain of $\ell$
nodes is only needed for technical reasons). A $\ct$-node has \bold{two} children
labeled $\content$ and \bold{$\nct$}. The subtree rooted by the $\content$-node
represents the tape of the configuration \bold{and the subtree rooted by
  the $\nct$-node has zero, one, or two $\ct$-children representing dummy
  nodes where needed, that we will regard as zero, one, or two following
  configurations depending whether the current state is accepting,
existential, or universal.}

A tape (under a
node $\content$) is written as a complete binary tree of depth $n$, with
leaves of the tree containing information about the $2^n$ cells. This
complete binary tree, that we will refer to as the \emph{tape tree}, is
itself encoded for querying purposes in the following fashion: for $1\leq
i\leq n$, a node with label $s$ represents a node at depth~$i$ in the
tape tree; each such node representing a node at depth~$i$, with $1\leq i\leq n-1$, has for
first child a node of label $p$ that serves as a \emph{navigation widget}
indicating the position of this node in the tape tree, and \bold{as second child a node of
label $\ns$ (for \emph{tape tree children})
that has for children two nodes of labels $s$}, encoding
the two children of the current encoded node in the tape tree. The
navigation widget is a $p$-labeled node with a single $x$-child that has
itself a single $y$-child. If the current encoded node in the tape tree
was a left child, $x=0$ and $y=1$; otherwise, $x=1$ and $y=0$ (this
widget thus encodes the $i$-th bit of the address of a cell). The node
$r$ itself has two children, the two nodes labeled with $s$ encoding the first level
of the tape tree. Finally, nodes labeled with $s$ representing a node at
depth~$n$ in the tape tree also have a
$p$-labeled navigation widget, but have as second child a $c$-node that
encodes the content of the cell.

We encode the same information about configuration tapes as
in~\cite{bjorklund_2008_extended}: symbols on basic cells, symbol and
transition followed on the current cell, and current symbol, previous
state, previous symbol on previous tape cells. This means each cell is
virtually annotated with an element of
$\Gamma\cup(\Gamma\times\Delta)\cup(\Gamma\times\Omega\times\Gamma)$:
there are polynomially many such annotations, we refer to them in the
following as $1,\dots,k$ fixing an arbitrary order. As
in~\cite{bjorklund_2008_extended}, we want to impose a number of
horizontal constraints (constraints on the annotations of neighboring
cells in a given configuration) and of vertical constraints (constraints
on the annotation of the same cell in successive configurations). These
constraints can be written as two sets of pairs $H(M_w)$ and $V(M_w)$ of
integers $1\leq i,j\leq k$, respectively, indicating respectively whether
$j$ can appear to the right of $i$ in a configuration, and whether $j$ can
appear in the same cell as $i$ in a successive configuration. We refer
to~\cite{bjorklund_2008_extended} for the full set of constraints
required.

For each cell, the $c$-node has two
children, labeled with $m$ (for \emph{me}) and $f$ (for
\emph{forbidden}), each having as descendants a chain of $k$ nodes that
can have labels either $0$ or~$1$. Only one node has label~$1$ under~$m$: the one
whose depth gives the current content of the cell. Under~$f$, for a cell at
position~$i$ in the tape, node at depth $j$ has
label~$0$ if and only if $(i,j)\in V(M_w)$.

\begin{figure}
\centering
  \begin{tikzpicture}[
      level 2/.style={level distance=8mm,sibling distance=100mm},
      level 3/.append style={sibling distance=30mm},
      level 6/.append style={sibling distance=15mm},
    ]

  \node {$\bot$} child[level distance=20mm,dashed]{node{$\ct$}
    child[solid]{node{$\content$}
      child[level distance =20mm,dashed]{node{$s$} [level distance=8mm]
        child[solid]{node{$p$}
          child{node{$0$}
            child{node{$1$}}
          }
        }
        child[solid]{node{$c$}
          child{node{$f$}
            child[level distance=20mm, dashed]{node{$\cdots$}
            edge from parent node[left]{$k$}}
          }
          child{node{$m$}
            child[level distance =20mm,dashed]{node{$\cdots$}
            edge from parent node[left]{$k$}}
          }
        }
      edge from parent node[left]{\bold{$2n-1$}}}
    }
    child[solid]{node{$\bold{\nct}$}
      child{node{$\ct$}
        child{node{$\content$}
          child{node{$s$}
            child[solid]{node{$p$}
              child{node{$0$}
                child{node{$1$}}
              }
            }
            child[solid]{node{$\bold{\ns}$}	
              child{node{$s$}
                child{node{$\cdots$}}
              }
              child{node{$s$}
                child{node{$\cdots$}}
              }
            }
          }
          child{node{$s$}
            child[solid]{node{$p$}
              child{node{$1$}
                child{node{$0$}}
              }
            }
            child[solid]{node{$\bold{\ns}$} {
              child{node{$\cdots$}}
            }
            }
          }
        }
            child{node{$\bold{\nct}$}
              child{node{$\cdots$}}
            }
      }
          child{node{$\ct$}
            child{node{$\cdots$}}
          }
      }
      edge from parent node[left]{\bold{$\ell$}}};
    ;
      \end{tikzpicture}
    \caption{General structure of trees in proof of
      Theorem~\ref{th:DtDValid}; bold labels and counters highlight changes from the proof of
      Theorem~6 
    of~\cite{bjoerklund_2008}}
    \label{fig:binarytree}
  \end{figure}
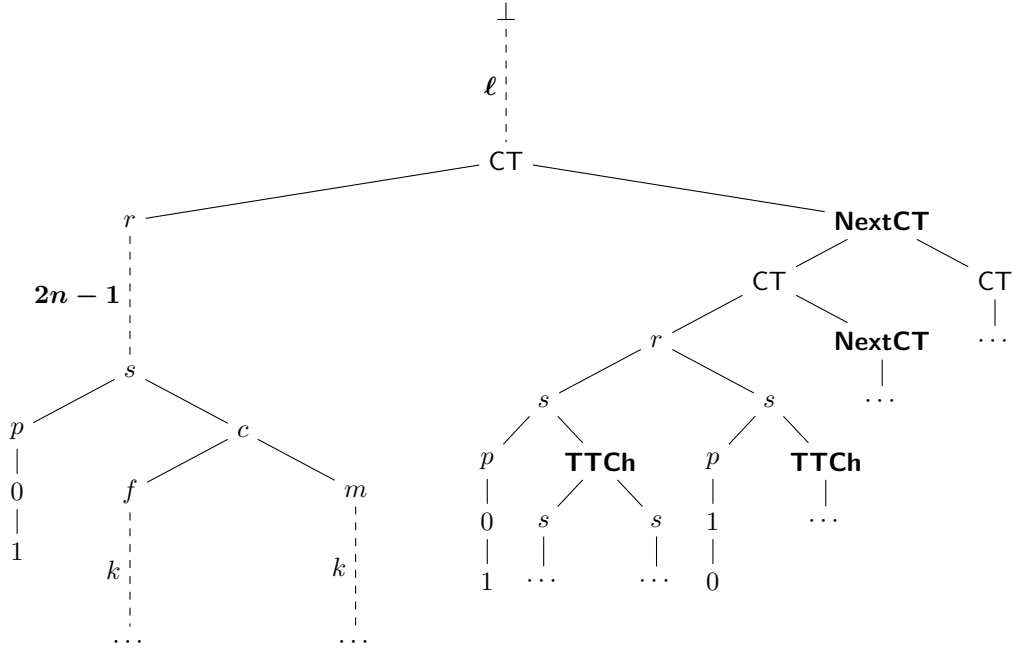

As in~\cite{bjorklund_2008_extended}, we can construct in polynomial time 
a bNTA
that enforces that all trees have the described form, including respect
of horizontal constraints, initial configuration at the root and
accepting configuration at the leaves, but \emph{excluding vertical
constraints}. Indeed, vertical constraints cannot (at least
straightforwardly) be imposed on the tree as they relate nodes of the
tree that are very far apart -- see~\cite{bjorklund_2008_extended} for
how to encode horizontal constraints and the general structure.
Modifications needed because of our binary setting are minor.
The language of this bNTA is exactly the codes of accepting computation
trees for $M_w$, except that vertical constraints may be violated.

We now construct a conjunctive query that holds if vertical
constraints are violated. In what follows, we denote by $R^i(x,y)$ the
chain $\exists x_1\dots x_{i-1}\,R(x,x_1)\land\dots\land R(x_{i-1},y)$
for $R$ a binary relation and $i\geq 1$. The query is built up from the
following subformulae:

\begin{itemize}
 \item A formula $\Succ(r_1,r_2)$ that expresses that $r_1$ and $r_2$ are
   roots
   of a tree encoding tape, with the configuration of $r_2$ being the successor of
   that of~$r_1$. Formally:
   \begin{align*}
     \Succ(r_1,r_2) \defeq \exists s_1s_2\, &\rlabel_r(r_1) \wedge
                              \rlabel_r(r_2) \\&\wedge \child(s_1,r_1)\wedge \child(s_2,r_2)
 \wedge 
\child^{\bold{2}}(s_1,s_2).\end{align*}

    \bold{There is a tree decomposition of width 2 of this subquery where both
    exported variables $r_1$ and $r_2$ are in the same bag.}

 \item A formula $\Phi_i(x,y)$ that expresses that $x$ and $y$ are
   $s$-nodes encoding a node at the $i$-th level of two tape trees,
   such that the configuration of $y$ is a successor of the configuration of $x$:
   \begin{align*}
     \Phi_i(x,y) \defeq  \exists r_1 r_2 \, &\rlabel_{s}(x)  \wedge
      \rlabel_{s}(y) \wedge \Succ(r_1,r_2) \\&\wedge \child^{\bold{ 2i-1}}(r_1,x) \wedge  
  \child^{\bold{2i-1}}(r_2,y).
\end{align*}
    \bold{There is a tree decomposition of width 2 of this subquery where both
    exported variables $x$ and $y$ are in the same bag.}
    
 \item A formula $\Psi_i(x,y)$ that expresses that $\Phi_i(x,y)$ holds
   and that, additionally, $x$ and $y$ are both first children or both
   second children of their parents; note that we could not use
   $\firstchild$ and $\secondchild$ here as it would require disjunction.
   We can, however, use the navigation widgets:
   \begin{align*}\Psi_i(x,y) \defeq  \exists
     p_xp_yt_xt_y\bold{t'_x}\bold{t'_y}z\:& \Phi_i(x,y)  \wedge
     \rlabel_p(p_x) \wedge \rlabel_p(p_y) \wedge \rlabel_1(t_x) \wedge
     \rlabel_1(t_y) \\&
        \wedge \child(x,p_x) \wedge \child(y,p_y) \wedge
        \bold{\child(p_x,t'_x)} \wedge \bold{\child(p_y,t'_y)}\\& \wedge \bold{\childself(t'_x,t_x)} \wedge 
                                  \bold{\childself(t'_y,t_y)} \\&\wedge \child^{\bold{(2i-1)+4}}(z,t_x) \wedge
\child^{\bold{(2i-1)+6}}(z,t_y)\end{align*}
Observe that when $x$ and $y$ are both first children, the $t_x$ and
$t_y$ are grandchildren of the $p$-node, and therefore \bold{at distance
$(2i-1)+3$ of the $r$-node}, so going up \bold{$(2i-1)+4$} times brings us to the
$\ct$-node of the current configuration, and going up \bold{$(2i-1)+6$} times
brings us to the $\ct$-node of the preceding configuration. Similarly, if
$x$ and $y$ are both second children, the $t_x$ and $t_y$ are children of
the $p$-node, so going up \bold{$(2i-1)+4$} times brings us to the 
parent of the $\ct$-node of the current configuration, and going up
\bold{$(2i-1)+6$} times 
brings us to the parent of the $\ct$-node of the preceding configuration.
    \bold{This is one of the two places we need the chain of $\ell$ nodes at the root:
    otherwise, since the initial configuration does not have a preceding
    configuration, we would not be able to go high enough up in the tree
    to find the~$z$ node. Taking $\ell\geq1$ suffices.}

    \bold{There is a tree decomposition of width 2 of this subquery where both
    exported variables $x$ and $y$ are in the same bag.}

 \item A formula $\samecell(s_1,s_2)$ that expresses that two $s$-nodes
   encoding a node at depth~$n$ in the tape tree (i.e., at the bottom of
    the tape tree) 
   correspond to the same cell of successive configuration tapes:
   \begin{align*}
     \samecell(s_1,s_2) \defeq \exists x_1\cdots x_{n-1}y_1\cdots
     y_{n-1} \, &\bigwedge_{1 \leq i <n-1} \big( \child^{\bold{2}}(x_i,x_{i+1}) 
     \wedge \child^{\bold{2}}(y_i,y_{i+1}) \big) \\&\wedge
     \child^{\bold{2}}(x_{n-1},s_1) \wedge \child^{2}(y_{n-1},s_2) \\&\wedge \Psi_n(s_1,s_2) \wedge \bigwedge_{1 \leq i <n}
\Psi_{i}(x_i,y_i).
\end{align*}
    \bold{There is a tree decomposition of width $2$ of this subquery where both
    exported variables $s_1$ and $s_2$ are in the same bag.}
\end{itemize}

We can now use these subformulae in the following sentence, that
expresses the final conjunctive query $Q$.
It checks whether the two same cells $s_1$ and $s_2$ of successive
configurations violate vertical constraints. Remember that the value of a
cell is encoded under the $m$-node, while vertical constraints are
encoded under the $f$-node. A vertical constraint occurs when the
(unique) position of a $1$-node under the $m$-descendant of $s_2$ is
equal to 
the position of a $1$-node under the $f$-descendant of $s_1$.
\begin{align*}
  Q \defeq \exists s_1s_2t_1t_2f_1m_2p_1p_2z \,& \samecell(s_1,s_2) \wedge \child(s_1,t_1) \wedge \child(s_2,t_2)
 \\
  &\wedge \child(t_1,f_1) \wedge \child(t_2,m_2)\\
 &\wedge \rlabel_f(f_1) \wedge \rlabel_m(m_2) \wedge \rlabel_1(p_1)
  \wedge \rlabel_1(p_2) 
                        \\&\wedge \bold{(\childself)^{k}(f_1,p_1)} \wedge 
\bold{(\childself)^{k}(m_2,p_2)} \\&\wedge
  \child^{\bold{(2n-1+3)}+k}(z,p_1) \wedge
  \child^{\bold{(2n-1+5)}+k}(z,p_2).
\end{align*}
    \bold{This is the other place we need the chain of $\ell$ nodes at the root:
    otherwise, again, since the initial configuration does not have a preceding
    configuration, we would not be able to go high enough up in the tree
    to find the~$z$ node. Taking $\ell\geq k-1$ suffices.}

The query $Q$ can be constructed in polynomial time, and $Q$ is
valid over
the bNTA previously constructed if and only if the Turing machine $M_w$
has no
accepting (EXPSPACE) computation tree. \bold{$Q$ has treewidth~$2$.}
\end{proof}
\subsection{2EXPTIME-Hardness of MDL containment in a Treelike
CQ}

We show a 2EXPTIME lower bound for the problem of checking the containment of a monadic Datalog
program in a CQ of treewidth $\leq 2$. This matches the general upper
bound for the containment of a Datalog query within a union of CQs.

\begin{theorem}
  \label{thm:containment}
  The following containment problem is
  2EXPTIME-hard over the arity-two signature $\schildself$:
  given a monadic Datalog program $P$ 
  with var-size $\leq 3$
  and a conjunctive query $Q$ of treewidth $\leq 2$,
  decide whether there exists some instance $I$ satisfying $P \wedge \neg Q$.
\end{theorem}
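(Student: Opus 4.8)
The plan is to reduce from the validity problem of Theorem~\ref{th:DtDValid}: given a CQ $Q$ on $\schildself$ of treewidth $\leq 2$ and a bNTA $A$, it is 2EXPTIME-hard to decide whether $Q$ is valid over~$A$, i.e.\ whether every tree accepted by $A$ satisfies $Q$. Complementing, it is 2EXPTIME-hard to decide whether there exists a tree accepted by~$A$ that does \emph{not} satisfy~$Q$. So I would start from such an instance $(A, Q)$ and construct in polynomial time a monadic Datalog program $P$ of var-size $\leq 3$ such that the models of~$P$ correspond (for the purposes of satisfying or violating~$Q$) exactly to the trees in the language of~$A$; then there is an instance satisfying $P \wedge \neg Q$ iff there is a tree in $L(A)$ violating~$Q$, which is the desired hardness.

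The main work is therefore to show that a bNTA~$A$ over the signature $\schildself$-trees can be captured by a monadic Datalog program of var-size~$3$. First I would write a Datalog program whose extensional predicates are those of $\ssimple$ (i.e.\ $\firstchild$, $\secondchild$, $\root$, $\leaf$, and the $\rlabel_\alpha$), and which (i)~checks that the input instance really is a $\ssimple$-tree --- each of the tree axioms (i)--(v) is either already a safety condition on a finite structure or can be checked/derived by constant-size rules; for the ``one-to-one'' conditions and acyclicity this requires a little care but can be done with rules of var-size~$3$ since each axiom talks about at most a child, a parent, and one extra witness --- (ii)~derives $\child$ as the union of $\firstchild$ and $\secondchild$, and $\childself$ as $\child$ plus the diagonal (the diagonal being derivable, e.g., from $\root$, $\leaf$, and endpoints of $\child$-edges), and (iii)~simulates a bottom-up run of~$A$ by introducing one monadic intensional predicate $Q_q$ per state~$q$ of~$A$, with rules $Q_q(x) \leftarrow \leaf(x), \rlabel_\alpha(x)$ for each $q \in \iota(\alpha)$, and $Q_q(x) \leftarrow \firstchild(x,y_1), \secondchild(x,y_2), \rlabel_\alpha(x), Q_{q_1}(y_1), Q_{q_2}(y_2)$ for each $q \in \Delta(\alpha, q_1, q_2)$; these rules use exactly three variables. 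Finally a goal rule $\text{Goal}() \leftarrow \root(x), Q_q(x)$ for each accepting state $q \in F$. This~$P$ has var-size~$3$, is monadic, and is computable in polynomial time from~$A$.

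The one subtlety is that, unlike a bNTA that is run on a genuine binary tree, a Datalog program cannot by itself force its input to be a tree: a model of~$P$ is any $\sigma$-instance on which $\text{Goal}()$ is derivable. I would handle this exactly as in the reduction style of the paper, by folding the tree axioms into~$P$ so that the only instances on which $\text{Goal}()$ fires are (decodings of) trees accepted by~$A$ --- or, if some non-tree instances still satisfy~$P$, by additionally arguing (as in the canonical-instance arguments of Lemma~\ref{lem:caninst}) that any instance satisfying $P \wedge \neg Q$ can be taken to be the homomorphic image of an unfolding, which is tree-shaped, so that witnesses can always be assumed to be genuine $\ssimple$-trees in $L(A)$. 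Either way, I get: there is an instance satisfying $P \wedge \neg Q$ iff there is a tree in $L(A)$ not satisfying~$Q$. Since $Q$ already has treewidth~$\leq 2$ by Theorem~\ref{th:DtDValid}, and $P$ has var-size~$\leq 3$, the claimed 2EXPTIME-hardness follows. The main obstacle I expect is precisely making the ``force a tree'' part airtight with var-size only~$3$: encoding the partial-function and acyclicity axioms of $\ssimple$-trees in monadic Datalog within three variables per rule, and checking that no spurious non-tree model can satisfy $P \wedge \neg Q$ — this is where I would be most careful, likely routing it through the unfolding/canonical-instance machinery rather than trying to axiomatize treeness outright.
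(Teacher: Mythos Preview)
Your high-level plan is right and matches the paper: reduce from Theorem~\ref{th:DtDValid}, simulate the bNTA by a monadic Datalog program with one monadic predicate per state, and use the unfolding/canonical-instance machinery to restrict attention to tree-shaped models. Your final paragraph even identifies the correct route. But the concrete construction you sketch in items (i)--(ii) has a real gap.

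Deriving $\child$ and $\childself$ as intensional predicates does not work: they are binary, so this would make the program non-monadic; and more fatally, the CQ~$Q$ from Theorem~\ref{th:DtDValid} uses $\child$ and $\childself$, so these must be \emph{extensional} predicates of the signature (which is $\schildself$, not $\ssimple$) for~$Q$ to talk about them at all. Likewise, trying to axiomatize treeness inside~$P$ is both unnecessary and, as you suspect, not feasible in monadic Datalog with var-size~$3$ (you cannot rule out, e.g., two distinct $\firstchild$-successors of a node). The paper avoids both issues with a simple trick: it keeps all of $\schildself$ extensional and adds \emph{redundant} atoms to the transition rule, namely $\child(n,n_i)$, $\childself(n,n_i)$ alongside each $\firstchild(n,n_1)$ and $\secondchild(n,n_2)$, plus $\childself(n,n)$ (and $\childself(l,l)$ in the leaf rule). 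These atoms are logically implied on genuine trees, so they do not change which trees satisfy~$P$; but they force every unfolding expansion tree of~$P$ to materialize the correct $\child$ and $\childself$ facts, so that $I(\Pi_{\ext}(t))$ is literally a $\schildself$-tree in $L(A)$. The canonical-instance argument of Lemma~\ref{lem:caninst} then does all the work: no treeness-checking is needed, and non-tree models of~$P$ are irrelevant because any counterexample to $Q$ can already be found among the (tree-shaped) canonical instances.
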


\begin{proof}
 We reduce from the problem of validity of a CQ on $\schildself$ of treewidth
  $\leq 2$ over a bNTA, which is 2EXPTIME-hard by
  Theorem~\ref{th:DtDValid}.

  Let $A=(\calQ, F, \iota, \Delta)$ be a bNTA satisfying the two
  requirements in the previous paragraph,
  and $Q$ a conjunctive query of treewidth $\leq 2$.
  We build a monadic Datalog program $P$ as follows, which obeys the desired
  bound on var-size:
  \begin{itemize}
    \item For every $q\in\calQ$, we have an intensional monadic predicate
      $P_q$.
    \item For every $q\in F$, we have a rule:
      \[
        \text{Goal}() \leftarrow \root(r) , P_q(r).
        \]
    \item For every symbol $\alpha\in\asch$, for every
      $q\in\iota(\alpha)$, we have a rule:
      \[
        P_q(l) \leftarrow \leaf(l) ,
        \rlabel_\alpha(l),\childself(l,l).
        \]
    \item For every symbol $\alpha\in\asch$, for every $q_1,q_2,q'\in\calQ$ such
      that $q'\in\Delta(\alpha,q_1,q_2)$, we have a rule:
      \begin{align*}
        P_{q'}(n) \leftarrow~ &\rlabel_\alpha(n) , P_{q_1}(n_1) ,
        P_{q_2}(n_2),\\&\firstchild(n,n_1),\child(n,n_1),\childself(n,n_1),\\&\secondchild(n,n_2),\child(n,n_2),\childself(n,n_2),\\&\childself(n,n)
      \end{align*}
  \end{itemize}

  Now, by construction, for every unfolding expansion tree $t$ of $P$
  (see proof of Lemma~\ref{lem:caninst}), $I(\Pi_{\ext}(t))$ is a tree over
  $\schildself$. Again by the proof of Lemma~\ref{lem:caninst}, the set
  of the instances of the form $I(\Pi_{\ext}(t))$ where $t$ is an
  unfolding expansion tree of $P$ is a canonical set of instances. In
  particular, there exists an instance satisfying $P$ and not satisfying
  $Q$ if and only if $Q$ is valid over $A$.
\end{proof}

\end{toappendix}

\end{document}